\documentclass[10pt, reqno]{amsart}
\usepackage{og-amsart}

\numberwithin{equation}{section}\address{Department of Mathematics, Princeton University}\email{onyxg@princeton.edu}
\author{Onyx Gautam}
\date{\today}
\title{Semilinear wave equations on the Witten bubble spacetime}
\begin{document}

\maketitle
\begin{abstract}
We initiate the study of \emph{nonlinear} wave equations on the Witten bubble spacetime
(also known as the ``bubble of nothing''), which is an \(\mathrm{SO}(3,1)\times
\mathrm{U}(1)\)-symmetric solution to the Einstein vacuum equations in \((4 +
1)\) dimensions. This spacetime was introduced by Witten \cite{WITTEN1982481} to
model the semiclassical instability of the Kaluza--Klein spacetime
\(\R^{3+1}\times S^1\) (which is classically stable by work of
Huneau--Stingo--Wyatt \cite{kaluzakleinstability}). We prove a small-data global
existence result without symmetry assumptions and an improved decay result for
\(\mathrm{U}(1)\)-symmetric solutions to a class of semilinear equations
satisfying a version of the null condition. Key to the proof are novel estimates
for the \emph{linear} wave equation, which has been studied in the physics literature
by Bhawal and Viveshwara \cite{PhysRevD.42.1996} and in the mathematics
literature by Bachelot \cite{Bachelot_2016}.
\end{abstract}
\setcounter{tocdepth}{2}
\tableofcontents
\newpage
\section{Introduction}
The flat Kaluza--Klein spacetime \(\R^{3 + 1}\times S^1\), although stable as a
classical solution to the Einstein vacuum equations \cite{kaluzakleinstability},
is not a suitable ground state for gravity in \((4+1)\) dimensions. Indeed,
Witten \cite{WITTEN1982481} showed that a semiclassical decay process called
barrier penetration will cause a hole, or bubble, to spontaneously form in
space, smoothly pinching off the compact dimension. Although the bubble is
initially of microscopic size comparable to that of the compact \(S^1\)
dimension, it expands with uniform acceleration. The hole therefore approaches
the speed of light and destroys the Kaluza--Klein vacuum.

In fact, the bubble cannot expand all the way to infinity. The instability will
cause bubbles to form all over spacetime. These bubbles will expand until their
boundaries meet, at which point the evolution becomes intractable. In the model
scenario, only one bubble forms and indeed expands to infinity. This scenario is
described by the \emph{Witten bubble spacetime}, also known as the ``bubble of nothing.''
This spacetime is nonstationary, but it is highly symmetric, with symmetry group
\(\SO(3,1)\times \textnormal{U}(1)\).

We are interested in the classical stability of the Witten bubble spacetime as a
solution to the Einstein vacuum equations, which can be thought of as a system
of quasilinear wave equations. This is a difficult problem. As a first step, we
study the model problem of a semilinear wave equation on this spacetime. A basic
example is the equation
\begin{equation}\label{model-semilinear-equation}
\Box_g\varphi = f\cdot g^{\alpha{}\beta{}}\partial_\alpha{}\varphi\partial_\beta{}\varphi.
\end{equation}
Here \(g\) is the Witten bubble metric (see \zcref{intro-WB-metric}),
\(\Box_g\varphi{} \coloneqq{} g^{\mu{}\nu{}}\Grad _\mu{}\Grad _\nu{}\varphi{}\) is
the associated Laplace--Beltrami operator, and \(f\) is a spacetime function
which is bounded (together with its derivatives).

We prove that solutions to a class of equations satisfying a version of the null
condition (including \zcref{model-semilinear-equation}) that arise from small,
regular, and decaying initial data (without symmetry assumptions) exist globally
in time. Moreover, solutions that are \(\textnormal{U}(1)\)-symmetric (that is,
independent of the \(S^1\) dimension) enjoy additional decay properties in time.
The proof requires novel estimates for the linear wave equation
\(\Box{}_g\varphi{} = 0\). See
\zcref{main-theorem-linear-intro,main-theorem-semilinear-intro} for detailed
statements of our main results.

One difficulty in the proof is that the \((4 + 1)\)-dimensional Witten bubble
geometry interpolates between a near-bubble region that is \((2 +
1)\)-dimensional with an extra \(S^2\) dimension (where the geometry is far from
flat) and an asymptotically flat region that is \((3 + 1)\)-dimensional with an
extra \(S^1\) dimension. To prove linear estimates in both of these regions, we
decompose the scalar field into its \(\textnormal{U}(1)\)-symmetric and
non-\(\textnormal{U}(1)\)-symmetric parts. The \(\textnormal{U}(1)\)-symmetric
part behaves like a massless wave on Minkowski space and enjoys strong
decay-in-time estimates. On the other hand, the
non-\(\textnormal{U}(1)\)-symmetric part is non-Minkowskian. It acquires an
effective mass that grows at spatial infinity. The growth of this mass produces
an effective potential that is \emph{confining}. As a result, after accounting for the
overall expansion of the spacetime, the non-\(\textnormal{U}(1)\)-symmetric part
of the scalar field may stay trapped near the bubble and fail to decay in time.
Nevertheless, this part of the wave decays faster in space due to the growth of
its effective mass. To handle nonlinear interactions, we balance the better
decay in time of the \(\textnormal{U}(1)\)-symmetric part and better decay in
space of the non-\(\textnormal{U}(1)\)-symmetric part.
\subsection{Introduction to the Witten bubble geometry}
\label{intro-geometry} The Witten bubble spacetime is an
\(\textnormal{SO}(3,1)\times \textnormal{U}(1)\)-symmetric solution to the
Einstein vacuum equations in \((4 + 1)\) dimensions. See \zcref{fig:bubble-penrose,fig:bubble-R2} for
depictions of its geometry. It has smooth structure \(\mathcal{M} = \R\times
\R^2\times S^2\). The \(\R\) factor represents time. Away from its origin, the
\(\R^2\) factor is \(\R\times S^1\), and the size of the \(S^1\) factor tends to
a constant at infinity. The geometry of the \(\R^2\) factor is therefore
asymptotically cylindrical. When combined with the \(S^2\) factor, the \(\R\)
factor of the \(\R\times S^1\) reproduces, via spherical polar coordinates, the
familiar \(\R^3\) near infinity. Thus the Witten bubble spacetime approaches the
Kaluza--Klein spacetime \(\R^{3 + 1}\times S^1\) at infinity.
\begin{figure}
 \begin{subfigure}{0.49\textwidth}
     \def\svgwidth{\textwidth}
     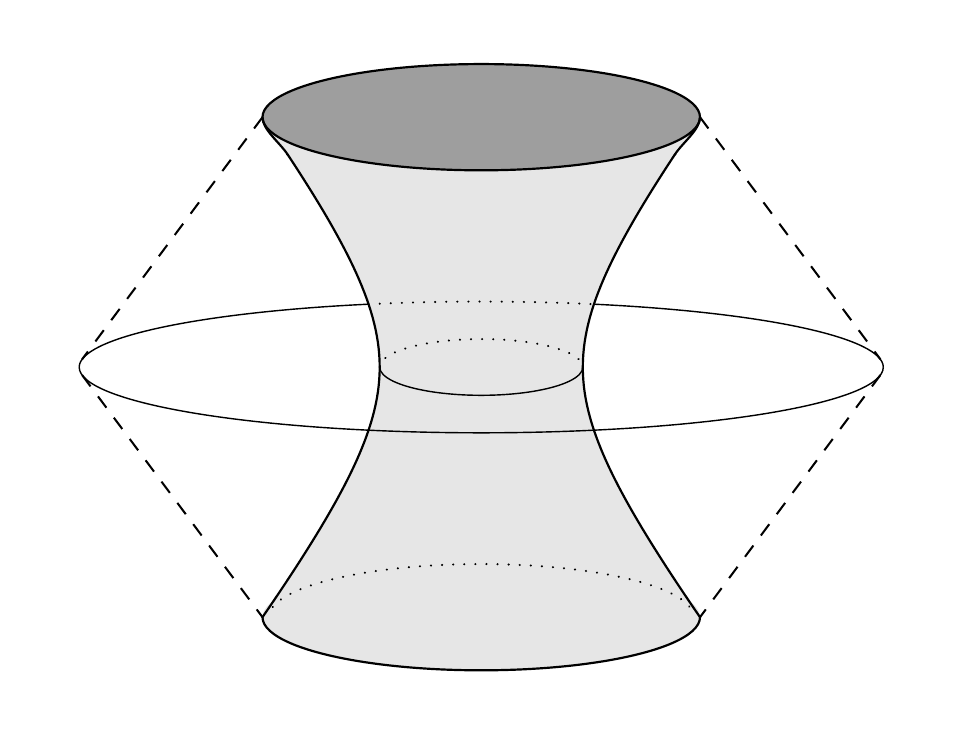
     \caption{\((2+1)\)-dimensional Penrose-type diagram of \(\mathcal{M}\)}
     \label{fig:bubble-3d}
 \end{subfigure}
 \hfill
 \begin{subfigure}{0.49\textwidth}
     \def\svgwidth{\textwidth}
\begingroup%
  \makeatletter%
  \@ifundefined{includegraphics}{%
    \PackageError{bubble-2d}{The graphicx package is required}{Load \string\usepackage{graphicx} in the preamble.}%
  }{}%
  \@ifundefined{rotatebox}{%
    \PackageError{bubble-2d}{The graphicx package is required for rotated labels}{Load \string\usepackage{graphicx} in the preamble.}%
  }{}%
  \@ifundefined{scalebox}{%
    \PackageError{bubble-2d}{The graphicx package is required for scaled labels}{Load \string\usepackage{graphicx} in the preamble.}%
  }{}%
  \providecommand\color[2][]{\renewcommand\color[2][]{}}%
  \providecommand\transparent[1]{}%
  \newcommand*\fsize{\dimexpr\f@size pt\relax}%
  \newcommand*\lineheight[1]{\fontsize{\fsize}{#1\fsize}\selectfont}%
  \ifx\svgwidth\undefined%
    \setlength{\unitlength}{450bp}%
    \ifx\svgscale\undefined%
      \relax%
    \else%
      \setlength{\unitlength}{\unitlength * \real{\svgscale}}%
    \fi%
  \else%
    \setlength{\unitlength}{\svgwidth}%
  \fi%
  \global\let\svgwidth\undefined%
  \global\let\svgscale\undefined%
  \makeatother%
  \begin{picture}(1,0.84615385)%
    \lineheight{1}%
    \setlength\tabcolsep{0pt}%
    \put(0,0){\includegraphics[width=\unitlength,page=1]{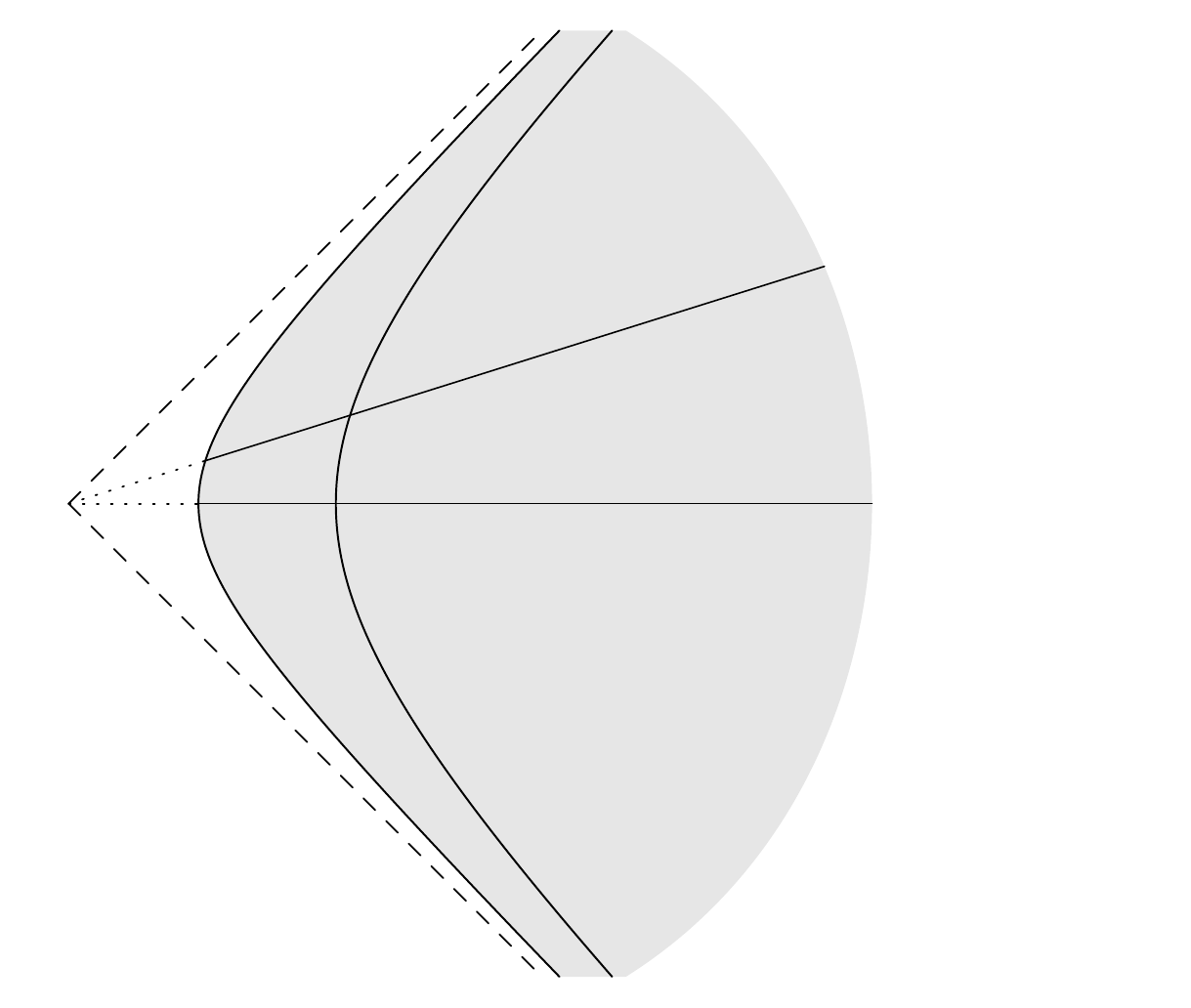}}%
    \put(0.18333333,0.38461538){\color[rgb]{0,0,0}\makebox(0,0)[lt]{\lineheight{1.22}\smash{\rotatebox{-56}{\scalebox{0.82}{\begin{tabular}[t]{c}$\{\rho = R\}$\end{tabular}}}}}}%
    \put(0.32435897,0.30128205){\color[rgb]{0,0,0}\makebox(0,0)[lt]{\lineheight{1.22}\smash{\rotatebox{-54}{\scalebox{0.82}{\begin{tabular}[t]{c}$\{\rho = \rho_0 > R\}$\end{tabular}}}}}}%
    \put(0.71794872,0.60256410){\color[rgb]{0,0,0}\makebox(0,0)[lt]{\lineheight{1.22}\smash{\begin{tabular}[t]{c}$\{\tau = \tau_0 > 0\}$\end{tabular}}}}%
    \put(0.75384615,0.41282051){\color[rgb]{0,0,0}\makebox(0,0)[lt]{\lineheight{1.22}\smash{\begin{tabular}[t]{c}$\{\tau = 0\}$\end{tabular}}}}%
  \end{picture}%
\endgroup%

     \caption{\((1+1)\)-dimensional representation of \(\mathcal{M}\)}
     \label{fig:bubble-2d}
 \end{subfigure}
 \caption{\zcref[S]{fig:bubble-3d} depicts the \(\rho{}\)-direction, the \(\tau{}\)-direction, and the equatorial direction of the \(S^2\)-factor of \(\mathcal{M}\). The space "inside" the shaded region is not part of \(\mathcal{M}\). The set \(\set{\rho = R}\), which is shaded in light gray, traces out \((2+1)\)-dimensional de Sitter space. In \zcref{fig:bubble-2d}, each point away from \(\set{\rho{}=R}\) represents an
\(S^2\times S^1\). We view this quotient of \(\mathcal{M}\) as the exterior of a
one-sheeted hyperboloid in Minkowski space. As \(R\to 0\), this construction
recovers the exterior of a double-light cone, which is drawn with dashed lines. The \((\tau{},\rho{})\) coordinates
are related to the Minkowskian \((t,r)\) coordinates by \(r = \rho{}\cosh
\tau{}\) and \(t = \rho{}\sinh \tau{}\). This figure also includes level sets of \(\tau{}\), which in Minkowski correspond to
integral curves of the scaling vector field \(t\partial_t + r\partial_r\).}
 \label{fig:bubble-penrose}
\end{figure}
The subset of \(\mathcal{M}\) defined by the origin of the \(\R^2\) factor is a
codimension-two timelike hypersurface with topology \(\R\times S^2\). We refer
to this hypersurface as \emph{the bubble}. It represents the boundary of the expanding
hole, where the compact dimension is pinched off. The Witten bubble metric away
from the bubble takes the form
\begin{equation}\label{intro-WB-metric}
g = -\rho^2\dd{}\tau^2 + (1-(R/\rho{})^2)^{-1}\dd{}\rho^2 + \rho^2\cosh^2 \tau{}g_{S^2} + R^2(1-(R/\rho{})^2)g_{S^1},
\end{equation}
where \(g_{S^2}\) is the metric on the unit round sphere. The domain of the
coordinates in \zcref{intro-WB-metric} is \(\R_{\tau{}}\times
(R,\infty)_\rho{}\times S^2\times S^1\), where \(S^1 = \R/2\pi{}\Z\) and \(R >
0\) is a real parameter representing the radius of both the compact dimension
near infinity and the minimal-area sphere of symmetry. The
\((R,\infty)_\rho{}\times S^1\) factor represents the complement of the origin
in the \(\R^2\) factor of the full Witten bubble spacetime. The apparent
degeneracy in the metric \zcref{intro-WB-metric} as \(\rho{}\to R\) is merely a
coordinate singularity. It can be resolved by a change of coordinates in the
\((\rho{},\theta{})\) domain, where \(\theta{}\) is the coordinate on \(S^1\).

The induced metric on the bubble is \(g|_{\set{\rho{}=R}} = R^2[-\dd{}\tau^2 + \cosh
^2\tau{}g_{S^2}]\). The bubble is therefore conformal to \((2 + 1)\)-dimensional
de Sitter space \(\textnormal{dS}_{2+1}\) with a constant conformal factor \(R^2\).
In fact, the level sets of \(\rho{}\) are foliated by submanifolds conformal to
\(\textnormal{dS}_{2 + 1}\) (with constant conformal factor), and the \(S^1\)
factor parametrizes the foliation.
\begin{figure}
 \begin{subfigure}{0.49\textwidth}
     \def\svgwidth{\textwidth}
     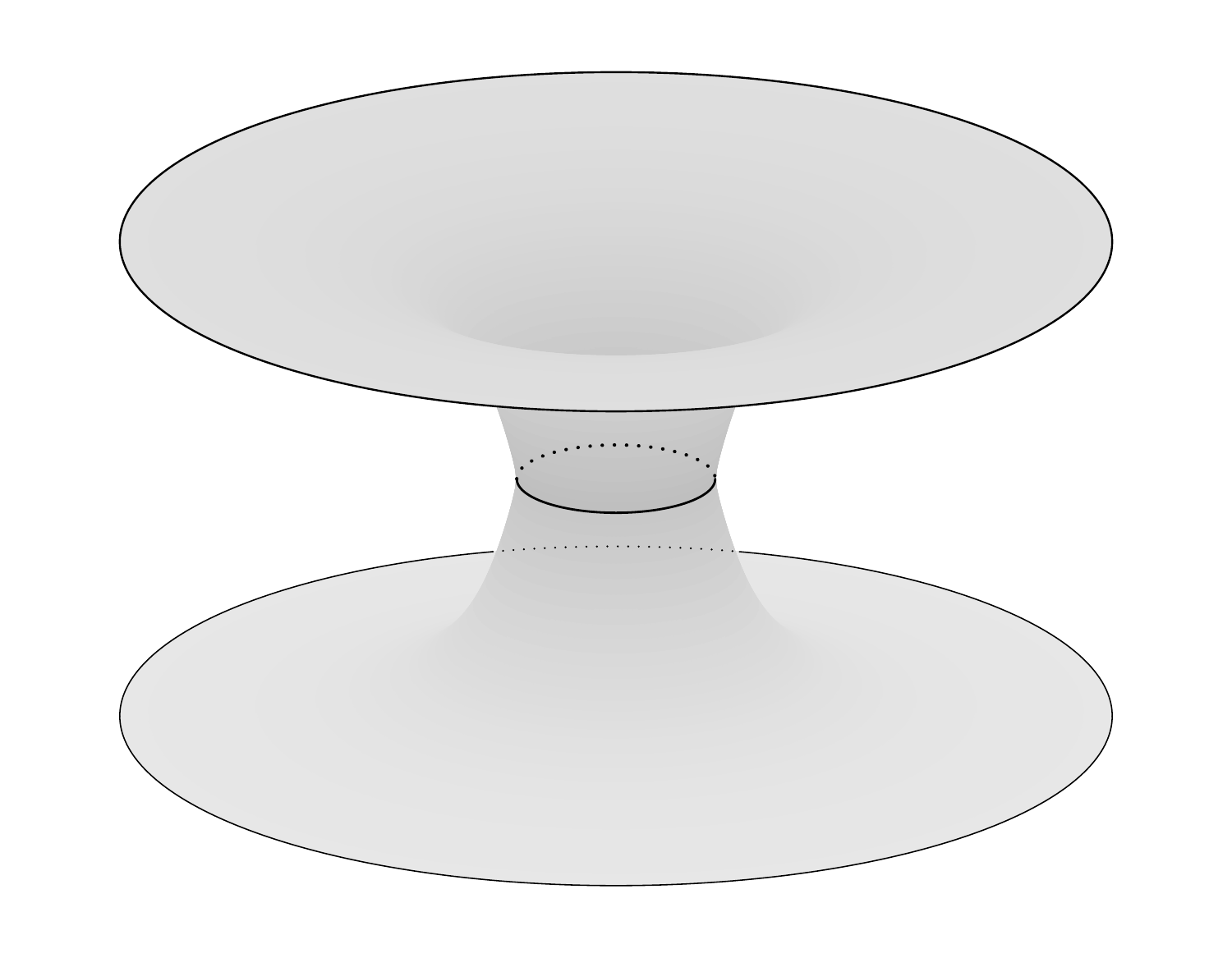
     \caption{Radial and \(S^2\)-equatorial directions of \(\mathcal{M}\)}
     \label{fig:bubble-throat}
 \end{subfigure}
 \hfill
 \begin{subfigure}{0.49\textwidth}
     \def\svgwidth{\textwidth}
     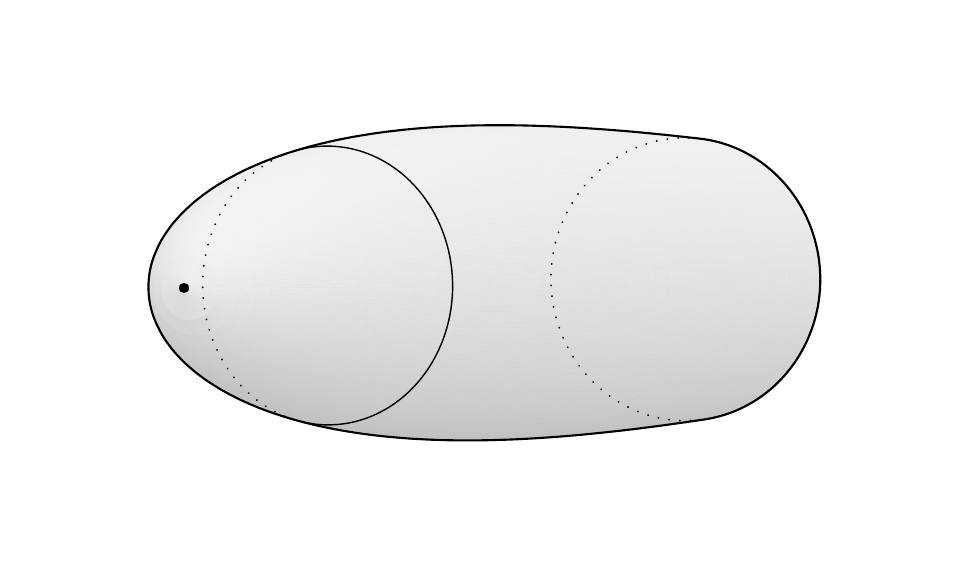
     \medskip
     \caption{Radial and \(S^1\)-directions of \(\mathcal{M}\)}
     \label{fig:bubble-cigar}
 \end{subfigure}
 \caption{\zcref[S]{fig:bubble-throat} depicts the radial \(\rho{}\)-direction, the equatorial direction of the \(S^2\)-factor together
with a pair of antipodal points in the \(S^1\). The circle around the "throat"
of this diagram is the bubble, where the antipodal points coincide and the
\(S^2\)-factor has minimal size. \zcref[S]{fig:bubble-cigar} depicts the full \(\R^2\)-factor of
\(\mathcal{M}\), which is shaped like a cylinder with a cap. Here \(\set{\rho{}=R}\) is a point (the origin). We have also drawn two other level sets of \(\rho\).}
\label{fig:bubble-R2}
\end{figure}
Formally setting \(R = 0\) in the metric \zcref{intro-WB-metric} removes the \(S^1\)
factor (which has radius \(R\)) and yields the Minkowski metric in so-called
Rindler coordinates:
\begin{equation}\label{intro-Mink-metric}
g_{\textnormal{mink}} = -\rho^2\dd{}\tau^2 + \dd{}\rho^2 + \rho^2\cosh ^2\tau{}g_{S^2}.
\end{equation}
The \((\tau{},\rho{})\) coordinates are related to the usual Minkowskian \((t,r)\)
coordinates by \(r = \rho{}\cosh \tau{}\) and \(t = \rho{}\sinh \tau{}\). Since
\(\rho{}\) originally ranged over \((R,\infty)\), the region in Minkowski space
obtained after setting \(R = 0\) is the exterior of the double light cone. For
\(R > 0\), one can think of the Witten bubble spacetime as living in the
exterior of a one-sheeted hyperboloid in Minkowski space, as depicted in
\zcref{fig:bubble-2d}.

We finish this section with a discussion of null geodesics (see
\zcref{geodesic-motion} for more details). All null geodesics on the Witten bubble
spacetime reach \(\set{r=\infty}\), as they must, in view of the expansion to
infinity of the bubble itself. The Witten bubble spacetime therefore does not
exhibit trapping in the traditional sense. However, null geodesics with non-zero
angular momentum in the \(S^1\) direction remain within a bounded distance of
the bubble.\footnote{More precisely, for each null geodesic with non-zero angular momentum in
the \(S^1\) direction, there exist \(\rho_1\) and \(\rho_2\) such that \(R <
\rho_1\le \rho_2<\infty\) and the geodesic remains in the region
\(\set{\rho_1\le \rho{}\le \rho_2}\) (see \zcref{geodesic-motion} for further
discussion). In more geometric terms, the ratio between the \(r\)-value of the
geodesic and the \(r\)-value of the bubble at time \(\tau{}\) remains bounded,
uniformly in \(\tau{}\). Here \(r\) is the area-radius function, a geometric
quantity. Because of the exponential expansion of the spacetime (in time
\(\tau{}\)), it is appropriate to consider the ratio of \(r\)-values, rather
than the difference.} For this reason, we call these geodesics \emph{weakly trapped}. The
source of the weak trapping is that the motion of geodesics with non-zero
\(S^1\)-angular momentum is determined by an effective radial potential that is
confining, in the sense that it grows towards spatial infinity. On the other
hand, null geodesics with no \(S^1\)-angular momentum are generically not
constrained in this way, and ``escape'' the bubble. The analytic consequence of
weak trapping is that, for a solution \(\varphi{}\) to \(\Box_g\varphi{} = 0\),
the non-\(\textnormal{U}(1)\)-symmetric part of the renormalized quantity
\(r\varphi{}\) can be periodic in \(\tau{}\) (see \zcref{time-periodic-intro} of
\zcref{main-theorem-linear-intro}). Here \(r = \rho{}\cosh \tau{}\) is the
area-radius function associated to the spheres of symmetry.
\subsection{Our main results}
We now state our main results on the linear wave equation and on semilinear wave
equations in \zcref{main-theorem-linear-intro,main-theorem-semilinear-intro},
respectively. We prove (and give a precise version of)
\zcref{main-theorem-linear-intro} in \zcref{main-proofs}. Part
\zcref{intro:semilinear-outside-symmetry} of \zcref{main-theorem-semilinear-intro} is
proven in \zcref{semilinear}, and part \zcref{intro:semilinear-symmetric} is proven in
\zcref{semilinear-axisymmetric}. We discuss the main ideas of these proofs in
\zcref{proof-ideas}.
\begin{theorem}[Main results on the linear wave equation]
Let \(\varphi{}\) be a solution to the linear wave equation \(\Box_{g}\varphi{} = 0\) on the
Witten bubble spacetime \((\mathcal{M},g)\) arising from sufficiently regular
and decaying initial data. Recalling the coordinate expression
\zcref{intro-WB-metric} for the Witten bubble metric, define
\(\psi{}\coloneqq{}r\varphi{}\), where \(r = \rho{}\cosh \tau{}\) is the
area-radius function corresponding to the orbits of spherical symmetry. Let
\(\psi_0\) be the \(\textnormal{U}(1)\)-symmetric part of \(\psi{}\) (namely the
average of \(\psi{}\) over the orbits of \(\textnormal{U}(1)\) symmetry), and
define \(\psi_{\ge 1}\coloneqq{}\psi{}-\psi{}_0\). Then:
\begin{enumerate}
\item \label{bounded-intro} \emph{(Quantitative boundedness and decay)} The quantity \(\psi{}_{\ge
   1}\) decays as \(\rho^{-1/2}\), while the quantity \(\psi_0\) is bounded.
Since \(r = \rho{}\cosh \tau{}\) grows in \(\tau{}\), the boundedness of
\(\psi{} = r\varphi{}\) can be interpreted as a \emph{time decay} result for
\(\varphi{}\). These bounds can be expressed in terms of energies of
(differentiated versions of) \(\psi{}\) associated to fluxes of appropriate
vector fields.
\item \label{radiation-field-intro} \emph{(Existence of a radiation field)} The quantity \(\psi{}\)
attains a limit towards null infinity (namely, along outgoing null geodesics)
that is \(\mathrm{U}(1)\)-symmetric and can be non-vanishing.
\item \label{bubble-vanishing-intro} \emph{(Vanishing property of solutions at the bubble)} If
\(\varphi{}\) has vanishing \(\mathrm{U}(1)\)-symmetric part, then it vanishes along
the bubble.
\item \label{U1-improved-decay-intro} \emph{(Improved time decay for
\(\mathrm{U}(1)\)-symmetric solutions)} If \(\varphi{}\) is
\(\mathrm{U}(1)\)-symmetric, then the energy of \(\psi{} = r\varphi{}\)
decays through a foliation whose leaves intersect null infinity
appropriately. In particular, \(\psi{}\) decays along the bubble and along
null infinity (at an inverse polynomial rate with respect to Minkowskian
coordinates).
\item \label{time-periodic-intro} \emph{(Lack of improved time decay outside of
\(\textnormal{U}(1)\)-symmetry)} There exist non-zero solutions \(\varphi{}\)
with vanishing \(\textnormal{U}(1)\)-symmetric part such that \(\psi{} =
   r\varphi{}\) is periodic in \(\tau{}\) along integral curves of
\(\partial_\tau{}\). In particular, \(\psi{}\) need not decay along such a
curve (unless this curve lies in the bubble itself, where \(\psi{}\) vanishes
by \zcref{bubble-vanishing-intro}).
\end{enumerate}
\label{main-theorem-linear-intro}
\end{theorem}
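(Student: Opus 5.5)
We will work throughout with the coordinates of \zcref{intro-WB-metric} and Fourier-decompose in \(\theta\in S^1\) and in spherical harmonics on \(S^2\). Write \(h\coloneqq 1-(R/\rho)^2\). Since \(\sqrt{|g|}\) is proportional to \(R\rho^3\cosh^2\tau\), a direct computation in terms of \(\psi=\rho\cosh\tau\,\varphi\) gives
\begin{equation*}
\rho^3\cosh\tau\,\Box_g\varphi = -\partial_\tau^2\psi + \psi + \rho^{2}\,\partial_\rho\bigl(\rho^3 h\,\partial_\rho(\rho^{-1}\psi)\bigr)\rho^{-2} + \cosh^{-2}\tau\,\slashed{\Delta}\psi + \frac{\rho^2}{R^2h}\partial_\theta^2\psi .
\end{equation*}
Here \(\slashed{\Delta}\) denotes the Laplacian of the unit round \(S^2\). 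So \(\Box_g\varphi=0\) becomes
\begin{equation*}
\partial_\tau^2\psi = -A\psi + \cosh^{-2}\tau\,\slashed{\Delta}\psi ,
\end{equation*}
where \(A\) is a \(\tau\)-independent operator in \((\rho,\theta)\). On the \(k\)-th \(S^1\)-mode, \(A\) acquires the effective mass term \(k^2\rho^2/(R^2h)\). This term grows at infinity, which is the confinement discussed above.

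The first step is to show that \(A\) is self-adjoint and nonnegative on \(L^2(\rho\,d\rho\,d\theta)\), up to a suitable conjugation. Smoothness of \(\varphi\) across the origin of the \(\R^2\) factor imposes the correct boundary behaviour at \(\rho=R\). In polar-type coordinates near the origin, a smooth function whose \(\theta\)-mode is \(k\neq0\) vanishes to order \(|k|\) at the origin. This gives item \zcref{bubble-vanishing-intro} directly: if \(\psi_0=0\), then every mode vanishes at \(\rho=R\).

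The second step is boundedness, using the \(\partial_\tau\)-energy
\begin{equation*}
E(\tau)=\int |\partial_\tau\psi|^2+\langle A\psi,\psi\rangle+\cosh^{-2}\tau\,|\slashed{\nabla}\psi|^2 .
\end{equation*}
Its derivative is \(\partial_\tau(\cosh^{-2}\tau)\int|\slashed{\nabla}\psi|^2\), which is \(\le0\) for \(\tau\ge0\) (and symmetric for \(\tau\le0\)). Hence \(E\) is bounded by its initial value. Commuting with \(\partial_\theta\), the \(S^2\) rotation fields, and (suitably weighted) \(\partial_\tau\) then gives higher-order energies. For \(\psi_{\ge1}\), the energy controls \(\int \rho^2|\psi_{\ge1}|^2\,\rho\,d\rho\) together with \(\rho\)-derivatives. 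A one-dimensional Sobolev/Hardy argument in \(\rho\), applied after the angular Sobolev embedding, then yields \(|\psi_{\ge1}|\lesssim\rho^{-1/2}\). For \(\psi_0\) the mass term is absent, but a fundamental-theorem-of-calculus argument from the bubble with the \(h\rho^3\) weight yields boundedness. This completes item \zcref{bounded-intro}.

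For item \zcref{time-periodic-intro}, restrict to the modes \(\ell=0\) on \(S^2\) and \(k\neq0\) on \(S^1\). The \(\cosh^{-2}\tau\) term then disappears, and the equation is autonomous: \(\partial_\tau^2\psi=-A_k\psi\). The confining potential \(k^2\rho^2/(R^2h)\) makes \(A_k\) have compact resolvent. Since \(A_k\) is also positive, it has an eigenfunction \(u\) with eigenvalue \(\omega^2>0\). Then \(\psi=\cos(\omega\tau)u(\rho)e^{ik\theta}\), or its real part, is the required periodic solution. Its regularity at the bubble is inherited from the domain of \(A_k\), and its spatial decay from elliptic estimates.

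For items \zcref{radiation-field-intro} and \zcref{U1-improved-decay-intro}, restrict to \(\psi_0\). Here the equation is a \((1+1)\)-wave equation in \((\tau,\log\rho)\) with an angular term, and it is asymptotically Minkowskian. The plan is to pass to double-null coordinates adapted to \(\rho\,d\tau\) and \(h^{-1/2}d\rho\). We then prove \(r^p\)-weighted (Dafermos--Rodnianski) hierarchy estimates in the far region near null infinity, combined with the energy estimate above near the bubble. This should yield polynomial energy decay through hyperboloidal-type leaves, and hence pointwise decay along the bubble and at null infinity. The radiation field is obtained by integrating \(\partial_v\psi_0\) along outgoing null curves; the \(r^p\) estimate makes this integrable. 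The \(\psi_{\ge1}\) part contributes zero to the limit because of its \(\rho^{-1/2}\) decay, since \(\rho\to\infty\) along outgoing null geodesics.

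I expect the main obstacle to be the decay hierarchy for \(\psi_0\) near the bubble. There the geometry is \((2+1)\)-dimensional and far from flat, and one needs a Morawetz-type integrated estimate without loss. The first attempt would be a multiplier \(f(\rho)\partial_\rho\) exploiting the absence of trapping for \(k=0\).
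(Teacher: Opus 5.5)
Your parts (3) and (5) are sound. Part (3), via continuity of \(\varphi\) at the origin of the \(\R^2\) factor, is more elementary than the paper's finite-energy argument. Part (5) is the paper's Sturm--Liouville construction, with elliptic regularity in place of Frobenius. The \(\psi_{\ge1}\) half of (1) also works: the Hardy-type term \(\frac{\cosh^2x}{\tanh x}\psi_{\ge1}^2\) (from Poincar\'e on \(S^1\)) and \(\tanh x(\partial_x\psi_{\ge1})^2\) together bound \(\lvert\partial_x(\cosh x\,\psi_{\ge1}^2)\rvert\). This gives the \(\rho^{-1/2}\) bound all the way to the bubble.

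The first genuine gap is the boundedness of \(\psi_0\) in part (1). The \(\partial_\tau\)-energy and its \(\Omega,\partial_\theta,\partial_\tau\)-commuted versions see \(\psi_0\) only through \(\int(\partial_x\psi_0)^2\tanh x\,dx\) and \(\int(\cosh x)^{-2}\psi_0^2\tanh x\,dx\). Integrating outward from the bubble therefore gives at best \(\lvert\psi_0\rvert\lesssim\langle x\rangle^{1/2}\sim(\log\rho)^{1/2}\). The divergence form with the \(h\rho^3\) weight does not help, because its right-hand side contains \(\psi_0\) itself.

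In fact no argument based on these energies alone can work. Take the spherically and \(\mathrm{U}(1)\)-symmetric data \(\psi_0|_{\tau=0}=\chi(x)\log(1+x)\), \(\partial_\tau\psi_0|_{\tau=0}=0\), with \(\chi\) a cutoff vanishing near the bubble. Every commuted \(\partial_\tau\)-energy is finite, yet \(\psi_0\) is unbounded on \(\Sigma(0)\).

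Boundedness of \(\psi_0\) needs a spatially weighted quantity. On \(\mathcal{H}(s)\), the paper integrates \(X\Psi_0\) and applies Cauchy--Schwarz against \(x^{-2}\), using \(\int\sinh x\,(L\Psi_0)^2\) from \(\mathcal{E}_p\) (\zcref{psi0-far}). On \(\Sigma(\tau)\), it integrates \(L\Psi_0\) backwards along outgoing cones, using the \(r^p\)-flux \(F_p\) and a bound on \(\sup_{x\ge1}\int_{S^2\times S^1}\lvert\psi_0\rvert^2(\tau=0,x)\) for the data (\zcref{psi0-far-T}). So the \(r^p\)-type estimate you reserve for parts (2) and (4) is already indispensable for part (1).

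The second gap is that parts (2) and (4) rest on a decay estimate for \(\psi_0\) that you leave open as your ``main obstacle''. The route you suggest is also not the one that works cleanly.

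The case \(k=0\) is not trapping-free. \(\mathrm{U}(1)\)-invariant null geodesics with \(\ell_{\mathrm{hyp}}=0\) stay on level sets of \(x\) (\zcref{causal-geodesics}). On \(\mathrm{U}(1)\)-symmetric functions the principal part of \(P\) is \(x\)-independent in \((\tau,x)\) coordinates. Hence any radial multiplier \(f(x)\partial_x\), with any Lagrangian correction, has principal bulk vanishing on characteristic covectors with no \(x\)-component. These directions are controlled only by the de Sitter redshift, i.e.\ by the good-sign bulk term \(\frac{\tanh\tau}{\cosh^2\tau}\lvert\nabla_{S^2}\psi\rvert^2\) of the \(\partial_\tau\)-estimate.

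The paper needs neither a Morawetz estimate nor an \(r^p\)-hierarchy. It works with \(\Psi=(\tanh x)^{1/2}\psi\), whose equation has no first-order term, and uses the single global multiplier \(f(x)L\) with \(f=\sinh x+x/(1+x)\). After expanding \((L\Psi)^2\) and integrating the \(L(\psi^2)\) term by parts, this multiplier gives:
\begin{itemize}
\item bulk \(\gtrsim\sinh x\,(L\psi)^2+(\cosh x)^{-1}\psi^2\), valid all the way down to \(x=0\);
\item a positive angular bulk term for \(x,\tau\gg1\);
\item an angular error in \(\{x\lesssim1\}\cup\{\tau\lesssim1\}\), absorbed by the \(\partial_\tau\)-estimate and Gr\"onwall on a bounded \(s\)-interval (\zcref{rp-type-estimate}).
\end{itemize}
Since \(f'\sim f\), this bulk dominates \(\int_{s_1}^{s_2}\mathcal{E}[\psi_0](s)\,ds\). \zcref{exponential-decay} then gives exponential decay in \(s\sim u\), i.e.\ polynomial decay in \(\lvert U\rvert\).

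Finally, your proposal does not address the claim in (2) that the radiation field can be non-vanishing. The paper obtains it for spherically and \(\mathrm{U}(1)\)-symmetric data by integrating the \(\partial_\tau\)-energy identity up to null infinity and using \(\mathcal{E}_T[\psi](s)\to0\). This shows that all of the initial energy is radiated through null infinity.
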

Next, we state our main theorem on semilinear wave equations. Since the Witten
bubble geometry converges to that of Minkowski (with an extra \(S^1\) factor)
near infinity, small-data global existence for semilinear equations is not
possible for arbitrary quadratic nonlinearities. A counterexample on Minkowski
space, due to John \cite{fritz-john-blowup}, is given by the equation
\(\Box_{\textnormal{m}}\varphi{} = (\partial_t\varphi{})^2\). It is well-known
that if the quadratic nonlinearity satisfies the \emph{null condition}, which in
particular rules out the example of John \cite{fritz-john-blowup}, then solutions
arising from sufficiently small data exist globally in time
\cite{christodoulou-null-condition,klainerman-null-condition}. In this work, we
formulate a global version of the null condition suitable for the Witten bubble
spacetime. Our null condition reduces to (a generalization of) the classical
null condition near null infinity, where the geometry is asymptotically
Kaluza--Klein.
\begin{theorem}[Main results on semilinear wave equations]
Consider a semilinear wave equation of the form
\begin{equation}\label{intro-semilinear}
\Box_g\varphi{} = F(\Grad \varphi{})
\end{equation}
for some nonlinearity \(F(\Grad \varphi{})\) that is quadratic in the
derivatives of \(\varphi{}\) and satisfies a version of the \emph{null condition}. In
particular, the nonlinearity can be \(F(\Grad \varphi{}) =
g^{\alpha{}\beta{}}\partial_\alpha{}\varphi{}\partial_\beta{}\varphi{}\) or
versions thereof with weights that grow in \(\tau{}\), where \(\tau{}\) is the
time function in \zcref{intro-WB-metric}. Write \(\psi{} = r\varphi{}\). Then:
\begin{enumerate}
\item \label{intro:semilinear-outside-symmetry} \emph{(Small-data global existence for
semilinear wave equations outside of symmetry)} Suppose that the
non-\(\textnormal{U}(1)\)-symmetric part of \(F(\Grad \varphi{})\) has
sufficient decay in space. Then solutions to \zcref{intro-semilinear} arising
from sufficiently small initial data posed on a \emph{constant-\(\tau{}\)
hypersurface} \(\Sigma{}\), which terminates at \emph{spacelike infinity}, exist
globally in \(\tau{}\)-time to the future of \(\Sigma{}\), and \(\psi{}\)
grows at most polynomially in \(\tau{}\) (and so at most logarithmically in
the Minkowskian time \(t=\rho{}\sinh \tau{}\)).
\item \label{intro:semilinear-symmetric} \emph{(Semilinear wave equations within
\(\textnormal{U}(1)\)-symmetry)} Suppose that \(F(\Grad \varphi{})\) is
\(\textnormal{U}(1)\)-symmetric. Then the results of part
\zcref{intro:semilinear-outside-symmetry} apply, and:
\begin{enumerate}
\item \label{intro:semilinear-symmetric-ge} \emph{(Small-data global existence)} Solutions
to \zcref{intro-semilinear} arising from sufficiently small
\(\mathrm{U}(1)\)-symmetric initial data posed on a \emph{hyperboloidal
hypersurface} \(\mathcal{H}\) that intersects \emph{null infinity} appropriately
exist globally to the future of \(\mathcal{H}\), and the quantity
\(\psi{}\) attains a finite limit at null infinity.
\item \emph{(Improved time decay results)} The global solution of part
\zcref{intro:semilinear-symmetric-ge} enjoys improved decay properties as in part
\zcref{U1-improved-decay-intro} of \zcref{main-theorem-linear-intro}. Moreover,
the admissible nonlinearities in \(\mathrm{U}(1)\)-symmetry can include
weights in \(\tau{}\) that grow faster than the weights admissible outside
of symmetry.
\end{enumerate}
\end{enumerate}
\label{main-theorem-semilinear-intro}
\end{theorem}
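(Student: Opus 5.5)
The strategy is to treat the equation as a perturbation of the linear problem and close a bootstrap using the linear estimates of \zcref{main-theorem-linear-intro}, which we take in their quantitative form: energies of commuted versions of \(\psi=r\varphi\) associated to suitable vector fields. First I rewrite \zcref{intro-semilinear} in terms of \(\psi\). In the \((\tau,\rho)\) coordinates, \(\Box_g\) conjugated by \(r=\rho\cosh\tau\) becomes a \((1+1)\)-dimensional wave operator in \((\tau,\rho)\) plus angular Laplacians on \(S^2\) and \(S^1\). The \(S^1\) Laplacian enters with weight \((R^2(1-(R/\rho)^2))^{-1}\) relative to \(\rho^{-2}\). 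After multiplying through by \(\rho^2\), this produces the confining effective mass \(\sim\rho^2\partial_\theta^2/R^2\) for the modes \(\psi_{\ge1}\). I then commute with the Killing fields of \(\mathrm{SO}(3,1)\times\mathrm{U}(1)\): rotations, boosts (which generate \(\partial_\tau\) on the relevant orbits), and \(\partial_\theta\). Since these commute with \(\Box_g\), the commuted quantities satisfy the same equation with differentiated nonlinearities. The null condition I impose should make \(F\) involve only products in which at least one factor is a ``good'' derivative (tangential to outgoing cones) near null infinity, or is of the form \(g^{\alpha\beta}\partial_\alpha\varphi\partial_\beta\varphi\).

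For part \zcref{intro:semilinear-outside-symmetry}, the bootstrap assumption is that the commuted energies of \(\psi\) on \(\{\tau=\mathrm{const}\}\) grow at most like \(\epsilon\langle\tau\rangle^{\delta}\). Pointwise bounds then follow by Sobolev embedding on the leaves, together with the linear statements: \(\psi_{\ge1}\lesssim\rho^{-1/2}\) and \(\psi_0\) bounded. The energy estimate from part \zcref{bounded-intro} of \zcref{main-theorem-linear-intro}, applied with inhomogeneity, reduces matters to controlling the spacetime integral \(\int|r F||\partial\psi|\). I split \(F\) into \(0\cdot0\), \(0\cdot{\ge}1\), and \({\ge}1\cdot{\ge}1\) interactions.
\begin{itemize}
\item For \(0\cdot0\), the null structure together with the \(1/r=e^{-\tau}/\rho\) factor from the conjugation gives exponential-in-\(\tau\) integrability, since \(r\) grows like \(\cosh\tau\).
\item Interactions involving \(\psi_{\ge1}\) use the extra \(\rho^{-1/2}\) spatial decay (and in fact the mass gives \(\rho\)-weighted control \(\int\rho^2|\psi_{\ge1}|^2\)), together with the assumed spatial decay of the non-symmetric part of \(F\).
\end{itemize}
Since \(\psi_{\ge1}\) need not decay in \(\tau\) (by part \zcref{time-periodic-intro}), this balance only yields a Gronwall inequality with an integrable-up-to-polynomial loss. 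That loss closes the bootstrap with polynomial growth in \(\tau\), i.e.\ logarithmic in \(t\).

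For part \zcref{intro:semilinear-symmetric}, \(\psi=\psi_0\) throughout, and the equation reduces to an effectively \((3+1)\)-dimensional problem. I pose data on a hyperboloidal \(\mathcal H\) and use the decay of the energy along the hyperboloidal foliation from part \zcref{U1-improved-decay-intro}. This decay comes from an \(r^p\)-hierarchy of Dafermos--Rodnianski type near null infinity combined with an integrated decay (Morawetz) estimate in the near-bubble region, where there is no trapping for U(1)-symmetric waves. The bootstrap now assumes inverse-polynomial energy decay. The null condition gives the good-derivative factors needed to control the \(r^p\)-weighted nonlinear terms, and the faster decay permits nonlinearities with larger \(\tau\)-weights. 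Existence of the radiation field follows by integrating \(\partial_v\psi\) along outgoing null curves, using the \(r^p\) bounds with \(p\) close to \(2\).

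The main obstacle is part \zcref{intro:semilinear-outside-symmetry}. There the \(\psi_{\ge1}\) interactions lack time decay. I will first try to use the mass term to gain \(\rho^{-1}\) in \(L^2\) for \(\psi_{\ge1}\), and then pair it against the \(e^{-\tau}\) from \(1/r\). If that exponential gain is eaten by the \(\tau\)-weights allowed in \(F\), I expect to have to settle for polynomial growth, as in the statement.
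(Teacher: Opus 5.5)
\textbf{Part (1) has a genuine gap in the pointwise input.} Because $F$ is quadratic in derivatives, you must put lower-order factors $\partial_x\mathfrak{D}^{j}\psi$ in $L^\infty$. Your commutators (rotations, boosts, $\partial_\theta$) are all tangent to the level sets of $x$, so the commuted energies contain only one $x$-derivative, and Sobolev embedding on the leaves gives no $L^\infty$ bound on $\partial_x\psi$ by itself.

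For $x\ge 1$ one can recover $\partial_x^2\psi$ from the equation, as the paper does. But the mass term $\frac{\cosh^2 x}{\tanh^2 x}\partial_\theta^2$ then only gives $\lvert\partial_x\psi_{\ge 1}\rvert^2\lesssim \cosh x\cdot E$. So the derivative of the non-symmetric part \emph{grows} in space, and your appeal to the extra $\rho^{-1/2}$ decay of $\psi_{\ge 1}$ fails exactly where it is needed.

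Near the bubble, where $(x,\theta)$ are polar coordinates, the paper handles $\partial_x\psi_{\ge1}$ by commuting with the non-Killing Cartesian fields $\partial_{\hat y},\partial_{\hat z}$ and using an elliptic estimate for the spatial part of $P$. The commutator $[P,\partial_{\hat y}]$ leaves a \emph{linear} error with no decay in $\tau$. Since $\psi_{\ge1}$ can be time-periodic, its time integral grows. Organized through a lexicographic hierarchy with bootstrap rates $(1+\tau)^{2\#\mathbf{k}}$, this is the actual source of the polynomial growth. The nonlinear terms themselves cost nothing in $\tau$ once $\ell(\tau)$ decays polynomially fast, so your Gronwall-loss explanation does not match the mechanism.

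\textbf{The dangerous interaction is not handled.} Once the growth of $\partial_x\psi_{\ge1}$ is accounted for, the most dangerous term in the $\partial_\tau$-estimate is $\frac{\ell(\tau)}{\cosh x}(\mathfrak{D}^{\mathbf{k}}\psi)_0\,\partial_x\psi_{\ge 1}$. It pairs a top-order zeroth-order factor, created by the conjugation $\psi=r\varphi$, with the worst $L^\infty$ factor. Controlling it requires an $L^2$ bound on $(\cosh x)^{-1/2}(\mathfrak{D}^{\mathbf{k}}\psi)_0$. The integrand of $E_T$ only contains $(\cosh x)^{-2}\psi^2$, and your mass/Poincar\'e gain concerns only the modes $\ge 1$.

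The paper supplies this by putting the $r^p$-type energy and outgoing-cone flux of the \emph{zero mode}, $E_p[(\mathfrak{D}^{\mathbf{k}}\psi)_0]$ and $F_p[(\mathfrak{D}^{\mathbf{k}}\psi)_0]$, into the bootstrap even outside symmetry. Improving them requires bounding $\int\!\!\int \sinh x\,\lvert(\mathfrak{D}^{\mathbf{k}}P\psi)_0\rvert^2$. There the projection of $f\,\mathfrak{D}^{\mathbf{k}}\psi\,\partial_x\psi$ contains $(f_{\ge1}(\mathfrak{D}^{\mathbf{k}}\psi)_0\,\partial_x\psi_{\ge1})_0$, and this is exactly where the almost $\mathrm{U}(1)$-symmetry bound $\lvert f_{\ge 1}\rvert\lesssim(\cosh x)^{-1/2}$ is used. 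The flux $F_p$ also gives $x$-uniform boundedness of $\psi_0$ on constant-$\tau$ slices. You import that bound from the linear theory, but it must be re-proved with an inhomogeneity.

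\textbf{Part (2) takes a different route.} You combine an $r^p$ hierarchy with a near-bubble Morawetz estimate that you simply assume. The paper instead uses one global multiplier $(\sinh x+\frac{x}{1+x})L$ on $\Psi=(\tanh x)^{1/2}\psi$, with no Morawetz input. Its bulk term absorbs $\int\!\!\int\sinh x\,\lvert P\psi\rvert^2$ using smallness alone. That, rather than faster decay, is why a merely bounded $\ell$ is admissible within symmetry.
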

\begin{remark}[Dichotomy in the decay properties of \(\textnormal{U}(1)\)-symmetric and non-\(\textnormal{U}(1)\)-symmetric waves]
Parts \zcref{U1-improved-decay-intro,time-periodic-intro} of
\zcref{main-theorem-linear-intro} say that \(\psi_0\) decays faster in time than
\(\psi_{\ge 1}\), while part \zcref{bounded-intro} says that \(\psi_{\ge 1}\) decays
faster in space than \(\psi_0\). The balance between improved decay in time and
improved decay in space motivates the splitting of \(\psi{}\) into its
\(\textnormal{U}(1)\)-symmetric and non-\(\textnormal{U}(1)\)-symmetric parts
when considering nonlinear interactions in the proof of
\zcref{main-theorem-semilinear-intro}.
\end{remark}
\begin{remark}[Null geodesics and the behaviour of \(\psi\)]
Part \zcref{bubble-vanishing-intro} of \zcref{main-theorem-linear-intro} says that
\(\psi_{\ge 1}\) vanishes at the bubble. Geometrically, this corresponds to the
fact that null geodesics with non-zero angular momentum in the \(S^1\) direction
do not reach the bubble. These geodesics have \(\rho{}\)-value that is periodic
in the affine parameter. This corresponds to part \zcref{time-periodic-intro}, which
says that \(\psi_{\ge 1}\) may be periodic in \(\tau{}\). Finally, generic null
geodesics with zero \(S^1\)-angular momentum escape any finite-\(\rho{}\) region
given sufficient \(\tau{}\)-time, which correspond to the improved decay in part
\zcref{U1-improved-decay-intro}.
\end{remark}
\begin{remark}[The growth in time outside symmetry]
In part \zcref{intro:semilinear-outside-symmetry} of
\zcref{main-theorem-semilinear-intro}, we are only able to show that \(\psi{}\)
grows polynomially in \(\tau{}\). This should be contrasted with the classical
results on Minkowski space
\cite{klainerman-null-condition,christodoulou-null-condition}, where \(\psi{}\)
remains bounded. The reason for this loss is the possible periodicity in
\(\tau{}\) of \(\psi_{\ge 1}\), which is a non-Minkowskian phenomenon. See
\zcref{weak-trapping} for further discussion.
\end{remark}
\begin{remark}[Systems of semilinear wave equations]
Although we have stated \zcref{main-theorem-semilinear-intro} for scalar semilinear
wave equations, our proof extends to systems of semilinear wave equations
satisfying the null condition.
\end{remark}
\subsection{Previous literature}
\subsubsection{Mathematics literature}
The only mathematical study of the Witten bubble spacetime \((\mathcal{M},g)\)
is due to Bachelot \cite{Bachelot_2016}. The main difference between our work and
\cite{Bachelot_2016} is that we study nonlinear equations, while
\cite{Bachelot_2016} is restricted to linear equations. We also establish a novel
\(r^p\)-type estimate and provide quantitative pointwise estimates, both of
which are needed to study nonlinear equations. Moreover, \cite{Bachelot_2016}
uses spectral methods, as opposed to the physical space methods we use here.

The main result of \cite{Bachelot_2016} is a classical and quantum scattering
theory on the Witten bubble spacetime, which shows in particular that there is
no ``particle creation,'' despite the time dependence of the background. Although
Bachelot \cite{Bachelot_2016} produces an explicit formula for solutions to the
linear wave equation on \((\mathcal{M},g)\) (in terms of an infinite sum of
special functions), it is unclear how to deduce the pointwise estimates we prove
in \zcref{main-theorem-linear-intro}. We note finally that, unlike our work,
\cite{Bachelot_2016} also studies the Lorentzian version of the Riemannian
Hawking wormhole \cite{HAWKING1987337}, namely the submanifold of
\((\mathcal{M},g)\) consisting of a pair of antipodal points on \(S^1\). This
geometry was first considered in \cite{Culetu_2010}.

The Witten bubble spacetime represents a semiclassical instability of the
Kaluza--Klein spacetime \(\R^{3 + 1}\times S^1\), which is classically stable,
namely as a solution to the Einstein vacuum equations, by
\cite{kaluzakleinstability}. Classical stability is also known for the product of
high-dimensional Minkowski space with certain Ricci flat compact Riemannian
manifolds \cite{Andersson_2023}. These spacetimes are relevant in supergravity
and string theory.

The physical space methods in our work use techniques from the study of wave
equations on non-flat backgrounds, in particular the \(r^p\)-weighted energy
method of Dafermos--Rodnianski \cite{rp-method} and its extension to two space
dimensions in our work \cite{gautam-2d-waves}. The class of nonlinear equations
we study satisfies a version of the null condition of Klainerman
\cite{klainerman-null-condition}.
\subsubsection{Physics literature}
The Witten bubble spacetime was introduced in \cite{WITTEN1982481}. The first
investigation of (causal) geodesics in the Witten bubble spacetime is due to
Brill--Matlin \cite{PhysRevD.39.3151} (see also the subsequent works
\cite{matlin-thesis,Ofer_Aharony_2002}). In \cite{PhysRevD.42.1996},
Bhawal--Viveshwara studied linear waves on \((\mathcal{M},g)\).

The Witten bubble spacetime can be thought of as a double Wick rotation of the
\((4 + 1)\)-dimensional Schwarzschild black hole spacetime (see
\zcref{wick-rotation}). Analogously, one can consider a double Wick rotation of a
rotating Kerr black hole, as in \cite{Dowker_1995,Ofer_Aharony_2002}. This
produces a ``rotating bubble'' spacetime that is qualitatively different from the
Witten bubble spacetime. Rather than expanding indefinitely, the bubble reaches
a maximal radius and then shrinks again. Moreover, the compact \(S^1\) dimension
grows at infinity. Note that, although Schwarzschild is a member of the smooth
one-parameter family of Kerr black holes, the rotating bubble spacetimes do not
smoothly bifurcate off the Witten bubble spacetime, due to an ``intertwining''
between the \(S^1\) direction and one of the \(S^2\) directions. In particular,
the Kerr bubble spacetimes are not relevant for the problem of nonlinear
stability of the Witten bubble spacetime as a solution to the Einstein vacuum
equations. One can also consider higher-dimensional variants of the Witten
bubble spacetime and rotating bubble spacetimes \cite{Ofer_Aharony_2002}.
\subsection{Ideas of the proof}
\label{proof-ideas} Throughout this section, we will use the following notation:
\begin{center}
\begin{tabular}{ll}
Notation & Meaning\\
\hline
\(\varphi{}\) & solution to \(\Box_g\varphi{} = 0\)\\
\(\psi{}\) & \(\psi{} = r\varphi{}\)\\
\(\psi_0\) and \(\psi_{\ge 1}\) & \(\psi_0\) is the average of \(\psi{}\) over orbits of \(\textnormal{U}(1)\)-symmetry, and \(\psi_{\ge 1}\coloneqq{}\psi{}-\psi_0\)\\
\((\tau{},x)\) & a rescaling of the \((\tau{},\rho{})\) coordinates of \zcref{intro-WB-metric}, with \(\rho{} = R\cosh x\)\\
\((\partial_\tau{},\partial_x)\) & the coordinate derivatives in \((\tau{},x)\) coordinates\\
\end{tabular}
\end{center}
This section is structured as follows. In \zcref{brief-summary}, we give a brief
summary of the proof of
\zcref{main-theorem-linear-intro,main-theorem-semilinear-intro}. We formulate the
equation in terms of a twisted operator corresponding to \(\psi{} = r\varphi{}\)
(see \zcref{intro-coordinates}). In \zcref{intro-double-null}, we introduce the double
null coordinates on the Witten bubble spacetime. We use two energy estimates,
namely a \(\partial_\tau{}\)-energy estimate (see \zcref{intro-dt-estimate}) and a
version of the \(r^p\)-weighted energy estimates of Dafermos--Rodnianski
\cite{rp-method} (see \zcref{intro-rp-estimate-sec}). Using higher-order energy
estimates, we establish pointwise estimates via Sobolev embedding (see
\zcref{intro-pointwise}). Finally, we sketch the proof of
\zcref{main-theorem-semilinear-intro} in \zcref{intro-semilinear-sec}.
\subsubsection{A brief summary of the proof}
\label{brief-summary}
The proof of \zcref{main-theorem-linear-intro} is based on physical space energy
estimates for the linear wave equation. We derive these estimates by multiplying
the wave equation with appropriate \emph{vector field multipliers} and integrating by
parts. We also derive pointwise estimates, which use higher-order energies. The
basic strategy to prove \zcref{main-theorem-linear-intro} is a continuity argument
that uses the linear estimates (with inhomogeneity) as a priori estimates.

Throughout, we formulate energy estimates in terms of \(\psi{}\) rather than \(\varphi{}\),
since energy estimates for the latter do not hold on the Witten bubble spacetime
(see \zcref{intro-dt-estimate}). We derive two energy estimates: a basic estimate
associated to the multiplier \(\partial_\tau{}\), valid for all scalar fields,
and an \(r^p\)-type estimate associated to the multiplier \((\sinh
x)(\partial{}_\tau{} + \partial{}_x)\), valid only for \(\U(1)\)-symmetric scalar
fields. Since non-\(\U(1)\)-symmetric scalar fields solve a ``massive'' wave equation,
one cannot hope to obtain an \(r^p\)-type estimate for them. Indeed, \(\psi_{\ge
1}\) may be periodic in \(\tau{}\) (see \zcref{time-periodic-intro} of
\zcref{main-theorem-linear-intro}), and so the integrated energy associated to
\(\psi_{\ge 1}\) that an \(r^p\)-type estimate would control cannot be finite.

The pointwise estimates distinguish between \(\U(1)\)-symmetric and non-\(\U(1)\)-symmetric
modes:
\begin{equation}
\psi_0\lesssim 1, \qquad \partial_x\psi_0\lesssim 1,\qquad \psi_{\ge 1}\lesssim (\cosh x)^{-1/2},\qquad \partial_x\psi_{\ge 1}\lesssim (\cosh x)^{1/2}.
\end{equation}
The reason \(\psi_{\ge 1}\) decays faster in space than \(\psi_0\) is that \(\psi_{\ge 1}\)
solves a wave equation with ``effective mass'' that grows in space. The reason
\(\partial_x\psi_{\ge 1}\) decays worse in space than \(\partial_x\psi_0\) is
that \(\partial_x\) is analogous to the scaling vector field on Minkowski space,
which commutes well with the massless wave equation but not with the massive
wave equation.

We now turn to the nonlinear estimates. The nonlinearity in the semilinear
equation \(\Box_g\varphi{} = F(\Grad \varphi{})\) is quadratic in the
derivatives of \(\varphi{}\). Since we must work with energies of \(\psi{}\), we
rewrite this equation in terms of \(\psi{}\) as \(P\psi{} =
\tilde{F}(\psi{},\Grad \psi{})\), where \(P\) is a rescaled version of
\(\Box_g\). Since \(\psi{} = r\varphi{}\), the quadratic nonlinearity
\(\tilde{F}\) has an extra weight of \(r^{-1}\) and can now contain derivative
terms such as \((\partial_x\psi{})^2\), zeroth order terms such as \(\psi^2\),
and mixed terms such as \(\psi{}\partial_x\psi{}\). The main obstruction to
closing nonlinear estimates is decay in \(x\). The multiplier with the strongest
\(x\)-weight, namely the \(r^p\)-type multiplier \((\sinh x)(\partial{}_\tau{} +
\partial{}_x)\), is used only for the \(\U(1)\)-symmetric part of the scalar field. The
quantity with the weakest \(x\)-weight in the energy is \(\psi_0\). The quantity
with the weakest pointwise \(x\)-decay is \(\partial_x\psi_{\ge 1}\). In terms
of \(x\)-decay, the worst possible frequency interaction therefore arises from
the mixed term in \(P(\mathfrak{d}\psi{})_0\approx
\mathfrak{d}(\tilde{F}(\psi{},\Grad \psi{}))_0\), where \(\mathfrak{d}\) is a
top-order derivative. It takes the form
\((\mathfrak{d}\psi)_0\partial_x\psi_{\ge 1}\). This term can occur only when
the coefficient in front of the mixed term in the nonlinearity is
non-\(\U(1)\)-symmetric. More precisely, a quadratic term
\(\mathfrak{d}(f\psi{}\partial{}_x\psi{})_0\) contributes this most dangerous
term only in the form \(f_{\ge 1}(\mathfrak{d}\psi{})_{0}\partial_x\psi_{\ge
1}\). To control this term, we assume that \(f_{\ge 1}\) decays sufficiently
fast in space.

We discuss these ideas in more detail in
\zcref{intro-coordinates,intro-double-null,intro-dt-estimate,intro-rp-estimate-sec,intro-pointwise,intro-semilinear-sec}.
\subsubsection{A rescaled coordinate system and a twisted operator}
\label{intro-coordinates} It is convenient to introduce the coordinate \(x\) defined by \(\rho{} = R\cosh x\),
so that the bubble lies at \(\set{x=0}\) and the metric \zcref{intro-WB-metric}
takes the following form:
\begin{equation}\label{intro-x-metric}
g = R^2\bigl[-\cosh ^2x\dd{}\tau^2 + \cosh ^2x\dd{}x^2 + \cosh ^2x\cosh ^2\tau{}g_{S^2} + \tanh ^2xg_{S^1}\bigr].
\end{equation}
The area radius function then takes the form \(r = R\cosh x\cosh \tau{}\). As we will
discuss in \zcref{intro-dt-estimate}, to prove energy estimates, we must work not
with \(\varphi{}\) itself but with the twisted quantity \(\psi{} = r\varphi{}\),
which satisfies the equation \(P\psi{} = 0\), where
\begin{equation}\label{intro-P-def}
P\psi{}\coloneqq{}-\partial{}_\tau^2\psi{} + \partial{}_x^2\psi{} + \frac{1}{\cosh x\sinh x}\partial{}_x\psi{} - \frac{1}{\cosh^2x}\psi{} + \frac{1}{\cosh ^2\tau{}}\Lapl _{S^2}\psi{} +  \frac{\cosh^2x}{\tanh ^2x}\partial{}_\theta^2\psi{}.
\end{equation}
In terms of the wave operator \(\Box{}\), the operator \(P\) takes the form
\begin{equation}
P\psi{}=\rho^2r\Box{}(r^{-1}\psi{}).
\end{equation}
The choice to work with \((\tau{},x)\) coordinates and the operator \(P\) reveals the
\(-\partial_\tau^2 + \partial_x^2 + x^{-1}\partial_x\) structure near \(x=0\),
which motivates the use of techniques from the study of wave equations in two
space dimensions as in \cite{gautam-2d-waves}.
\subsubsection{The double null coordinates}
\label{intro-double-null} From now on, we use the following notation:
\begin{center}
\begin{tabular}{ll}
Notation & Meaning\\
\hline
\((u,v)\) & double null coordinates defined by \(u = \frac{1}{2}(\tau{}-x)\) and \(v = \frac{1}{2}(\tau{}+x)\)\\
\(\underline{L}\) and \(L\) & \(\underline{L} = \partial{}_u\) and \(L = \partial_v\) in \((u,v)\) coordinates\\
\((U,V)\) & Minkowskian double null coordinates, with \(U\sim -e^{-2u}\) and \(V\sim e^{2v}\)\\
\end{tabular}
\end{center}
The coordinates
\begin{equation}
u\coloneqq{} \frac{1}{2}(\tau{}-x),\qquad v\coloneqq{} \frac{1}{2}(\tau{}+x)
\end{equation}
are null with respect to the metric \zcref{intro-x-metric}. The associated
coordinate vector fields are
\begin{equation}\label{intro-L-def}
\underline{L}\coloneqq{}\partial{}_\tau{} - \partial{}_x, \qquad L\coloneqq{}\partial{}_\tau{} + \partial{}_x.
\end{equation}
Since \(g(\underline{L},L) = -2\cosh ^2x\), the \((u,v)\) coordinates are not
normalized as in Minkowski space (where
\(g_{\textnormal{m}}(\underline{L}_{\textnormal{m}},L_{\textnormal{m}}) = -2\)).
In fact, the \((u,v)\) coordinates are logarithmic with respect to the
Minkowskian double null coordinates \((U,V)\) defined by
\begin{equation}
U \coloneqq{} -\frac{R}{2}e^{-2u},\qquad V \coloneqq{} \frac{R}{2}e^{2v}.
\end{equation}
Note that \(U\le 0\) and \(V\ge \abs{U}\ge 0\) in the Witten bubble spacetime, and that
\(U\to 0\) as \(\tau{}\to \infty\) for fixed \(x\) (this is a hyperbolic
trajectory in Minkowski space). In \((U,V)\) coordinates, the metric takes the
form
\begin{equation}
g = -4(1+e^{-2x})^2\dd{}U\dd{}V + r^2g_{S^2} + R^2\tanh ^2xg_{S^1},
\end{equation}
from which we see that the Witten bubble metric approaches the Kaluza--Klein
metric (with circle of radius \(R\)) as \(x\to \infty\). In terms of the Minkowskian
double null coordinates, we have
\begin{equation}\label{LU-relation}
\underline{L}\sim \abs{U}\partial{}_U,\qquad  L \sim V\partial{}_V\sim r\partial{}_{V}.
\end{equation}
\subsubsection{A \texorpdfstring{\(\partial_\tau{}\)}{∂τ}-estimate for the twisted quantity \texorpdfstring{\(\psi\)}{ψ}}
\label{intro-dt-estimate} We introduce the following notation for use in this and
the following sections:
\begin{center}
\begin{tabular}{ll}
Notation & Meaning\\
\hline
\(\dd{}\mu{}\) & \(\dd{}\mu{}\coloneqq{}\dd{}x\dd{}\omega{}\dd{}\theta{}\), where \(\dd{}\omega{}\) is the volume form on the unit round sphere\\
\(\Sigma{}(\tau{})\) & a level set of \(\tau{}\)\\
\(s\) & a ``hyperboloidal'' time function satisfying \(s\sim u\)\\
\(\mathcal{H}(s)\) & a level set of \(s\)\\
\(E_T[\psi{}](\tau{})\) & a \(\partial_\tau{}\)-energy flux of \(\psi{}\) through \(\Sigma{}(\tau{})\)\\
\(\mathcal{E}_T[\psi{}](s)\) & a \(\partial_\tau{}\)-energy flux of \(\psi{}\) through \(\mathcal{H}(s)\)\\
\(\Psi{}_0\) & \(\Psi_0 = (\tanh x)^{1/2}\psi_0\) is used to define the \(r^p\)-type energies\\
\(\mathcal{E}_p[\psi{}](s)\) & an \(r^p\)-type energy flux of \(\psi{}\) through \(\mathcal{H}(s)\)\\
\(E_p[\psi{}](\tau{})\) & an \(r^p\)-type energy flux of \(\psi{}\) through \(\Sigma{}(\tau{})\)\\
\end{tabular}
\end{center}
We prove the following \(\partial_\tau{}\)-energy estimate by multiplying the wave equation
\zcref{intro-P-def} by \(\partial_\tau{}\psi{}\) and integrating by parts (with
respect to the volume form \(\tanh x\dd{}x\dd{}\omega{}\dd{}\theta{}\)):
\begin{equation}\label{intro-T-estimate}
E_T[\psi{}](\tau_2) + \int_{\tau_1}^{\tau_2} \int _{\Sigma{}(\tau{})}\frac{\tanh \tau{}}{\cosh ^2\tau{}}\abs{\Grad _{S^2}\psi{}}^2\dd{}\mu{}\dd{}\tau{}\lesssim E_T[\psi{}](\tau_1).
\end{equation}
Here \(0\le \tau_1\le \tau_2\), the hypersurface \(\Sigma{}(\tau{})\) is a level set of the time
function \(\tau{}\), and the \(\partial_\tau{}\)-energy flux through \(\Sigma{}(\tau{})\) is
\begin{equation}\label{intro-dt-energy}
E_T[\psi{}](\tau{}) \coloneqq{}\int _{\Sigma{}(\tau{})}\Bigl[(\partial{}_\tau{}\psi{})^2 + (\partial{}_x\psi{})^2 + \frac{1}{\cosh ^2x}\psi^2  + \frac{1}{\cosh ^2\tau{}}\abs{\Grad _{S^2}\psi{}}^2 + \frac{\cosh ^2x}{\tanh ^2x}(\partial{}_\theta{}\psi{})^2\Bigr]\tanh x\dd{}\mu{}.
\end{equation}
The vector field \(\partial_\tau{}\) is (uniformly) timelike, and so its associated flux
through spacelike hypersurfaces is coercive. We can also establish a
\(\partial_\tau{}\)-estimate associated not to the constant-\(\tau{}\) foliation
\(\Sigma{}(\tau{})\), but to a hyperboloidal foliation \(\mathcal{H}(s)\), whose
leaves reach null infinity and are indexed so that \(s\sim u\), for the null
coordinate \(u\) introduced in \zcref{intro-double-null}. The estimate in this
setting reads
\begin{equation}\label{intro-T-estimate-hyp}
\mathcal{E}_T[\psi{}](s_2) + \int_{s_1}^{s_2} \int _{\mathcal{H}(s)}\frac{\tanh \tau{}}{\cosh ^2\tau{}}\abs{\Grad _{S^2}\psi{}}^2\lesssim \mathcal{E}_T[\psi{}](s_1),
\end{equation}
where the \(\partial_\tau{}\)-energy flux through the hyperboloidal hypersurface
\(\mathcal{H}(s)\) is
\begin{equation}
\mathcal{E}_T[\psi{}](s)\coloneqq{} \int _{\mathcal{H}(s)} \Bigl[(L\psi{})^2 + \langle{}x\rangle^{-2}(\underline{L}\psi{})^2 + \frac{1}{\cosh ^2x}\psi^2 + \frac{1}{\cosh ^2\tau{}}\abs{\Grad _{S^2}\psi{}}^2 + \frac{\cosh ^2x}{\tanh ^2x}(\partial{}_\theta{}\psi{})^2\Bigr]\tanh x\dd{}\mu{}.
\end{equation}
Here \(L\) and \(\underline{L}\) are as in \zcref{intro-L-def}.

We formulate the above estimates in terms of \(\psi{}\), rather than the solution
\(\varphi{}\) to the wave equation itself.\footnote{In more geometric language, these estimates can be derived using the
twisted current formalism of \cite{Holzegel_2014}.} This is because on the Witten
bubble spacetime, a \(\partial_\tau{}\)-energy estimate formulated in terms of
\(\varphi{}\) instead of \(\psi{}\) \emph{cannot hold}. Indeed, for \(\varphi{}\)
arising from compactly supported data on \(\mathcal{H}(0)\) (so that all
relevant energy norms defined there are finite), such an estimate would imply
\begin{equation}\label{intro-Lphi-estimate}
\int _{\mathcal{H}(s)}r^2(L\varphi{})^2\tanh x\dd{}x\dd{}\omega{}\dd{}\theta{} < \infty
\end{equation}
for \(s\ge 0\), in analogy with the control
\begin{equation}\label{intro-Lpsi-estimate}
\int _{\mathcal{H}(s)}(L\psi{})^2\tanh x\dd{}x\dd{}\omega{}\dd{}\theta{} < \infty
\end{equation}
we get from \zcref{intro-T-estimate-hyp}. Note that the volume form in
\zcref{intro-Lphi-estimate} contains \(r\), while in \zcref{intro-Lpsi-estimate}, the
factors of \(r\) are included in \(\psi{}\). Since \(r = R\cosh x\cosh \tau{}\),
we have \(Lr\sim r\) when \(\tau{}\) and \(x\) are large. Since \(r\varphi{}\sim
(Lr)\varphi{} = L(r\varphi{}) - r(L\varphi{})\),
\zcref{intro-Lpsi-estimate,intro-Lphi-estimate} together would imply
\begin{equation}
\int _{\mathcal{H}(s)}\psi{}^2\tanh x\dd{}x\dd{}\omega{}\dd{}\theta{} < \infty.
\end{equation}
This cannot be, since \(\psi{}\) may attain a non-zero limit along \(\mathcal{H}(s)\)
as \(x\to \infty\). That is, the solution can have a non-trivial radiation field
along null infinity.
\subsubsection{An \texorpdfstring{\(r^p\)}{rᵖ}-type energy estimate and energy decay for the \texorpdfstring{\(\U(1)\)}{U(1)}-symmetric part \texorpdfstring{\(\psi_0\)}{ψ0}}
\label{intro-rp-estimate-sec} In addition to the \(\partial_\tau{}\)-energy, we use
a version of the \(r^p\)-weighted energy estimates of Dafermos--Rodnianski
\cite{rp-method}. Using a multiplier that for large \(x\) takes the form
\((\sinh x)L\) (where the outgoing null vector field \(L\) is as in
\zcref{intro-L-def}), we show that
\begin{equation}\label{intro-rp-estimate}
\mathcal{E}[\psi{}_0](s_2)  + \int_{s_1}^{s_2} \mathcal{E}[\psi{}_0](s)\dd{}s\lesssim \mathcal{E}[\psi{}_0](s_1),
\end{equation}
where
\begin{equation}
\mathcal{E}[\psi_0](s)\coloneqq{}\mathcal{E}_T[\psi_0](s) + \mathcal{E}_p[\psi_0](s),
\end{equation}
with
\begin{equation}
\mathcal{E}_p[\psi{}_0](s)\coloneqq{}\int _{\mathcal{H}(s)} \sinh x (L\Psi_0)^2 + \langle{}x\rangle^{-2}\frac{1}{\cosh x}\psi_0^2 + \langle{}x\rangle^{-2}\frac{\cosh x}{\cosh ^2\tau{}}\abs{\Grad _{S^2}\Psi_0}^2\dd{}\mu{}.
\end{equation}
Here \(\Psi{}_0\coloneqq{}(\tanh x)^{1/2}\psi{}_0\). The estimate \zcref{intro-rp-estimate} readily
implies exponential decay in \(s\) for \(\mathcal{E}[\psi{}_0](s)\). Since
\(s\sim u\) and \(u \sim \log \abs{U}\) for \(U\) the Minkowskian null
coordinate introduced in \zcref{intro-double-null}, this corresponds to polynomial
energy decay in \(\abs{U}\).
\begin{remark}[Relation to estimates in two space dimensions]
As in our work \cite{gautam-2d-waves} on the linear wave equation in two space
dimensions, we prove the estimate \zcref{intro-rp-estimate} using a \emph{global}
multiplier all the way to the bubble itself, where the spacetime geometry is not
Minkowskian. In particular, we do not use as input an integrated local energy
decay statement. The two space dimensions here correspond to the \(\R^2\) factor
of the Witten bubble spacetime (see \zcref{intro-geometry}).
\end{remark}
\begin{remark}[The requirement of \(\U(1)\)-symmetry]
The estimate \zcref{intro-rp-estimate} holds only for the \(\U(1)\)-symmetric
part \(\psi_0\) of the scalar field. This is because the \(r^p\)-type estimate
captures improved decay for outgoing null derivatives. As discussed in
\zcref{rp-axisymmetric-required}, we can think of \(\psi_{\ge 1}\) as satisfying a
massive wave equation. Since solutions to such equations do not enjoy improved
decay for outgoing null derivatives, we cannot expect an \(r^p\)-type estimate
to hold for \(\psi_{\ge 1}\). More directly, \(\psi_{\ge 1}\) may be periodic by
\zcref{time-periodic-intro} of \zcref{main-theorem-linear-intro}, and so the integrated
energy on the left-hand side of \zcref{intro-rp-estimate} cannot be uniformly
bounded.
\end{remark}
\begin{remark}[Comparison to the \(r^p\)-weighted estimates of Dafermos--Rodnianski]
From \zcref{intro-double-null}, the multiplier \((\sinh x)L\) behaves like
\(\rho{}r\partial_V\), where \(\partial_V\) is the Minkowskian outgoing null
vector field. The \(r^p\)-weighted estimates of Dafermos--Rodnianski \cite{rp-method} use a
multiplier \(r^p\partial{}_V\). Since \(\rho{}\sim \abs{U}^{1/2}V^{1/2}\) (where
\((U,V)\) are as in \zcref{intro-double-null}) and \(V\sim r\), we can think of
\zcref{intro-rp-estimate} as a version of the \(r^p\)-weighted estimates on
Minkowski with \(p = 3/2\), at least in a region where \(\abs{U}\gtrsim 1\).
Note that this region is well separated from the bubble at late times.
\end{remark}
\subsubsection{Pointwise estimates}
\label{intro-pointwise} Using Sobolev embedding and, near the bubble, elliptic
estimates for the spatial part of the wave operator, we obtain the following
pointwise estimates:
\begin{equation}\label{intro-pointwise-estimates}
\abs{\psi{}_0}^2\lesssim E, \qquad \abs{\psi{}_{\ge 1}}^2\lesssim (\cosh x)^{-1}E,\qquad \abs{\partial{}_x\psi{}_0}^2\lesssim E, \qquad \abs{\partial{}_x\psi{}_{\ge 1}}^2\lesssim (\cosh x)E.
\end{equation}
Here \(E\) stands for a higher-order energy of \(\psi{}\). Note that \(\psi_{\ge 1}\)
has improved decay in \(x\) relative to \(\psi_0\), which is only bounded. In
particular, \(\psi_{\ge 1}\) vanishes at null infinity (see part
\zcref{radiation-field-intro} of \zcref{main-theorem-linear-intro}). This is because the
energy for \(\psi_{\ge 1}\) (see \zcref{intro-dt-energy}) contains a term
\((\partial_\theta{}\psi_{\ge 1})^2\) with a weight \((\tanh x)^{-2}(\cosh
x)^2\), and \(\partial{}_\theta{}\psi{}_{\ge 1}\) controls \(\psi_{\ge 1}\) in
\(L^2\) by a Poincaré inequality on \(S^1\). It is this same term that causes
\(\partial_x\psi_{\ge 1}\) to decay slower in \(x\) than \(\psi_0\). The
\(\partial_\theta^2\)-term in the equation for \(\psi_{\ge 1}\) can be
interpreted as a ``mass term.'' Near infinity, the vector field \(\partial_x\)
corresponds to the scaling vector field, which does not commute well with
massive wave equations (such as the Klein--Gordon equation).

To obtain the pointwise estimates in \zcref{intro-pointwise-estimates}, we cannot
use only the energies associated to \(\psi{}\) introduced in
\zcref{intro-dt-estimate,intro-rp-estimate-sec}. Instead, we must use energies
associated to higher-order quantities \(\mathfrak{D}^{\mathbf{k}}\psi{}\), where
\(\mathbf{k}\) is a multi-index and \(\mathfrak{D}\) is a commutator vector
field. We use the commutator vector fields \(\Omega{}\) (the rotations
generating \(\SO(3)\)-symmetry), \(\partial_\tau{}\), and \(\partial_\theta{}\)
(the generator of \(\textnormal{U}(1)\)-symmetry), as well as two Cartesian
vector fields spanning the \(\R^2\)-factor of \(\mathcal{M}\) (see
\zcref{intro-geometry}) needed to control \(\partial_x\psi_{\ge 1}\) near the bubble.
\subsubsection{Small-data global existence for semilinear equations satisfying the null condition}
\label{intro-semilinear-sec} In this section, we sketch the proof of small-data
global existence for the equation
\begin{equation}
\Box{}_g\varphi{} = F(\Grad \varphi{}),
\end{equation}
where the nonlinearity \(F(\Grad \varphi{})\) is quadratic in the derivatives of \(\varphi{}\) and
satisfies an appropriate version of the null condition
\cite{klainerman-null-condition}. To prove global existence, we use a continuity
argument, or bootstrap argument. That is, we suppose that the solution to the
nonlinear equation exists and satisfies a set of estimates, which we call
``bootstrap assumptions,'' on a certain time interval \([0,\tau{}_f]\) for \(\tau_f >
0\). The bootstrap assumptions take the rough form
\begin{align}
E_T[\mathfrak{D}^{\mathbf{k}}\psi{}](\tau{})\le  \epsilon{}, \label{intro-boot-1} \\
E_p[(\mathfrak{D}^{\mathbf{k}}\psi{})_0](\tau{})\le  \epsilon{}  \label{intro-boot-2}
\end{align}
for \(\tau{}\in [0,\tau{}_f]\) and \(\abs{\mathbf{k}}\le k_{\textnormal{max}}\) (where we can take
\(k_{\textnormal{max}}\ge 8\)). Here \(\mathfrak{D}\) is a commutator vector
field (see \zcref{intro-pointwise}) and \(E_p\) is an \(r^p\)-type energy associated
to the constant-\(\tau{}\) foliation \(\Sigma{}(\tau{})\):
\begin{equation}
E_p[\psi_0](s)\coloneqq{}\int _{\Sigma{}(\tau{})} \sinh x (L\Psi_0)^2 + \frac{1}{\cosh x}\psi_0^2 + \frac{\cosh x}{\cosh ^2\tau{}}\abs{\Grad _{S^2}\Psi_0}^2\dd{}\mu{}.
\end{equation}
Here \(\Psi{}_0 = (\tanh x)^{1/2}\psi{}_0\). We then use the linear estimates as \emph{a priori estimates} for the nonlinear
equation, treating the nonlinearity as an inhomogeneity. The goal of the
continuity argument is to use the smallness present in the problem (for example,
that of the initial data) to improve the bootstrap assumptions, namely show that
estimates \zcref{intro-boot-1,intro-boot-2} hold with \(\frac{1}{2}\epsilon{}\) on
the right-hand side in place of \(\epsilon{}\). A local existence theory then
implies that the solution actually exists and satisfies the bootstrap
assumptions on a uniformly larger time interval \([0,\tau{}_f+\delta{}]\).

We now sketch the key ideas we use to improve the bootstrap assumptions. An
example of a possible term in \(F(\Grad \varphi{})\) is
\(f\partial_U\varphi{}\partial_V\varphi{}\), where \((U,V)\) are the Minkowskian
double null coordinates (see \zcref{intro-double-null}) and \(f\) is a bounded
function. Since \(\rho^2\sim \abs{U}V\), the equation for \(\psi{}\) becomes,
using \zcref{LU-relation},
\begin{equation}
P\psi{} = rf\abs{U}\partial{}_U\varphi{}V\partial{}_V\varphi{}\sim rf\underline{L}\varphi{}L\varphi{},
\end{equation}
Here \(\sim\) means equality up to bounded factors. Since we only estimate terms of
the form \(L(r\varphi{})\), and not \(rL\varphi{}\) (see \zcref{intro-dt-estimate}),
we must rewrite the right-hand side in terms of \(\psi{} = r\varphi{}\). The equation then becomes
\begin{equation}\label{intro-Ppsi-expr}
P\psi{} \sim  r^{-1}f(\underline{L}\psi{}L\psi{} + \psi^2 + \psi{}\underline{L}\psi{} + \psi{}L\psi{}).
\end{equation}
In particular, the nonlinearity includes zeroth-order terms.

To improve the bootstrap assumptions, we use our energy estimates for the linear
wave equation. The \(\partial_\tau{}\)-estimate applied to
\(\mathfrak{D}^{\mathbf{k}}\psi{}\) reads
\begin{equation}\label{intro-boot-dt}
E_T[\mathfrak{D}^{\mathbf{k}}\psi{}](\tau{}) \lesssim \textnormal{data} + \int_0^{\tau_f} \int _{\Sigma{}(\tau{})} \abs{\partial{}_\tau{}\mathfrak{D}^{\mathbf{k}}\psi{}}\abs{P\mathfrak{D}^{\mathbf{k}}\psi{}} \dd{}\mu{}\dd{}\tau{}.
\end{equation}
Ignoring terms coming from \([P,\mathfrak{D}^{\mathbf{k}}]\), we have
\(P\mathfrak{D}^{\mathbf{k}}\psi{}\sim \mathfrak{D}^{\mathbf{k}}P\psi{}\). The
worst terms in \(\mathfrak{D}^{\mathbf{k}}P\psi{}\) are ones where all the
derivatives fall on one of the terms in the quadratic nonlinearity. Examples of
such terms are
\begin{equation}\label{intro-ex-terms}
r^{-1}f\underline{L}\mathfrak{D}^{\mathbf{k}}\psi{}L\psi{}\quad \textnormal{or}\quad r^{-1}f\mathfrak{D}^{\mathbf{k}}\psi{}\underline{L}\psi{}.
\end{equation}
Since pointwise estimates are not available for the top-order terms, we must put
the term with fewer derivatives in \(L^2\) and put the top-order term in
\(L^2\).

The bootstrap assumptions \zcref{intro-boot-1,intro-boot-2} (and a Poincaré
inequality on \(S^1\)) imply in particular that
\begin{equation}\label{intro-boot-energy}
\begin{split}
&\int _{\Sigma{}(\tau{})}\Bigl[(\partial{}_\tau{}\mathfrak{D}^{\mathbf{k}}\psi{})^2 + (\partial{}_x\mathfrak{D}^{\mathbf{k}}\psi{})^2 + \frac{\cosh ^2x}{\tanh ^2x}(\mathfrak{D}^{\mathbf{k}}\psi{})_{\ge 1}^2\Bigr]\tanh x + \frac{1}{\cosh x}(\mathfrak{D}^{\mathbf{k}}\psi)_0^2\dd{}\mu{}\lesssim \epsilon{}.
\end{split}
\end{equation}
Combining the bootstrap assumptions with the pointwise estimates of
\zcref{intro-pointwise}, we have
\begin{equation}\label{intro-pointwise-estimates-boot}
\abs{\partial{}_\tau^{\le 1}\psi{}_0}^2\lesssim \epsilon{}, \qquad \abs{\partial{}_\tau^{\le 1}\psi{}_{\ge 1}}^2\lesssim \epsilon{}(\cosh x)^{-1},\qquad \abs{\partial{}_x\psi{}_0}^2\lesssim \epsilon{}, \qquad \abs{\partial{}_x\psi{}_{\ge 1}}^2\lesssim \epsilon{}\cosh x.
\end{equation}
The term with the weakest \(x\)-weight in \(L^2\) is \(\psi_0\), and \(\partial_\tau{}\) and
\(\partial_x\) have the same \(x\)-weight in \(L^2\). On the other hand, the
term with the weakest \(x\)-decay in \(L^\infty\) is \(\partial_x\psi_{\ge 1}\).
It follows that the most dangerous term is one where we must put
\(\partial_x\psi_{\ge 1}\) in \(L^\infty\) and \(\psi_0\) in \(L^2\). Expanding
\(L = \partial_\tau{} + \partial_x\) and \(\underline{L} = \partial_\tau{} -
\partial_x\), we see that the first term in \zcref{intro-ex-terms} cannot produce
this most dangerous term, while the second term can. From now on, we restrict
our attention to the term \(r^{-1}f\mathfrak{D}^\mathbf{k}\psi{}\partial{}_x\psi{}\) appearing in
\(P\mathfrak{D}^{\mathbf{k}}\psi{}\).

We now explain how to improve \zcref{intro-boot-1}. Recalling that \(r\sim
e^\tau{}\cosh x\), the terms in \zcref{intro-ex-terms} produce terms of the
following form arising from the right-hand side of \zcref{intro-boot-dt}:
\begin{equation}\label{intro-terms}
\textnormal{(I)}\coloneqq{} \int_0^{\tau_f} e^{-\tau{}}\int _{\Sigma{}(\tau{})}\frac{1}{\cosh x} \abs{\partial{}_\tau{}\mathfrak{D}^{\mathbf{k}}\psi{}}\abs{\mathfrak{D}^{\mathbf{k}}\psi{}}\abs{\partial{}_x\psi{}}\dd{}\mu{}\dd{}\tau{}.
\end{equation}
Using the pointwise estimate for \(\partial_x\psi{}\), we get
\begin{equation}
\begin{split}
\textnormal{(I)}&\lesssim \epsilon^{1/2}\int_0^{\tau_f} e^{-\tau{}}\int _{\Sigma{}(\tau{})}\frac{1}{(\cosh x)^{1/2}} \abs{\partial{}_\tau{}\mathfrak{D}^{\mathbf{k}}\psi{}}\abs{\mathfrak{D}^{\mathbf{k}}\psi{}}\dd{}\mu{}\dd{}\tau{}\\
&\lesssim \epsilon^{1/2}\int_0^{\tau_f} e^{-\tau{}}\Bigl(\int _{\Sigma{}(\tau{})} \abs{\partial{}_\tau{}\mathfrak{D}^{\mathbf{k}}\psi{}}^2\dd{}\mu{}\Bigr)^{1/2}\Bigl(\int _{\Sigma{}(\tau{})}\frac{1}{\cosh x}\abs{\mathfrak{D}^{\mathbf{k}}\psi{}}^2\Bigr)^{1/2}\dd{}\mu{}\dd{}\tau{}
\end{split}
\end{equation}
Splitting \(\mathfrak{D}^{\mathbf{k}}\psi{}\) into \((\mathfrak{D}^{\mathbf{k}}\psi)_0\) and
\((\mathfrak{D}^{\mathbf{k}}\psi)_{\ge 1}\), we have (from \zcref{intro-boot-energy})
\begin{equation}
\begin{split}
\int _{\Sigma{}(\tau{})}\frac{1}{\cosh x}\abs{\mathfrak{D}^{\mathbf{k}}\psi{}}^2\dd{}\mu{}&\lesssim \int _{\Sigma{}(\tau{})}\frac{1}{\cosh x}\abs{(\mathfrak{D}^{\mathbf{k}}\psi{})_0}^2 + \frac{1}{\cosh x}\abs{(\mathfrak{D}^{\mathbf{k}}\psi{})_{\ge 1}}^2\dd{}\mu{}\lesssim \epsilon{}.
\end{split}
\end{equation}
It follows that
\begin{equation}
\textnormal{(I)}\lesssim \epsilon^{1/2}\int_0^{\tau{}_f} e^{-\tau{}}\epsilon{}\dd{}\tau{}\lesssim \epsilon^{3/2}.
\end{equation}
In this way we show that
\begin{equation}\label{intro-T-boot-improvement}
E_T[\mathfrak{D}^{\mathbf{k}}\psi{}](\tau{})\lesssim \textnormal{data} + \epsilon^{3/2}.
\end{equation}
Taking \(\epsilon{}\) to be sufficiently small and taking the initial data to be
sufficiently small relative to \(\epsilon{}\), we can arrange for the right-hand
side of \zcref{intro-T-boot-improvement} to be less than \(\frac{1}{2}\epsilon{}\),
improving the bootstrap assumption \zcref{intro-boot-1}.

We now explain how to improve \zcref{intro-boot-2}, focusing again on the term
\(r^{-1}f\mathfrak{D}^{\mathbf{k}}\psi{}\partial_x\psi{}\) appearing in \(P\mathfrak{D}^{\mathbf{k}}\psi{}\). The \(r^p\)-type estimate applied to
\((\mathfrak{D}^{\mathbf{k}}\psi{})_0\) reads
\begin{equation}\label{intro-boot-rp}
E_p[(\mathfrak{D}^{\mathbf{k}}\psi{})_0](\tau{}) \lesssim \textnormal{data} + \int_0^{\tau_f} \int _{\Sigma{}(\tau{})} \sinh x\abs{(P\mathfrak{D}^{\mathbf{k}}\psi{})_0}^2\dd{}\mu{}\dd{}\tau{}.
\end{equation}
We therefore need to estimate
\begin{equation}
\textnormal{(II)}\coloneqq{}\int_0^{\tau_f} \int _{\Sigma{}(\tau{})} r^{-2}\sinh x\abs{(f\mathfrak{D}^{\mathbf{k}}\psi{}\partial{}_x\psi{})_0}^2\dd{}\mu{}\dd{}\tau{}
\end{equation}
Since \(r\sim e^\tau{}\cosh x\), we have the preliminary estimate
\begin{equation}
\textnormal{(II)}\lesssim \int_0^{\tau_f} e^{-2\tau{}}\int _{\Sigma{}(\tau{})} \frac{1}{\cosh x}\abs{(f\mathfrak{D}^{\mathbf{k}}\psi{}\partial{}_x\psi{})_0}^2\tanh x\dd{}\mu{}\dd{}\tau{}.
\end{equation}
To estimate \((f\mathfrak{D}^{\mathbf{k}}\psi{}\partial_x\psi{})_0\), we use the decomposition
\begin{equation}
(fgh)_0 = f_0g_0h_0 + f_0(g_{\ge 1}h_{\ge 1})_0 + (f_{\ge 1}g_{\ge 1}h_0)_0 + \boxed{(f_{\ge 1}g_0h_{\ge 1})_0} + (f_{\ge 1}(g_{\ge 1}h_{\ge 1})_{\ge 1})_0,
\end{equation}
valid for functions \(f,g,h:S^1\to \C\), applied with \(g = \mathfrak{D}^{\mathbf{k}}\psi{}\) and
\(h = \partial_x\psi{}\). The interaction between
\((\mathfrak{D}^{\mathbf{k}}\psi{})_0\) and \(\partial_x\psi_{\ge 1}\) appears
only in the boxed term. Ignoring the non-boxed terms, we therefore have
\begin{equation}
\textnormal{(II)}\lesssim \int_0^{\tau_f} e^{-2\tau{}}\int _{\Sigma{}(\tau{})} \frac{1}{\cosh x}\norm{f_{\ge 1}}_{L^\infty(\Sigma{}(\tau{}))}^2\abs{(\mathfrak{D}^{\mathbf{k}}\psi{})_0}^2\abs{\partial{}_x\psi{}_{\ge 1}}^2\tanh x\dd{}\mu{}\dd{}\tau{}.
\end{equation}
Using the pointwise estimate for \(\partial_x\psi_{\ge 1}\), we get
\begin{equation}
\textnormal{(II)}\lesssim \epsilon{}\int_0^{\tau_f} e^{-2\tau{}}\int _{\Sigma{}(\tau{})} \abs{f_{\ge 1}}^2\abs{(\mathfrak{D}^{\mathbf{k}}\psi{})_0}^2\tanh x\dd{}\mu{}\dd{}\tau{}.
\end{equation}
In order to control this term, we demand that \(\abs{f_{\ge 1}}\lesssim (\cosh
x)^{-1/2}\) (see \zcref{intro-boot-energy}). We then obtain
\begin{equation}
\textnormal{(II)}\lesssim \epsilon{}\int_0^{\tau_f} e^{-2\tau{}}\epsilon{}\dd{}\tau{}\lesssim \epsilon^2.
\end{equation}
We conclude from \zcref{intro-boot-rp} that
\begin{equation}
E_p[(\mathfrak{D}^{\mathbf{k}}\psi{})_0](\tau{}) \lesssim \textnormal{data} + \epsilon^2,
\end{equation}
which improves the bootstrap assumption \zcref{intro-boot-2} and completes the proof
of global existence, which is part \zcref{intro:semilinear-outside-symmetry} of
\zcref{main-theorem-semilinear-intro}.

To prove item \zcref{intro:semilinear-symmetric} of
\zcref{main-theorem-semilinear-intro}, which concerns improved results within
symmetry, we use the above strategy together with the observation that, for
\(\U(1)\)-symmetric solutions, we can use the \(r^p\)-type energy for the full
solution. We then prove decay as in \zcref{intro-rp-estimate-sec}.
\subsection*{AI usage statement}
The diagrams in \zcref{fig:bubble-penrose,fig:bubble-R2} were created with the assistance of ChatGPT 5.6.
AI systems were not used to generate any of the mathematical content or writing
in this manuscript.
\subsection*{Acknowledgements}
\addcontentsline{toc}{subsection}{Acknowledgements}
We thank Igor Rodnianski for suggesting the problem and for helpful discussions
and comments on the manuscript.
\section{The geometry of the Witten bubble spacetime}
\label{WB-geometry}
\subsection{Smooth structure, coordinate systems, and metric}
The Witten bubble spacetime manifold has smooth structure
\(\mathcal{M} \coloneqq{} \R\times \R^2\times S^2\). As we will see, the Witten
bubble metric makes the \(\R^2\) factor not asymptotically flat, but
asymptotically cylindrical.
\subsubsection{Schwarzschild coordinates and Cartesian coordinates}
\label{sec:coordinates} The Witten bubble metric was introduced in
\cite{WITTEN1982481} in what we will call ``Schwarzschild coordinates,'' which
cover the manifold \(\mathcal{M}^{\circ } = \R_\tau{}\times
(R,\infty)_\rho{}\times S^2_{\vartheta{},\phi{}}\times S^1_\theta{}\) (where the
\((R,\infty)_\rho{}\times S^1_\theta{}\) factor covers the complement of the
origin in the \(\R^2\) factor in \(\mathcal{M}\)), as
\begin{equation}\label{WB-schw-metric}
g = -\rho^2\dd{}\tau^2 +(1-(R/\rho{})^2)^{-1}\dd{}\rho^2 + \rho^2\cosh^2 \tau{}g_{S^2} + R^2(1-(R/\rho{})^2)\dd{}\theta^2.
\end{equation}
Here \(g_{S^2} = \dd{}\vartheta^2 + \sin^2 \vartheta{}\dd{}\phi^2\) is the round metric on the unit
sphere, \(R > 0\) describes the radius of the circle near infinity and radius of
the minimal-area sphere in the spacetime, and we think of \(S^1\) as the unit
circle \(S^1 = \R/2\pi{}\Z\).

It is convenient to introduce the rescaled coordinate \(x\) defined by \(\rho{}
= R\cosh x\) (as in \cite{PhysRevD.42.1996}), so that \(x\in (0,\infty)\) and the
dependence of the metric on \(R\) is isolated:
\begin{equation}
g = R^2\bigl[\cosh ^2x(-\dd{}\tau^2 + \dd{}x^2) + \cosh ^2x\cosh ^2\tau{}g_{S^2} + \tanh^2 x\dd{}\theta^2\bigr].
\end{equation}
The manifold \((\mathcal{M}^{\circ },g)\) is geodesically incomplete as \(x\to 0\) (or
\(\rho{}\to R\)). To extend the metric smoothly to \(\set{x=0}\), we interpret
\((x,\theta{})\) as ``polar coordinates'' on \(\R^2\). This is possible because
the metric on the \(\R^2\) factor looks like \(R^2[\dd{}x^2 +
x^2\dd{}\theta^2]\) to leading order as \(x\to 0\). We can therefore equip the
\(\R^2\) factor of \(\mathcal{M} = \R\times \R^2\times S^2\) with Cartesian
coordinates \((\hat{y},\hat{z})\), defined by
\begin{equation}\label{cart-def}
\hat{y} = \mathfrak{f}(x)\cos \theta{},\qquad \hat{z}=\mathfrak{f}(x)\sin \theta{},\qquad \mathfrak{f}(x)\coloneqq{}e^{\cosh x}\tanh (x/2).
\end{equation}
Then \(\mathfrak{f}'(x)/\mathfrak{f}(x) = \cosh (x)/\tanh (x)\), and so the Witten bubble metric on
\(\R_\tau{}\times (\R^2_{\hat{y},\hat{z}}\setminus \set{\hat{y}=\hat{z}=0})\times
S^2_{\vartheta{},\phi{}}\subset \mathcal{M}\) becomes
\begin{equation}\label{cart-metric}
g = R^2\bigl[-\cosh ^2x \dd{}\tau^2 + \mathfrak{F}(x)^2(\dd{}\hat{y}^2 + \dd{}\hat{z}^2) + \cosh ^2x \cosh ^2\tau{}g_{S^2}\bigr],\qquad \mathfrak{F}(x) \coloneqq{} \frac{\tanh x}{\mathfrak{f}(x)}.
\end{equation}
\begin{definition}[The bubble]
We refer to the set \(\mathcal{B}\coloneqq{}\R_\tau{}\times \set{\hat{y}=\hat{z}=0}\times
S^2_{\vartheta{},\phi{}}\subset \mathcal{M}\) as \emph{the bubble}. It is a timelike
codimension-two hypersurface in \(\mathcal{M}\). From \zcref{cart-metric}, we see
that the bubble is, up to the constant scale factor \(R^2\), isometric to the de
Sitter space \(\textnormal{dS}_{2+1}\) of \((2 + 1)\) dimensions (which has metric
\(-\dd{}\tau^2 + \cosh^2 \tau{}g_{S^2}\)). Since it is the fixed point set of
the isometric \(\textnormal{U}(1)\)-action, the bubble is a totally geodesic
submanifold; in particular, it is a minimal submanifold.
\label{bubble}
\end{definition}
Since the function \(x^2\) is smooth on \(\R^2_{\hat{y},\hat{z}}\) (where we
extend \(x\) to the origin by \(x|_{\hat{y}=0,\hat{z}=0} = 0\)), the metric in
\zcref{cart-metric} extends smoothly to \(\mathcal{B}\).
\begin{remark}[Conventions for the name of a coordinate system]
We will write \(\omega{} = (\vartheta{},\phi{})\) for the standard spherical coordinates on \(S^2\),
where \(\vartheta{}\in (0,\pi{})\) and \(\phi{}\in (0,2\pi{})\).

When discussing coordinates on \(\mathcal{M}^{\circ }\), we will always use the
decomposition \(\mathcal{M}^{\circ } = (\R^2\setminus \set{0})\times (S^2\times S^1)\) and use the standard
\((\omega{},\theta{})\) coordinates on the \(S^2\times S^1\) factor. We may therefore refer to a
coordinate system on \(\mathcal{M}^{\circ }\) by the corresponding coordinate system
on the \(\R\times \R_ +\) factor. For example, we will refer to the Schwarzschild
coordinates as \((\tau{},\rho{})\) coordinates, rather than the more verbose
\((\tau{},\rho{},\omega{},\theta{})\) coordinates.
\end{remark}
The Cartesian \((\tau{},\hat{y},\hat{z},\omega{})\) coordinates form a global chart for
\(\mathcal{M}\), modulo the usual degeneration of the spherical coordinates. It
will be useful to record for future reference that
\begin{equation}\label{dx-in-terms-of-dxhat}
\partial{}_x = \mathfrak{f}'(x)\cos \theta{}\partial{}_{\hat{y}} + \mathfrak{f}'(x)\sin \theta{}\partial{}_{\hat{z}},\qquad \partial{}_\theta{} = -\mathfrak{f}(x)\sin \theta{}\partial{}_{\hat{y}} + \mathfrak{f}(x)\cos \theta{}\partial{}_{\hat{z}} = -\hat{z}\partial{}_{\hat{y}} + \hat{y}\partial{}_{\hat{z}}
\end{equation}
and
\begin{equation}\label{hat-derivatives}
\partial{}_{\hat{y}} = \frac{1}{\mathfrak{f}'(x)}\cos \theta{}\partial{}_x - \frac{1}{\mathfrak{f}(x)}\sin \theta{}\partial{}_\theta{},\qquad \partial_{\hat{z}} = \frac{1}{\mathfrak{f}'(x)}\sin \theta{}\partial{}_x + \frac{1}{\mathfrak{f}(x)}\cos \theta{}\partial{}_\theta{}.
\end{equation}
\subsubsection{Time orientation}
We endow \(\mathcal{M}\) with a time orientation by demanding that the
global timelike vector field \(\partial_\tau{}\) is future directed.
\subsubsection{The area-radius function}
We introduce the area-radius function
\begin{equation}\label{r-def}
r\coloneqq{}\rho{}\cosh \tau{}=R\cosh x\cosh \tau{},
\end{equation}
which describes the radius of the round spheres in
\((\mathcal{M},g)\).
\subsubsection{Double null coordinates}
\label{double-null}
We introduce the double null coordinates
\begin{equation}
u \coloneqq{} \frac{1}{2}(\tau{}-x),\qquad v \coloneqq{} \frac{1}{2}(\tau{}+x),
\end{equation}
in which the metric takes the form
\begin{equation}
g = R^2\Bigl[-4\cosh ^2x\dd{}u\dd{}v + \cosh^2x\cosh^2\tau{}g_{S^2} + \tanh^2 xg_{S^1}\Bigr].
\end{equation}
The \((u,v)\) coordinates are related to the Minkowskian double null coordinates
\((U,V)\) by
\begin{equation}
U \coloneqq{} -\frac{R}{2}e^{-2u},\qquad V\coloneqq{} \frac{R}{2}e^{2v}.
\end{equation}
In \((U,V)\) coordinates, the metric takes the form
\begin{equation}
g = -4(1+e^{-2x})^2\dd{}U\dd{}V + r^2g_{S^2} + R^2\tanh ^2xg_{S^1}.
\end{equation}
In standard double null coordinates \(U_{\textnormal{m}} = \frac{1}{2}(t-r)\)
and \(V_{\textnormal{m}} = \frac{1}{2}(t + r)\) on Minkowski space, the
Kaluza--Klein metric with circle of radius \(R\) takes the form
\begin{equation}
g_{\textnormal{mink}} = -4\dd{}U_{\textnormal{m}}\dd{}V_{\textnormal{m}} + r^2g_{S^2} + R^2g_{S^1},
\end{equation}
and so the Witten bubble metric in \((U,V)\) coordinates approaches the
Kaluza--Klein metric as \(x\to \infty\), in particular near null infinity
(\(v\to \infty\) as \(u = \textnormal{const}\)) and spacelike infinity (\(x\to
\infty\) as \(\tau{}=\textnormal{const}\)).
\subsection{Killing vector fields and comparison to Minkowski space}
\label{killing}
There are 7 Killing vector fields associated to the \(\textnormal{SO}(3,1)\times
\textnormal{U}(1)\) symmetry of the Witten bubble spacetime, namely the
rotations
\begin{equation}\label{rotations}
\begin{split}
\Omega{}_1 = \partial{}_\phi{}, \qquad \Omega{}_2 = \cos \phi{}\partial{}_\vartheta{} - \cot \vartheta{} \sin \phi{} \partial{}_\phi{},\qquad \Omega{}_3 = -\sin \phi{}\partial{}_\vartheta{} - \cot \vartheta{} \cos \phi{} \partial{}_\phi{},
\end{split}
\end{equation}
the boosts
\begin{equation}\label{boosts}
\begin{split}
K_{1} &= \cos \vartheta{}\partial{}_\tau{} - \tanh \tau{} \sin \vartheta{}\partial{}_\vartheta{}, \\
K_{2} &= \sin \vartheta{}\sin \phi{}\partial{}_\tau{} + \tanh \tau{}\cos \vartheta{}\sin \phi{}\partial{}_\vartheta{} + \tanh \tau{} \cos \phi{} \frac{1}{\sin \vartheta{}}\partial{}_\phi{}, \\
K_{3} &= \sin \vartheta{}\cos \phi{} \partial{}_\tau{} + \tanh \tau{}\cos \vartheta{}\cos \phi{} \partial{}_\vartheta{} - \tanh \tau{} \sin \phi{} \frac{1}{\sin \vartheta{}}\partial{}_\phi{},
\end{split}
\end{equation}
and \(\partial{}_\theta{}\), the generator of translations in the \(S^1\) direction. These
vector fields satisfy the commutation relations
\begin{equation}
[\Omega{}_i,\Omega{}_j]=\epsilon{}_{ij}{}^k\Omega{}_k,\qquad [K_i,\Omega{}_j] = \epsilon{}_{ij}{}^kK_k,\qquad [K_i,K_j] = -\epsilon{}_{ij}{}^k\Omega{}_k,\qquad [\Omega{}_i,\partial{}_\theta{}] = [K_i,\partial{}_\theta{}] = 0,
\end{equation}
where \(i,j,k\in \set{1,2,3}\) and \(\epsilon\) is the Levi--Civita symbol.

Formally setting \(R = 0\) in the Witten bubble metric ``pops the bubble'' and
yields the Minkowski metric in \((3 + 1)\) dimensions. The Schwarzschild
coordinates are related to the usual coordinates on Minkowski space as follows:
\begin{equation}\label{mink-coords}
x^0 = \rho\sinh \tau{},  \quad
x^1 = \rho\cosh \tau{}\sin \vartheta{}\cos \phi{}, \quad
x^2 = \rho\cosh \tau{}\sin \vartheta{}\sin \phi{}, \quad
x^3 = \rho\cosh \tau{}\cos \vartheta{},
\end{equation}
where \(\rho{}\cosh \tau{} = r\) is the area radius function. These functions define a
Minkowskian Cartesian coordinate system \((x^0,x^1,x^2,x^3,\theta{})\) on
\(\mathcal{M}^{\circ }\), with respect to which the boosts and rotations take
their familiar form
\begin{equation}
K_{i} \coloneqq{} x^0\partial{}_{x^i} + x^i\partial{}_{x^0}\,\, (1\le i\le 3),\quad \Omega{}_1 = x^1\partial{}_{x^2} - x^2\partial{}_{x^1},\quad \Omega{}_2 = x^3\partial{}_{x^1} - x^1\partial{}_{x^3}, \quad \Omega{}_3 = x^2\partial{}_{x^3} - x^3\partial{}_{x^2}.
\end{equation}

With respect to Minkowskian \((t,r,\omega{},\theta{})\) coordinates, we have \(\partial_\tau{} = r\partial_t +
t\partial_r\), and the scaling vector field \(S = t\partial_t + r\partial_r\)
satisfies \(S = \rho{}\partial_\rho{} = (\tanh x)^{-1}\partial_x\).
\section{Preliminaries}
For \(x\in \R\), we write \(\langle{}x\rangle{}\coloneqq{}\sqrt{1 + x^2}\). We write \(A\lesssim B\) (or
\(B\gtrsim A\)) when there is a constant \(C\) depending only on the function
\(h\) used to construct the foliation \(\mathcal{H}(s)\) defined in
\zcref{hyperboloidal-foliation} such that \(A\le CB\). We write \(A\sim B\) if
\(A\lesssim B\) and \(A\gtrsim B\).
\subsection{Well-posedness and coordinate expressions for the wave equation}
We write \(\Box{}\) for \(\Box_g\), the wave operator associated to the Witten bubble
metric \(g\). It is shown in \cite[Prop.~II.1]{Bachelot_2016} that the Witten
bubble spacetime is globally hyperbolic. It follows that the wave equation on
\((\mathcal{M},g)\) is well-posed. More precisely, it follows from
\cite[Thm~12.19]{MR2527641} that, given a spacelike Cauchy hypersurface
\(\mathcal{S}\), the corresponding future-directed unit normal vector field
\(\nu{}\), and smooth functions \((\varphi{}_0,\varphi{}_1)\) on
\(\mathcal{S}\), there exists a unique solution to the following initial value
problem:
\begin{equation}
\Box{}\varphi{} = 0,\qquad
\varphi{}|_{\mathcal{S}} = \varphi{}_0, \qquad
\nu{}\varphi{}|_{\mathcal{S}} = \varphi{}_1.
\end{equation}
We now express the wave operator \(\Box{}\) in various coordinate systems.
\begin{lemma}[Coordinate expressions for the wave equation]
Let \(\varphi{}\in C^\infty(\mathcal{M})\). In Schwarzschild \((\tau{},\rho{})\) coordinates, we have
\begin{equation}\label{schwarzschild-box}
\begin{split}
\Box{}\varphi{} &= -\frac{1}{\rho^2}\partial{}_\tau^2\varphi{} - \frac{2\tanh \tau{}}{\rho^2}\partial{}_\tau{}\varphi{} + (1-(R/\rho{})^2)\partial{}_\rho^2\varphi{} + \frac{1}{\rho{}}(3 - (R/\rho{})^2)\partial{}_\rho{}\varphi{} \\
&\qquad + \frac{1}{\rho^2\cosh^2 \tau{}}\Lapl _{S^2}\varphi{} + \frac{1}{R^2(1-(R/\rho{})^2)}\partial{}_\theta^2\varphi{}.
\end{split}
\end{equation}
In \((\tau{},x)\) coordinates, we have
\begin{equation}\label{P-x-expression}
\begin{split}
P\psi{} &= -\partial{}_\tau^2\psi{} + \partial{}_x^2\psi{} + \frac{1}{\cosh x\sinh x}\partial{}_x\psi{} - \frac{1}{\cosh^2x}\psi{} + \frac{1}{\cosh ^2\tau{}}\Lapl _{S^2}\psi{} +  \frac{\cosh^2x}{\tanh ^2x}\partial{}_\theta^2\psi{},
\end{split}
\end{equation}
where \(\psi{}\coloneqq{}r\varphi{}\) and
\begin{equation}\label{P-def}
P\psi{}\coloneqq{}\rho^2r\Box{}(r^{-1}\psi{}).
\end{equation}
In the Cartesian \((\tau{},\hat{y},\hat{z},\omega{})\) coordinates, we have
\begin{equation}\label{P-cart-expression}
P\psi{} = -\partial{}_\tau^2\psi{} + \frac{1}{\cosh ^2\tau{}}\Lapl _{S^2}\psi{} - \frac{1}{\cosh ^2x}\psi{} + \frac{\cosh ^2x}{\mathfrak{F}(x)^2}(\partial{}_{\hat{y}}^2\psi{} + \partial{}_{\hat{z}}^2\psi{}) + 3\cosh x(\hat{y}\partial{}_{\hat{y}}\psi{} + \hat{z}\partial{}_{\hat{z}}\psi{}),
\end{equation}
where \(\mathfrak{F}(x)\) is defined in \zcref{cart-metric}. In terms of the
quantity \(\Psi{}\coloneqq{}(\tanh x)^{1/2}\psi{}\), we have
\begin{equation}\label{P-Psi-equation}
\begin{split}
(\tanh x)^{1/2}P\psi{} &= -\partial{}_\tau^2\Psi{} + \partial{}_x^2\Psi{} + \frac{1}{4\cosh^2 x\sinh^2 x}\Psi{} + \frac{1}{\cosh ^2\tau{}}\Lapl _{S^2}\Psi{} +  \frac{\cosh^2x}{\tanh ^2x}\partial{}_\theta^2\Psi{}.
\end{split}
\end{equation}
\label{wave-expressions}
\end{lemma}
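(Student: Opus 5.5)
The plan is to start from the general formula \(\Box\varphi = \abs{g}^{-1/2}\partial_\mu(\abs{g}^{1/2}g^{\mu\nu}\partial_\nu\varphi)\), which applies directly because the metric \zcref{WB-schw-metric} is diagonal in \((\tau,\rho,\vartheta,\phi,\theta)\). We have
\[\abs{g}^{1/2} = \rho^3\cosh^2\tau\, R\sin\vartheta,\]
since the factors \((1-(R/\rho)^2)^{\mp 1/2}\) cancel. Differentiating then gives \zcref{schwarzschild-box}:
\begin{itemize}
\item The \(\tau\)-term is \(-\rho^{-2}\cosh^{-2}\tau\,\partial_\tau(\cosh^2\tau\,\partial_\tau\varphi)\), which produces the \(2\tanh\tau\) coefficient.
\item The \(\rho\)-term is \(\rho^{-3}\partial_\rho(\rho^3(1-R^2/\rho^2)\partial_\rho\varphi)\), and \(\rho^{-3}\partial_\rho(\rho^3-R^2\rho) = \rho^{-1}(3-R^2/\rho^2)\).
\item The angular and \(\theta\) terms can be read off directly.
\end{itemize}

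Next, for \zcref{P-x-expression} I substitute \(\rho = R\cosh x\). Then \(\partial_\rho = (R\sinh x)^{-1}\partial_x\) and \(1-(R/\rho)^2 = \tanh^2 x\). Equivalently, and more cleanly, I recompute \(\Box\) from the \((\tau,x)\) form of the metric, where \(\abs{g}^{1/2} = R^5\cosh^4x\cosh^2\tau\tanh x\sin\vartheta\). This gives
\[\rho^2\Box\varphi = -\cosh^{-2}\tau\,\partial_\tau(\cosh^2\tau\,\partial_\tau\varphi) + (\cosh^2 x\tanh x)^{-1}\partial_x(\cosh^2x\tanh x\,\partial_x\varphi) + \cosh^{-2}\tau\,\Lapl_{S^2}\varphi + \cosh^2x\tanh^{-2}x\,\partial_\theta^2\varphi.\]
I then conjugate by \(r = R\cosh x\cosh\tau\), which is a product, so the \(\tau\) and \(x\) parts can be handled separately. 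In \(\tau\), with \(\varphi = \psi/(R\cosh x\cosh\tau)\), the operator \(\cosh\tau\cdot\cosh^{-2}\tau\,\partial_\tau\cosh^2\tau\,\partial_\tau(\cosh^{-1}\tau\,\cdot)\) reduces to \(\partial_\tau^2 - 1\); the identity \(\partial_\tau(\cosh\tau\,\partial_\tau h - \sinh\tau\, h) = \cosh\tau\,(\partial_\tau^2 h - h)\) does this in one line. In \(x\), I compute \(\cosh x\,(\cosh^2 x\tanh x)^{-1}\partial_x(\cosh^2 x\tanh x\,\partial_x(\cosh^{-1}x\,\psi))\). Its first-order coefficient should be \(2\tanh x + (\cosh x\sinh x)^{-1} - 2\tanh x\), and its zeroth-order term should combine with the \(+1\) left over from the \(\tau\) part to give \(-\cosh^{-2}x\). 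Checking this zeroth-order cancellation is the only step where care is needed, and I regard it as the main (purely computational) obstacle. The \(S^2\) and \(\theta\) terms commute with multiplication by \(r\), up to the \(\cosh\tau\) factor, which is unaffected.

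For \zcref{P-cart-expression}, I plan to rewrite the \(x,\theta\) part of \zcref{P-x-expression}. Writing \(\partial_x^2 + (\cosh x\sinh x)^{-1}\partial_x + \cosh^2x\tanh^{-2}x\,\partial_\theta^2\) in terms of the flat Laplacian in the variable \(\mathfrak f\) uses \(\mathfrak f' = \mathfrak f\cosh x/\tanh x\), so that
\[\frac{\cosh^2x}{\tanh^2x}\,\mathfrak f^2\Bigl(\partial_{\mathfrak f}^2 + \mathfrak f^{-1}\partial_{\mathfrak f} + \mathfrak f^{-2}\partial_\theta^2\Bigr) = \frac{\cosh^2 x}{\mathfrak F^2}\Lapl_{\hat y,\hat z}.\]
I then compare the first-order terms: \(\partial_x^2 = \mathfrak f'^2\partial_{\mathfrak f}^2 + \mathfrak f''\partial_{\mathfrak f}\), and the leftover multiple of \(\mathfrak f\partial_{\mathfrak f} = \hat y\partial_{\hat y} + \hat z\partial_{\hat z}\) is computed from \(\mathfrak f''\) using \((\log\mathfrak f)' = \cosh x/\tanh x\). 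This should yield the coefficient \(3\cosh x\); I will verify that constant by direct differentiation.

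Finally, \zcref{P-Psi-equation} is a standard conjugation. Setting \(\psi = (\tanh x)^{-1/2}\Psi\) removes the first-order term, because \((\tanh x)'/\tanh x = 1/(\cosh x\sinh x)\). The resulting potential is \(-\tfrac12 a' - \tfrac14 a^2 - \cosh^{-2}x\) with \(a = (\cosh x\sinh x)^{-1}\), and I simplify it to \(\tfrac14(\cosh x\sinh x)^{-2}\) using hyperbolic identities.
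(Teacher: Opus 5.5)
Your derivations of \zcref{schwarzschild-box}, \zcref{P-x-expression} and \zcref{P-Psi-equation} are correct.

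\begin{itemize}
\item The first and last follow the paper: the divergence formula, and conjugation by \((\tanh x)^{1/2}\). Your potential \(-\tfrac12 a'-\tfrac14 a^2-\cosh^{-2}x\) does reduce to \(\tfrac14(\cosh x\sinh x)^{-2}\).
\item For \zcref{P-x-expression}, the paper conjugates the \((\tau,\rho)\) operator by \(r=\rho\cosh\tau\) and only then substitutes \(\rho=R\cosh x\). Your route instead uses the \((\tau,x)\) divergence form and the factorization \(r=R\cosh x\cosh\tau\). It is equivalent and slightly cleaner.
\item The zeroth-order check you flagged goes through. The \(x\)-part contributes \(-(2-\tanh^2x)\psi=-(1+\cosh^{-2}x)\psi\), and the \(+\psi\) from the \(\tau\)-part leaves \(-\cosh^{-2}x\,\psi\).
\end{itemize}

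The genuine problem is the Cartesian step. Your method is the same as the paper's: rewrite the \((x,\theta)\)-part of \zcref{P-x-expression} by the chain rule. But it does not produce \(3\cosh x\). Write \(g=(\log\mathfrak{f})'=\cosh^2x/\sinh x\). Then \(\mathfrak{f}''=\mathfrak{f}(g^2+g')\) and \(\cosh^2x/\tanh^2x=g^2\). The leftover coefficient of \(\mathfrak{f}\partial_{\mathfrak f}=\hat y\partial_{\hat y}+\hat z\partial_{\hat z}\) is therefore
\[
g'+\frac{g}{\cosh x\sinh x}=\frac{2\cosh x\sinh^2x-\cosh^3x+\cosh x}{\sinh^2x}=\cosh x .
\]
Testing on \(\psi=\hat y=\mathfrak{f}(x)\cos\theta\) confirms this. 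The flat Laplacian kills \(\hat y\), while the \((x,\theta)\)-part of \(P\) returns \(\cosh x\,\hat y\).

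The value \(3\cosh x\) belongs to the untwisted operator \(\rho^2\Box\varphi\). Its radial drift is \(2\tanh x+(\cosh x\sinh x)^{-1}\); equivalently, \(|\det g|^{1/2}\propto\cosh^3x\,\mathfrak{F}^2\) in Cartesian coordinates. Conjugating by \(r\) removes the \(2\tanh x\) drift, which you yourself observed in part two. In Cartesian terms, \(2\tanh x\,\partial_x=2\cosh x\,(\hat y\partial_{\hat y}+\hat z\partial_{\hat z})\).

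So the verification you promise cannot succeed. \zcref{P-cart-expression} as displayed is off by this constant, and what your argument actually proves is the identity with \(\cosh x\) in place of \(3\cosh x\). This is harmless for the rest of the paper. There the coefficient enters only through bounds of the form \(O_{\mathfrak{D}}(\cosh x)\), in the commutator formula and the elliptic estimate near the bubble. You should still state and prove the corrected constant rather than aim for \(3\cosh x\).
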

\begin{remark}[Twisting the wave equation by \(r\) and by \((\tanh x)^{1/2}\)]
To prove energy estimates in \zcref{sec:energy-estimates}, we consider the equation
\zcref{P-x-expression} for \(\psi{}\), rather than the equation for \(\varphi{}\)
itself. See \zcref{twisted-necessary} for further discussion. Using the operator \(P\) and the \((\tau{},x)\) coordinates (as opposed to the
\((\tau{},\rho{})\) coordinates) also reveals the familiar \(-\partial_\tau^2 +
\partial_x^2\) term in the wave equation, whereas this structure is unclear from
\zcref{schwarzschild-box}. Since \(\cosh x\sinh x\sim x\) as \(x\to 0\), the wave
equation \zcref{P-x-expression} for \(\U(1)\)-symmetric scalar fields resembles the wave
equation on \((2 + 1)\)-dimensional Minkowski space (see \zcref{2d-waves-relation})
as \(x\to 0\), up to the angular \(\Lapl _{S^2}\) term.

We consider the quantity \(\Psi{} = (\tanh x)^{1/2}\psi{}\) because the equation it
satisfies (see \zcref{P-Psi-equation}) contains no first-order term, which is
helpful to prove the \(r^p\)-type estimate of \zcref{rp-type-estimate}.
\end{remark}
\begin{proof}
\step{Step 1: Proof of \zcref{schwarzschild-box}.} The expression \zcref{schwarzschild-box}
follows from the expression \zcref{WB-schw-metric} for the metric in \((\tau{},\rho{})\)
coordinates and the expression
\begin{equation}
\Box{}\varphi{}=\abs{\Det g}^{-1/2}\partial{}_\alpha{}(\abs{\Det g}^{1/2}g^{\alpha{}\beta{}}\partial{}_\beta{}\varphi{}).
\end{equation}

\step{Step 2: Proof of \zcref{P-x-expression}.} To prove \zcref{P-x-expression}, we commute
\zcref{schwarzschild-box} with \(r = \rho{}\cosh \tau{}\). We claim that
\begin{equation}\label{rboxphi}
\begin{split}
r\Box{}\varphi{} &= -\frac{1}{\rho^2}\partial{}_\tau^2\psi{} + (1-(R/\rho{})^2)\partial{}_\rho^2\psi{} + \frac{1}{\rho{}}(1 + (R/\rho{})^2)\partial{}_\rho{}\psi{} - \frac{1}{\rho^2}(R/\rho{})^2r\varphi{} \\
&\qquad + \frac{1}{r^2}\Lapl _{S^2}\psi{} + \frac{1}{R^2(1-(R/\rho{})^2)}\partial{}_\theta^2\psi{}.
\end{split}
\end{equation}
To obtain \zcref{P-x-expression} from \zcref{rboxphi}, multiply by \(\rho^2\) and use
\(\partial_\rho{} = R^{-1}(\sinh x)^{-1}\partial_x\). Since \(\partial{}_\tau{}r
= r\tanh \tau{}\), we have \(\partial_\tau^2r = r\), and so
\begin{equation}
\partial{}_\tau^2\psi{} - r\partial{}_\tau^2\varphi{} = [\partial{}_\tau^2,r]\varphi{} = \partial{}_\tau^2r\varphi{} + 2\partial{}_\tau{}r\partial{}_\tau{}\varphi{} = r\varphi{} + 2r\tanh \tau{}\partial{}_\tau{}\varphi{} = 2\tanh \tau{}\partial{}_\tau{}\psi{} + (1 - 2\tanh ^2\tau{})r\varphi{}
\end{equation}
It follows that
\begin{equation}
-\frac{1}{\rho^2}r\partial{}_\tau^2\varphi{} - \frac{2\tanh \tau{}}{\rho^2}r\partial{}_\tau{}\varphi{} = -\frac{1}{\rho^2}\partial{}_\tau^2\psi{} + \frac{1}{\rho^2}r\varphi{}
\end{equation}
Since \(\partial_\rho{}r = \rho^{-1}r = \cosh \tau{}\) and \(\partial_\rho^2r = 0\), we have
\begin{equation}
[\partial{}_\rho^2,r]\varphi{} = 2\rho{}^{-1}r\partial{}_\rho{}\varphi{} = 2\rho{}^{-1}\partial{}_\rho{}\psi{} - 2\rho{}^{-2}r\varphi{},
\end{equation}
and
\begin{equation}
(1-(R/\rho{})^2)r\partial{}_\rho^2\varphi{} = (1-(R/\rho{})^2)\partial{}_\rho^2\psi{} - \frac{2}{\rho{}}(1-(R/\rho{})^2)\partial{}_\rho{}(\rho{}\varphi{}) + \frac{2}{\rho^2}(1-(R/\rho{})^2)r\varphi{}
\end{equation}
and
\begin{equation}
\frac{1}{\rho{}}(2+(1-(R/\rho{})^2))r\partial{}_\rho{}\varphi{} = \frac{1}{\rho{}}(2+(1-(R/\rho{})^2))\partial{}_\rho{}\psi{} - \frac{1}{\rho^2}(2+(1-(R/\rho{})^2))r\varphi{}.
\end{equation}
Thus
\begin{equation}
(1-(R/\rho{})^2)r\partial{}_\rho^2\varphi{} + \frac{1}{\rho{}}(2+(1-(R/\rho{})^2))r\partial{}_\rho{}\varphi{} = (1-(R/\rho{})^2)\partial{}_\rho^2\psi{} + \frac{1}{\rho{}}(2-(1-(R/\rho{})^2))\partial{}_\rho{}\psi{} - \frac{1}{\rho^2}(2-(1-(R/\rho{})^2))r\varphi{}.
\end{equation}
We conclude \zcref{rboxphi}.

\step{Step 3: Proof of \zcref{P-cart-expression}.} Combine \zcref{P-x-expression} with
\zcref{dx-in-terms-of-dxhat}.

\step{Step 4: Proof of \zcref{P-Psi-equation}.} Commute \zcref{P-x-expression} with \((\tanh
x)^{1/2}\). Useful computations include
\begin{equation}
\begin{gathered}
\partial{}_x(\tanh x)^{1/2} = \frac{1}{2}(\tanh x)^{-1/2}(\cosh x)^{-2},  \\
\partial{}_x^2(\tanh x)^{1/2} = -\frac{1}{4}(\tanh x)^{-3/2}(\cosh x)^{-4} - (\tanh x)^{1/2}(\cosh x)^{-2}.
\end{gathered}
\end{equation}
\end{proof}
\subsection{Foliations and spacetime regions}
\subsubsection{The constant-\texorpdfstring{\(\tau\)}{τ} foliation}
We will write \(\Sigma{}(\tau_0)\coloneqq{}\set{\tau{}=\tau_0}\) for the constant-\(\tau{}\) foliation of
\(\mathcal{M}\), and \(\Sigma{}(\tau_0,v_0)\) for the cutoff surface \(\Sigma{}(\tau_0)\cap \set{v\le v_0}\).
\subsubsection{Outgoing cones}
Write \(C(u_0)\coloneqq{}\set{u=u_0}\) for the outgoing cone with \(u\)-value \(u_0\).
\subsubsection{The hyperboloidal foliation in the interior of a light cone}
\label{hyperboloidal-foliation} Let \(h : (0,\infty)\to (0,2)\) be a smooth
integrable function that extends smoothly to \([0,\infty)\) and satisfies the
following conditions:
\begin{enumerate}
\item \label{conditions-on-h-at-0} (constant-\(\tau{}\) near the origin) \(h(x) = 1\) for \(x\le 1\),
\item \label{h-hyperboloidal} (asymptotically null) we have \(\abs{h^{(k)}(x)}\lesssim _k \langle{}x\rangle{}^{-2}\) for \(k\ge 0\),
\end{enumerate}
Define a new coordinate
\begin{equation}\label{tau-def}
s \coloneqq{} u - \frac{1}{2}\int_x^\infty h(y)\dd{}y.
\end{equation}
Then \((s,x)\) are smooth coordinates on \(\mathcal{M}^{\circ }\). The condition
\zcref{conditions-on-h-at-0}, which enforces that \(s(\tau{},x) = \tau{} + C\) when
\(x\le 1\) for some constant \(C = C(h)\), ensures in particular that \(s\) and
\(h\circ x\) are smooth functions on \(\mathcal{M}\). Condition
\zcref{conditions-on-h-at-0} will also be technically convenient in the proof of
\zcref{dxpsi0-near}.

We introduce the following notation for level sets of \(s\):
\begin{equation}
\mathcal{H}(s_0) \coloneqq{} \set{s = s_0}, \qquad \mathcal{H}(s,\tau{}_0) = \mathcal{H}(s)\cap \set{\tau{}\le \tau{}_0},\qquad \mathcal{R}(s_0,s_1)\coloneqq{}\set{s_0\le s \le s_1}.
\end{equation}
By \zcref{h-hyperboloidal}, the hypersurfaces \(\mathcal{H}(s)\)
are asymptotically null.
\subsection{Vector fields}
\subsubsection{Coordinate derivatives}
We will always write \(\partial_\tau{} = \partial_\tau{}|_{(\tau{},x,\omega{},\theta{})}\) and \(\partial_x = \partial_x|_{(\tau{},x,\omega{},\theta{})}\)
for the coordinate derivatives in \((\tau{},x)\) coordinates. Define the vector
fields
\begin{equation}
T = \partial{}_\tau{}\qquad L = \partial{}_\tau{} + \partial{}_x,\qquad \underline{L} = \partial{}_\tau{} - \partial{}_x, \qquad X = \partial{}_x|_{(s,x,\omega{},\theta{})},
\end{equation}
where \(s\) is the hyperboloidal coordinate defined in
\zcref{hyperboloidal-foliation}. Then
\begin{equation}
\partial{}_s|_{(s,x,\omega{},\theta{})} = 2T,\qquad \underline{L} = (2-h(x))\partial{}_\tau{} - X,\qquad L = h(x)\partial{}_\tau{} + X\qquad \partial{}_x|_{(\tau{},x,\omega{},\theta{})} = X - (1-h(x))\partial{}_\tau{}.
\end{equation}
\subsubsection{Commutator vector fields}
\label{commutators} We use the commutator vector fields
\begin{equation}
\mathfrak{D}_{i} \coloneqq{} \Omega_i\quad (i\in \set{1,2,3}),\qquad \mathfrak{D}_{4} = \partial{}_\theta{},\qquad \mathfrak{D}_5 \coloneqq{}\partial{}_\tau{},\qquad \mathfrak{D}_6 = \partial{}_{\hat{y}},\qquad \mathfrak{D}_7 = \partial{}_{\hat{z}},
\end{equation}
where the \(\Omega_i\) are rotations on \(S^2\) (see \zcref{killing}). Recall from \zcref{killing} that
\(\mathfrak{D}_i\) is a Killing vector field for \(1\le i\le 4\).

We will write \(\Omega{}\) for a generic element of \(\set{\Omega_1,\Omega_2,\Omega_3}\). We write
\(\mathfrak{D}\) for a generic element of
\(\set{\mathfrak{D}_1,\ldots,\mathfrak{D}_7}\). We order tuples \(\mathbf{k} =
(k_1,\ldots,k_7)\) of order \(\abs{\mathbf{k}}\coloneqq{}k_1 + \cdots{} + k_7\)
lexicographically, and we write \(\mathfrak{D}^{\mathbf{k}} \coloneqq{}
\mathfrak{D}_1^{k_1}\cdots{}\mathfrak{D}_7^{k_7}\). We write \(\mathfrak{D}^{\le
k}\) for a generic expression \(\mathfrak{D}^{\mathbf{k}}\) with
\(\abs{\mathbf{k}}\le k\) (including \(\mathbf{k} = 0\)). We will also write
\(\mathring{\mathfrak{D}}\) for a generic element of the restricted set
\(\set{\mathfrak{D}_i : 1\le i\le 5}\), and use the notation
\(\mathring{\mathfrak{D}}^{\mathbf{k}}\) and \(\mathring{\mathfrak{D}}^{\le k}\)
analogously. The reason to introduce the notation \(\mathring{\mathfrak{D}}\) is
that \(\mathring{\mathfrak{D}}\)-vector fields preserve \(\U(1)\)-symmetry, in the
sense that \((\mathring{\mathfrak{D}}\psi{})_0 = \mathring{\mathfrak{D}}\psi_0\).
\begin{remark}[The purpose of the lexicographic ordering]
As shown in \zcref{frak-D-commutation}, the commutator \([P,\mathfrak{D}^{\mathbf{k}}]\)
(where \(P\) is the operator defined in \zcref{P-def}) consists of terms that are
lower-order with respect to the lexicographic ordering, but are not necessarily
lower-order with respect to the total number of commutator vector fields. That
is, \([P,\mathfrak{D}^{\mathbf{k}}]\) contains only terms involving
\(\mathfrak{D}^{\mathbf{k}'}\) for \(\mathbf{k}' < \mathbf{k}\), but we may have \(\abs{\mathbf{k}'} = \abs{\mathbf{k}}\).
\end{remark}
\begin{definition}[\(O_{\mathfrak{D}}\)-notation]
If \(f\) and \(g\) are two functions, we schematically write \(f =
O_{\mathfrak{D}}(g)\) to mean that for each multi-index \(\mathbf{k}\) we have
\(\abs{\mathfrak{D}^{\mathbf{k}}f}\lesssim _k g\).
\end{definition}
\begin{lemma}[Examples of quantities that are \(O_{\mathfrak{D}}(1)\)]
Suppose \(f,g : \R\to \R\) are such that \(f\) is smooth and \(\abs{f^{(k)}(\lambda{})}\lesssim_k g(\lambda{})\) for all
\(k\ge 0\). Then
\begin{enumerate}
\item \label{OD-1} \(f(\tau{})\) is \(O_{\mathfrak{D}}(g(\tau{}))\),
\item \label{OD-2} \(f(x^2)\) is \(O_{\mathfrak{D}}(g(x^2))\).
\end{enumerate}
\label{OD-check}
\end{lemma}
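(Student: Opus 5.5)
We need to show that for every multi-index \(\mathbf{k}\), \(\mathfrak{D}^{\mathbf{k}}\) applied to \(f(\tau)\) or to \(f(x^2)\) is bounded by a constant times \(g(\tau)\), respectively \(g(x^2)\). The plan is to compute how each commutator vector field \(\mathfrak{D}_1,\dots,\mathfrak{D}_7\) acts on a function of the single variable \(\tau\) or \(x^2\), and then induct on \(\abs{\mathbf{k}}\).

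\emph{Part (1).} The rotations \(\Omega_i\), the field \(\partial_\theta\), and the Cartesian derivatives \(\partial_{\hat y},\partial_{\hat z}\) are all tangent to the level sets of \(\tau\) in the \((\tau,\hat y,\hat z,\omega)\) chart. Hence they annihilate any function of \(\tau\) alone. Only \(\mathfrak{D}_5=\partial_\tau\) acts nontrivially, and it gives \(\partial_\tau f(\tau)=f'(\tau)\). Therefore \(\mathfrak{D}^{\mathbf{k}}f(\tau)\) is either \(0\) (if any factor other than \(\partial_\tau\) appears) or \(f^{(k_5)}(\tau)\), which is bounded by \(C_{k}\,g(\tau)\) by hypothesis.

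\emph{Part (2).} The fields \(\Omega_i\), \(\partial_\theta\), and \(\partial_\tau\) annihilate functions of \(x\). For the Cartesian fields, use the smooth function \(q\coloneqq x^2\) on \(\R^2_{\hat y,\hat z}\). Write \(x=G(\hat y^2+\hat z^2)\) with \(G\) the inverse of \(x\mapsto\mathfrak{f}(x)^2\). Since \(\mathfrak{f}(x)^2=x^2 a(x^2)\) with \(a\) smooth and positive, \(q=H(\mathfrak{f}^2)\) for some smooth \(H\) with \(H'>0\). By the chain rule, \(\partial_{\hat y}f(q)=f'(q)\,\partial_{\hat y}q\), and iterating, \(\mathfrak{D}^{\mathbf{k}}f(q)\) is a finite sum of terms \(f^{(j)}(q)\,Q_j\), where \(j\le\abs{\mathbf{k}}\) and each \(Q_j\) is a product of \(\hat y,\hat z\)-derivatives of \(q\).

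The main obstacle is that these \(Q_j\) must be \emph{bounded}, not merely smooth, globally in \(\hat y,\hat z\). The bound near the origin is automatic from smoothness. At infinity, \(\mathfrak{f}\) grows like \(e^{\cosh x}\), so \(q\approx(\log\log\abs{\hat y})^2\). Each derivative of such a slowly growing function gains decay: using \eqref{hat-derivatives}, \(\partial_{\hat y}=\mathfrak{f}'(x)^{-1}\cos\theta\,\partial_x-\mathfrak{f}(x)^{-1}\sin\theta\,\partial_\theta\), where \(\mathfrak{f}'/\mathfrak{f}=\cosh x/\tanh x\) and \(\mathfrak{f}\) is exponentially large. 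Inductively, every \(Q_j\) is a polynomial in \(\cos\theta,\sin\theta\) with coefficients that are functions of \(x\) bounded by powers of \(x\) times \(\mathfrak{f}^{-1}\), hence bounded. Therefore \(\abs{\mathfrak{D}^{\mathbf{k}}f(x^2)}\lesssim_k\sum_j\abs{f^{(j)}(x^2)}\lesssim g(x^2)\). Mixed orderings of commutators cause no difficulty: the Cartesian fields commute with \(\partial_\tau\), and any other factor kills the expression.
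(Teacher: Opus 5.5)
\textbf{Gap in part (2): the treatment of \(\partial_\theta\).} Your strategy is the same as the paper's. In (1) only \(\partial_\tau\) acts nontrivially. In (2) you combine the chain rule with two facts: smoothness of \(x^2\) on the compact set \(\{x\le 1\}\), and boundedness of Cartesian derivatives of \(x\) on \(\{x\ge 1\}\), read off from \(\partial_{\hat y}=\mathfrak{f}'(x)^{-1}\cos\theta\,\partial_x-\mathfrak{f}(x)^{-1}\sin\theta\,\partial_\theta\) and its \(\hat z\)-analogue. Part (1) is correct.

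In part (2), however, your closing claim that "any other factor kills the expression" is false for \(\partial_\theta\). The ordering is \(\mathfrak{D}^{\mathbf{k}}=\Omega_1^{k_1}\Omega_2^{k_2}\Omega_3^{k_3}\partial_\theta^{k_4}\partial_\tau^{k_5}\partial_{\hat y}^{k_6}\partial_{\hat z}^{k_7}\), so \(\partial_\theta\) acts \emph{after} the Cartesian derivatives. At that point the function is no longer radial in the \((\hat y,\hat z)\)-plane. For example, \(\partial_{\hat y}f(x^2)=2xf'(x^2)\cos\theta/\mathfrak{f}'(x)\), hence \(\partial_\theta\partial_{\hat y}f(x^2)=-2xf'(x^2)\sin\theta/\mathfrak{f}'(x)\), which is not zero in general. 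Only \(\partial_\tau\) and the \(\Omega_i\) genuinely annihilate what the Cartesian derivatives produce, since that output depends only on \((\hat y,\hat z)\). This is why the paper's proof explicitly allows extra \(\partial_\theta\)-derivatives in its boundedness claim for Cartesian derivatives of \(x\).

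\textbf{Repair.} The fix uses what you already set up. Since \(\partial_\theta q=0\), you have \(\partial_\theta^{k_4}\sum_j f^{(j)}(q)Q_j=\sum_j f^{(j)}(q)\,\partial_\theta^{k_4}Q_j\). It remains to check that \(\partial_\theta^{k_4}Q_j\) is still bounded:
\begin{itemize}
\item On \(\{x\le 1\}\): \(\partial_\theta=-\hat z\partial_{\hat y}+\hat y\partial_{\hat z}\) is a smooth vector field on \(\R^2\), so \(\partial_\theta^{k_4}Q_j\) is smooth on a compact set.
\item On \(\{x\ge 1\}\): \(\partial_\theta\) maps a trigonometric polynomial in \(\theta\) with \(x\)-dependent coefficients to another one whose coefficients are the same up to constant factors.
\end{itemize}

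\textbf{A smaller point.} For your induction on \(\{x\ge1\}\) to close, the hypothesis must also control \(x\)-derivatives of the coefficients, because \(\partial_{\hat y},\partial_{\hat z}\) apply \(\mathfrak{f}'(x)^{-1}\partial_x\) to them. Each \(\partial_x\) costs at most a factor \(\lesssim\cosh x\); for instance \((\mathfrak{f}^{-1})'=-\mathfrak{f}^{-1}\cosh x/\tanh x\). That factor is absorbed by \(\mathfrak{f}'(x)^{-1}=\tanh x/(\cosh x\,\mathfrak{f}(x))\). Tracking this shows that the coefficients of \(\partial_{\hat y}^{b_1}\partial_{\hat z}^{b_2}q\) (with \(b_1+b_2\ge1\)) are \(\lesssim x\,\mathfrak{f}(x)^{-(b_1+b_2)}\) on \(\{x\ge 1\}\), hence bounded.
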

\begin{proof}
First, \zcref{OD-1} is immediate from the assumptions on \(f\) and the fact that
\(\mathfrak{D}\tau{}=0\) for \(\mathfrak{D}\neq{}\partial{}_\tau{}\). For
\zcref{OD-2}, note that \(\mathfrak{D}x = 0\) unless \(\mathfrak{D}\in
\set{\partial_{\hat{y}},\partial_{\hat{z}}}\). From \zcref{hat-derivatives}, we see
that
\(\abs{\partial{}_{\hat{y}}^{k_1}\partial{}_{\hat{z}}^{k_2}\partial{}_\theta^Nx}\lesssim
1\) in the region \(\set{x\ge 1}\) when \(k_1 + k_2\ge 1\) and \(N\ge 0\). In
the compact region \(\set{x\le 1}\), the function \(f(x^2)\) is smooth in
\((\hat{y},\hat{z})\). The result now follows from the chain rule.
\end{proof}
\begin{lemma}[Rearrangement lemma]
For multi-indices \(\mathbf{k}_1\) and \(\mathbf{k}_2\), there are constants \(C_{\mathbf{k}}\) such that
\begin{equation}
\mathfrak{D}^{\mathbf{k}_1}\mathfrak{D}^{\mathbf{k}_2} = \sum_{\mathbf{k}\le \mathbf{k}_1 + \mathbf{k}_2} C_{\mathbf{k}}\mathfrak{D}^{\mathbf{k}},
\end{equation}
where addition of multi-indices is defined component-by-component: the sum of
\(\mathbf{k} = (k_1,\ldots,k_7)\) and \(\mathbf{k}' = (k_1',\ldots,k_7')\) is
\((k_1 + k_1',\ldots,k_7 + k_7')\).
\label{rearrangement}
\end{lemma}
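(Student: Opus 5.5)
The plan is to reduce everything to the commutation relations among the seven commutator fields, and then bubble-sort a word in the $\mathfrak{D}_i$ into the normal order $\mathfrak{D}_1^{k_1}\cdots\mathfrak{D}_7^{k_7}$. I would proceed by induction on the total order $n=\abs{\mathbf{k}_1}+\abs{\mathbf{k}_2}$ and, within fixed $n$, on the number of inversions in the word.

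\textbf{Step 1: commutator table.} I would compute all $[\mathfrak{D}_i,\mathfrak{D}_j]$ in the global chart $(\tau,\hat{y},\hat{z},\omega)$.
\begin{itemize}
\item The rotations $\Omega_i$ act only on $\omega$. Hence they commute with $\partial_\tau,\partial_{\hat{y}},\partial_{\hat{z}}$ and with $\partial_\theta=-\hat{z}\partial_{\hat{y}}+\hat{y}\partial_{\hat{z}}$ (see \zcref{dx-in-terms-of-dxhat}). Among themselves they satisfy $[\Omega_i,\Omega_j]=\epsilon_{ij}{}^k\Omega_k$.
\item $\partial_\tau$ is a coordinate field in this chart, so it commutes with everything.
\item $\partial_{\hat{y}}$ and $\partial_{\hat{z}}$ commute with each other.
\item From the Cartesian form of $\partial_\theta$ one gets $[\partial_{\hat{y}},\partial_\theta]=\partial_{\hat{z}}$ and $[\partial_{\hat{z}},\partial_\theta]=-\partial_{\hat{y}}$.
\end{itemize}
Thus every commutator is a linear combination, with \emph{constant} coefficients, of single fields $\mathfrak{D}_l$. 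This is the structural fact that makes the constants $C_{\mathbf{k}}$ genuinely constant.

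\textbf{Step 2: swapping.} In the word $\mathfrak{D}^{\mathbf{k}_1}\mathfrak{D}^{\mathbf{k}_2}$, locate an adjacent pair $\mathfrak{D}_i\mathfrak{D}_j$ with $i>j$ and write
\[
\mathfrak{D}_i\mathfrak{D}_j=\mathfrak{D}_j\mathfrak{D}_i+[\mathfrak{D}_i,\mathfrak{D}_j].
\]
The first term has the same multi-index and one fewer inversion. The second term replaces $e_i+e_j$ by a single $e_l$, so its total order is $n-1$; by the induction hypothesis on $n$, it is a constant combination of normally ordered monomials of order at most $n-1$. Iterating over inversions yields a normally ordered leading term $\mathfrak{D}^{\mathbf{k}_1+\mathbf{k}_2}$ (coefficient $1$) plus constant combinations of lower terms.

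\textbf{Step 3: the ordering of the lower terms (main obstacle).} One must check that the lower terms are $\le\mathbf{k}_1+\mathbf{k}_2$ in the order used in the paper. For the $\partial_\theta$ commutators this holds even in pure lexicographic order. For instance, $\mathfrak{D}_6\mathfrak{D}_4$ produces $\mathfrak{D}_7$, and $e_7<e_4+e_6$ because the entries first differ in the fourth slot.

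For the rotations, however, $[\Omega_2,\Omega_3]=\pm\Omega_1$ replaces $e_2+e_3$ by $e_1$, which is \emph{lexicographically larger}. Pure lexicographic order therefore cannot be intended here. I would handle this by reading the order as graded: total order first, then lexicographic among multi-indices of equal total order. This reading is consistent with the subsequent remark, which only asserts $\mathbf{k}'<\mathbf{k}$ with possibly $\abs{\mathbf{k}'}=\abs{\mathbf{k}}$.

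Under this reading every commutator term drops total order and is automatically smaller. The leading term equals $\mathbf{k}_1+\mathbf{k}_2$. Hence all terms satisfy $\mathbf{k}\le\mathbf{k}_1+\mathbf{k}_2$, which closes the induction.
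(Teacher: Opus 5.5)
Your proposal is correct and follows the paper's proof. The paper notes that the rotations, $\{\partial_\theta,\partial_{\hat y},\partial_{\hat z}\}$, and $\{\partial_\tau\}$ each span Lie algebras with constant structure constants that commute with one another, and your commutator table and normal-ordering induction (best phrased for arbitrary words, as you implicitly do) are the details behind its ``the result follows''. You are also right that $[\Omega_2,\Omega_3]=\Omega_1$ gives $\mathfrak{D}^{e_3}\mathfrak{D}^{e_2}=\mathfrak{D}^{e_2+e_3}-\mathfrak{D}^{e_1}$ with $e_1$ lexicographically larger than $e_2+e_3$, so the order must be read as graded (total order first), a point the paper's one-line proof leaves implicit.
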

\begin{proof}
The rotations form an algebra (that is, their linear span is closed under
commutation), as do the vector fields
\(\set{\partial{}_\theta{},\partial{}_{\hat{y}},\partial{}_{\hat{z}}}\) and
\(\set{\partial{}_\tau{}}\), and these algebras each commute with each other.
The result follows.
\end{proof}
\begin{proposition}[Commutation formula for the vector fields \(\mathfrak{D}\)]
Let \(P\) be the operator defined in \zcref{P-def}, and let \(\mathfrak{D}\) be a
commutator vector field. We have
\begin{equation}\label{P-D-comm}
[P,\mathfrak{D}^{\mathbf{k}}]\psi{} = \textnormal{(I)} + \textnormal{(II)},
\end{equation}
where
\begin{equation}\label{P-D-comm-I}
\textnormal{(I)} = \frac{1}{\cosh^2\tau{}} \sum_{\mathbf{k}' < \mathbf{k}}f_{\mathbf{k}',\mathbf{k}}(\tau{})\Omega{}\mathfrak{D}^{\mathbf{k}'}
\end{equation}
for some bounded functions \(f_{\mathbf{k}',\mathbf{k}}(\tau{})\) and
\begin{equation}\label{P-D-comm-II}
\abs{\textnormal{(II)}}\lesssim \sum_{\mathbf{k}'<\mathbf{k}} \frac{1}{e^{\cosh x}}\Bigl(\abs{\mathfrak{D}^{\mathbf{k}'}P\psi{}} + \abs{\partial{}_\tau{}\mathfrak{D}^{\mathbf{k}'}\psi{}} + \abs{\partial{}_x\mathfrak{D}^{\mathbf{k}'}\psi{}} + \frac{1}{\cosh ^2\tau{}}\abs{\Omega{}\mathfrak{D}^{\mathbf{k}'}\psi{}} + \frac{\cosh x}{\tanh x}\abs{\partial{}_\theta{}\mathfrak{D}^{\mathbf{k}'}\psi{}} + \frac{1}{\cosh^2 x}\abs{\mathfrak{D}^{\mathbf{k}'}\psi{}}\Bigr).
\end{equation}
The statement also holds with \(\mathfrak{D}\) replaced with
\(\mathring{\mathfrak{D}}\), and in that case term \(\textnormal{(II)}\) is not
present in \zcref{P-D-comm}.
\label{frak-D-commutation}
\end{proposition}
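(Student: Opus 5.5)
The plan is to prove the formula for a single commutator $[P,\mathfrak{D}]$ first and then pass to $[P,\mathfrak{D}^{\mathbf{k}}]$ by induction on $\mathbf{k}$ in the lexicographic order. For the induction I would use the expansion
\[
[P,\mathfrak{D}^{\mathbf{k}}]=\sum \mathfrak{D}^{\mathbf{a}}[P,\mathfrak{D}]\mathfrak{D}^{\mathbf{b}},
\]
together with \zcref{rearrangement} and \zcref{OD-check}.

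\textbf{Single commutators.} These are handled one vector field at a time.

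The fields $\Omega_i$ and $\partial_\theta$ are Killing. Moreover, in \zcref{P-x-expression} they commute with every coefficient and with $\Lapl_{S^2}$, since the coefficients depend only on $(\tau,x)$ and $\Lapl_{S^2}$ commutes with rotations. Hence $[P,\Omega_i]=[P,\partial_\theta]=0$.

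For $\partial_\tau$, only the coefficient $\cosh^{-2}\tau$ depends on $\tau$, so
\[
[P,\partial_\tau]\psi=\frac{2\tanh\tau}{\cosh^2\tau}\Lapl_{S^2}\psi .
\]
Writing $\Lapl_{S^2}=\sum_i\Omega_i^2$ puts this in the form of term (I), with $f(\tau)=2\tanh\tau$ bounded and $\mathfrak{D}^{\mathbf{k}'}=\Omega_i$. This index sits in the first slots of the lexicographic order, so it is below $\partial_\tau$. This already gives the claim for $\mathring{\mathfrak{D}}$, with no term (II).

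For $\partial_{\hat y}$ (and symmetrically $\partial_{\hat z}$), I would use the Cartesian expression \zcref{P-cart-expression}. The commutator consists of three kinds of terms:
\begin{itemize}
\item the coefficient derivatives $\partial_{\hat y}(\cosh^{-2}x)\psi$ and $\partial_{\hat y}(\cosh^2x/\mathfrak{F}^2)(\partial_{\hat y}^2+\partial_{\hat z}^2)\psi$;
\item the transport term $3\partial_{\hat y}(\cosh x)\,\hat y\partial_{\hat y}\psi$;
\item the term $3\cosh x\,\partial_{\hat y}\psi$.
\end{itemize}
Each of these is converted back to $(\tau,x,\theta)$ derivatives using \zcref{hat-derivatives}. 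Each $\partial_{\hat y}$ then produces a factor $1/\mathfrak{f}'$ or $1/\mathfrak{f}$, and both are $\lesssim e^{-\cosh x}$ up to polynomial factors in $\cosh x$ and $\tanh x$. The polynomial factors are absorbed because $e^{-\cosh x}$ decays superexponentially, and I would reserve a spare $e^{-\cosh x}$ factor for this purpose.

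The second-order term $(\partial_{\hat y}^2+\partial_{\hat z}^2)\psi$ is removed using the equation itself. Solving \zcref{P-cart-expression} for it gives a combination of $P\psi$, $\partial_\tau^2\psi$, $\cosh^{-2}\tau\,\Lapl_{S^2}\psi$, $\cosh^{-2}x\,\psi$ and $\hat y\partial_{\hat y}\psi+\hat z\partial_{\hat z}\psi$. This yields exactly the list of terms in \zcref{P-D-comm-II}, each with an $e^{-\cosh x}$ prefactor.

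Near $x=0$ everything is smooth in $(\hat y,\hat z)$, so the bounds there are trivial. I would also check that $\partial_{\hat y}\mathfrak{F}^{-2}\cosh^2x$ is controlled by \zcref{OD-check} part \zcref{OD-2}.

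\textbf{Induction.} For $\mathfrak{D}^{\mathbf{k}}=\mathfrak{D}\mathfrak{D}^{\mathbf{k}_-}$, write
\[
[P,\mathfrak{D}^{\mathbf{k}}]=\mathfrak{D}[P,\mathfrak{D}^{\mathbf{k}_-}]+[P,\mathfrak{D}]\mathfrak{D}^{\mathbf{k}_-}.
\]
In the first piece, $\mathfrak{D}$ falls either on a coefficient or on a derivative of $\psi$.
\begin{itemize}
\item When it falls on a coefficient, the result stays of the same form: $f(\tau)/\cosh^2\tau$ is $O_{\mathfrak{D}}(\cosh^{-2}\tau)$, and the coefficients in (II) are $O_{\mathfrak{D}}$ of $e^{-\cosh x}$ times polynomial weights.
\item When it falls on the operator, I move it to the right using \zcref{rearrangement}, which only produces indices $\le\mathbf{k}$ in total order. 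Where it meets $P$ (through $\mathfrak{D}^{\mathbf{k}'}P\psi$), I commute once more and use the inductive hypothesis.
\end{itemize}

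\textbf{Main obstacle.} The step I expect to be hardest is the bookkeeping showing that every resulting index $\mathbf{k}'$ is strictly lexicographically smaller than $\mathbf{k}$, rather than merely of no larger total order.

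The delicate case is $\partial_{\hat y}$. The term $\partial_\tau^2\psi$ that arises when eliminating $\hat\Delta\psi$ must be written as $\partial_\tau\mathfrak{D}^{\mathbf{k}'}\psi$ with $\mathfrak{D}^{\mathbf{k}'}$ built from the remaining indices. I would first try to control it as $\partial_\tau$ applied to a lower-order quantity: the extra $\partial_\tau$ appears in the slot $\partial_\tau\mathfrak{D}^{\mathbf{k}'}$ explicitly allowed in \zcref{P-D-comm-II}, while the $\partial_{\hat y}$ index is removed.

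Verifying that the ordering convention makes this removal a strict decrease is where I expect the real care to be needed. If it fails, the fallback is to reorder the lexicographic priority so that the Cartesian fields dominate.
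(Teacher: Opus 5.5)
Your single-commutator computations are the paper's. Peeling off the leftmost factor also reproduces its splitting
\[
[P,\mathfrak{D}^{\mathbf{k}}]=\Omega^{(k_1,k_2,k_3)}\partial_\theta^{k_4}[P,\partial_\tau^{k_5}]\widehat{\partial}^{(k_6,k_7)}+\Omega^{(k_1,k_2,k_3)}\partial_\theta^{k_4}\partial_\tau^{k_5}[P,\widehat{\partial}^{(k_6,k_7)}],
\]
since the Killing fields on the left commute with $P$.

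The gap is in the induction over pure Cartesian words. The bound (II) is only an inequality and cannot be differentiated, so you are really inducting on a representation of the commutator. You choose to convert $[P,\partial_{\hat y}]$ into $(\tau,x,\theta)$-derivatives first. Applying a further $\partial_{\hat y}$ then means commuting it past $\partial_x$, which is not a commutator field, so the rearrangement lemma does not apply. Moreover,
\[
[\partial_{\hat y},\partial_x]=\frac{\mathfrak{f}''\cos\theta}{(\mathfrak{f}')^2}\,\partial_x-\frac{\mathfrak{f}'\sin\theta}{\mathfrak{f}^2}\,\partial_\theta ,
\]
and its last coefficient is of size $x^{-2}$ at the bubble. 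Likewise, $\partial_{\hat y}$ falling on $\cos\theta/\mathfrak{f}'$ or on $\cosh x/\tanh x$ produces factors $x^{-1}$ and $x^{-2}$. Near $x=0$, however, (II) controls first derivatives in $(\hat y,\hat z)$ only through $\lvert\partial_x\mathfrak{D}^{\mathbf{k}'}\psi\rvert+x^{-1}\lvert\partial_\theta\mathfrak{D}^{\mathbf{k}'}\psi\rvert\approx\lvert\widehat{\partial}\mathfrak{D}^{\mathbf{k}'}\psi\rvert$. So the terms you generate cannot be bounded one by one; they cancel only after being recombined into Cartesian derivatives. Your remark that everything is smooth in $(\hat y,\hat z)$ near the bubble is the right observation. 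It is not a triviality to check afterwards, though: it has to be the form of the inductive hypothesis.

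The paper's fix is to stay in the Cartesian frame throughout, and prove by induction
\[
[P,\widehat{\partial}^{\mathbf{n}}]=\sum_{\lvert\mathbf{m}\rvert<\lvert\mathbf{n}\rvert}\Bigl[O_{\mathfrak{D}}\Bigl(\frac{\cosh x}{\mathfrak{f}'}\Bigr)\Bigl(\widehat{\partial}^{\mathbf{m}}P+\partial_\tau\partial_\tau\widehat{\partial}^{\mathbf{m}}-\frac{\Delta_{S^2}}{\cosh^2\tau}\widehat{\partial}^{\mathbf{m}}+\frac{1}{\cosh^2x}\widehat{\partial}^{\mathbf{m}}\Bigr)+O_{\mathfrak{D}}(\cosh x)\,\widehat{\partial}\,\widehat{\partial}^{\mathbf{m}}\Bigr].
\]
Here only $P$ and words in the commutator fields appear, so a new $\mathfrak{D}$ either lands on an $O_{\mathfrak{D}}$ coefficient or composes with a word. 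Then apply the outer $\Omega$, $\partial_\theta$, $\partial_\tau$ factors, keeping a Cartesian field outermost. Only at the very end expand that single $\widehat{\partial}$ in terms of $\partial_x$ and $\partial_\theta$.

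Two smaller corrections.

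\emph{No spare decay at first order.} You have no spare $e^{-\cosh x}$ to absorb polynomial factors. The quantities $\cosh x/\mathfrak{f}'\sim e^{-\cosh x}$ and $\cosh x/\mathfrak{f}\lesssim e^{-\cosh x}\cosh x/\tanh x$ match the weights in (II) exactly; this is where the weight $\cosh x/\tanh x$ comes from. A leftover factor $\cosh x$ could not be absorbed without giving up part of $e^{-\cosh x}$. Extra decay appears only when a Cartesian field hits a coefficient.

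\emph{Ordering.} Your worry is resolved by using consistently the convention you already invoked to put $\Omega_i$ below $\partial_\tau$. That is the paper's convention: later slots dominate. With it, $\partial_\tau^2\widehat{\partial}^{\mathbf{m}}\psi=\partial_\tau(\partial_\tau\widehat{\partial}^{\mathbf{m}}\psi)$ is a strict decrease, because a Cartesian index is lost. The first-slot-dominant order would break both this step and term (I). Do make sure the strict decrease always comes from a lost Cartesian or $\partial_\tau$ index. Rearranging alone can raise the later slots, for example $\partial_{\hat y}\partial_\theta=\partial_\theta\partial_{\hat y}+\partial_{\hat z}$. This is why the Cartesian field must be commuted to the outside before it is expanded.
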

\begin{proof}
The terms in \(\textnormal{(I)}\) arise from commuting with \(\partial_\tau{}\), and the
terms in \(\textnormal{(II)}\) arise from commuting with \(\partial_{\hat{y}}\) and
\(\partial_{\hat{z}}\). Since the vector fields \(\partial_{\hat{y}}\) and
\(\partial_{\hat{z}}\) are not present in the commutation set
\(\mathring{\mathfrak{D}}\), the final part of the statement of
\zcref{frak-D-commutation} follows from the proof below.

\step{Step 1: Commutation formula for the Killing vector fields \(\Omega{}\) and \(\partial_\theta{}\).} We
have \([P,\Omega_i] = [P,\partial_\theta{}] = 0\) from the definition of \(P\)
in \zcref{P-def}. It follows that \([P,\Omega{}_1^{k_1}\Omega{}_2^{k_2}\Omega{}_3^{k_3}\partial_\theta^{k_4}] = 0\).

\step{Step 2: Commutation formula for \(\partial_\tau{}\).} We compute
\begin{equation}
[P,\partial{}_\tau{}] = \frac{2\tanh \tau{}}{\cosh^2 \tau{}}\Lapl _{S^2}.
\end{equation}
It follows by induction that there are bounded functions \(f_{n,N}(\tau{})\) such
that
\begin{equation}\label{P-dt-comm}
[P,\partial_\tau^N] = \frac{1}{\cosh ^2\tau{}}\sum_{n=0}^{N-1}f_{n,N}(\tau{})\Lapl _{S^2}\partial_\tau^n.
\end{equation}

\step{Step 3: Commutation formula the Cartesian vector fields \(\partial_{\hat{y}}\) and
\(\partial_{\hat{z}}\).} We use \zcref{P-cart-expression} to rewrite
\(\partial{}_{\hat{y}}^2 + \partial{}_{\hat{z}}^2\) in terms of \(P\) and
compute
\begin{equation}\label{P-dy-comm-prep}
\begin{split}
[P,\partial{}_{\hat{y}}] &= -(\partial{}_{\hat{y}}x)\frac{2\tanh x}{\cosh^2 x} - (\partial{}_{\hat{y}}x)\Bigl(\frac{\cosh ^2x}{\mathfrak{F}(x)^2}\Bigr)'(\partial{}_{\hat{y}}^2 + \partial{}_{\hat{z}}^2) - 3(\partial{}_{\hat{y}}x)\sinh x(\hat{y}\partial{}_{\hat{y}} + \hat{z}\partial{}_{\hat{z}}) - 3\cosh x\partial{}_{\hat{y}} \\
&= -(\partial{}_{\hat{y}}x)\frac{2\tanh x}{\cosh^2 x} - (\partial{}_{\hat{y}}x)\Bigl(\log \frac{\cosh ^2x}{\mathfrak{F}(x)^2}\Bigr)' \Bigl(P + \partial{}_{\tau{}}^2 - \frac{1}{\cosh ^2\tau{}}\Lapl _{S^2} + \frac{1}{\cosh^2 x} - 3\cosh x(\hat{y}\partial{}_{\hat{y}} + \hat{z}\partial{}_{\hat{z}})\Bigr) \\
&\qquad - 3(\partial{}_{\hat{y}}x)\sinh x(\hat{y}\partial{}_{\hat{y}} + \hat{z}\partial{}_{\hat{z}}) - 3\cosh x\partial{}_{\hat{y}}.
\end{split}
\end{equation}
Here \(\mathfrak{F}(x)\) is as in \zcref{cart-metric}. We compute
\begin{equation}\label{OD-LHS}
(\partial{}_{\hat{y}}x)\Bigl(\log \frac{\cosh ^2x}{\mathfrak{F}(x)^2}\Bigr)' = \frac{1}{\mathfrak{f}(x)\mathfrak{f}'(x)}(\sinh x - \tanh (x/2) + 2\tanh x)\hat{y},
\end{equation}
where \(\mathfrak{f}(x)\) is as in \zcref{cart-def}. The factor multiplying
\(\hat{y}\) is \(O_{\mathfrak{D}}(\cosh x/\mathfrak{f}'(x))\) (by \zcref{OD-check}),
and \(\hat{y} = O_{\mathfrak{D}}(\abs{\hat{y}})\). It follows from the identity
\(\hat{y} = \mathfrak{f}(x)\cos \theta{}\) that the expression in \zcref{OD-LHS} is
\(O_{\mathfrak{D}}(\cosh x/\mathfrak{f}'(x))\). Substituting this computation
into \zcref{P-dy-comm-prep} and recalling \(\mathfrak{f}(x)/\mathfrak{f}'(x) = \tanh
x/\cosh x\), we obtain
\begin{equation}
\begin{split}
[P,\partial{}_{\hat{y}}] &= O_{\mathfrak{D}}\Bigl(\frac{\cosh x}{\mathfrak{f}'(x)}\Bigr)\Bigl(P + \partial{}_{\tau{}}^2 - \frac{1}{\cosh ^2\tau{}}\Lapl _{S^2} + \frac{1}{\cosh^2 x}\Bigr) + O_{\mathfrak{D}}(\cosh x)\widehat{\partial},
\end{split}
\end{equation}
where \(\widehat{\partial}\in \set{\partial_{\hat{y}},\partial_{\hat{z}}}\) and the expression on the
right should be interpreted after the \(O_{\mathfrak{D}}(\cosh
x/\mathfrak{f}'(x))\) term is distributed over the parentheses. An induction
argument now gives
\begin{equation}\label{P-dy-comm}
[P,\partial{}_{\hat{y}}^N] = \sum_{n=0}^{N-1}O_{\mathfrak{D}}\Bigl(\frac{\cosh x}{\mathfrak{f}'(x)}\Bigr)\Bigl(\partial{}_{\hat{y}}^nP + \partial{}_\tau{}\partial{}_\tau{}\partial{}_{\hat{y}}^n - \frac{1}{\cosh ^2\tau{}}\Lapl _{S^2}\partial{}_{\hat{y}}^n + \frac{1}{\cosh^2 x}\partial{}_{\hat{y}}^n\Bigr) + O_{\mathfrak{D}}(\cosh x)\widehat{\partial}\partial{}_{\hat{y}}^n.
\end{equation}
An analogous formula of course holds for \(\partial_{\hat{z}}\) in place of
\(\partial_{\hat{y}}\). It follows that
\begin{equation}\label{P-Dhat-comm}
\begin{split}
[P,\widehat{\partial}^{(N_1,N_2)}] &= \sum_{n_1=0}^{N_1-1}\sum_{n_2=0}^{N_2-1}\Bigl[O_{\mathfrak{D}}\Bigl(\frac{\cosh x}{\mathfrak{f}'(x)}\Bigr)\Bigl(\widehat{\partial}^{(n_1,n_2)}P + \partial{}_\tau{}\partial{}_\tau{}\widehat{\partial}^{(n_1,n_2)}P - \frac{1}{\cosh ^2\tau{}}\Lapl _{S^2} + \frac{1}{\cosh^2 x}\widehat{\partial}^{(n_1,n_2)}P\Bigr) \\
&\qquad + O_{\mathfrak{D}}(\cosh x)\widehat{\partial}\widehat{\partial}^{(n_1,n_2)}\Bigr],
\end{split}
\end{equation}
where we have written \(\widehat{\partial}^{(n_1,n_2)} =
\partial_{\hat{y}}^{n_1}\partial_{\hat{z}}^{n_2}\) and used the relation
\begin{equation}
\partial_{\hat{y}}^{n_1}P\partial_{\hat{z}}^{n_2} = \partial_{\hat{y}}^{n_1}\partial_{\hat{z}}^{n_2}P + \partial_{\hat{y}}^{n_1}[P,\partial{}_{\hat{z}}^{n_2}].
\end{equation}

\step{Step 4: Putting it all together; proof of \zcref{P-D-comm}.} We have
\begin{equation}
[P,\mathfrak{D}^{\mathbf{k}}] = [P,\Omega^{(k_1,k_2,k_3)}\partial{}_\theta^{k_4}\partial{}_\tau^{k_5}\widehat{\partial}^{(k_6,k_7)}],
\end{equation}
where we have written \(\Omega^{(k_1,k_2,k_3)}\coloneqq{}\Omega_1^{k_1}\Omega_2^{k_2}\Omega_3^{k_3}\) and
\(\widehat{\partial}^{(k_6,k_7)} =
\partial_{\hat{y}}^{k_6}\partial_{\hat{z}}^{k_7}\). By Step 1, we have
\begin{equation}\label{P-D-comm-prep}
[P,\mathfrak{D}^{\mathbf{k}}] =\Omega^{(k_1,k_2,k_3)}\partial{}_\theta^{k_4}[P,\partial{}_\tau^{k_5}]\widehat{\partial}^{(k_6,k_7)} +  \Omega^{(k_1,k_2,k_3)}\partial{}_\theta^{k_4}\partial{}_\tau^{k_5}[P,\widehat{\partial}^{(k_6,k_7)}].
\end{equation}
Since \(\Omega_i\) are lower-order commutators than \(\partial_\tau{}\) and both \(\Omega_i\) and
\(\partial_\tau{}\) are lower-order commutators compared to
\(\partial_{\hat{y}}\) and \(\partial_{\hat{z}}\), the result follows
from \zcref{P-dt-comm,P-Dhat-comm,P-D-comm-prep} together with the formulas \zcref{hat-derivatives} expressing
\(\widehat{\partial}\) in terms of \(\partial_x\) and \(\partial_\theta{}\)
(and the expression for \(\mathfrak{f}(x)\) in \zcref{cart-def}). In
particular, the terms in \(\textnormal{(I)}\) and \(\textnormal{(II)}\) arise
from the first and second terms, respectively, on the right-hand side of
\zcref{P-D-comm-prep}. Note that we have used the identity \(\Lapl _{S^2} = \sum_{i=1}^3 \Omega{}_i^2\)
to write \zcref{P-D-comm-I}.
\end{proof}
\subsection{Energy norms}
\label{sec:energies} Write \(\dd{}\mu{}\coloneqq{}\dd{}x\dd{}\omega{}\dd{}\theta{}\). We first define the energies we
will use to study problems outside of symmetry. They are associated to the
constant-\(\tau{}\) foliation \(\Sigma{}(\tau{})\). Define the \(\partial_\tau{}\)-energy
\begin{equation}
E_T[\psi{}](\tau{})\coloneqq{}\int _{\Sigma{}(\tau{})} \Bigl[(\partial{}_\tau{}\psi{})^2 + (\partial{}_x\psi{})^2 + \frac{1}{\cosh^2x}\psi^2 + \frac{1}{\cosh ^2\tau{}}\abs{\Grad _{S^2}\psi{}}^2 + \frac{\cosh ^2x}{\tanh ^2x}(\partial{}_\theta{}\psi{})^2\Bigr]\tanh x\dd{}\mu{}.
\end{equation}
We also define an \(r^p\)-type energy that we will use for the \(\U(1)\)-symmetric part
of the scalar field:
\begin{equation}
E_{p}[\psi{}_0](\tau{})\coloneqq{}\int _{\Sigma{}(\tau{})}\sinh x(L\Psi{}_0)^2 + \frac{1}{\cosh x}\psi_0^2 + \frac{\sinh x}{\cosh ^2\tau{}}\abs{\Grad _{S^2}\Psi{}_0}^2\dd{}\mu{}.
\end{equation}
Here we recall the notation \(\Psi{} \coloneqq{} (\tanh x)^{1/2}\psi{}\). We will also need the
\(r^p\)-flux through outgoing cones:
\begin{equation}
F_p[\psi_0](\tau{}_1,\tau{}_2)\coloneqq{}\sup_{u\in \R}\int _{C(u)\cap \set{\tau_1\le \tau{}\le \tau_2}} \sinh x(L\Psi{}_0)^2\dd{}v\dd{}\omega{}\dd{}\theta{}.
\end{equation}

We now define the energies that we will use for \(\U(1)\)-symmetric scalar fields. They
are associated to the hyperboloidal foliation \(\mathcal{H}(s)\). Define the
\(\partial_\tau{}\)-energy
\begin{equation}
\mathcal{E}_T[\psi{}](s,\tau)\coloneqq{}\int _{\mathcal{H}(s,\tau)} \Bigl[(L\psi{})^2 + \langle{}x\rangle^{-2}(\underline{L}\psi{})^2 + \frac{1}{\cosh^2x}\psi^2 + \frac{1}{\cosh ^2\tau{}}\abs{\Grad _{S^2}\psi{}}^2 + \frac{\cosh ^2x}{\tanh ^2x}(\partial{}_\theta{}\psi{})^2\Bigr]\tanh x\dd{}\mu{}
\end{equation}
associated to \(\mathcal{H}(s,\tau{})\) and the associated flux through \(\Sigma{}(\tau{})\):
\begin{equation}
\mathcal{F}_T[\psi{}](s_1,s_2,\tau{})\coloneqq{}\int _{\Sigma{}(\tau{})\cap \set{s_1\le s\le s_2}} \Bigl[(L\psi{})^2 + (\underline{L}\psi{})^2 + \frac{1}{\cosh^2x}\psi^2 + \frac{1}{\cosh ^2\tau{}}\abs{\Grad _{S^2}\psi{}}^2 + \frac{\cosh ^2x}{\tanh ^2x}(\partial{}_\theta{}\psi{})^2\Bigr]\tanh x\dd{}\mu{}.
\end{equation}
We also define the \(r^p\)-type energy
\begin{equation}
\mathcal{E}_p[\psi{}](s,\tau)\coloneqq{}\int _{\mathcal{H}(s,\tau)}\sinh x(L\Psi{})^2 + \langle{}x\rangle^{-2}\frac{1}{\cosh x}\psi^2 + \langle{}x\rangle^{-2} \frac{\sinh x}{\cosh ^2\tau{}}\abs{\Grad _{S^2}\Psi{}}^2\dd{}\mu{}.
\end{equation}
Finally, we define the master energy
\begin{equation}
\mathcal{E}[\psi{}](s,\tau)\coloneqq{}\mathcal{E}_T[\psi{}](s,\tau) + \mathcal{E}_p[\psi{}](s,\tau).
\end{equation}
We will write
\begin{equation}
\mathcal{E}[\psi{}](s)\coloneqq{}\sup_{\tau{}\ge 0} \mathcal{E}[\psi{}](s,\tau)
\end{equation}
and use similar notation for the other energies.
\begin{remark}[Guide to the notation for energies]
A subscript ``\(T\)'' indicates that the quantity is associated to the
\(\partial_\tau{}\)-energy, and a subscript ``\(p\)'' denotes an \(r^p\)-type energy.

We use the standard math font for energies of scalar fields outside of symmetry,
which are defined with respect to the constant-\(\tau{}\) foliation
\(\Sigma{}(\tau{})\). Energies in calligraphic font are used strictly for
\(\U(1)\)-symmetric scalar fields and are defined with respect to the hyperboloidal
foliation \(\mathcal{H}(s)\) defined in \zcref{hyperboloidal-foliation}.
\end{remark}
\begin{remark}[The character of the hypersurface used to truncate the leaves of the foliation]
We truncate the constant-\(\tau{}\) foliation \(\Sigma{}(\tau{})\) by constant-\(v\)
hypersurfaces, which are null, but we truncate the constant-\(s\) hyperboloidal
foliation \(\mathcal{H}(s)\) by constant-\(\tau{}\) hypersurfaces, which are
spacelike. The control of a flux term on the spacelike hypersurface
\(\Sigma{}(\tau{})\), to which \(\partial_x\) is tangent, is useful for the
pointwise estimate for the \(\U(1)\)-symmetric scalar field \(\partial_x\psi_0\)
established in \zcref{dx-far}. Roughly speaking, this is because \(\partial_x^2\)
appears in the wave equation for \(\psi{}\) but not \(\underline{L}^2\) (where
\(\underline{L}\) is tangent to constant-\(v\) hypersurfaces).
\label{cutoff-character}
\end{remark}
\begin{remark}[Control of the zeroth-order term near the bubble]
The \(r^p\)-type energies \(E_p\) and \(\mathcal{E}_p\) include better control
of the zeroth-order term near the bubble (where \(\set{x=0}\)) than the
\(\partial_\tau{}\)-energies \(E_T\) and \(\mathcal{E}_T\). We do not use this
fact in our work. We used an analogous fact in our work \cite{gautam-2d-waves} on
linear waves in two space dimensions. This work inspired the proof of the
\(r^p\)-type estimates established in \zcref{rp-type-estimate} (see
\zcref{2d-waves-relation}).
\end{remark}
\section{Energy estimates}
\label{sec:energy-estimates}
In this section, we use vector field multipliers to prove energy estimates. We
will multiply the wave equation for \(\psi{}\) (see \zcref{wave-expressions}) with a
suitable vector field and directly integrate by parts in coordinates adapted to
the foliation. One could also obtain these estimates in a more geometric way by
using the method of compatible currents.

In \zcref{sec:dt-estimate}, we prove a \(\partial_\tau{}\)-estimate, and in \zcref{sec:rp-estimate},
we prove an analogue of the \(r^p\)-weighted energy estimates of
Dafermos--Rodnianski \cite{rp-method}.
\subsection{The basic \texorpdfstring{\(\partial_\tau\)}{∂τ}-energy estimate}
\label{sec:dt-estimate}
In this section, we use multipliers based on \(\partial_\tau{}\) to derive basic energy
estimates valid outside of symmetry.
\begin{proposition}[The \(\partial_\tau{}\)-estimate]
Let \(\psi{}\in C^\infty(\mathcal{M})\). For \(0\le s_1\le s_2\) and \(\tau{}\ge 0\), we have
\begin{equation}\label{T-estimate}
\begin{split}
&\mathcal{E}_T[\psi{}](s_2,\tau{}) + \sup_{0\le \tau{}'\le \tau{}} \mathcal{F}_T[\psi{}](s_1,s_2,\tau{}') + \int_{s_1}^{s_2} \int _{\mathcal{H}(s,\tau)} \frac{\tanh \tau{}}{\cosh ^2\tau{}}\abs{\Grad _{S^2}\psi{}}^2\tanh x\dd{}\mu{}\dd{}s \\
&\lesssim \mathcal{E}_T[\psi{}](s_1,\tau{}) + \int_{s_1}^{s_2} \int _{\mathcal{H}(s,\tau)}\abs{\partial{}_\tau{}\psi{}}\abs{P\psi{}}\tanh x\dd{}\mu{}\dd{}s
\end{split}
\end{equation}
and
\begin{equation}\label{T-estimate-2}
\begin{split}
&\mathcal{E}_T[\psi{}](s_2,\tau{})  + \sup_{0\le \tau{}'\le \tau{}} \mathcal{F}_T[\psi{}](s_1,s_2,\tau{}') +\int_{s_1}^{s_2} \int _{\mathcal{H}(s,\tau)} \Bigl[\langle{}x\rangle{}^{-2}(\underline{L}\psi{})^2 + \frac{\tanh \tau{}}{\cosh ^2\tau{}}\abs{\Grad _{S^2}\psi{}}^2\Bigr]\tanh x\dd{}\mu{}\dd{}s \\
&\lesssim \mathcal{E}_T[\psi{}](s_1,\tau{}) +\int_{s_1}^{s_2} \int _{\mathcal{H}(s,\tau)} \langle{}x\rangle{}^{-2}(L\psi{})^2 \tanh x\dd{}\mu{}\dd{}s +  \int_{s_1}^{s_2} \int _{\mathcal{H}(s,\tau)}\abs{\partial{}_\tau{}\psi{}}\abs{P\psi{}}\tanh x\dd{}\mu{}\dd{}s.
\end{split}
\end{equation}
For \(0\le \tau_1\le \tau_2\), we have
\begin{equation}\label{T-estimate-3}
\begin{split}
&E_T[\psi{}](\tau{}_2) + \int_{\tau{}_1}^{\tau{}_2} \int _{\Sigma{}(\tau{})} \frac{\tanh \tau{}}{\cosh ^2\tau{}}\abs{\Grad _{S^2}\psi{}}^2\tanh x\dd{}\mu{}\dd{}\tau{}\lesssim E_T[\psi{}](\tau{}_1) + \int_{\tau{}_1}^{\tau{}_2} \int _{\Sigma{}(\tau{})}\abs{\partial{}_\tau{}\psi{}}\abs{P\psi{}}\tanh x\dd{}\mu{}\dd{}\tau{}.
\end{split}
\end{equation}
\label{T-estimate-prop}
\end{proposition}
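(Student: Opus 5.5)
We multiply the equation \zcref{P-x-expression} by \(\partial_\tau\psi\) against the measure \(\tanh x\,\dd{}\mu{}\) and integrate by parts in \((\tau,x,\omega,\theta)\) coordinates. The key pointwise identity follows from writing \(\partial_x^2\psi + (\cosh x\sinh x)^{-1}\partial_x\psi = (\tanh x)^{-1}\partial_x(\tanh x\,\partial_x\psi)\), since \((\tanh x)' / \tanh x = 1/(\cosh x\sinh x)\). It reads
\begin{equation*}
-\partial_\tau\psi\, P\psi\,\tanh x = \tfrac12\partial_\tau\Bigl[\bigl((\partial_\tau\psi)^2 + (\partial_x\psi)^2 + \tfrac{1}{\cosh^2x}\psi^2 + \tfrac{1}{\cosh^2\tau}\abs{\Grad_{S^2}\psi}^2 + \tfrac{\cosh^2x}{\tanh^2x}(\partial_\theta\psi)^2\bigr)\tanh x\Bigr] - \partial_x\bigl(\tanh x\,\partial_x\psi\,\partial_\tau\psi\bigr) + \tfrac{\tanh\tau}{\cosh^2\tau}\abs{\Grad_{S^2}\psi}^2\tanh x + \text{(div. on } S^2\times S^1),
\end{equation*}
where the sign of the bulk term comes from \(\partial_\tau(\cosh^{-2}\tau) = -2\tanh\tau\cosh^{-2}\tau\). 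All coefficients are \(\tau\)-independent except the \(S^2\) one, so the only bulk term is the nonnegative one for \(\tau\ge 0\). The boundary term at \(x=0\) vanishes because of the factor \(\tanh x\) and the smoothness of \(\psi\) at the bubble.

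\textbf{Proof of \zcref{T-estimate-3}.} Integrate the identity over \(\{\tau_1\le\tau\le\tau_2\}\). To avoid assuming decay at \(x\to\infty\), first integrate over \(\{v\le v_0\}\). The null boundary \(\{v=v_0\}\) then contributes a flux \(\int(\underline{L}\psi)^2+\dots\ge 0\), which has a good sign because \(\partial_\tau\) is future timelike. Dropping this flux and letting \(v_0\to\infty\) gives the bound with \(E_T\) on both sides (up to the factor 2), and the error term \(\int\abs{\partial_\tau\psi}\abs{P\psi}\tanh x\,\dd{}\mu{}\dd{}\tau\).

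\textbf{Proof of \zcref{T-estimate}.} Integrate the same divergence identity over \(\mathcal{R}(s_1,s_2)\cap\{\tau\le\tau'\}\), changing variables to \((s,x)\), where \(\dd{}\tau\,\dd{}x = 2\,\dd{}s\,\dd{}x\) because \(\partial_s = 2T\). This region has three boundaries.
\begin{itemize}
\item[(a)] On \(\mathcal{H}(s_i)\), the flux of the current \(J = (e, -\tanh x\,\partial_x\psi\partial_\tau\psi)\) through the level set of \(s = \tfrac12(\tau - x) - \tfrac12\int_x^\infty h\) has conormal \(\dd{}s = \tfrac12(\dd{}\tau - (1-h)\dd{}x)\). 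Writing \(\partial_\tau = \tfrac12(L+\underline{L})\) and \(\partial_x = \tfrac12(L-\underline{L})\), this flux is
\begin{equation*}
\tfrac{h}{4}\bigl((L\psi)^2+(\underline{L}\psi)^2\bigr) + \tfrac{1-h}{4}\bigl((L\psi)^2-(\underline{L}\psi)^2\bigr) + \text{(angular and zeroth-order terms)},
\end{equation*}
that is, \(\tfrac14\bigl((L\psi)^2 + (2h-1)(\underline{L}\psi)^2\bigr)\) after combining, up to normalization. With the correct bookkeeping the coefficient of \((\underline{L}\psi)^2\) is proportional to \(h\), and of \((L\psi)^2\) to \(2-h\). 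Since \(0<h<2\), and using property \zcref{h-hyperboloidal} together with a lower bound \(h\gtrsim\langle x\rangle^{-2}\), this is comparable to the integrand of \(\mathcal{E}_T\). Establishing this is the main obstacle: if the stated hypotheses on \(h\) only give \(h\lesssim\langle x\rangle^{-2}\), I would argue the lower bound is implicit in the construction (or that only the upper bound on the \(\underline{L}\)-weight is needed for "\(\lesssim\)" on the right), and check the coefficient computation carefully, since \(h\to 0\) makes the leaves asymptotically null.
\item[(b)] The top boundary \(\Sigma(\tau')\cap\{s_1\le s\le s_2\}\) gives exactly \(\mathcal{F}_T\) up to constants. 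Taking the supremum over \(\tau'\in[0,\tau]\) is legitimate because each truncated integral is bounded by the right-hand side.
\item[(c)] Any lateral piece in the region of small \(\tau\) can be handled analogously, with signs favorable by future-directedness.
\end{itemize}

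\textbf{Proof of \zcref{T-estimate-2}.} Add a small multiple of a bulk estimate obtained by using the multiplier \(\partial_\tau + \eta(x)\partial_x\) with \(\eta\) bounded and \(\eta'\sim\langle x\rangle^{-2}\), or equivalently integrate the \(L\)-\(\underline{L}\) split of \(\partial_x(\tanh x\,\partial_x\psi\partial_\tau\psi)\) against the weight \(h'\). The resulting bulk term \(\eta'\bigl((\underline{L}\psi)^2 - (L\psi)^2\bigr)\) produces \(\langle x\rangle^{-2}(\underline{L}\psi)^2\) on the left at the cost of \(\langle x\rangle^{-2}(L\psi)^2\) on the right. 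The zeroth-order and \(\theta\) error terms from the \(\eta\partial_x\) part are controlled using \(\eta'\)-smallness and the energy. After the \(s\)-integration, the boundary terms are absorbed into \(\mathcal{E}_T\).
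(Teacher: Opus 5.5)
Your arguments for \eqref{T-estimate} and \eqref{T-estimate-3} are the paper's: multiply by $\partial_\tau\psi\tanh x$, note that the $x$-flux $\tanh x\,\partial_x\psi\,\partial_\tau\psi$ vanishes at the bubble, and truncate by $\{\tau\le\tau'\}$ resp.\ $\{v\le v_0\}$. The step you propose for \eqref{T-estimate-2}, however, fails.

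A multiplier $\eta(x)\partial_x\psi$ does not generate $\eta'\bigl((\underline{L}\psi)^2-(L\psi)^2\bigr)$. After integrating by parts, its derivative bulk term is, up to terms with coefficient $\eta/\cosh^2x$, proportional to $\eta'\bigl((\partial_\tau\psi)^2+(\partial_x\psi)^2\bigr)\tanh x$. That is the \emph{sum} $(L\psi)^2+(\underline{L}\psi)^2$, not the difference. In addition, the $\partial_\theta^2$-term of $P$ contributes $\tfrac12\bigl(\eta\cosh^2x/\tanh x\bigr)'(\partial_\theta\psi)^2=\tfrac12\bigl[\eta'\cosh^2x/\tanh x+\eta\,(2\cosh^2x-\coth^2x)\bigr](\partial_\theta\psi)^2$. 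This coefficient has size $\eta$, not $\eta'$. For large $x$ it is about $\eta$ times the $\theta$-part of the energy density, and near the bubble it is more singular than that density unless $\eta(0)=0$. So ``$\eta'$-smallness'' does not control it. Bounding it by the energy costs $\int_{s_1}^{s_2}\mathcal{E}_T[\psi](s)\,\mathrm{d}s$, which is not bounded by $\mathcal{E}_T[\psi](s_1)$; it grows linearly in $s_2-s_1$ for the time-periodic solutions of \zcref{non-decaying}.

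The alternative you call equivalent is a different multiplier, and it is the one that works and that the paper uses: multiply the $\partial_\tau$-identity by a non-increasing weight $w(x)$. The weight only meets the flux term $\partial_x(\tanh x\,\partial_x\psi\,\partial_\tau\psi)$, so the single new term is $\tfrac12(-w')\bigl((\underline{L}\psi)^2-(L\psi)^2\bigr)\tanh x$, with no zeroth-order or $\theta$ errors. Since $w\sim1$, all boundary terms stay comparable to the case $w\equiv1$. You need $-w'\sim\langle x\rangle^{-2}$ on all of $[0,\infty)$, e.g.\ the paper's $w=1+(1+x)^{-1}$. The function $h$ cannot play this role: it is constant on $[0,1]$, and $-h'\gtrsim\langle x\rangle^{-2}$ would force $h\gtrsim\langle x\rangle^{-1}$.

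Three smaller points on \eqref{T-estimate}:
\begin{enumerate}
\item Your displayed intermediate flux is wrong; a coefficient $2h-1$ would be negative near null infinity, where $h\to0$. With the normalization $-2\partial_\tau\psi\,P\psi\tanh x$, the flux density through $\mathcal{H}(s)$ with respect to $\mathrm{d}\mu$ is $\bigl[\tfrac{2-h}{2}(L\psi)^2+\tfrac{h}{2}(\underline{L}\psi)^2+\tfrac{1}{\cosh^2x}\psi^2+\tfrac{1}{\cosh^2\tau}\abs{\Grad_{S^2}\psi}^2+\tfrac{\cosh^2x}{\tanh^2x}(\partial_\theta\psi)^2\bigr]\tanh x$. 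This is consistent with the coefficients you then assert.
\item You are right that bounding $\mathcal{E}_T[\psi](s_2,\tau)$ by this flux needs $h\gtrsim\langle x\rangle^{-2}$. This is not among the stated hypotheses on $h$, and the paper's proof uses it tacitly. Your fallback (the upper bound alone) only handles the $s_1$-term, so the lower bound must be added to the construction of $h$.
\item Item (c) is empty. The region $\mathcal{R}(s_1,s_2)\cap\{\tau\le\tau'\}$ lies in $\{\tau>2s_1\}$ and is bounded in $x$, so its only remaining boundary is the bubble.
\end{enumerate}
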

\begin{remark}[Why we estimate \(\psi\) instead of \(\varphi\)]
On the Witten bubble spacetime, the analogue of \zcref{T-estimate-prop} formulated
for \(\varphi{}\) itself, where \(\varphi{}\) solves \(\Box{}\varphi{} = 0\),
cannot hold, and so we work with \(\psi{} = r\varphi{}\) instead. Accordingly, the
volume form in the energies used in \zcref{T-estimate-prop} (see \zcref{sec:energies})
is not the geometric volume form \(\dd{}\vol\), but \(r^{-2}\dd{}\vol\). This
reflects the two powers of \(r\) present in the energy density itself. In the
language of compatible currents, these estimates arise from currents that are
twisted by \(r\), rather than the usual currents. The framework of twisted
currents was introduced in \cite{Holzegel_2014} (see also
\cite[App.~B]{dafermos2024quasilinearwaveequationsasymptotically}).

On stationary backgrounds, when \(Tr = 0\) for \(T\) the Killing vector field
associated to stationarity and \(r\) (a version of) the area radius function,
the twisted and untwisted formulations of a \(T\)-estimate are largely
equivalent. By contrast, on the Witten bubble spacetime, the
\(\partial_\tau{}\)-estimate cannot hold for both \(\varphi{}\) and \(\psi{}\).
To see this, note first that, since \(r = R\cosh x\cosh \tau{}\), we have
\(Lr\sim r\) for \(L = \partial_\tau{} + \partial_x\), at least when \(\tau{}\)
and \(x\) are large. The energy \(\mathcal{E}_T[\psi{}]\) controls
\(L(r\varphi{})\) in the \(L^2\) norm along a hypersurface reaching null
infinity, whereas an estimate formulated for \(\varphi{}\) would control
\(r(L\varphi{})\). The two estimates together would control \(r\varphi{}\sim
(Lr)\varphi{} = L(r\varphi{}) - r(L\varphi{})\) in the \(L^2\) norm. Since
\(r\varphi{}\) can attain a non-zero limit along a hypersurface reaching null
infinity, such control is not possible.
\label{twisted-necessary}
\end{remark}
\begin{remark}[Control of the angular bulk term]
We will use the control over the angular term in the bulk in
\zcref{T-estimate,T-estimate-2,T-estimate-3} to absorb error terms in a
finite-\(\tau{}\) region in the proof of the \(r^p\)-type energy estimates of
\zcref{sec:rp-estimate}.
\end{remark}
\begin{proof}
Multiply the wave equation \zcref{P-x-expression} by \(-2\tanh x\partial_\tau{}\psi{}\) and use the Leibniz rule to get
\begin{equation}\label{T-estimate-1}
\begin{split}
&-2\partial{}_\tau{}\psi{}P\psi{}\tanh x \\
&= \partial{}_\tau{}\Bigl(\tanh x(\partial{}_\tau{}\psi{})^2 + \frac{\tanh x}{\cosh ^2x}\psi^2\Bigr) - 2\tanh x\partial{}_\tau{}\psi{}\partial{}_x^2\psi{} - \frac{2}{\cosh ^2x}\partial{}_\tau{}\psi{}\partial{}_x\psi{}  \\
&\qquad  - \tanh x \frac{1}{\cosh ^2\tau{}}\partial{}_\tau{}\psi{}\Lapl _{S^2}\psi{} - \tanh x \frac{\cosh ^2x}{\tanh^2 x}\partial{}_\tau{}\psi{}\partial{}_\theta^2\psi{} \\
&= \partial{}_\tau{}\Bigl(\Bigl((\partial{}_\tau{}\psi{})^2 + (\partial{}_x\psi{})^2 + \frac{1}{\cosh ^2x}\psi^2 + \frac{1}{\cosh ^2\tau{}}\abs{\Grad _{S^2}\psi{}}^2 + \frac{\cosh ^2x}{\tanh^2 x} (\partial{}_\theta{}\psi{})^2\Bigr)\tanh x\Bigr) \\
&\qquad  - \partial{}_x(2\partial{}_\tau{}\psi{}\partial{}_x\psi{}\tanh x) + \partial{}_\theta{}(\cdots{}) + \div_{S^2}(\cdots{}) + \frac{2\tanh \tau{}}{\cosh ^2\tau{}}\abs{\Grad _{S^2}\psi{}}^2\tanh x.  \\
\end{split}
\end{equation}
Multiply \zcref{T-estimate-1} by \(w = w(x)\) for a non-negative non-increasing
function \(w : \R_{\ge 0}\to \R_{\ge 0}\) to get
\begin{equation}\label{T-estimate-0}
\begin{split}
&-2w\partial{}_\tau{}\psi{}P\psi{}\tanh x \\
&= \partial{}_\tau{}\Bigl(\Bigl((\partial{}_\tau{}\psi{})^2 + (\partial{}_x\psi{})^2 + \frac{1}{\cosh ^2x}\psi^2 + \frac{1}{\cosh ^2\tau{}}\abs{\Grad _{S^2}\psi{}}^2 + \frac{\cosh ^2x}{\tanh^2 x} (\partial{}_\theta{}\psi{})^2\Bigr)w\tanh x\Bigr) \\
&\qquad  - \partial{}_x(2w\partial{}_\tau{}\psi{}\partial{}_x\psi{}\tanh x) + \partial{}_\theta{}(\cdots{}) + \div_{S^2}(\cdots{}) + w\frac{2\tanh \tau{}}{\cosh ^2\tau{}}\abs{\Grad _{S^2}\psi{}}^2\tanh x  \\
&\qquad + \frac{1}{2}(-w')(\underline{L}\psi{})^2\tanh x - \frac{1}{2}(-w')(L\psi{})^2\tanh x.
\end{split}
\end{equation}
We integrate \zcref{T-estimate-0} over the region \(\set{s_1\le s\le s_2}\cap \set{\tau{}\le \tau{}_0}\cap \set{x\ge x_0}\).
The boundary term at \(\set{x=x_0}\) vanishes as \(x_0\to 0\). The boundary
term at \(\set{\tau{}=\tau{}_0}\) is
\begin{equation}
\int _{\Sigma{}(\tau{})\cap \set{s_1\le s\le s_2}}\Bigl[(\underline{L}\psi{})^2 + (L\psi{})^2 + \frac{1}{\cosh ^2x}\psi^2 + \frac{1}{\cosh ^2\tau{}}\abs{\Grad _{S^2}\psi{}}^2 + \frac{\cosh ^2x}{\tanh^2 x} (\partial{}_\theta{}\psi{})^2\Bigr]w\tanh x\dd{}\mu{}.
\end{equation}
Take \(w \equiv 1\) to get \zcref{T-estimate}. Take \(w = 1 + (1 + x)^{-1}\) to get
\zcref{T-estimate-2}.

To obtain \zcref{T-estimate-3}, instead integrate \zcref{T-estimate-1} over the region
\(\set{\tau{}_1\le \tau{}\le \tau{}'}\cap \set{v\le v_0}\), where \(\tau{}'\in
[\tau{}_1,\tau{}_2]\). The boundary term at \(\set{v=v_0}\) has a good sign. After
dropping this term, we obtain \zcref{T-estimate-3}.
\end{proof}
\subsection{An \texorpdfstring{\(r^p\)}{rᵖ}-type energy estimate and energy decay for \texorpdfstring{\(\U(1)\)}{U(1)}-symmetric scalar fields}
\label{sec:rp-estimate} The main result of this section is \zcref{rp-type-estimate},
which establishes a version of the \(r^p\)-weighted energy estimates of
Dafermos--Rodnianski \cite{rp-method}. These estimates hold only for \(\U(1)\)-symmetric
scalar fields. We then use this estimate to deduce energy decay for \(\U(1)\)-symmetric
solutions to the wave equation, analogous to the decay that follows from an
\(r^p\)-weighted energy hierarchy.
\begin{proposition}[An \(r^p\)-type energy estimate]
Let \(\psi{}\in C^\infty(\mathcal{M})\) be \(\U(1)\)-symmetric. For \(0\le s_1\le s_2\) and \(\tau{}\ge 0\),
we have
\begin{equation}\label{rp-equation}
\begin{split}
\mathcal{E}[\psi{}](s_2,\tau{}) + \sup_{0\le \tau{}'\le \tau{}} \mathcal{F}_T[\psi{}](s_1,s_2,\tau{}') + \mathcal{B}[\psi{}](s_1,s_2,\tau{}) &\lesssim \mathcal{E}[\psi{}](s_1,\tau{}) + \int_{s_1}^{s_2} \int _{\mathcal{H}(s,\tau)}\sinh x\abs{P\psi{}}^2\dd{}\mu{}\dd{}s,
\end{split}
\end{equation}
where
\begin{equation}\label{bulk-term-def}
\mathcal{B}[\psi{}](s_1,s_2,\tau{})\coloneqq{}\int_{s_1}^{s_2} \int _{\mathcal{H}(s,\tau)}\langle{}x\rangle{}^{-2}\tanh x(\underline{L}\psi{})^2 + \sinh x(L\psi{})^2 + \frac{\sinh x}{\cosh ^2\tau{}} \abs{\Grad _{S^2}\psi{}}^2 + \frac{1}{\cosh x}\psi^2\dd{}\mu{}\dd{}s
\end{equation}
satisfies
\begin{equation}\label{rp-K-bound}
\mathcal{B}[\psi{}](s_1,s_2,\tau{})\gtrsim \int_{s_1}^{s_2} \mathcal{E}[\psi{}](s,\tau)\dd{}s.
\end{equation}
Moreover, for \(0\le \tau_1\le \tau_2\), we have
\begin{equation}\label{rp-equation-2}
E_T[\psi{}](\tau{}_2) + E_p[\psi{}](\tau{}_2) + F_p[\psi{}](\tau{}_1,\tau{}_2) + B[\psi{}](\tau{}_1,\tau{}_2)\lesssim E_T[\psi{}](\tau{}_1) + E_p[\psi{}](\tau{}_1) + \int_{\tau{}_1}^{\tau{}_2} \int _{\Sigma{}(\tau{})} \sinh x\abs{P\psi{}}^2\dd{}\mu{}\dd{}\tau{},
\end{equation}
where
\begin{equation}
B[\psi{}](\tau{}_1,\tau{}_2)\coloneqq{}\int_{\tau{}_1}^{\tau{}_2} \int _{\Sigma{}(\tau{})}\langle{}x\rangle{}^{-2}\tanh x(\underline{L}\psi{})^2 + \sinh x(L\psi{})^2 + \frac{\sinh x}{\cosh ^2\tau{}} \abs{\Grad _{S^2}\psi{}}^2 + \frac{1}{\cosh x}\psi^2\dd{}\mu{}\dd{}\tau{}.
\end{equation}
\label{rp-type-estimate}
\end{proposition}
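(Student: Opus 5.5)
We will prove \zcref{rp-type-estimate} by a multiplier argument for the first-order-free equation \zcref{P-Psi-equation} satisfied by \(\Psi=(\tanh x)^{1/2}\psi\). Since \(\psi\) is \(\U(1)\)-symmetric, the \(\partial_\theta^2\) term drops, and
\[
(\tanh x)^{1/2}P\psi = -\underline{L}L\Psi + \frac{1}{4\cosh^2x\sinh^2x}\Psi + \frac{1}{\cosh^2\tau}\Lapl_{S^2}\Psi .
\]
We multiply by \(-2w(x)L\Psi\) with weight \(w(x)=\sinh x\), which is the multiplier \((\sinh x)L\) advertised in \zcref{intro-rp-estimate-sec}. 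If this is too singular at the bubble, we use a smoothed version \(w\) with \(w\sim\sinh x\) for \(x\ge1\) and \(w\sim x\) near \(0\). The resulting identity is integrated over \(\mathcal{R}(s_1,s_2)\cap\{\tau\le\tau_0\}\) with respect to \(\dd{}x\dd{}\omega\dd{}\theta\dd{}\tau\).

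We then read off each term in the identity.
\begin{itemize}
\item The term \(-2wL\Psi\,\underline{L}L\Psi\) equals \(-\underline{L}(w(L\Psi)^2)+(\underline{L}w)(L\Psi)^2\). Since \(\underline{L}w=-w'=-\cosh x\), this has the wrong sign. We therefore use instead \(2wL\Psi\,\underline{L}L\Psi=\underline{L}(w(L\Psi)^2)+w'(L\Psi)^2\), with the sign of the multiplier arranged so that the bulk term is \(+\cosh x\,(L\Psi)^2\). This term gives the bulk quantity \(\sinh x(L\psi)^2\) in \(\mathcal{B}\), up to lower order terms in \(\psi\) coming from \(L(\tanh x)^{1/2}\), which decay exponentially in \(x\).
\item The angular term yields \(L\bigl(w\cosh^{-2}\tau\abs{\Grad_{S^2}\Psi}^2\bigr)\) minus \(\bigl(L(w\cosh^{-2}\tau)\bigr)\abs{\Grad_{S^2}\Psi}^2\). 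Here \(L(w\cosh^{-2}\tau)=(\cosh x-2\sinh x\tanh\tau)\cosh^{-2}\tau\). This is positive for small \(\tau\) and large \(x\), but it can be negative at large \(\tau\) because of the \(-2\tanh\tau\) contribution.
\item The potential term \(\frac{w}{4\cosh^2x\sinh^2x}L(\Psi^2)\) gives a boundary term plus a bulk term with coefficient \(-L\bigl(w/(4\cosh^2 x\sinh^2x)\bigr)>0\) near \(x=0\). This produces the zeroth-order bulk control \(\psi^2/\cosh x\) near the bubble, as in \cite{gautam-2d-waves}. Far away, the \(\psi^2/\cosh x\) bulk term comes from a Hardy inequality in \(x\) along \(\mathcal{H}(s)\) using \(\sinh x(L\Psi)^2\), together with control of \(\psi\) at the inner boundary.
\end{itemize}

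To close, we add a large multiple of \zcref{T-estimate-2} from \zcref{T-estimate-prop}. This supplies the \(\langle x\rangle^{-2}\tanh x(\underline{L}\psi)^2\) bulk term, at the cost of \(\int\langle x\rangle^{-2}(L\psi)^2\), which is absorbed by the \(\sinh x(L\psi)^2\) bulk term. It also supplies the bulk \(\tanh\tau\cosh^{-2}\tau\abs{\Grad_{S^2}\psi}^2\), which absorbs the bad angular contribution in the region \(x\lesssim1\). The inhomogeneous term \(2wL\Psi(\tanh x)^{1/2}P\psi\) is handled by Cauchy--Schwarz, giving \(\delta\,\sinh x(L\Psi)^2+\delta^{-1}\sinh x\abs{P\psi}^2\), and \(\abs{\partial_\tau\psi}\abs{P\psi}\) is treated similarly. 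The boundary terms on \(\mathcal{H}(s)\) give \(\mathcal{E}_p\), because \(\mathcal{H}(s)\) is asymptotically null and \(L\) is transversal to it. The terms at \(\{\tau=\tau_0\}\) have a good sign. The bound \zcref{rp-K-bound} follows by comparing weights: \(\sinh x\ge\langle x\rangle^{-2}\sinh x\), and \(\mathcal{E}_T\) terms such as \((L\psi)^2\tanh x\) are dominated by \(\sinh x(L\psi)^2\).

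The main obstacle is the angular bulk term with coefficient \(\cosh x-2\sinh x\tanh\tau\), which is negative for large \(x\) once \(\tau\) is large. Our first attempt is to modify the multiplier to \(w(L\Psi+\alpha\Psi)\) or to include a \(\tau\)-dependent weight. More likely, the fix is to rewrite the \(\Lapl_{S^2}\) term using \(\cosh^{-2}\tau\) and the boost structure, or to exploit the fact that \(\cosh x\ge\sinh x\) and split off the factor \(\tanh\tau\). If a residual \(\sinh x\tanh\tau\cosh^{-2}\tau\abs{\Grad_{S^2}\Psi}^2\) remains, we will try to absorb it by taking the multiplier to be \(w(x)=\sinh x\) times a function of \(\tau\) that decays so that \(Lw\ge c\,w\,\tanh\tau\). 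Finally, \zcref{rp-equation-2} is obtained the same way on \(\{\tau_1\le\tau\le\tau_2\}\cap\{v\le v_0\}\). There the null boundary \(\{v=v_0\}\) and the cones \(C(u)\) produce the flux \(F_p\) through the \(w(L\Psi)^2\) term, and we combine with \zcref{T-estimate-3}.
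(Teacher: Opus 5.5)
Your overall architecture (a multiplier $\approx\sinh x\,L$ on the $\Psi$-equation, plus $\partial_\tau$-estimates and Cauchy--Schwarz for $P\psi$) is the paper's. However, your sign bookkeeping is inconsistent, and this hides the one genuinely delicate step.

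\textbf{The zeroth-order bulk term.} You arranged the principal term to give $+w'(L\Psi)^2$. Because the equation contains $-\underline{L}L\Psi$, this corresponds to multiplying by $-2wL\Psi$. The signs you then assign to the potential and angular bulk terms correspond to the opposite multiplier $+2wL\Psi$.

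With the correct multiplier $-2wL\Psi$, the potential term $V_0\Psi$, where $V_0=(4\cosh^2x\sinh^2x)^{-1}$, becomes $-L(wV_0\Psi^2)+(wV_0)'\Psi^2$. Its bulk coefficient is therefore $+(wV_0)'$, which is \emph{negative}: for $w=\sinh x$, the function $wV_0=(4\cosh^2x\sinh x)^{-1}$ is decreasing. Near the bubble, for any $w\sim x$, this term is about $-\psi^2/(4x)$, which is not even integrable in $x$.

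So the potential does not give zeroth-order control at the bubble; it is the worst term. It is cancelled only by the singular piece $\tfrac14 w'(\tanh x)^{-1}(\cosh x)^{-4}\psi^2\approx\psi^2/(4x)$ of $w'(L\Psi)^2$. These are exactly the ``lower order terms from $L(\tanh x)^{1/2}$'' that you dismiss as exponentially decaying.

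The missing idea is the computation in the paper's Step~1:
\begin{itemize}
\item Expand $(L\Psi)^2=\tanh x(L\psi)^2+\tfrac14(\tanh x)^{-1}(\cosh x)^{-4}\psi^2+\tfrac12(\cosh x)^{-2}L(\psi^2)$ inside the bulk term $f'(L\Psi)^2$.
\item Integrate the cross term by parts.
\item Compute the total zeroth-order coefficient.
\end{itemize}
For $f=\sinh x$ this coefficient is exactly $\tfrac12\tanh^3x/\cosh x$. That gives the $\psi^2/\cosh x$ bulk term for $x\ge1$ directly, but it vanishes to third order at $x=0$.

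More generally, for $f(0)=0$ the coefficient at the bubble equals $-\tfrac14 f''(0)$. This is why the paper adds the concave piece $f=x/(1+x)$, whose coefficient is $\tfrac12$ at $x=0$; the point is concavity, not smoothing $\sinh x$.

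Your Hardy-inequality substitute for large $x$ does not close the gap. It needs spacetime $L^2$ control of $\psi$ near $x\sim1$, which is precisely the missing near-bubble zeroth-order bulk term.

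\textbf{The angular bulk term.} The same sign slip misplaces your ``main obstacle.'' With the multiplier $-2wL\Psi$, the angular bulk term is
\[
-L(\sinh x\cosh^{-2}\tau)\abs{\Grad_{S^2}\Psi}^2=\cosh x\cosh^{-2}\tau\,(2\tanh x\tanh\tau-1)\abs{\Grad_{S^2}\Psi}^2,
\]
which is \emph{positive} for $x,\tau\gg1$. The $\tau$-dependent modifications you contemplate are therefore unnecessary. The actual bad set is $\{x\lesssim1\}\cup\{\tau\lesssim1\}$.

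On $\{x\lesssim1\}\cap\{\tau\gg1\}$, the error is absorbed by a large multiple of the unweighted ($w\equiv1$) $\partial_\tau$-estimate. Only a \emph{small} multiple of the weighted estimate with $w=1+(1+x)^{-1}$ may be added. Its cost $\langle x\rangle^{-2}\tanh x(L\psi)^2$ is comparable to the good bulk term near the bubble, so a large multiple cannot be absorbed, contrary to your plan.

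Near $\tau=0$, the angular bulk term of the $\partial_\tau$-estimate carries the factor $\tanh\tau$. It degenerates there and cannot dominate the error, and your plan has no mechanism for this region. The paper uses that $\mathcal{R}(s_1,s_2)\cap\{\tau\le\tau_0\}$ lies in $\{x\lesssim1\}\cap\{s\le s_0\}$. It bounds this error by $\int_{\min(s_1,s_0)}^{s_0}\mathcal{E}_T[\psi](s)\dd{}s$ and closes with Grönwall's inequality on a bounded interval. The constant-$\tau$ estimate is handled analogously, using $E_T+E_p$ on a bounded $\tau$-interval.
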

\begin{remark}[Relation to the \(r^p\)-weighted energy estimates of Dafermos--Rodnianski]
On \((3 + 1)\)-dimensional Minkowski space, the \(r^p\)-weighted energy
estimates of Dafermos--Rodnianski \cite{rp-method} are derived using a multiplier
of the form \(r^p\partial_v\) in a large-\(r\) region, where \(p\in [0,2]\) and
\(\partial_v = \partial_t + \partial_r\) in standard \((t,r,\omega{})\) polar
coordinates. To derive \zcref{rp-type-estimate}, we use a multiplier of the form
\((\sinh x)L\). Since \(L = 2V\partial_V\) in Minkowskian double null
coordinates \((U,V)\) (see \zcref{double-null}) and \(\sinh x\sim
\abs{U}^{1/2}V^{1/2}\) when \(x\gg 1\), our multiplier is like
\(\abs{U}^{1/2}V^{3/2}\partial_V\). Since \(V\sim r\), our estimates have the
same scaling as the usual \(r^p\)-weighted energy estimates with \(p = 3/2\), at
least in the asymptotic region where \(\abs{U}\gtrsim 1\) and \(x\gg 1\). This region
is well-separated from the bubble.

In fact, the proof of \zcref{rp-type-estimate} generalizes to multipliers that for
large \(x\) take the form \((\sinh x)^qL\) for \(q\in (0,2)\), which correspond
to \(r^p\partial_v\) with \(p = 1 + q/2\), so that \(p\in (1,2)\). The
restriction \(q > 0\) corresponds to the fact that we must use a multiplier
\(f(x)L\) with \(f'(x)\sim f(x)\) in a large-\(x\) region to obtain exponential
decay, while the upper bound \(q<2\) is imposed by the positivity of the
zeroth-order bulk term (see \zcref{rp-identity}). However, our proof of global
existence for a semilinear equation outside symmetry closes only for the
estimate in \zcref{rp-type-estimate}, where \(q = 1\), which corresponds to \(p =
3/2\) (see \zcref{semilinear-rp-value}).
\label{rp-relation}
\end{remark}
\begin{remark}[The requirement of \(\U(1)\)-symmetry]
The estimates of \zcref{rp-type-estimate} can hold only for \(\U(1)\)-symmetric scalar
fields \(\psi{}\), in view of the possible periodicity of \(\psi{}_{\ge 1}\) in
\(\tau{}\) (see \zcref{time-periodic-intro} of \zcref{main-theorem-linear-intro}). In
particular, one cannot hope to control \(\psi_{\ge 1}\) in a spacetime cylinder
in \(L^2\), as the bulk term \(B(0,\infty)\) does.

This phenomenon is reflected in the structure of the wave equation, which
contains a \(\partial_\theta^2\)-term with coefficient \((\tanh x)^{-2}\cosh
^2x\). After decomposition into \(S^1\)-modes, each non-zero mode therefore
carries an effective (\(x\)-dependent) mass. In this sense, \(\psi_{\ge 1}\)
behaves more like a solution (or rather, a superposition of solutions) of the
massive Klein--Gordon equation than like a solution of the massless wave
equation.

Since the \(r^p\)-weighted estimates capture the improved decay of outgoing null
derivatives, and such decay is absent for solutions to the Klein--Gordon
equation on Minkowski space, one should not expect such estimates to hold for
\(\psi_{\ge 1}\). At the level of the multiplier identity, the obstruction to
establishing these estimates appears through a bulk term of a bad sign involving
\((\partial_{\theta{}}\psi{})^2\).
\label{rp-axisymmetric-required}
\end{remark}
\begin{remark}[No need for a Morawetz estimate]
The usual proof of \(r^p\)-type estimates is done in a ``far'' region (which is
asymptotically flat and hence close to Minkowski space) and uses a Morawetz, or
integrated local energy decay, estimate to control error terms in a ``near''
region. \Cref{rp-type-estimate} establishes a \emph{global} estimate that does not
distinguish between a ``near'' region and a ``far'' region, and does not need as
input a Morawetz estimate. In particular, we can obtain a good bulk term using
the same vector field multiplier for all regions of spacetime, including the
region near the bubble itself (which is not asymptotically flat).
\end{remark}
\begin{remark}[Relation to estimates for the linear wave equation in two space dimensions]
The proof of \zcref{rp-type-estimate} (in particular the expansion
\zcref{expanded-LPsi} and subsequent differentiation by parts) is inspired by our
work \cite{gautam-2d-waves}, which extends the \(r^p\)-weighted energy
estimates of Dafermos--Rodnianski \cite{rp-method} to two space dimensions. The
two space dimensions in question here comprise the \(\R^2\) factor of the Witten
bubble spacetime \(\mathcal{M}\), which has smooth structure \(\R\times \R^2\times S^2\).
\label{2d-waves-relation}
\end{remark}
\begin{proof}
The proof has five steps. In Steps 1--4, we prove the estimate \zcref{rp-equation}
associated to the hyperboloidal foliation. In Step 1, we derive a general
identity for multipliers of the form \(f(x)L\). In Step 2, we provide
computations showing that \(f(x) = \sinh x + x/(1+x)\) produces an identity with
good bulk and boundary terms, at least in a large-\(\tau{}\) region. In Step 3,
we integrate the multiplier identity to produce an energy estimate with errors
in a finite-\(\tau{}\) region that we absorb using the
\(\partial_\tau{}\)-energy estimate of \zcref{T-estimate-prop} and Grönwall's
inequality. This proves \zcref{rp-equation,rp-equation-2}. In Step 4, we show that
the bulk term present in this energy estimate controls the integral of the flux
terms through hyperboloidal hypersurfaces (see \zcref{rp-K-bound}). We will use this
estimate in \zcref{energy-decay} to deduce exponential decay in \(s\) for solutions
to the wave equation. Finally, in Step 5, we sketch the modifications to the
previous steps needed to obtain the estimate \zcref{rp-equation-2} associated to the
constant-\(\tau{}\) foliation.

\step{Step 1: Deriving a multiplier identity.} We begin by deriving the general
identity \zcref{rp-identity} for multipliers of the form \(f(x)L\). Write \(V =
-1/(4\cosh^2 x\sinh^2x)\), and let \(f = f(x)\) be a \(C^2\) function. Multiply
\zcref{P-Psi-equation} by \(-2f(x)L\Psi{}\) and use the Leibniz rule to get
\begin{equation}
\begin{split}
&-2(\tanh x)^{1/2}fL\Psi{}P\psi{} = f\underline{L}((L\Psi{})^2) + fVL(\Psi^2) - f \frac{1}{\cosh ^2\tau{}}2\Lapl _{S^2}\Psi{}L\Psi{}.
\end{split}
\end{equation}
Differentiating by parts, we obtain
\begin{equation}\label{rp-prep}
\begin{split}
&-2(\tanh x)^{1/2}fL\Psi{}P\psi{} \\
&= \underline{L}(f(L\Psi{})^2) + L\Bigl(\frac{1}{\cosh ^2\tau{}}\abs{\Grad _{S^2}\Psi{}}^2 + fV\Psi^2\Bigr) + \div_{S^2}(\cdots{}) + f'(L\Psi{})^2 - L\Bigl(\frac{f}{\cosh ^2\tau{}}\Bigr)\abs{\Grad _{S^2}\Psi{}}^2 - (fV)'\Psi^2.
\end{split}
\end{equation}
Expand \(\Psi{} = (\tanh x)^{1/2}\psi{}\) to get
\begin{equation}\label{expanded-LPsi}
(L\Psi{})^2 = \tanh x(L\psi{})^2 + \frac{1}{4}(\tanh x)^{-1}(\cosh x)^{-4}\psi^2 + \frac{1}{2}(\cosh x)^{-2}L(\psi^2).
\end{equation}
Substitute \zcref{expanded-LPsi} into \zcref{rp-prep} and differentiate the \(L(\psi^2)\)
term by parts to obtain
\begin{equation}\label{rp-identity}
\begin{split}
&-2(\tanh x)^{1/2}fL\Psi{}P\psi{}\\
&= \underline{L}(f(L\Psi{})^2) + L\Bigl(\frac{f}{\cosh ^2\tau{}}\abs{\Grad _{S^2}\Psi{}}^2 + \Bigl(\frac{1}{2}f'(\cosh x)^{-2} + f\tanh xV\Bigr)\psi{}^2\Bigr) + \div_{S^2}(\cdots{}) \\
&\qquad + f'\tanh x(L\psi{})^2 - L\Bigl(\frac{f}{\cosh ^2\tau{}}\Bigr)\abs{\Grad _{S^2}\Psi{}}^2 + \frac{1}{4}\Bigl[\frac{f'}{\tanh x\cosh ^4x} - 2\Bigl(\frac{f'}{\cosh^2 x}\Bigr)' - 4\tanh x(fV)'\Bigr]\psi{}^2.
\end{split}
\end{equation}

\step{Step 2: Constructing a good multiplier.} In this step, we show that, in a
large-\(\tau{}\) region, taking \(f(x) = \sinh x\) in the result \zcref{rp-identity}
of Step 1 produces a good multiplier for large \(x\) (see Step 2a) and that
\(x/(1 + x)\) is good for small \(x\) (see Step 2b). The remaining
finite-\(\tau{}\) region will be controlled in Step 3 using the
\(\partial_\tau{}\)-energy estimate of \zcref{T-estimate-prop} and a Grönwall
inequality.

\step{Step 2a: Analysis of \(f(x) = \sinh x\)}.
When \(f(x) = \sinh x\) in \zcref{rp-identity}, we have
\begin{equation}\label{rp-sinh}
\begin{split}
&-2(\tanh x)^{1/2}\sinh xL\Psi{}P\psi{}\\
&= \underline{L}(\sinh x (L\Psi{})^2) + L\Bigl(\frac{\sinh x}{\cosh ^2\tau{}}\abs{\Grad _{S^2}\Psi{}}^2 + \Bigl(\frac{1}{2\cosh x} - \frac{1}{4\cosh^3 x}\Bigr)\psi{}^2\Bigr) + \div_{S^2}(\cdots{}) \\
&\qquad + \sinh x(L\psi{})^2 - L\Bigl(\frac{\sinh x}{\cosh ^2\tau{}}\Bigr)\abs{\Grad _{S^2}\Psi{}}^2  + \frac{1}{2} \frac{\tanh^3 x}{\cosh x}\psi^2.
\end{split}
\end{equation}
We have
\begin{equation}\label{rp-sinh-L}
-L\Bigl(\frac{\sinh x}{\cosh ^2\tau{}}\Bigr) = \frac{\cosh x}{\cosh ^2\tau{}}(2\tanh x \tanh \tau{} - 1)\gtrsim \frac{\cosh x}{\cosh ^2\tau{}}\quad \textnormal{in }\set{x\gg 1}\cap \set{\tau{}\gg 1}
\end{equation}
and
\begin{equation}\label{rp-sinh-L-2}
\abs[\Big]{L\Bigl(\frac{\sinh x}{\cosh ^2\tau{}}\Bigr)}\lesssim \frac{\cosh x}{\cosh ^2\tau{}}.
\end{equation}

\step{Step 2b: Analysis of \(f(x) = x/(1 + x)\).} When \(f(x) = x/(1 + x)\) in \zcref{rp-identity}, we have
\begin{equation}\label{rp-x}
\begin{split}
&-2(\tanh x)^{1/2}x(1+x)^{-1}L\Psi{}P\psi{}\\
&= \underline{L}\Bigl(\frac{x}{1+x}(L\Psi{})^2\Bigr) + L\Bigl(\frac{x}{(1+x)\cosh ^2\tau{}}\abs{\Grad _{S^2}\Psi{}}^2 + \textnormal{(II)}\cdot \psi{}^2\Bigr) + \div_{S^2}(\cdots{}) \\
&\qquad + \frac{\tanh x}{(1+x)^2}(L\psi{})^2 - L\Bigl(\frac{x}{(1+x)} \frac{1}{\cosh ^2\tau{}}\Bigr)\abs{\Grad _{S^2}\Psi{}}^2 + \textnormal{(III)}\cdot \psi^2
\end{split}
\end{equation}
for
\begin{equation}
\textnormal{(II)}\coloneqq{}\frac{1}{(1+x)\cosh ^2x}\Bigl(\frac{1}{2} \frac{1}{(1+x)} - \frac{1}{4}\frac{x\tanh x}{\sinh^2 x}\Bigr)
\end{equation}
and
\begin{equation}
\textnormal{(III)}\coloneqq{}\frac{1}{2(1+x)^3\cosh ^2x}\Bigl[2+\tanh x(1+x)+(1+x) \frac{x}{\sinh x} \frac{1}{\sinh x}\Bigl(\cosh x \frac{\sinh x}{x}+\frac{1+x}{\cosh ^2x}-2(1+x)\Bigr)\Bigr].
\end{equation}
We can compute
\begin{equation}\label{rp-x-III}
\textnormal{(III)} > 0,\qquad \textnormal{(III)}|_{x=0} = \frac{1}{2}.
\end{equation}
We have
\begin{equation}\label{rp-x-L}
-L\Bigl(\frac{x}{(1+x)} \frac{1}{\cosh ^2\tau{}}\Bigr) = \frac{1}{(1+x)\cosh^2\tau{}}\Bigl(2x\tanh \tau{}-\frac{1}{1+x}\Bigr)\gtrsim \frac{1}{\cosh ^2\tau{}}\quad \textnormal{in }\set{x\gg 1}\cap \set{\tau{}\gg 1}
\end{equation}
and
\begin{equation}\label{rp-x-L-2}
\abs[\Big]{L\Bigl(\frac{x}{(1+x)} \frac{1}{\cosh ^2\tau{}}\Bigr)}\lesssim \frac{1}{\cosh ^2\tau{}}.
\end{equation}

\step{Step 3: Producing an energy estimate; proof of \zcref{rp-equation,rp-equation-2}.} We
now integrate the multiplier identities of Step 2 over suitable spacetime
regions to prove \zcref{rp-equation,rp-equation-2}. We will give a detailed proof of
\zcref{rp-equation}; the proof of \zcref{rp-equation-2} is similar.

Adding \zcref{rp-sinh,rp-x} produces an identity
\begin{equation}\label{rp-identity-prep}
\begin{split}
&O((\tanh x)^{1/2}\sinh x)L\Psi{}P\psi{}\\
&= \underline{L}(\alpha{}_1(L\Psi{})^2) + L(\alpha{}_2\abs{\Grad _{S^2}\Psi{}}^2 + \alpha{}_3\psi^2) + \div_{S^2}(\cdots{}) + \alpha{}_4(L\psi{})^2 + \alpha{}_5\abs{\Grad _{S^2}\Psi{}}^2 + \alpha{}_6\psi^2
\end{split}
\end{equation}
with coefficients \(\alpha_i\) (\(1\le i\le 6\)). In Step 3a, we analyze the coercivity
properties of the \(\alpha_i\), and in Step 3b, we complete the proof of
\zcref{rp-equation,rp-equation-2}.

\step{Step 3a: Positivity of the bulk and flux terms and vanishing of the boundary
term at the bubble.} We now study the properties of the coefficients \(\alpha_i\). By
construction we have
\begin{equation}\label{rp-alpha-1}
\alpha_1=\sinh x + x/(1+x)\ge \sinh x\ge 0.
\end{equation}
From \zcref{rp-sinh,rp-x}, we have
\begin{equation}\label{rp-alpha-2}
\alpha_2 = \frac{1}{\cosh ^2\tau{}}\Bigl(\sinh x + \frac{x}{1+x}\Bigr)\ge 0
\end{equation}
and
\begin{equation}\label{rp-alpha-3}
\alpha{}_3 = \Bigl(\frac{1}{2\cosh x} - \frac{1}{4\cosh^3 x}\Bigr) + \frac{1}{(1+x)\cosh^2 x}\Bigl(\frac{1}{2} \frac{1}{1+x} - \frac{1}{4} \frac{x\tanh x}{\sinh^2 x}\Bigr)\ge 0.
\end{equation}
Next, from \zcref{rp-sinh,rp-x} it is clear that \(\alpha_4\ge \sinh x\). From \zcref{rp-sinh-L,rp-sinh-L-2,rp-x-L,rp-x-L-2}, we have
\begin{equation}\label{large-x-large-tau}
\alpha{}_5\gtrsim \frac{\cosh x}{\cosh ^2\tau{}}\quad \textnormal{in }\set{x\gg 1}\cap \set{\tau{}\gg 1},\qquad \abs{\alpha{}_5}\lesssim \frac{\cosh x}{\cosh ^2\tau{}}.
\end{equation}
Finally, by applying \zcref{rp-sinh} in \(\set{x\ge 1}\) and \zcref{rp-x-III} in \(\set{x\le
1}\), we have
\begin{equation}
\alpha_6\gtrsim \frac{1}{\cosh x}.
\end{equation}

We now integrate \zcref{rp-identity-prep} over the region \(\set{s_1\le s \le s_2}\cap
\set{v\le v_0}\cap \set{x\ge x_0}\) for some \(v_0\ge 0\) and \(x_0>0\). The
boundary term at \(\set{x=x_0}\) has integrand
\begin{equation}\label{vanishing-coeffs}
\begin{split}
&\alpha{}_1(L\Psi{})^2 - \alpha{}_2\abs{\Grad _{S^2}\Psi{}}^2 - \alpha{}_3\psi^2  \\
&= \alpha{}_1\tanh x(L\psi{})^2 + \Bigl(\frac{1}{4}(\tanh x)^{-1}(\cosh x)^{-4}\alpha{}_1 - \alpha{}_3\Bigr)\psi^2 + \alpha{}_1(\cosh x)^{-2}\psi{}L\psi{} - \alpha{}_2\tanh x\abs{\Grad _{S^2}\psi{}}^2.
\end{split}
\end{equation}
This boundary term vanishes as \(x_0\to 0\), in view of the boundedness of \(\psi{}\),
\(L\psi{}\), \(\Grad _{S^2}\psi{}\) at the bubble. Indeed, each coefficient in
\zcref{vanishing-coeffs} vanishes linearly as \(x\to 0\), because of the expansions
\begin{equation}
\alpha{}_1 = 2x + O(x^2),\quad \alpha{}_2 = O(1),\quad \alpha{}_3 = \frac{1}{2} + O(x)
\end{equation}
as \(x\to 0\), uniformly in \(\tau{}\). These expansions can be obtained from the
explicit expressions \zcref{rp-alpha-1,rp-alpha-2,rp-alpha-3}. Combining the above,
we obtain an estimate
\begin{equation}\label{rp-estimate-prep}
\begin{split}
&\mathcal{E}_p[\psi{}](s_2) + \int_{s_1}^{s_2} \int _{\mathcal{H}(s,\tau)}\sinh x(L\psi{})^2 + \frac{\sinh x}{\cosh ^2\tau{}} \abs{\Grad _{S^2}\psi{}}^2 + \frac{1}{\cosh x}\psi^2\dd{}\mu{}\dd{}s \\
&\lesssim \mathcal{E}_p[\psi{}](s_1) + \int_{s_1}^{s_2} \int _{\mathcal{H}(s,\tau)\cap (\set{x\lesssim 1}\cup \set{\tau{}\lesssim 1})} \frac{\cosh x}{\cosh ^2\tau{}}\abs{\Grad _{S^2}\psi{}}^2\tanh x\dd{}\mu{}\dd{}s\\
&\qquad +\int_{s_1}^{s_2} \int _{\mathcal{H}(s,\tau)}(\tanh x)^{1/2}\sinh x\abs{L\Psi{}}\abs{P\psi{}}\dd{}\mu{}\dd{}s.
\end{split}
\end{equation}
The angular bulk term supported in \(\set{x\lesssim 1}\cup \set{\tau{}\lesssim 1}\)
on the right-hand side of \zcref{rp-estimate-prep} arises from
\zcref{large-x-large-tau}.

\step{Step 3b: Controlling the error terms and producing a closed energy estimate.}
Adding to \zcref{rp-estimate-prep} a large multiple of the \(\partial{}_\tau{}\)-energy estimate \zcref{T-estimate} to control the
angular bulk term on the right side in the region \(\set{x\lesssim 1}\cap
\set{\tau{}\gg 1}\), we find that there is a universal \(\tau_0\ge 0\) such that
\begin{equation}\label{rp-estimate-prep-2}
\begin{split}
&\mathcal{E}_p[\psi{}](s_2)+ \mathcal{E}_T[\psi{}](s_2) + \sup_{0\le \tau{}'\le \tau{}}\mathcal{F}_T[\psi{}](s_1,s_2,\tau{}') \\
&\qquad + \int_{s_1}^{s_2} \int _{\mathcal{H}(s,\tau)}\sinh x(L\psi{})^2 + \frac{\sinh x}{\cosh ^2\tau{}} \abs{\Grad _{S^2}\psi{}}^2 + \frac{1}{\cosh x}\psi^2\dd{}\mu{}\dd{}s \\
&\lesssim \mathcal{E}_p[\psi{}](s_1) + \mathcal{E}_T[\psi{}](s_1) +  \int_{s_1}^{s_2} \int _{\mathcal{H}(s,\tau)\cap \set{\tau{}\le \tau{}_0}} \frac{\cosh x}{\cosh ^2\tau{}}\abs{\Grad _{S^2}\psi{}}^2\tanh x\dd{}\mu{}\dd{}s\\
&\qquad +\int_{s_1}^{s_2} \int _{\mathcal{H}(s,\tau)}(\tanh x\abs{\partial{}_\tau{}\psi{}} + (\tanh x)^{1/2}\sinh x\abs{L\Psi{}})\abs{P\psi{}}\dd{}\mu{}\dd{}s.
\end{split}
\end{equation}
Since \(\mathcal{H}(s,\tau)\cap \set{\tau{}\le \tau{}_0}\subset \set{x\lesssim 1}\), we conclude, after
adding a small multiple of \zcref{T-estimate-2} to \zcref{rp-estimate-prep-2}, that
\begin{equation}\label{rp-equation-prep-2}
\begin{split}
&\mathcal{E}_p[\psi{}](s_2) + \mathcal{E}_T[\psi{}](s_2) + \int_{s_1}^{s_2} \int _{\mathcal{H}(s,\tau)}\langle{}x\rangle{}^{-2}\tanh x(\underline{L}\psi{})^2 + \sinh x(L\psi{})^2 + \frac{\sinh x}{\cosh ^2\tau{}} \abs{\Grad _{S^2}\psi{}}^2 + \frac{1}{\cosh x}\psi^2\dd{}\mu{}\dd{}s \\
&\lesssim \mathcal{E}_p[\psi{}](s_1) + \mathcal{E}_T[\psi{}](s_1) + \int_{s_1}^{s_2} \int _{\mathcal{H}(s,\tau)\cap \set{\tau{}\le \tau{}_0}} \frac{1}{\cosh ^2\tau{}}\abs{\Grad _{S^2}\psi{}}^2\tanh x\dd{}\mu{}\dd{}s\\
&\qquad +\int_{s_1}^{s_2} \int _{\mathcal{H}(s,\tau)}(\tanh x\abs{\partial{}_\tau{}\psi{}} + (\tanh x)^{1/2}\sinh x\abs{L\Psi{}})\abs{P\psi{}}\dd{}\mu{}\dd{}s.
\end{split}
\end{equation}
We now control the final bulk term on the right-hand side of
\zcref{rp-equation-prep-2}. Since
\begin{equation}\label{LPsi-expansion}
\abs{L\Psi{}}\lesssim (\tanh x)^{1/2}\abs{L\psi{}} + (\tanh x)^{-1/2}(\cosh x)^{-2}\abs{\psi{}},
\end{equation}
we have
\begin{equation}
\begin{split}
(\tanh x\abs{\partial{}_\tau{}\psi{}} + (\tanh x)^{1/2}\sinh x\abs{L\Psi{}})\abs{P\psi{}} &\lesssim (\tanh x\abs{\underline{L}\psi{}} + \sinh x\abs{L\psi{}} + \tanh x (\cosh x)^{-1}\abs{\psi{}})\abs{P\psi{}}.
\end{split}
\end{equation}
After using Young's inequality and recalling the definition of the bulk term
\(\mathcal{B}\) from \zcref{bulk-term-def}, we obtain
\begin{equation}\label{rp-prep-3}
\begin{split}
&\mathcal{E}_p[\psi{}](s_2) + \mathcal{E}_T[\psi{}](s_2) + \mathcal{B}[\psi{}](s_1,s_2,\tau{}) \\
&\lesssim \mathcal{E}_p[\psi{}](s_1) + \mathcal{E}_T[\psi{}](s_1) + \int_{s_1}^{s_2} \int _{\mathcal{H}(s,\tau)\cap \set{\tau{}\le \tau{}_0}} \frac{1}{\cosh ^2\tau{}}\abs{\Grad _{S^2}\psi{}}^2\tanh x\dd{}\mu{}\dd{}s  +\int_{s_1}^{s_2} \int _{\mathcal{H}(s,\tau)}\sinh x\abs{P\psi{}}^2\dd{}\mu{}\dd{}s.
\end{split}
\end{equation}
It remains to control the bulk integral in a finite-\(\tau{}\) region on the
right-hand side of \zcref{rp-prep-3}. This is possible because a finite-\(\tau{}\)
region is contained in a finite-\(s\) region. That is, given \(\tau_0\ge 0\),
there is \(s_0 \ge 0\) such that \(\mathcal{H}(s,v_0)\cap \set{\tau{}\le
\tau{}_0}\subset \set{v\le v_0}\cap \set{s\le s_0}\) for all \(s\ge 0\) and
\(v_0\ge 0\). It follows that
\begin{equation}\label{rp-prep-4}
\begin{split}
\int_{s_1}^{s_2} \int _{\mathcal{H}(s,\tau)\cap \set{\tau{}\le \tau{}_0}} \frac{1}{\cosh ^2\tau{}}\abs{\Grad _{S^2}\psi{}}^2\tanh x\dd{}\mu{}\dd{}s &\lesssim \int_{\min  (s_1,s_0)}^{s_0}  \int _{\mathcal{H}(s,\tau)} \frac{1}{\cosh ^2\tau{}}\abs{\Grad _{S^2}\psi{}}^2\tanh x\dd{}\mu{}\dd{}s \\
&\lesssim \int_{\min  (s_1,s_0)}^{s_0} \mathcal{E}_T[\psi{}](s)\dd{}s.
\end{split}
\end{equation}
Substituting \zcref{rp-prep-4} into \zcref{rp-prep-3} and applying Grönwall's inequality
(on the bounded interval \([\min (s_1,s_0),s_0]\)), we obtain \zcref{rp-equation}.

\step{Step 4: Showing that the bulk term controls the integrated energy; proof of
\zcref{rp-K-bound}.} Recall that, when \(\psi{}\in C^\infty(\mathcal{M})\) is \(\U(1)\)-symmetric, we have
\begin{equation}
\begin{split}
\mathcal{E}[\psi{}](s,\tau{}) &= \int _{\mathcal{H}(s,\tau)} \Bigl[(L\psi{})^2 + \langle{}x\rangle^{-2}(\underline{L}\psi{})^2 + \frac{1}{\cosh^2x}\psi^2 + \frac{1}{\cosh ^2\tau{}}\abs{\Grad _{S^2}\psi{}}^2\Bigr]\tanh x \dd{}\mu{} \\
&\qquad  + \int _{\mathcal{H}(s,\tau)}\sinh x(L\Psi{})^2 + \langle{}x\rangle^{-2}\frac{1}{\cosh x}\psi^2 + \langle{}x\rangle^{-2}\frac{\sinh x}{\cosh ^2\tau{}}\abs{\Grad _{S^2}\Psi{}}^2\dd{}\mu{} \\
&\lesssim \int _{\mathcal{H}(s,\tau)} \sinh x(L\psi{})^2 + \langle{}x\rangle^{-2}(\underline{L}\psi{})^2\tanh x + \frac{1}{\cosh x}\psi^2 + \frac{\sinh x}{\cosh ^2\tau{}}\abs{\Grad _{S^2}\psi{}}^2 \dd{}\mu{}.
\end{split}
\end{equation}
To pass to the final line, we used the estimate \zcref{LPsi-expansion} for \(L\Psi{}\) in
terms of \(L\psi{}\) and \(\psi{}\). Now \zcref{rp-K-bound} follows from the definition of the
bulk term \(\mathcal{B}\) (see \zcref{bulk-term-def}).

\step{Step 5: Sketch of the proof of \zcref{rp-equation-2}.} The proof of \zcref{rp-equation-2}
is largely the same as that of \zcref{rp-equation}. Steps 1 and 2 are the same. In
Step 3a, we integrate instead over the region \(\set{\tau_1\le \tau{}\le \tau_2}\cap \set{u\le u_0}\cap
\set{x\ge x_0}\) for some \(u_0\in \R\) and \(x_0 > 0\). The future boundary term on
\(\set{u=u_0}\) is
\begin{equation}
\int _{C(u_0)\cap \set{\tau{}_1\le \tau{}\le \tau{}_2}\cap \set{x\ge x_0}} \sinh x(L\Psi{}_0)^2\dd{}v\dd{}\omega{}\dd{}\theta{}.
\end{equation}
Using the positivity analysis in Step 3a and taking \(x_0\to 0\) and taking a
supremum over \(u_0\in \R\), we obtain an estimate
\begin{equation}
\begin{split}
&E_p[\psi{}](\tau{}_2) + F_p[\psi{}](\tau{}_1,\tau{}_2) + \int_{\tau{}_1}^{\tau{}_2} \int _{\Sigma{}(\tau{})}\sinh x(L\psi{})^2 + \frac{\sinh x}{\cosh ^2\tau{}} \abs{\Grad _{S^2}\psi{}}^2 + \frac{1}{\cosh x}\psi^2\dd{}\mu{}\dd{}\tau{} \\
&\lesssim E_p[\psi{}](\tau{}_1) + \int_{\tau{}_1}^{\tau{}_2} \int _{\Sigma{}(\tau{})\cap (\set{x\lesssim 1}\cup \set{\tau{}\lesssim 1})} \frac{\cosh x}{\cosh ^2\tau{}}\abs{\Grad _{S^2}\psi{}}^2\tanh x\dd{}\mu{}\dd{}\tau{}\\
&\qquad +\int_{\tau{}_1}^{\tau{}_2} \int _{\Sigma{}(\tau{})}(\tanh x)^{1/2}\sinh x\abs{L\Psi{}}\abs{P\psi{}}\dd{}\mu{}\dd{}\tau{}.
\end{split}
\end{equation}
in place of \zcref{rp-estimate-prep}. From here, Step 3b proceeds in the same way,
and we obtain \zcref{rp-equation-2}.
\end{proof}
\begin{lemma}[Exponential decay lemma]
Let \(T\ge 0\) and let \(f:[0,T]\to \R\) be a locally integrable function. Suppose
that for \(0\le t_1\le t_2\le T\) we have
\begin{equation}\label{exponential-decay-assumption}
f(t_2) + \int_{t_1}^{t_2} f(t)\dd{}t\le Bf(t_1).
\end{equation}
for some constant \(B \ge 1\). Then there are constants \(C,c>0\) depending only
on \(B\) such that for \(t\in [0,T]\) we have
\begin{equation}
f(t)\le Cf(0)e^{-ct}.
\end{equation}
\label{exponential-decay}
\end{lemma}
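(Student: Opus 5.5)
The plan is a standard dyadic-stepping argument. Throughout, \(f\ge 0\) is implicitly assumed, as in all applications of the lemma (where \(f\) is an energy); this is needed for the integral to control anything. First, taking \(t_1 = 0\) and dropping the integral gives the uniform bound \(f(t)\le Bf(0)\) for all \(t\in[0,T]\). More generally, \(f(t_2)\le Bf(t_1)\) whenever \(t_1\le t_2\), so \(f\) is almost non-increasing up to the factor \(B\).

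Next, fix a step length \(\Lambda\coloneqq 2B^2\). For any \(t_1\) with \(t_1+\Lambda\le T\), the hypothesis \zcref{exponential-decay-assumption} gives
\[
\int_{t_1}^{t_1+\Lambda} f(t)\dd{}t\le Bf(t_1).
\]
Hence there is some \(t_*\in[t_1,t_1+\Lambda]\) with \(f(t_*)\le Bf(t_1)/\Lambda\). If no such point existed, the integral would exceed \(Bf(t_1)\); for a merely locally integrable \(f\) one argues on a full-measure set, which suffices. Applying almost-monotonicity from \(t_*\) to \(t_1+\Lambda\) then yields
\[
f(t_1+\Lambda)\le B f(t_*)\le \frac{B^2}{\Lambda}f(t_1) = \tfrac{1}{2}f(t_1).
\]

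Iterating this from \(t_1=0\) gives \(f(k\Lambda)\le 2^{-k}f(0)\) for every integer \(k\) with \(k\Lambda\le T\). For a general \(t\in[0,T]\), choose \(k=\lfloor t/\Lambda\rfloor\) and use \(f(t)\le Bf(k\Lambda)\) to obtain
\[
f(t)\le B\,2^{-k}f(0)\le 2B\,e^{-(\log 2/\Lambda)t}f(0).
\]
This is the claimed bound with \(C=2B\) and \(c=\log 2/(2B^2)\), both depending only on \(B\).

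There is no real obstacle. The only delicate point is the pointwise selection of \(t_*\) when \(f\) is merely locally integrable. This can be handled either by choosing \(t_*\) in the positive-measure set where \(f\le \Lambda^{-1}\int f\), on which the hypothesis still applies if it holds for all \(t_1\), or by noting that in the applications \(f\) is continuous.
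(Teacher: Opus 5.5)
Your argument is correct and is essentially the paper's proof: the paper also takes steps of length $A = 2B^2$, uses pigeonhole to find $t\in[nA,(n+1)A]$ with $f(t)\le (B/A)f(nA)$, then applies $f((n+1)A)\le Bf(t)$, and arrives at the same constants $C=2B$, $c=\log 2/(2B^2)$. The only difference is cosmetic: the paper extends $f$ by zero past $T$, whereas you stop the iteration at $k\Lambda\le T$. Your explicit assumption $f\ge 0$ is implicit in the paper's argument as well, and for $B>1$ it in fact follows from the hypothesis itself by taking $t_1=t_2$.
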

\begin{proof}
Extending \(f\) by zero for \(t\ge T\), we may assume that \(T = +\infty\). Let \(A >
0\). We claim that for \(n\ge 0\), we have
\begin{equation}\label{exponential-decay-induction}
f(nA)\le (B^2/A)^nf(0).
\end{equation}
The case \(n = 0\) is trivial. From \zcref{exponential-decay-assumption} applied on
the time interval \([nA,(n+1)A]\), we find (using the pigeonhole principle) a
time \(t\in [nA,(n+1)A]\) such that \(f(t)\le (B/A)f(nA)\), so that another
application of \zcref{exponential-decay-assumption}, now on the time interval
\([t,(n+1)A]\), gives \(f((n+1)A)\le Bf(t)\le (B^2/A)f(nA)\). By induction on
\(n\), we have established \zcref{exponential-decay-induction}.

Now fix \(t\ge 0\) and let \(n\ge 0\) be such that \(nA\le t<(n + 1)A\). Then
\zcref{exponential-decay-induction} together with \zcref{exponential-decay-assumption}
applied on the time interval \([nA,t]\) shows that \(f(t)\le B(B^2/A)^nf(0)\).
Taking \(A = 2B^2\), we obtain the desired conclusion:
\begin{equation}
f(t)\le B\cdot 2^{-n}f(0)\le 2B\cdot 2^{-(n+1)}f(0)\le 2B\cdot 2^{-t/(2B^2)}f(0).
\end{equation}
\end{proof}
\begin{corollary}[Energy decay for \(\U(1)\)-symmetric solutions to the wave equation]
There exists \(c > 0\) such that, if \(\psi{}\in C^\infty(\mathcal{M})\) is \(\U(1)\)-symmetric and
solves \(P\psi{} = 0\), then for \(0\le s_1\le s_2\) and \(\tau{}\ge 0\), we have
\begin{equation}
\mathcal{E}[\psi{}](s_2,\tau{})\lesssim e^{-c(s_2-s_1)}\mathcal{E}[\psi{}](s_1,\tau{}).
\end{equation}
\label{energy-decay}
\end{corollary}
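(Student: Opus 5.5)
The plan is to combine the $r^p$-type estimate \zcref{rp-type-estimate} with the exponential decay lemma \zcref{exponential-decay}. Fix $\tau\ge 0$ and a $\U(1)$-symmetric solution $\psi$ of $P\psi=0$. We apply \zcref{rp-equation} on an arbitrary subinterval $[t_1,t_2]\subset[s_1,\infty)$. The inhomogeneous term vanishes, so
\begin{equation*}
\mathcal{E}[\psi](t_2,\tau) + \mathcal{B}[\psi](t_1,t_2,\tau)\lesssim \mathcal{E}[\psi](t_1,\tau).
\end{equation*}
The flux term $\mathcal{F}_T$ is nonnegative and can be dropped. The lower bound \zcref{rp-K-bound} then turns the bulk term into the integrated energy:
\begin{equation*}
\mathcal{E}[\psi](t_2,\tau) + \int_{t_1}^{t_2}\mathcal{E}[\psi](s,\tau)\dd{}s\le B\,\mathcal{E}[\psi](t_1,\tau).
\end{equation*}
Here $B\ge 1$ depends only on the implicit constants of \zcref{rp-type-estimate}, hence only on $h$. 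In particular $B$ is uniform in $\tau$, $t_1$, $t_2$, and $\psi$.

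Next we set $f(t)\coloneqq\mathcal{E}[\psi](s_1+t,\tau)$ for $t\ge 0$. This function is locally integrable: for smooth $\psi$ the truncated energy $\mathcal{E}[\psi](s,\tau)$ is an integral over the compact region $\mathcal{H}(s,\tau)$. If needed, one first restricts to $t\in[0,T]$ with $T<\infty$ and then lets $T\to\infty$. The display above is exactly hypothesis \zcref{exponential-decay-assumption}, with $t_i$ replaced by $t_i-s_1$. \Zcref{exponential-decay} then gives $f(t)\le C f(0)e^{-ct}$, where $C,c$ depend only on $B$. This is precisely
\begin{equation*}
\mathcal{E}[\psi](s_2,\tau)\le Ce^{-c(s_2-s_1)}\mathcal{E}[\psi](s_1,\tau),
\end{equation*}
with $c$ independent of $\psi$, as the corollary requires.

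The only real point to check is that \zcref{rp-equation} holds with constants uniform in the starting time $s_1$, so that it applies on every subinterval. This uniformity is delicate because of the Grönwall step in the proof of \zcref{rp-type-estimate}. That step treated a finite-$\tau$ region contained in $\{s\le s_0\}$. The bound it produced, \zcref{rp-prep-4} followed by Grönwall on $[\min(s_1,s_0),s_0]$, depends only on the fixed $s_0$, not on $s_1$. For $t_1\ge s_0$ the error term is absent. For $t_1<s_0$ the interval has bounded length, so $B$ is indeed uniform.

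A secondary subtlety is that the statement fixes the same truncation $\tau$ on both sides. Since \zcref{rp-equation} is itself stated for a fixed $\tau$, this causes no difficulty. Finally, the \emph{untruncated} energy $\mathcal{E}[\psi](s)$ can be recovered by taking the supremum over $\tau$ if desired.
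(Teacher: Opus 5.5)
Your proposal is correct and matches the paper's proof: the paper also obtains the corollary by combining \zcref{rp-equation} (with $P\psi=0$) and \zcref{rp-K-bound} to verify the hypothesis of \zcref{exponential-decay} for $\mathcal{E}[\psi](\cdot,\tau)$, and it leaves the shift by $s_1$ implicit. Your extra checks are accurate but go beyond what the paper spells out: the constant is uniform because the Grönwall step only involves a bounded interval ending at the fixed $s_0$, and the truncated energy is locally integrable.
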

\begin{proof}
In view of \zcref{rp-equation,rp-K-bound}, this follows from \zcref{exponential-decay}
(with \(\mathcal{E}[\psi{}](\cdot ,\tau{})\) playing the role of \(f\)).
\end{proof}
\section{Pointwise estimates}
The goal of this section is to prove the following pointwise estimates for \(\psi{}\)
and its derivatives. We first state estimates formulated in terms of the
hyperboloidal foliation \(\mathcal{H}(s)\).
\begin{proposition}[Global pointwise estimates in terms of a hyperboloidal foliation]
Let \(s\ge 0\) and let \(\tau{}\ge 0\). There is a universal constant \(C > 0\)
(depending only on the function \(h\) defining the hyperboloidal foliation
introduced in \zcref{hyperboloidal-foliation}) such that if \(\tau{}-2s \ge C\),
then the following pointwise estimates using energies associated to the
hyperboloidal foliation hold for \(\psi{}\in C^\infty(\mathcal{M})\):
\begin{align}
\norm{\psi{}_0}_{L^\infty(\mathcal{H}(s,\tau{}))}^2 &\lesssim \mathcal{E}[(\mathring{\mathfrak{D}}^{\le 3}\psi{})_0](s,\tau{}) + \int _{\mathcal{H}(s,\tau{})\cap \set{x\lesssim 1}} (P\mathring{\mathfrak{D}}^{\le 2}\psi{})_0^2\tanh x \dd{}\mu{}, \label{Loo-1-asym} \\
\norm{(\cosh x)^{1/2}\psi{}_{\ge 1}}_{L^\infty(\mathcal{H}(s,\tau{}))}^2 &\lesssim \mathcal{E}_T[\mathring{\mathfrak{D}}^{\le 3}\psi{}](s,\tau{}) +  \int _{\mathcal{H}(s,\tau{})\cap \set{x\lesssim 1}} (P\mathring{\mathfrak{D}}^{\le 2}\psi{})^2\tanh x \dd{}\mu{}, \label{Loo-2-asym}
\end{align}
and
\begin{equation}\label{Loo-3-asym}
\begin{split}
\norm{\partial{}_x\psi{}_0}_{L^\infty(\mathcal{H}(s,\tau{}))}^2 &\lesssim \mathcal{E}_T[(\mathring{\mathfrak{D}}^{\le 3}\psi{})_0](s,\tau{}) + \mathcal{F}_T[(\mathring{\mathfrak{D}}^{\le 3}\psi{})_0](\max  (0,s-1),s,\tau{}) + \mathbf{1}_{s<1}\norm{\partial{}_x\psi{}_0}_{L^\infty(\mathcal{H}(0))}^2  \\
&\qquad + \sup_{\tau{}'\le \tau{}} \int _{\Sigma{}(\tau{}')\cap \mathcal{R}(\max (0,s-1),s)}(P\mathring{\mathfrak{D}}^{\le 2}\psi{})_0^2\tanh x\dd{}\mu{} +  \int _{\mathcal{H}(s,\tau{})\cap \set{x\lesssim 1}} (P\mathring{\mathfrak{D}}^{\le 2}\psi{})_0^2\tanh x \dd{}\mu{}, \\
\end{split}
\end{equation}
and
\begin{equation}
\begin{split}
&\norm{(\cosh x)^{-1/2}\partial{}_x\psi{}}_{L^\infty(\mathcal{H}(s,\tau{}))}^2 \\
&\lesssim \mathcal{E}_T[\mathfrak{D}^{\le 4}\psi{}](s,\tau{}) + \mathcal{F}_T[\mathfrak{D}^{\le 4}\psi{}](\max (0,s-1),s,\tau{}) + \mathbf{1}_{s<1}\norm{(\cosh x)^{-1/2}\partial{}_x\psi{}}_{L^\infty(\mathcal{H}(0))}^2 \\
&\qquad + \sup_{\tau{}'\le \tau{}} \int _{\Sigma{}(\tau{}')\cap \mathcal{R}(\max (0,s-1),s)} \frac{1}{\cosh x}(P\mathfrak{D}^{\le 3}\psi{})^2\tanh x\dd{}\mu{} + \int _{\mathcal{H}(s,\tau{})\cap \set{x\lesssim 1}} (P\mathfrak{D}^{\le 3}\psi{})^2\tanh x \dd{}\mu{},
\end{split}
\end{equation}
and
\begin{equation}
\norm{(\tanh x)^{-1}(\cosh x)^{1/2}\partial_\theta{}\psi}_{L^\infty(\mathcal{H}(s,\tau{}))}^2\lesssim \mathcal{E}_T[\mathfrak{D}^{\le 4}\psi{}](s,\tau{}) +  \int _{\mathcal{H}(s,\tau{})\cap \set{x\lesssim 1}} (P\mathfrak{D}^{\le 3}\psi{})^2\tanh x \dd{}\mu{}.
\end{equation}
Here the implicit constants in the regions of integration depend on the fixed
constant \(C\), and \(\mathring{\mathfrak{D}}\in
\set{\Omega{},\partial_\tau{},\partial_\theta{}}\) and \(\mathfrak{D}\in
\set{\Omega{},\partial{}_\tau{},\partial{}_{\theta{}},\partial{}_{\hat{y}},\partial{}_{\hat{z}}}\)
are commutator vector fields (see \zcref{commutators}).
\label{global-pointwise-s}
\end{proposition}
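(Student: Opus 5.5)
We will prove the estimates by Sobolev embedding on the three-dimensional compact factor \(S^2\times S^1\) combined with one-dimensional fundamental-theorem-of-calculus arguments in \(x\) along \(\mathcal{H}(s,\tau)\). Near the bubble we replace the one-dimensional arguments by elliptic estimates for the spatial part of \(P\). The condition \(\tau-2s\ge C\) guarantees that \(\mathcal{H}(s,\tau)\) contains the full region \(\set{x\lesssim 1}\) of its leaf. So the leaf restricted to \(\set{x\le 1}\) is a constant-\(\tau\) slice there (by the choice of \(h\)), and the leaf extends far enough to make the integrations in \(x\) meaningful.

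\emph{Region \(x\ge 1\).} First apply Sobolev on \(S^2\times S^1\) using \(\Omega^{\le 2}\partial_\theta^{\le 1}\) (or a scheme with three derivatives). This reduces each estimate to a weighted \(L^2_\omega L^2_\theta\) bound in \(x\). For \(\psi_0\), write \(\psi_0^2(x)\) as \(\psi_0^2\) at the far endpoint of \(\mathcal{H}(s,\tau)\) (where \(\tau\) is the cutoff) minus \(\int X(\psi_0^2)\). We use \(X=L-h\partial_\tau\), and \(h\lesssim\langle x\rangle^{-2}\) makes the \(\partial_\tau\) contribution integrable against the \(\langle x\rangle^{-2}(\underline L\psi)^2\)-type terms. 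Cauchy--Schwarz then pairs \(\sinh x(L\Psi_0)^2\) with \(\langle x\rangle^{-2}(\cosh x)^{-1}\psi_0^2\), both present in \(\mathcal{E}_p\). Alternatively, one integrates from \(x=1\) and controls the value there by the near-bubble estimate.

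For \(\psi_{\ge 1}\), the Poincaré inequality on \(S^1\) and the weight \(\cosh^2 x/\tanh^2x\) in \(\mathcal{E}_T\) give \(\int \cosh^2x\,\psi_{\ge1}^2\tanh x\,\dd\mu\lesssim\mathcal{E}_T\). We then use \(\partial_x(\cosh x\,\psi_{\ge1}^2)\) and Cauchy--Schwarz against \(\int(X\psi)^2\), with \(X\psi\) controlled via \(L\psi\) and \(\langle x\rangle^{-2}\underline L\psi\). This yields \((\cosh x)\psi_{\ge1}^2\lesssim\mathcal{E}_T\). The \(\partial_\theta\) estimate follows identically, applied to \(\partial_\theta\psi\), since \(\partial_\theta\) is Killing and \(\partial_\theta\psi=(\partial_\theta\psi)_{\ge1}\).

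\emph{Derivative estimates.} These are the main obstacle, since \(\partial_x\psi\) (rather than \(L\psi\)) is not controlled with good weights on \(\mathcal{H}\) at large \(x\). The plan is to use the wave equation to express \(\partial_x^2\psi\) through \(P\psi\), \(\partial_\tau^2\psi\), the angular terms and the lower-order terms. We integrate \(\partial_x((\partial_x\psi)^2 w)\) along the spacelike slice \(\Sigma(\tau')\cap\mathcal{R}(\max(0,s-1),s)\) rather than along \(\mathcal{H}\); this is exactly why \(\mathcal{F}_T\) and the \(\sup_{\tau'}\) inhomogeneous terms appear (cf.~\zcref{cutoff-character}). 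Concretely, for fixed \(x\) on \(\mathcal{H}(s,\tau)\) we move along \(\Sigma(\tau')\) back to a point of \(\mathcal{H}(s-1)\) (or to \(\mathcal{H}(0)\) when \(s<1\), producing the indicator term), and control the integral of \(\partial_x\psi\,\partial_x^2\psi\) by \(\mathcal{F}_T\) and \(P\psi\) in \(L^2\). Commuting with \(\partial_\tau\) and \(\Omega\) handles the terms \(\partial_\tau^2\psi\) and \(\Delta_{S^2}\psi\). For \(\psi_0\) the \(\partial_\theta^2\) term is absent, giving a bounded bound. For general \(\psi\), the \(\frac{\cosh^2x}{\tanh^2x}\partial_\theta^2\psi\) term is only controlled with a loss of \(\cosh x\) after using \(\partial_\theta\)-commuted energies. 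This explains the weight \((\cosh x)^{-1/2}\) and the need for one extra commutator.

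\emph{Near bubble \(x\lesssim1\).} Here \(\mathcal{H}=\Sigma(\tau)\) locally. We use the Cartesian form \zcref{P-cart-expression}: the spatial operator \(\frac{\cosh^2x}{\mathfrak F^2}(\partial_{\hat y}^2+\partial_{\hat z}^2)+\dots\) is uniformly elliptic on \(\R^2\times S^2\) near the origin. Elliptic regularity for \(\partial_\tau^2\psi+P\psi\) (plus \(\Omega\)-commuted versions) gives \(H^2\) and then \(H^3\) control in \((\hat y,\hat z,\omega)\) by energies of \(\mathring{\mathfrak D}^{\le 3}\psi\) and \(\int_{\set{x\lesssim1}}(P\mathring{\mathfrak D}^{\le2}\psi)^2\). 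Sobolev embedding in four dimensions then controls \(\psi\) pointwise. For \(\partial_x\psi\), one uses the commutators \(\partial_{\hat y},\partial_{\hat z}\) (hence \(\mathfrak D^{\le4}\)) and \zcref{frak-D-commutation} to estimate \(P\mathfrak D\psi\) by \(\mathfrak D P\psi\) plus lower-order terms. The vanishing of \(\psi_{\ge1}\) at \(x=0\) is consistent with the \((\tanh x)^{-2}\) weight and causes no difficulty. Finally, we patch the two regions with a cutoff at \(x\sim1\).
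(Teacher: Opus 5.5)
Your overall architecture matches the paper's. The paper assembles this proposition from four ingredients, each of which you reproduce:
\begin{itemize}
\item fundamental-theorem-of-calculus bounds along $\mathcal{H}(s,\tau)$ for $\psi_0$ and $\psi_{\ge1}$, with the Poincar\'e inequality on $S^1$ used for the latter;
\item $\partial_x^2\psi$ solved from the equation and integrated along $\Sigma(\tau')$ between $\mathcal{H}(s-1)$ and $\mathcal{H}(s)$, which produces the $\mathcal{F}_T$ and $\sup_{\tau'}$ terms;
\item a Cartesian elliptic estimate near the bubble;
\item the commutators $\partial_{\hat y},\partial_{\hat z}$ for $\partial_x\psi$ and $(\tanh x)^{-1}\partial_\theta\psi$.
\end{itemize}

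\textbf{The genuine gap: $\partial_x\psi_0$ near the bubble.} You treat $\partial_x\psi$ near $\{x=0\}$ only through $\partial_{\hat y},\partial_{\hat z}$. That puts $\mathcal{E}_T[\mathfrak{D}^{\le 4}\psi]$ and $P\mathfrak{D}^{\le 3}\psi$ on the right-hand side. The stated bound for $\partial_x\psi_0$ allows only $\mathring{\mathfrak{D}}^{\le 3}$ and $(P\mathring{\mathfrak{D}}^{\le 2}\psi)_0$. This restriction is the point of the estimate: $\partial_{\hat y},\partial_{\hat z}$ do not preserve $\mathrm{U}(1)$-symmetry, and this bound is what closes the $\mathrm{U}(1)$-symmetric bootstrap.

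None of your other tools supplies it without $\partial_{\hat y},\partial_{\hat z}$:
\begin{itemize}
\item $H^2$ control in $(\hat y,\hat z)$ does not embed into $W^{1,\infty}$.
\item Integrating $\partial_x(w(\partial_x\psi)^2)$ outward along $\Sigma(\tau)$, the term $-(\cosh x\sinh x)^{-1}\partial_x\psi$ in $\partial_x^2\psi$ contributes $+\int 2w(\cosh x\sinh x)^{-1}(\partial_x\psi)^2\,dx\sim\int x^{-1}(\partial_x\psi)^2\,dx$. The flux, whose weight is $\tanh x\sim x$, cannot bound this.
\item Writing the equation in divergence form only controls $\tanh x_0\,|\partial_x\psi(x_0)|$, which degenerates at the bubble.
\end{itemize}

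The missing idea is to integrate \emph{inward} to the bubble, using the radial structure of $\psi_0$. For $\mathrm{U}(1)$-symmetric fields,
\[
\partial_x(\tanh x\,\partial_x\psi_0)=\bigl[(P\psi)_0+\partial_\tau^2\psi_0+\cosh^{-2}x\,\psi_0-\cosh^{-2}\tau\,\Delta_{S^2}\psi_0\bigr]\tanh x .
\]
On $\mathcal{H}(s)\cap\{x\le 1\}$, which is a constant-$\tau$ slice because $h=1$ there, integrate from $x=0$ to $x_0\le 1$. There is no boundary term at $x=0$, since $\tanh x$ vanishes there. Cauchy--Schwarz with $\int_0^{x_0}\tanh x\,dx\sim\tanh^2x_0$ gives $|\partial_x\psi_0(x_0)|^2\lesssim\int_0^{x_0}[\cdots]^2\tanh x\,dx$. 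Sobolev on $S^2$ then finishes the estimate using only the commutators $\partial_\tau$ and $\Omega$.

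\textbf{Other steps that fail as written but are repairable.}

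\emph{The bound for $\psi_0$ in $\{x\ge1\}$.} The value at the far endpoint of $\mathcal{H}(s,\tau)$ is not controlled by the right-hand side. Your Cauchy--Schwarz pairing also loses $\langle x\rangle^2$. Bounding $\int|\Psi_0||X\Psi_0|\,dx$ against $\int\sinh x\,(X\Psi_0)^2\,dx$ requires $\int(\cosh x)^{-1}\psi_0^2\,dx$, but $\mathcal{E}_p$ only contains $\langle x\rangle^{-2}(\cosh x)^{-1}\psi_0^2$. Instead, apply the fundamental theorem of calculus to $\Psi_0$ itself, not $\Psi_0^2$, starting from $x_{\min}\in[1/2,1]$. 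Use $\int x^{-2}\,dx<\infty$ and $x^2\lesssim\sinh x$. Then average in $x_{\min}$ against the $L^2$ control in $\mathcal{E}_T$.

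\emph{The elliptic estimate near the bubble.} The spatial operator is not uniformly elliptic on $\R^2\times S^2$ uniformly in $\tau$, because the $S^2$ part carries $\cosh^{-2}\tau$. An isotropic $H^3$ bound in four dimensions would also need $\partial_{\hat y},\partial_{\hat z}$ derivatives of $P\psi$, which are absent from the right-hand side. Do the elliptic estimate in $(\hat y,\hat z)$ only: an $H^2$ bound followed by Sobolev on $\R^2$. Apply it to $\Omega^{\le2}\psi$, then use Sobolev on $S^2$.

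\emph{The boundary value in the $\partial_x\psi$ estimate.} Do not evaluate at $\Sigma(\tau)\cap\mathcal{H}(s-1)$, since that boundary value is uncontrolled. Use a cutoff that vanishes there instead; this point lies at $x$-distance at least $2$ from the evaluation point, so the cutoff can have bounded derivative.
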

We now state estimates formulated in terms of the constant-\(\tau{}\) foliation
\(\Sigma{}(\tau{})\).
\begin{proposition}[Global pointwise estimates in terms of the constant-\(\tau\) foliation]
Let \(\tau{}\ge 0\). We have the following pointwise estimates in terms of energies
associated to the constant-\(\tau{}\) foliation:
\begin{equation}\label{Loo-0}
\begin{split}
\norm{\psi_0}_{L^\infty(\Sigma{}(\tau{}))}^2 &\lesssim E_T[(\mathring{\mathfrak{D}}^{\le 3}\psi)_0](\tau{}) + F_p[(\mathring{\mathfrak{D}}^{\le 3}\psi{})_0](0,\tau{}) + \sup_{x\ge 1} \int _{S^2\times S^1} \abs{(\mathring{\mathfrak{D}}^{\le 3}\psi{})_0}^2(\tau{}=0,x)\dd{}\omega{}\dd{}\theta{} \\
&\qquad  + \sup_{\tau{}'\in [0,\tau{}]}\int _{\Sigma{}(\tau{}')} (P\mathring{\mathfrak{D}}^{\le 2}\psi{})_0^2\tanh x\dd{}\mu{},
\end{split}
\end{equation}
and
\begin{equation}\label{Loo-1}
\norm{(\cosh x)^{1/2}\psi{}_{\ge 1}}_{L^\infty(\Sigma{}(\tau{}))}^2 \lesssim E_T[\mathfrak{D}^{\le 3}\psi{}](\tau{}) +  \int _{\Sigma{}(\tau{})\cap \set{x\le 1}}(P\mathfrak{D}^{\le 2}\psi{})^2\tanh x\dd{}\mu{},
\end{equation}
and
\begin{equation}\label{Loo-2a}
\norm{(\cosh x)^{-1/2}\partial{}_x\psi{}}_{L^\infty(\Sigma{}(\tau{}))}^2\lesssim E_T[\mathfrak{D}^{\le 4}\psi{}](\tau{}) + \int _{\Sigma{}(\tau{})}(P\mathfrak{D}^{\le 3}\psi{})^2\tanh x\dd{}\mu{},
\end{equation}
and
\begin{equation}\label{Loo-2b}
\norm{\partial{}_x\psi{}_0}_{L^\infty(\Sigma{}(\tau{}))}^2 \lesssim E_T[(\mathring{\mathfrak{D}}^{\le 3}\psi{})_0](\tau{}) + \int _{\Sigma{}(\tau{})}(P\mathfrak{D}^{\le 2}\psi{})_{0}^2\tanh x\dd{}\mu{}.
\end{equation}
Here \(\mathring{\mathfrak{D}}\in
\set{\Omega{},\partial_\tau{},\partial_\theta{}}\) and \(\mathfrak{D}\in
\set{\Omega{},\partial{}_\tau{},\partial{}_{\theta{}},\partial{}_{\hat{y}},\partial{}_{\hat{z}}}\)
are commutator vector fields (see \zcref{commutators}).
\label{global-pointwise-tau}
\end{proposition}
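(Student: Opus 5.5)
We prove the four estimates of \zcref{global-pointwise-tau} on a fixed slice \(\Sigma{}(\tau{})\), combining one-dimensional Sobolev/fundamental-theorem arguments in \(x\) with Sobolev embedding on \(S^2\times S^1\) (paid for by \(\Omega{}\) and \(\partial_\theta{}\) commutators). The region is split into \(\set{x\le 1}\) and \(\set{x\ge 1}\). In the near region the \(x\)-direction degenerates: the weight \(\tanh x\sim x\) makes \((x,\theta{})\) polar coordinates on \(\R^2\). There we use the Cartesian form \zcref{P-cart-expression} of \(P\) as an elliptic operator in \((\hat{y},\hat{z})\):
\[
\frac{\cosh^2x}{\mathfrak{F}^2}\widehat{\Lapl }\psi{} = P\psi{} + \partial_\tau^2\psi{} - \frac{1}{\cosh^2\tau{}}\Lapl _{S^2}\psi{} + \cdots{}
\]
Standard \(H^2\)-elliptic estimates on a disk in \(\R^2\), together with Sobolev embedding \(H^{2}\subset L^\infty\) in the 4 spatial dimensions \((\hat y,\hat z,\omega)\), then control \(\psi{}\) and \(\partial_x\psi{}\) near the bubble. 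The cost is \(E_T\) of \(\mathfrak{D}^{\le 3}\psi{}\) or \(\mathfrak{D}^{\le 4}\psi{}\) and \(\int_{\set{x\le 1}}(P\mathfrak{D}^{\le 2}\psi{})^2\tanh x\,\dd{}\mu{}\) or the analogue with \(\mathfrak{D}^{\le 3}\). The \(\partial_\tau^2\) term is controlled because \(\partial_\tau{}\) is a commutator. This accounts for the \(P\)-terms localized to \(\set{x\le1}\) in \zcref{Loo-1}.

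\emph{Estimate \zcref{Loo-1} for large \(x\).} The energy \(E_T\) contains \((\cosh x)^2(\partial_\theta{}\psi{})^2\tanh x\), and Poincaré on \(S^1\) bounds \(\psi_{\ge1}\) by \(\partial_\theta{}\psi{}\). We apply the 1D inequality
\[
\sup_x g^2\lesssim \int (g^2+(\partial_xg)^2)\,\dd{}x
\]
with \(g=(\cosh x)^{1/2}\psi_{\ge1}\). Since \(g^2\lesssim \cosh^2x\,\psi_{\ge1}^2\) and
\[
(\partial_x g)^2\lesssim \cosh x\,\psi_{\ge1}^2+\cosh x\,(\partial_x\psi_{\ge1})^2,
\]
the only dangerous piece is \(\cosh x\,(\partial_x\psi_{\ge1})^2\). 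We handle it by Cauchy--Schwarz, using \(\partial_x(g^2)=2g\,\partial_xg\): the resulting \(\cosh x\,\abs{\psi_{\ge1}}\abs{\partial_x\psi_{\ge1}}\) is bounded by \((\cosh x\,\psi_{\ge1})^2+(\partial_x\psi)^2\), both in \(E_T\). Sobolev on \(S^2\times S^1\) costs \(\Omega{}^{\le2}\partial_\theta^{\le1}\) derivatives, hence \(\mathfrak{D}^{\le3}\).

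\emph{Estimate \zcref{Loo-2b}.} The energy only gives \(\partial_x\psi_0\in L^2\), so we need \(\partial_x^2\psi_0\) in \(L^2\) to use \(\partial_x((\partial_x\psi_0)^2)\). This comes from solving the equation,
\[
\partial_x^2\psi_0=P\psi_0+\partial_\tau^2\psi_0-(\cosh x\sinh x)^{-1}\partial_x\psi_0+\cosh^{-2}x\,\psi_0-\cosh^{-2}\tau{}\,\Lapl _{S^2}\psi_0 .
\]
Each term is in \(L^2(\tanh x\,\dd{}\mu{})\), controlled by \(E_T[(\mathring{\mathfrak{D}}^{\le1}\psi)_0]\) and \(P\psi_0\). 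Here \(\partial_x\) being tangent to \(\Sigma{}(\tau{})\) is crucial (\zcref{cutoff-character}). Since the quantity tends to zero as \(x\to\infty\), integrating \(\partial_x((\partial_x\psi_0)^2)\) from infinity gives \(\sup(\partial_x\psi_0)^2\lesssim \norm{\partial_x\psi_0}\norm{\partial_x^2\psi_0}\). Angular Sobolev then yields \zcref{Loo-2b}.

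\emph{Estimate \zcref{Loo-2a}.} The same argument applies to the full \(\psi{}\) with weight \((\cosh x)^{-1}\). The new term in \(\partial_x^2\psi{}\) is the mass term \(\cosh^2x\,\tanh^{-2}x\,\partial_\theta^2\psi{}\). Multiplied by the weight \((\cosh x)^{-1}\), its \(L^2\) norm is bounded by that of \(\cosh x\,\partial_\theta(\partial_\theta\psi{})\), which is \(E_T[\partial_\theta\psi{}]\), one extra commutator. This forces \(\mathfrak{D}^{\le4}\) and explains the \((\cosh x)^{-1/2}\) loss.

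\emph{Estimate \zcref{Loo-0}.} This is the main obstacle: \(\psi_0\) has only the weak weight \(\cosh^{-2}x\,\tanh x\) in \(E_T\), so \(\psi_0\) cannot be bounded at large \(x\) from \(E_T\) on \(\Sigma{}(\tau{})\) alone. The plan is to integrate along the outgoing cone \(C(u)\) through the point, writing \(L\psi_0=(\tanh x)^{-1/2}L\Psi_0+O(\cosh^{-2}x)\psi_0\). Cauchy--Schwarz with the weight \(\sinh x\), which is integrable against \(\dd{}v\) since \((\sinh x)^{-1}\) is, bounds the integral by \(F_p[\psi_0](0,\tau{})^{1/2}\). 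The cone is traced back to \(\tau{}=0\) at some \(x\ge1\), which produces the initial-data sup term in \zcref{Loo-0}. Alternatively, if the backward cone reaches \(\set{x\le1}\), we use the near-region elliptic bound at an earlier time, hence the \(\sup_{\tau{}'\le\tau{}}\) of the \(P\)-integral. A Grönwall step absorbs the \(O(\cosh^{-2}x)\psi_0\) term, and angular Sobolev on \(S^2\) finishes the estimate.
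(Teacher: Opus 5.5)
The gap is the near-bubble part of \zcref{Loo-2b}; your other estimates follow the paper's route.

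\textbf{Where you match the paper.} For \zcref{Loo-0} you integrate backwards along \(C(u)\) to \((\Sigma(0)\cap\set{x\ge 1})\cup\set{x=1}\) and pay with \(F_p\). The paper integrates \(L\Psi_0\) directly, so your Gr\"onwall step is harmless but unnecessary, since \(\tanh x\sim 1\) there. For \zcref{Loo-1} you use Poincar\'e on \(S^1\) and integrate the \(x\)-derivative of a weighted \(\psi_{\ge 1}^2\). For \zcref{Loo-2a} you solve for \(\partial_x^2\psi\) and apply weighted Young to the mass term. Near the bubble you use the Cartesian elliptic estimate.

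One correction to that elliptic step: \(H^2\) does not embed into \(L^\infty\) in four dimensions. The paper uses \(H^2\) in \((\hat y,\hat z)\) only, which gives \(L^\infty_{\hat y,\hat z}L^2_\omega\). It then applies Sobolev on \(S^2\) after commuting with \(\Omega^{\le 2}\). This also avoids relying on the angular control from the elliptic identity, which degenerates like \(\cosh^{-2}\tau\).

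\textbf{The gap: \zcref{Loo-2b} on \(\set{x\le 1}\).} You assert that every term in your formula for \(\partial_x^2\psi_0\) lies in \(L^2(\tanh x\,\dd{}\mu)\), with norm controlled by \(E_T[(\mathring{\mathfrak{D}}^{\le 1}\psi)_0]\) and \(P\psi_0\). This fails for \((\cosh x\sinh x)^{-1}\partial_x\psi_0\). Near \(x=0\) its squared norm is comparable to \(\int_0^1 x^{-1}(\partial_x\psi_0)^2\,\dd{}x\), whereas \(E_T\) only controls \(\int_0^1 x\,(\partial_x\psi_0)^2\,\dd{}x\). Likewise, the unweighted \(\norm{\partial_x\psi_0}_{L^2(\dd{}x)}\) in your interpolation bound is not controlled by \(E_T\) near the bubble. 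So integrating from infinity only reaches \(x=1\).

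Your generic near-bubble argument does not close the gap either. Since \(H^2\) in two dimensions does not bound first derivatives, you commute with \(\partial_{\hat y},\partial_{\hat z}\) and obtain \(E_T[\mathfrak{D}^{\le 4}\psi]\). That uses Cartesian commutators, which do not preserve \(\mathrm{U}(1)\)-symmetry, and four of them. This is strictly weaker than the asserted \(E_T[(\mathring{\mathfrak{D}}^{\le 3}\psi)_0]\). The paper needs the \(\mathring{\mathfrak{D}}\)-only form of this bound for the \(\mathrm{U}(1)\)-symmetric semilinear argument.

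\textbf{The missing idea.} Use the radial structure of the zero mode. Since \(\tanh x\,\bigl(\partial_x^2+(\cosh x\sinh x)^{-1}\partial_x\bigr)=\partial_x(\tanh x\,\partial_x)\), the equation can be written in divergence form:
\[
\partial_x(\tanh x\,\partial_x\psi_0)=\tanh x\Bigl[(P\psi)_0+\partial_\tau^2\psi_0+\frac{1}{\cosh^2x}\psi_0-\frac{1}{\cosh^2\tau}\Lapl_{S^2}\psi_0\Bigr].
\]
Integrate from \(x=0\), where \(\tanh x\,\partial_x\psi_0\) vanishes, to \(x_0\le 1\). Apply Cauchy--Schwarz using \(\int_0^{x_0}\tanh x\,\dd{}x\sim\tanh^2 x_0\). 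This gives \(\abs{\partial_x\psi_0}^2(x_0)\lesssim\int_0^{x_0}[\cdots]^2\tanh x\,\dd{}x\). Integrating over \(S^2\times S^1\) and applying Sobolev on \(S^2\) then costs only \(\partial_\tau\) and \(\Omega\) commutators, three in total. This is the content of \zcref{dxpsi0-near}.
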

\begin{remark}[Comparing the decay rates of \(\U(1)\)-symmetric and non-\(\U(1)\)-symmetric quantities]
Note that \(\psi_{\ge 1}\) has improved decay in \(x\) relative to \(\psi_0\), which is
only bounded. It follows that \(\partial_\theta{}\psi{}\) has improved decay in
\(x\) relative to \(\psi\), since \(\partial_\theta{}\psi{} =
\partial_\theta{}\psi_{\ge 1}\) is supported on non-zero \(S^1\) modes. The
reason for the improved decay is that \(\psi_{\ge 1}\) solves an equation with
\((\tanh x)^{-2}(\cosh x)^2\partial_\theta^2\psi_{\ge 1}\), which serves as a
``mass term'' with strong \(x\)-weights, since \(\partial_\theta{}\psi_{\ge 1}\)
controls \(\psi_{\ge 1}\) in \(L^2\) by a Poincaré inequality on \(S^1\). On the
other hand, the equation for \(\psi_0\) does not contain this term.

However, the presence of a \(\partial_\theta^2\)-derivative in the equation for \(\psi_{\ge 1}\)
is also the reason that \(\partial_x\psi_{\ge 1}\) has worse decay in \(x\) than
\(\partial_x\psi_0\), which is bounded. One can interpret this as a
manifestation of the fact that \(\partial_x\) does not commute well with the
wave equation near infinity, due to the ``mass term.'' Indeed, \(\tanh
x\partial_x\) corresponds to the scaling vector field on Minkowski space (that
is, in the Minkowskian coordinates introduced in \zcref{mink-coords}). It is well
known that the scaling vector field does not commute well with the massive
Klein--Gordon operator. Although one can derive pointwise estimates for
\(\partial_x^k\psi_0\) for \(k\ge 2\) (near infinity), we do not need to commute
with scaling to close the proof of global existence for semilinear wave
equations in \zcref{sec:semilinear}.
\label{pointwise-decay-comparison}
\end{remark}
\begin{proof}
This follows from
\zcref{psi0-far,psi0-far-T,psige1-far,dx-far,elliptic-near,elliptic-near-bubble-prep-10,dxpsi0-near}.
\end{proof}
\subsection{Pointwise estimates away from the bubble}
In this section, we prove pointwise estimates away from the bubble, valid in the
region \(\set{x\ge 1}\).
\begin{proposition}[Estimate for \(\psi_0\)]
Let \(\psi{}\in C^\infty(\mathcal{M})\). For \(s\ge 0\) and \(\tau{}\ge 0\), we have
\begin{equation}\label{psi0-far-0}
\norm{\psi_0}_{L^\infty(\mathcal{H}(s,\tau{})\cap \set{x\ge 1})}^2\lesssim \mathcal{E}[\Omega^{\le 2}\psi{}_0](s,\tau{}).
\end{equation}
Here \(\Omega{}\) is a rotation vector field (see \zcref{commutators}).
\label{psi0-far}
\end{proposition}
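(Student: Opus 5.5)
The plan is to combine Sobolev embedding on $S^2$ with a one-dimensional fundamental theorem of calculus argument along $\mathcal{H}(s,\tau)\cap\set{x\ge 1}$, integrating inward from the outer boundary of $\mathcal{H}(s,\tau)$ (where it meets $\Sigma(\tau)$). Since $\psi_0$ is $\U(1)$-symmetric, Sobolev embedding on $S^2$ (with the rotations $\Omega$) gives, at each point $(s,x)$ of the leaf,
\begin{equation*}
\abs{\psi_0}^2\lesssim \sum_{\abs{\alpha}\le 2}\int_{S^2\times S^1}\abs{\Omega^{\alpha}\Psi_0}^2\dd{}\omega{}\dd{}\theta{},
\end{equation*}
where, for $x\ge 1$, $\Psi_0=(\tanh x)^{1/2}\psi_0\sim\psi_0$. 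So it suffices to bound $G(x)\coloneqq\int_{S^2\times S^1}\abs{\Omega^{\le 2}\Psi_0}^2$ uniformly for $x\ge 1$ along the leaf. I will use the coordinates $(s,x)$ and the vector field $X=\partial_x|_{(s,x,\omega,\theta)}$, tangent to $\mathcal{H}(s)$, together with the identity $X = L - h(x)\partial_\tau$.

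For the key one-dimensional step, fix a point at $x_1\ge 1$ and write, for $w$ any function of $\Omega^{\le2}\Psi_0$,
\begin{equation*}
w(x_1)^2 \le w(x_*)^2 + 2\int_{x_1}^{x_*}\abs{w}\abs{Xw}\dd{}x,
\end{equation*}
with $x_*$ either a point where $w^2$ is controlled by its average, or the endpoint of the truncated leaf. The choice of $x_*$ depends on the size of $\int_1^{x_*}\dd x$: if it is bounded, averaging in $x$ over $[1,2]$ handles the boundary term through the zeroth-order part $\cosh^{-2}x\,\psi^2$ of $\mathcal{E}_T$. The weights will then be split with Cauchy--Schwarz as
\begin{equation*}
\int\abs{w}\abs{Xw}\lesssim \Bigl(\int \frac{\langle x\rangle^{-2}}{\cosh x}w^2\Bigr)^{1/2}\Bigl(\int \langle x\rangle^{2}\cosh x\,(Xw)^2\Bigr)^{1/2}.
\end{equation*}
The first factor is controlled by $\mathcal{E}_p$. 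In the second, $Xw = Lw - h\partial_\tau w$. The $L$-part is bounded by $\sinh x(L\Psi_0)^2$, which is in $\mathcal{E}_p$ but without the $\langle x\rangle^2$ loss. I therefore expect to rebalance the weights instead: put $\langle x\rangle^{-2}\cosh^{-1}x\,w^2$ against $\langle x\rangle^{2}\sinh x\,(Lw)^2$. Because this fails by $\langle x\rangle^{2}$, I will prefer the split $(\int \cosh^{-1}x\, w^2\langle x\rangle^{-2})^{1/2}(\int\cosh x\,(Lw)^2\langle x\rangle^2)^{1/2}$ only if needed, and otherwise use the cleaner approach below.

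The cleaner approach is to apply the fundamental theorem directly to $w$ rather than $w^2$. Then $\abs{w(x_1)-w(x_*)}\le\int\abs{Lw}+\int h\abs{\partial_\tau w}$, and Cauchy--Schwarz gives
\begin{equation*}
\int \abs{Lw}\le \Bigl(\int \sinh x (Lw)^2\Bigr)^{1/2}\Bigl(\int(\sinh x)^{-1}\Bigr)^{1/2}\lesssim\mathcal{E}_p^{1/2}.
\end{equation*}
This uses $\sinh x\,(L\Psi_0)^2$ with $\Omega$ commuting with $L$. The term $\int h\abs{\partial_\tau w}$ is bounded since $h\lesssim\langle x\rangle^{-2}$ and $\partial_\tau = \tfrac12(L+\underline L)$: the $\underline L$ part is controlled by $(\int\langle x\rangle^{-2}(\underline L w)^2)^{1/2}(\int h^2\langle x\rangle^2)^{1/2}$ from $\mathcal{E}_T$, and the $L$ part is controlled as before. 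For the boundary value, take $x_*\in[1,2]$ chosen by averaging, so that $w(x_*)^2\lesssim\int_1^2 w^2\lesssim\mathcal{E}_T$. Integrating inward from this point covers all $x_1\ge1$, so no outer boundary data is needed.

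The main obstacle is handling the integral over $S^2\times S^1$ and the commuting of $\Omega$ correctly: I will apply the one-dimensional argument to $w=\norm{\Omega^{\le2}\Psi_0}_{L^2(S^2\times S^1)}(x)$. This uses $\abs{Xw}\le\norm{X\Omega^{\le 2}\Psi_0}_{L^2}$ and the fact that $\Omega$ commutes with $L$, $\underline L$, and $X$. A secondary check is the truncation at $\tau$: points of $\mathcal{H}(s,\tau)\cap\set{x\ge1}$ with $x\le x_1$ still lie in the truncated leaf, since $\tau$ increases with $x$ along $\mathcal{H}(s)$. Therefore integration over $[x_*,x_1]$ stays inside, and $\mathcal{E}[\Omega^{\le2}\psi_0](s,\tau)$ is the right quantity.
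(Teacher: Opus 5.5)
Your final (``cleaner'') argument is essentially the paper's proof: the fundamental theorem of calculus along $X=L-hT$ from a reference point near $x=1$, Cauchy--Schwarz against an integrable weight (you use $(\sinh x)^{-1}$, the paper uses $x^{-2}$ together with $x^2\lesssim\sinh x$), averaging the reference point so that it is controlled by the zeroth-order term of $\mathcal{E}_T$, and Sobolev embedding on $S^2$ with the rotations.

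One adjustment is needed: average the reference point over $[1/2,1]$, as the paper does, rather than over $[1,2]$. The truncated leaf $\mathcal{H}(s,\tau)$ can end at some $x<2$, and then $\int_1^2 w^2\,\mathrm{d}x$ is not bounded by $\mathcal{E}_T[\cdot](s,\tau)$. Your monotonicity check only covers the case $x_*\le x_1$. By contrast, $\mathcal{H}(s)\cap\{x\le 1\}$ always lies in $\mathcal{H}(s,\tau)$ once the latter meets $\{x\ge 1\}$.
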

\begin{proof}
Let \(\psi{}\) be \(\U(1)\)-symmetric, so that in the proof we can write \(\psi{}\) in place
of \(\psi_0\). Let \(x_{\textnormal{min}}\in [1/2,1]\). Let \((s,x_0)\in
\mathcal{H}(s,\tau_0)\) be such that \(x_0\ge 1\). The fundamental theorem of
calculus in the \(X\)-direction, the integrability of \(x^{-2}\) on
\([x_{\textnormal{min}},x_0]\), and Cauchy--Schwarz imply
\begin{equation}
\abs{\Psi{}}^2(s,x_0)\lesssim \abs{\Psi{}}^2(s,x_{\textnormal{min}})+ \int_{x_{\textnormal{min}}}^{x_0} x^{2}(X\Psi{})^2(s,x)\dd{}x.
\end{equation}
Use \(X = L - hT\) and \(\Psi{} = (\tanh x)^{1/2}\psi{}\) and \(h(x)\lesssim \langle{}x\rangle^{-2}\) to get
\begin{equation}
\abs{\psi{}}^2(s,x_0)\lesssim \abs{\psi{}}^2(s,x_{\textnormal{min}})+ \int_{x_{\textnormal{min}}}^{x_0} x^{2}(L\Psi{})^2(s,x) + h(x)(T\psi{})^2(s,x)\dd{}x.
\end{equation}
Integrate over \(S^2\times S^1\), control \(x^{2}\lesssim \sinh x\) (on
\([1,x_0]\)), and use the fact that, for fixed \(s\), the function \(\tau{}\) is
monotone increasing in \(x\) to get
\begin{equation}
\int_{S^2\times S^1}\abs{\psi{}}^2(s,x_0) \lesssim  \int _{S^2\times S^1} \abs{\psi{}}^2(s,x_{\textnormal{min}})\dd{}\omega{}\dd{}\theta{} + \mathcal{E}_T[\psi{}](s,\tau{}_0) + \mathcal{E}_p[\psi{}](s,\tau{}_0).
\end{equation}
After averaging over \(x_{\textnormal{min}}\in [1/2,1]\) and using the
\(L^2\)-control of \(\psi{}\) in the \(\mathcal{E}_T\)-energy, we obtain
\begin{equation}
\int_{S^2\times S^1}\abs{\psi{}}^2(s,x_0) \lesssim \mathcal{E}[\psi{}](s,\tau{}_0).
\end{equation}
In view of the \(\U(1)\)-symmetry of \(\psi{}\), Sobolev embedding on \(S^2\) completes the
proof.
\end{proof}
\begin{proposition}[Estimate for \(\psi_0\) in terms of the constant-\(\tau\) foliation]
Let \(\psi{}\in C^\infty(\mathcal{M})\). For \(\tau{}_0\ge 0\), we have
\begin{equation}
\begin{split}
\norm{\psi_0}_{L^\infty(\Sigma{}(\tau{}_0)\cap \set{x\ge 1})}^2 &\lesssim \sup_{x\ge 1}\int _{S^2\times S^1}\abs{\mathring{\mathfrak{D}}^{\le 2}\psi{}}^2(\tau{}=0,x)\dd{}\omega{}\dd{}\theta{} + \sup_{\tau{}\in [0,\tau{}_0]}\int _{S^2\times S^1}\abs{\mathring{\mathfrak{D}}^{\le 2}\psi{}}^2(\tau{},x=1)\dd{}\omega{}\dd{}\theta{} \\
&\qquad + F_p[\mathring{\mathfrak{D}}^{\le 2}\psi{}](0,\tau{}_0).
\end{split}
\end{equation}
Here
\(\mathring{\mathfrak{D}}\) is a commutator vector field (see \zcref{commutators}).
\label{psi0-far-T}
\end{proposition}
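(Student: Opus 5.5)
We work with a \(\U(1)\)-symmetric \(\psi\), so we write \(\psi\) for \(\psi_0\). We then integrate along outgoing null cones \(C(u)\) instead of along the hyperboloids \(\mathcal{H}(s)\), since these cones are the surfaces on which \(F_p\) gives control.

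Fix a point \((\tau_0,x_0,\omega,\theta)\) with \(x_0\ge 1\), and let \(u_0=\frac12(\tau_0-x_0)\). Follow the cone \(C(u_0)\) into the past, that is, in the direction of decreasing \(v\), keeping \(u=u_0\) fixed. Along this path \(x=v-u_0\) decreases. The path stops at the first point where either \(x=1\) or \(\tau=0\).
\begin{itemize}
\item In the first case we reach \(\{x=1\}\) at some time \(\tau\in[0,\tau_0]\).
\item In the second case we reach \(\{\tau=0\}\) at some \(x\ge 1\).
\end{itemize}
In both cases the endpoint lies in one of the two boundary sets that appear on the right-hand side of the estimate.

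Along \(C(u_0)\) we have \(\partial_v=L\). We apply the fundamental theorem of calculus to \(\Psi=(\tanh x)^{1/2}\psi\) and then Cauchy--Schwarz:
\begin{equation*}
\abs{\Psi}^2(u_0,v_0)\lesssim \abs{\Psi}^2(u_0,v_{\ast}) + \Bigl(\int_{v_\ast}^{v_0}(\sinh x)^{-1}\dd{}v\Bigr)\int_{v_\ast}^{v_0}\sinh x\,(L\Psi)^2\dd{}v .
\end{equation*}
Here \(v_\ast\) is the \(v\)-value of the endpoint.

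The first integral is uniformly bounded. The reason is that \(x=v-u_0\ge 1\) along the segment, so \(\int (\sinh x)^{-1}\dd{}v\le\int_1^\infty(\sinh x)^{-1}\dd{}x\lesssim 1\). This is the step where the weight \(\sinh x\) in \(F_p\) is exactly what is needed.

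On \(\{x\ge1\}\) the factor \(\tanh x\) is comparable to \(1\), so \(\Psi\) and \(\psi\) can be interchanged there. Next we integrate over \(S^2\times S^1\). This is legitimate because \(C(u_0)\cap\{0\le\tau\le\tau_0\}\) is a union over \((\omega,\theta)\) of null segments, and the \(L^2(S^2\times S^1)\) integral of the \(L\Psi\) term is dominated by \(F_p[\psi](0,\tau_0)\). The resulting bound is
\begin{equation*}
\int_{S^2\times S^1}\abs{\psi}^2(\tau_0,x_0)\lesssim \sup_{x\ge1}\int_{S^2\times S^1}\abs{\psi}^2(0,x) + \sup_{\tau\in[0,\tau_0]}\int_{S^2\times S^1}\abs{\psi}^2(\tau,1) + F_p[\psi](0,\tau_0).
\end{equation*}

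We apply the same argument to \(\Omega^{\le 2}\psi\). The rotations \(\Omega\) commute with \(L\) and with multiplication by functions of \(x\), and they preserve \(\U(1)\)-symmetry. Sobolev embedding on \(S^2\) then upgrades the \(L^2(S^2\times S^1)\) bound to an \(L^\infty\) bound; the \(\theta\)-direction costs nothing because \(\psi_0\) is independent of \(\theta\). Finally, \(\Omega\) is one of the fields \(\mathring{\mathfrak{D}}\), and \((\mathring{\mathfrak{D}}\psi)_0=\mathring{\mathfrak{D}}\psi_0\) by the discussion in \zcref{commutators}. The boundary norms of \(\Omega^{\le2}\psi_0\) are therefore controlled by the corresponding terms for \(\mathring{\mathfrak{D}}^{\le2}\psi\): averaging over \(\theta\) does not increase the \(L^2\) norm.

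The only delicate point is the geometric one: every past-directed outgoing null segment starting from \(\Sigma(\tau_0)\cap\{x\ge1\}\) must exit through \(\{x=1\}\) or \(\{\tau=0\}\) while staying inside \(\{x\ge1\}\). This holds because \(x\) decreases monotonically along the segment (at rate \(\dd x/\dd v=1\)) and \(\tau\) decreases with it. No decay of the data at spacelike infinity is needed beyond the supremum over \(x\) on \(\{\tau=0\}\).
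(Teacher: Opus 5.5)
Your proposal is correct and follows the paper's proof. Both integrate \(L\Psi\) backward along the outgoing cone \(C(u_0)\) to the past boundary \(\{\tau=0\}\cup\{x=1\}\) and apply Cauchy--Schwarz against the \(\sinh x\) weight in \(F_p\); the paper phrases this step as integrability of \(x^{-2}\) followed by \(x^2\lesssim\sinh x\). Both then integrate over \(S^2\times S^1\) and finish with Sobolev embedding on \(S^2\) using \(\Omega^{\le 2}\).
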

\begin{proof}
Let \(\psi{}\) be \(\U(1)\)-symmetric, so that in the proof we can write \(\psi{}\) in place of
\(\psi_0\) and that we can suppress the \(\theta{}\) variable. Fix \((\tau{}_0,x_0,\omega{}_0)\), with \(u\)-value \(u_0\), such that \(x_0\ge
1\). We integrate backwards in the \(L\)-direction along \(C(u_0)\), to the past
boundary of \(\set{0\le \tau{}\le \tau_0}\cap \set{x\ge 1}\), namely \(\mathfrak{B}\). This gives
\begin{equation}
\abs{\Psi{}}(\tau{}_0,x_0,\omega{}_0)\le \abs{\Psi{}}(\tau{}_p,x_p,\omega{}_0)+ \int _{C(u_0)\cap \set{(\omega{},\theta{}) = (\omega{}_0,\theta{}_0)}\cap \set{0\le \tau{}\le \tau{}_0}\cap \set{x\ge 1}} \abs{L\Psi{}}\dd{}v
\end{equation}
for some \((\tau_p,x_p)\in \mathfrak{B}\coloneqq{} (\Sigma{}(0)\cap \set{x\ge 1})\cup (\set{x=1}\cap \set{0\le \tau{}\le \tau_0})\). Control the integral using Cauchy--Schwarz
and the integrability of \(x^{-2}\) on \(C(u_0)\cap \set{x\ge 1}\)
and then square both sides to get
\begin{equation}
\abs{\Psi{}}^2(\tau{}_0,x_0,\omega{}_0)\lesssim \abs{\Psi{}}^2(\tau{}_p,x_p,\omega{}_0)+ \int _{C(u_0)\cap \set{(\omega{},\theta{}) = (\omega{}_0,\theta{}_0)}\cap \set{0\le \tau{}\le \tau{}_0}\cap \set{x\ge 1}} x^2\abs{L\Psi{}}^2\dd{}v.
\end{equation}
Integrate over \(S^2\times S^1\) to get
\begin{equation}
\int _{S^2\times S^1}\abs{\Psi{}}^2(\tau{}_0,x_0)\dd{}\omega{}\dd{}\theta{}\lesssim \int _{S^2\times S^1}\abs{\Psi{}}^2(\tau{}_p,x_p)\dd{}\omega{}\dd{}\theta{} + \int _{C(u_0)\cap \set{0\le \tau{}\le \tau{}_0}\cap \set{x\ge 1}} x^2\abs{L\Psi{}}^2\dd{}v\dd{}\omega{}\dd{}\theta{}.
\end{equation}
The final term is controlled by \(F_p[\psi{}](0,\tau{})\), since \(x^2\lesssim \sinh x\)
in the region of interest. Using also \(\tanh x\sim 1\) in the region of
interest and the definition of \(\mathfrak{B}\), we get
\begin{equation}\label{psi0-far-T-prep}
\begin{split}
\sup_{x\ge 1}\int _{S^2\times S^1}\abs{\psi{}}^2(\tau{}_0,x)\dd{}\omega{}\dd{}\theta{}&\lesssim \sup_{x\ge 1}\int _{S^2\times S^1}\abs{\psi{}}^2(\tau{}=0,x)\dd{}\omega{}\dd{}\theta{} + \sup_{\tau{}\in [0,\tau{}_0]}\int _{S^2\times S^1}\abs{\psi{}}^2(\tau{},x=1)\dd{}\omega{}\dd{}\theta{}\\
&\qquad  + F_p[\psi{}](0,\tau{}_0).
\end{split}
\end{equation}
Now Sobolev embedding on \(S^2\) and the \(\U(1)\)-symmetry of \(\psi{}\) complete the
proof.
\end{proof}
\begin{proposition}[Improved decay in space for \(\psi_{\ge 1}\)]
Let \(\psi{}\in C^\infty(\mathcal{M})\). For \(s\ge 0\) and \(\tau{}\ge 0\), we have
\begin{equation}
\norm{(\sinh x)^{1/2}\psi{}_{\ge 1}}_{L^\infty(\mathcal{H}(s,\tau{}))}^2\lesssim \mathcal{E}_T[\mathring{\mathfrak{D}}^{\le 3}\psi{}_{\ge 1}](s,\tau{}).
\end{equation}
Here \(\mathring{\mathfrak{D}}\in \set{\Omega{},\partial{}_{\theta{}}}\)
is a commutator vector field (see \zcref{commutators}). The same estimate holds with
\(\mathcal{H}(s,\tau{})\) replaced by \(\Sigma{}(\tau{})\) and \(\mathcal{E}_T[\cdot ](s,\tau{})\)
replaced with \(E_T[\cdot ](\tau{})\).
\label{psige1-far}
\end{proposition}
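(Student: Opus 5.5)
The idea is to use the Poincaré inequality on \(S^1\) together with the strong weight \((\cosh x)^2(\tanh x)^{-2}\) on \((\partial_\theta\psi)^2\) in the \(\partial_\tau\)-energy. The \(x\)-direction is handled by a one-dimensional Sobolev (fundamental theorem of calculus) argument along \(\mathcal{H}(s,\tau)\), and the angular directions by Sobolev embedding on \(S^2\times S^1\). Write \(\phi\coloneqq \psi_{\ge 1}\). Since \(\Omega\) and \(\partial_\theta\) commute with the projection onto non-zero \(S^1\)-modes, \(\mathring{\mathfrak{D}}^{\le 3}\phi=(\mathring{\mathfrak{D}}^{\le 3}\psi)_{\ge 1}\). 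Also, \(\mathcal{E}_T[\phi]\le\mathcal{E}_T[\psi]\) up to constants, because the energy splits orthogonally over \(S^1\)-modes. Hence it suffices to show, for each fixed point,
\begin{equation*}
\sinh x_0\int_{S^2\times S^1}\abs{\mathring{\mathfrak{D}}^{\le 2}\phi}^2(s,x_0)\dd{}\omega\dd{}\theta\lesssim \mathcal{E}_T[\mathring{\mathfrak{D}}^{\le 3}\phi](s,\tau),
\end{equation*}
after which Sobolev embedding on the three-dimensional compact manifold \(S^2\times S^1\), using \(\Omega^{\le 2}\) and \(\partial_\theta^{\le 2}\), gives the pointwise bound.

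For the one-dimensional estimate, consider \(G(x)\coloneqq\int_{S^2\times S^1}\cosh x\,\abs{\phi}^2(s,x)\dd{}\omega\dd{}\theta\) along \(\mathcal{H}(s)\). Since \(\phi\) has no zero mode, the Poincaré inequality \(\int_{S^1}\phi^2\le\int_{S^1}(\partial_\theta\phi)^2\) together with the energy weight \(\frac{\cosh^2 x}{\tanh^2 x}\tanh x\) shows that
\begin{equation*}
\int \cosh^2x\,(\tanh x)^{-1}\phi^2\dd{}\mu\lesssim \mathcal{E}_T[\phi],
\end{equation*}
which decays far better than \(G\) needs. Differentiate with \(X\):
\begin{equation*}
XG=\int \bigl(\sinh x\,\phi^2+2\cosh x\,\phi X\phi\bigr)\dd{}\omega\dd{}\theta .
\end{equation*}
Integrate from \(x_0\) to \(\infty\). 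The boundary term at infinity vanishes for the truncated surface, or is handled by truncation at \(\tau\); see below. Cauchy--Schwarz gives
\begin{equation*}
G(x_0)\lesssim \int_{x_0}^\infty\!\!\int \cosh x\,\phi^2+\Bigl(\int\cosh^2x\,\phi^2\Bigr)^{1/2}\Bigl(\int (X\phi)^2\Bigr)^{1/2}.
\end{equation*}
The first factor is controlled by the Poincaré bound above, and \(X\phi=L\phi-h\,T\phi\) is controlled by \((L\phi)^2\) and \(\langle x\rangle^{-2}(\underline{L}\phi)^2\) in \(\mathcal{E}_T\), using \(T=\frac12(L+\underline{L})\) and \(h\lesssim\langle x\rangle^{-2}\). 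The factors \(\tanh x\) in the measure are harmless for \(x\ge1\).

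For \(x_0\le1\) the weight \(\sinh x_0\) is bounded, so the same argument integrating from \(x_0\) up to \(x=1\) works, now using the Poincaré bound with the weight \(\tanh^{-1}x\), which is even stronger near the bubble. Alternatively, one can average the endpoint over \([1,2]\) and control \(\phi\) near \(x=0\) by integrating outward, using the good weight \((\tanh x)^{-1}\). I expect the main subtlety to be the truncation at \(\tau\): \(\mathcal{H}(s,\tau)\) ends at a finite \(x_{\max}\), so there is no decay at infinity to exploit. I would handle this by integrating from \(x_0\) to the endpoint \(x_{\max}\) and bounding the endpoint value by averaging, replacing the endpoint with an average over \(x\in[x_0,x_{\max}]\) or over a unit interval. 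The bound \(\sinh x\,\phi^2\lesssim\cosh^2x\,\phi^2/\cosh x\) shows that this average is again controlled by the Poincaré term. The case \(x_{\max}-x_0<1\) is covered by averaging over an interval of length \(\gtrsim\min(1,x_{\max})\) on the other side.

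For the \(\Sigma(\tau)\) version, repeat the argument verbatim with \(\partial_x\) in place of \(X\). \(E_T\) controls \((\partial_x\phi)^2\) directly, and \(\Sigma(\tau)\) extends to \(x=\infty\), so the boundary term at infinity vanishes by finiteness of the energy: along a sequence \(x_n\to\infty\), \(\cosh^2x\,\phi^2\) is integrable and hence \(G(x_n)\to0\).
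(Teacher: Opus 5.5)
Your proof is correct. It uses the same ingredients as the paper: Poincar\'e on $S^1$ to turn the $\frac{\cosh^2 x}{\tanh^2 x}(\partial_\theta\psi)^2$ term into a strongly weighted $L^2$ bound on $\psi_{\ge 1}$, the fundamental theorem of calculus in the $X$-direction with Cauchy--Schwarz, and Sobolev embedding on $S^2\times S^1$. Where you differ is in how the boundary term is handled.

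The paper works with $\sinh x\,\psi_{\ge 1}^2$ and integrates \emph{inward} from $x_0$ to the bubble, where this quantity vanishes because $\psi$ is smooth. Since $\tau$ is monotone in $x$ along $\mathcal{H}(s)$, the segment $\{0\le x\le x_0\}$ lies in $\mathcal{H}(s,\tau)$ automatically. So the truncation you single out as the main subtlety never arises, no averaging or case split is needed, and the $\Sigma(\tau)$ case is the identical computation.

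Your outward integration with the weight $\cosh x$ forces the averaging step, but that step does close:
\begin{itemize}
\item For $x_0\ge 1$, average over $[x_0-1/2,x_0]$, where $\tanh x\sim 1$.
\item For $x_0\le 1$, average over $[0,x_0]$. There $x_0^{-1}\le x^{-1}\lesssim(\tanh x)^{-1}$, and you pair this with the $(\tanh x)^{-1}$-weighted Poincar\'e bound. Likewise you pair $|\psi_{\ge1}||X\psi_{\ge1}|$ via Cauchy--Schwarz against $\frac{\cosh^2x}{\tanh x}\psi_{\ge1}^2$ and $\tanh x\,(X\psi_{\ge1})^2$.
\end{itemize}
In fact averaging over $[0,x_0]$ works for every $x_0$, so you could drop the outward direction and the discussion of $x_{\max}$ entirely.

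What your version buys is a slightly stronger conclusion near the bubble. You get $\cosh x_0\int_{S^2\times S^1}|\psi_{\ge 1}|^2(x_0)\lesssim\mathcal{E}_T[\psi_{\ge1}]$ uniformly up to $x_0=0$, whereas the paper's bound degenerates like $\sinh x_0$ there. As a result, boundedness of $\psi_{\ge 1}$ in $\{x\le 1\}$ would follow directly, without the elliptic estimate of \zcref{elliptic-near} that the paper invokes for that region in \zcref{sec:psige1-bounded}.
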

\begin{proof}
We give the proof for the hypersurface \(\mathcal{H}(s,\tau{})\), since the proof for
\(\Sigma{}(\tau{})\) is the same (after replacing \(X\) with \(\partial_x\)). By
the Poincaré inequality
\begin{equation}
\int _{S^1}(\partial{}_\theta{}\psi{}_{\ge 1})^2\dd{}\theta{}\ge \int _{S^1}\psi{}_{\ge 1}^2\dd{}\theta{},
\end{equation}
it follows from \(X = L - hT\) that
\begin{equation}\label{psi-ge1-energy}
\int _{\mathcal{H}(s,\tau{})} \Bigl[(X\psi{}_{\ge 1})^2 + \frac{\cosh ^2x}{\tanh ^2x}\psi{}_{\ge 1}^2\Bigr]\tanh x\dd{}\mu{}\lesssim \mathcal{E}_T[\psi{}_{\ge 1}](s,\tau{}).
\end{equation}

Next, compute
\begin{equation}
\abs{X(\sinh  x\psi{}_{\ge 1}^2)}\lesssim \cosh x \psi{}_{\ge 1}^2 + \sinh x \abs{\psi{}_{\ge 1}}\abs{X\psi{}_{\ge 1}} \lesssim  ((X\psi{}_{\ge 1})^2 + \cosh ^2x\psi{}_{\ge 1}^2)\tanh x
\end{equation}
Integrating in the \(X\)-direction to \(\set{x=0}\) (where \(\sinh x \psi_{\ge 1}^2\)
vanishes since \(\psi_{\ge 1}\) is bounded), we obtain
\begin{equation}\label{psi-ge1-pointwise}
\sinh x \psi{}_{\ge 1}^2(s,x_0)\lesssim \int_0^x  ((X\psi{}_{\ge 1})^2 + \cosh ^2x\psi{}_{\ge 1}^2)\tanh x\dd{}x
\end{equation}
To complete the proof, integrate \zcref{psi-ge1-pointwise} over \(S^2\times S^1\) and use
\zcref{psi-ge1-energy}, Sobolev embedding on \(S^2\times S^1\), and the fact that
\(\tau{}\) is monotone increasing in \(x\) for fixed \(s\).
\end{proof}
\begin{proposition}[Estimates for \(\partial_x\psi\)]
Let \(\psi{}\in C^\infty(\mathcal{M})\). For \(s\ge 0\) and \(\tau{}\ge 0\), we have
\begin{equation}\label{dx-bound-general}
\begin{split}
\norm{(\cosh x)^{-1/2}\partial{}_x\psi{}}_{L^\infty(\mathcal{H}(s,\tau{})\cap \set{x\ge 1})}^2&\lesssim \mathcal{F}_T[\mathring{\mathfrak{D}}^{\le 4}\psi{}](\max (0,s-1),s,\tau{}) + \mathbf{1}_{s<1}\norm{(\cosh x)^{-1/2}\partial{}_x\psi{}}_{L^\infty(\mathcal{H}(0)\cap \set{x\ge 1})}^2 \\
&\qquad + \sup_{\tau{}'\le \tau{}} \int _{\Sigma{}(\tau{}')\cap \mathcal{R}(\max (0,s-1),s)} \frac{1}{\cosh x}(P\mathring{\mathfrak{D}}^{\le 3}\psi{})^2\tanh x\dd{}\mu{}
\end{split}
\end{equation}
and
\begin{equation}\label{dx-bound-axisymmetric}
\begin{split}
\norm{\partial{}_x\psi{}_0}_{L^\infty(\mathcal{H}(s,\tau{})\cap \set{x\ge 1})}^2&\lesssim \mathcal{F}_T[(\mathring{\mathfrak{D}}^{\le 3}\psi{})_0](\max (0,s-1),s,\tau{}) + \mathbf{1}_{s<1}\norm{\partial{}_x\psi{}_0}_{L^\infty(\mathcal{H}(0)\cap \set{x\ge 1})}^2 \\
&\qquad + \sup_{\tau{}'\le \tau{}} \int _{\Sigma{}(\tau{}')\cap \mathcal{R}(\max (0,s-1),s)} (P\mathring{\mathfrak{D}}^{\le 2}\psi{})_0^2\tanh x\dd{}\mu{}.
\end{split}
\end{equation}
Moreover, we have
\begin{equation}\label{dx-bound-tau}
\norm{(\cosh x)^{-1/2}\partial{}_x\psi{}}_{L^\infty(\Sigma{}(\tau{})\cap \set{x\ge 1})}^2\lesssim E_T[\mathring{\mathfrak{D}}^{\le 4}\psi{}](\tau{}) + \int _{\Sigma{}(\tau{})} \frac{1}{\cosh x}(P\mathring{\mathfrak{D}}^{\le 3}\psi{})^2\tanh x\dd{}\mu{}
\end{equation}
and
\begin{equation}\label{dx-bound-tau-2}
\norm{\partial{}_x\psi{}_0}_{L^\infty(\Sigma{}(\tau{})\cap \set{x\ge 1})}^2\lesssim E_T[(\mathring{\mathfrak{D}}^{\le 3}\psi{})_0](\tau{}) + \int _{\Sigma{}(\tau{})} (P\mathring{\mathfrak{D}}^{\le 2}\psi{})_0^2\tanh x\dd{}\mu{}.
\end{equation}
Here \(\mathring{\mathfrak{D}}\in \set{\Omega{},\partial{}_\tau{},\partial{}_{\theta{}}}\) is a commutator vector field (see
\zcref{commutators}).
\label{dx-far}
\end{proposition}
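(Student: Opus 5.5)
Since \(\partial_x\) is not a commutator in \(\set{x\ge 1}\), I will get \(\partial_x\psi{}\) from the wave equation itself, used as an ODE along \(\Sigma{}(\tau{})\). Solving \zcref{P-x-expression} for the \(x\)-derivatives gives
\begin{equation*}
\partial_x\bigl(\sinh x\,\partial_x\psi{}\bigr)\cdot\frac{1}{\sinh x}\approx \partial_x^2\psi{} + \frac{1}{\cosh x\sinh x}\partial_x\psi{} = P\psi{} + \partial_\tau^2\psi{} + \frac{1}{\cosh^2x}\psi{} - \frac{1}{\cosh^2\tau{}}\Lapl_{S^2}\psi{} - \frac{\cosh^2x}{\tanh^2x}\partial_\theta^2\psi{}.
\end{equation*}
The zeroth-order first-derivative coefficient \((\cosh x\sinh x)^{-1}\) is integrable on \([1,\infty)\), so it only produces a bounded integrating factor. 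Every term on the right except \(P\psi{}\) is a \(\mathring{\mathfrak{D}}\)-derivative of \(\psi{}\) (namely \(\partial_\tau^2\), \(\Omega^2\), \(\partial_\theta^2\), or \(\psi{}\) itself) with a weight.

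I will first prove \zcref{dx-bound-tau} and \zcref{dx-bound-tau-2} on \(\Sigma{}(\tau{})\). For the \(\U(1)\)-symmetric part, the \(\partial_\theta^2\) term drops out. I bound \(\abs{\partial_x\psi_0}^2\) at \(x_0\) by a quantity that does not grow in \(x\), using a one-dimensional Sobolev inequality on unit intervals \([x_0,x_0+1]\):
\begin{equation*}
\sup_{[x_0,x_0+1]}\abs{\partial_x\psi_0}^2\lesssim\int_{x_0}^{x_0+1}(\partial_x\psi_0)^2+(\partial_x^2\psi_0)^2\dd{}x.
\end{equation*}
I substitute the equation for \(\partial_x^2\psi_0\). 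Every resulting term is controlled in \(L^2(\dd{}x)\) with weight \(\tanh x\sim1\) by \(E_T[\mathring{\mathfrak{D}}^{\le 1}\psi_0]\) together with \(\int(P\psi_0)^2\). The term \(\partial_\tau^2\psi_0\) is handled by \(E_T[\partial_\tau\psi_0]\), and the angular term by \(E_T[\Omega\psi{}]\) together with the factor \(\cosh^{-2}\tau{}\). Sobolev on \(S^2\) then costs two more \(\Omega\)'s, which gives \(\mathring{\mathfrak{D}}^{\le 3}\).

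For the general \(\psi{}\) the obstruction is the mass term \(\frac{\cosh^2x}{\tanh^2x}\partial_\theta^2\psi{}\). The energy \(E_T[\partial_\theta\psi{}]\) controls \(\cosh x\,\partial_\theta^2\psi{}\) in \(L^2\), so this term is \(\cosh x\) times an \(L^2\) quantity. I therefore multiply by \((\cosh x)^{-1}\) and use a weighted Sobolev inequality:
\begin{equation*}
\abs{f}^2(x_0)\lesssim\int_{x_0}^{x_0+1}f^2+(\partial_xf)^2\dd{}x,\qquad f=(\cosh x)^{-1/2}\partial_x\psi{}.
\end{equation*}
This yields \((\cosh x)^{-1}\abs{\partial_x^2\psi{}}^2\lesssim \cosh x\,\abs{\partial_\theta^2\psi{}}^2+\dots\), which is what \(E_T[\mathring{\mathfrak{D}}\psi{}]\) controls. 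Alternatively, one can integrate \(\partial_x\bigl((\cosh x)^{-1}(\partial_x\psi{})^2\bigr)\) from \(x_0\) to \(\infty\), using that the integrand decays along \(\Sigma{}(\tau{})\) for decaying data. Either way, this step produces the \((\cosh x)^{-1/2}\) loss in \zcref{dx-bound-tau} and the \((\cosh x)^{-1}\) weight on \(P\psi{}\). The Sobolev inequality on \(S^2\times S^1\) adds one more commutator, which gives \(\mathring{\mathfrak{D}}^{\le 4}\).

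For the hyperboloidal estimates \zcref{dx-bound-general} and \zcref{dx-bound-axisymmetric}, the difficulty (as \zcref{cutoff-character} anticipates) is that \(\partial_x\) is not tangent to \(\mathcal{H}(s)\). Instead, a point of \(\mathcal{H}(s,\tau{})\cap\set{x\ge1}\) lies on some \(\Sigma{}(\tau{}')\) with \(\tau{}'\le\tau{}\). Near that point, the slab \(\Sigma{}(\tau{}')\cap\mathcal{R}(\max(0,s-1),s)\) contains an \(x\)-interval of length \(\gtrsim1\), since \(s\) and \(u\) differ by a bounded amount. I will run the same one-dimensional Sobolev argument on that interval. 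The \(L^2\) norms on the slab are bounded by the flux \(\mathcal{F}_T[\cdot](\max(0,s-1),s,\tau{})\), which is precisely the flux through \(\Sigma{}(\tau{}')\) controlled by \zcref{T-estimate}. The inhomogeneous terms give the stated \(\sup_{\tau{}'}\) integrals of \(P\mathring{\mathfrak{D}}\psi{}\).

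When \(s<1\), the slab may hit the initial hypersurface. In that case I either fall back on the data value, which is the \(\mathbf{1}_{s<1}\) term, or use the interval reaching back to \(\mathcal{H}(0)\). The main thing to check is the geometric claim that the slab has uniformly positive \(x\)-width for all \(\tau{}'\). I would verify it from \zcref{tau-def}, using \(s-u=O(\langle x\rangle^{-1})\) and \(\partial_x s|_\tau=-\tfrac12(1-h)\).
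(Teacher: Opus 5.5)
\textbf{Gap in the non-symmetric bounds.} Your \(\mathrm{U}(1)\)-symmetric bounds \eqref{dx-bound-axisymmetric} and \eqref{dx-bound-tau-2} are fine. There every term in \(\partial_x^2\psi_0\) has weight \(\lesssim 1\), so the \(H^1\) Sobolev inequality on a unit interval closes. Your slab geometry and the \(\mathbf{1}_{s<1}\) fallback also match the paper.

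The argument you give for the non-symmetric bounds \eqref{dx-bound-general} and \eqref{dx-bound-tau}, however, does not close. Take \(f=(\cosh x)^{-1/2}\partial_x\psi\). The inequality \(\abs{f}^2(x_0)\lesssim\int f^2+(\partial_xf)^2\) requires \((\cosh x)^{-1}(\partial_x^2\psi)^2\) to be integrable. The mass term contributes
\[
(\cosh x)^{-1}\bigl(\tfrac{\cosh^2x}{\tanh^2x}\bigr)^2(\partial_\theta^2\psi)^2\sim\cosh^3x\,(\partial_\theta^2\psi)^2,
\]
not \(\cosh x\,(\partial_\theta^2\psi)^2\) as you claim. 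But \(E_T[\partial_\theta\psi]\) controls only \(\cosh^2x\,(\partial_\theta^2\psi)^2\), and no vector field in \(\mathring{\mathfrak{D}}\) supplies more \(x\)-weight. Since \((\cosh x)^{-1/2}\) is essentially constant on a unit interval, this route yields only \(\abs{\partial_x\psi}^2\lesssim\cosh^2x\,E\). That loses a factor \((\cosh x)^{1/2}\) relative to the statement.

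\textbf{The missing idea.} Keep the product structure instead of squaring \(\partial_x^2\psi\). Apply the fundamental theorem of calculus to \(\chi\,(\cosh x)^{-1}(\partial_x\psi)^2\), with a cutoff \(\chi\) across the slab. This is the ``alternative'' you mention and the paper's route. The cutoff, rather than integrating to \(x=\infty\), is what localizes the right-hand side of \eqref{dx-bound-general} to \(\mathcal{R}(\max(0,s-1),s)\). The computation produces \((\cosh x)^{-1}\abs{\partial_x\psi}\abs{\partial_x^2\psi}\), which you then split unevenly:
\[
\frac{1}{\cosh x}\abs{\partial_x\psi}\cdot\frac{\cosh^2x}{\tanh^2x}\abs{\partial_\theta^2\psi}\lesssim(\partial_x\psi)^2+\frac{\cosh^2x}{\tanh^4x}(\partial_\theta^2\psi)^2 .
\]
Now \(\partial_x\psi\) carries weight \(1\) and \(\partial_\theta^2\psi\) carries \(\cosh^2x\), and both are controlled by \(E_T[\mathring{\mathfrak{D}}^{\le 1}\psi]\).

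Equivalently, the \((\cosh x)^{-1/2}\) comes from the interpolation bound \(\abs{g(x_0)}^2\lesssim\norm{g}_{L^2}^2+\norm{g}_{L^2}\norm{g'}_{L^2}\) with \(g=\partial_x\psi\), \(\norm{g}_{L^2}\lesssim E^{1/2}\) and \(\norm{g'}_{L^2}\lesssim\cosh x_0\,E^{1/2}+\dots\). It does not come from \(H^1\hookrightarrow L^\infty\). So your two routes are not interchangeable, contrary to your ``either way''.

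(Minor: the radial part is \((\tanh x)^{-1}\partial_x(\tanh x\,\partial_x\psi)\), not the \(\sinh x\) version you wrote. This is harmless since you never use it.)
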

\begin{remark}
Observe that \(\partial_x\psi_{\ge 1}\) has worse decay in space than \(\partial_x\psi_0\) (see
\zcref{pointwise-decay-comparison}).
\end{remark}
\begin{proof}
We will prove \zcref{dx-bound-general,dx-bound-axisymmetric}, since the proof of
\zcref{dx-bound-tau,dx-bound-tau-2} is similar (in fact, simpler). Rearrange
\zcref{P-x-expression} to obtain
\begin{equation}
\partial{}_x^2\psi{} = P\psi{} + \partial{}_\tau^2\psi{} - \frac{1}{\cosh x\sinh x}\partial{}_x\psi{} + \frac{1}{\cosh^2x}\psi{} - \frac{1}{\cosh ^2\tau{}}\Lapl _{S^2}\psi{} -  \frac{\cosh^2x}{\tanh ^2x}\partial{}_\theta^2\psi{},
\end{equation}
It follows that, in the region \(\set{x\ge 1}\), we have
\begin{equation}\label{dx-dx-psi-estimate}
\begin{split}
&\abs[\Big]{\partial{}_x\Bigl(\frac{1}{\cosh x}(\partial{}_x\psi{})^2\Bigr)} \lesssim \frac{1}{\cosh x}(\partial{}_x\psi{})^2 + \frac{1}{\cosh x}\abs{\partial{}_x\psi{}}\abs{\partial{}_x^2\psi{}} \\
&\lesssim (\partial{}_x\psi{})^2 + \frac{1}{\cosh x}(P\psi{})^2 + (\partial{}_\tau \partial{}_\tau{}\psi{})^2 + \frac{1}{\cosh ^2x}\psi^2 + \frac{1}{\cosh ^2\tau{}}(\Lapl _{S^2}\psi{})^2 + \frac{\cosh ^2x}{\tanh^4 x}(\partial{}_\theta^2\psi{})^2.
\end{split}
\end{equation}
Note that we have used the weighted Young's inequality
\begin{equation}
\frac{1}{\cosh x}\abs{\partial{}_x\psi{}} \cdot \frac{\cosh ^2x}{\tanh ^2x}\abs{\partial{}_\theta^2\psi{}}\lesssim (\partial{}_x\psi{})^2 + \frac{\cosh ^2x}{\tanh ^4x}(\partial{}_\theta^2\psi{})^2.
\end{equation}
Now fix \(s_0\ge 1\) and \(x_0\ge 1\). Let \(\tau_0\) be the \(\tau{}\)-value associated to
the points in \(\set{s=s_0}\cap \set{x=x_0}\). Let \(x_1\) be the \(x\)-value
associated to the points in \(\set{\tau{}=\tau_0}\cap \set{s=s_0-1}\). By
\zcref{tau-def} and the positivity of \(h\), we have
\begin{equation}\label{x1-x0-separation}
x_1 - x_0 = 2 + \int_{x_0}^{x_1} h(y)\dd{}y\ge 2.
\end{equation}
Let \(\chi{}(x)\) be a bump function such that \(\chi{}(x_0) = 1\) and \(\chi{}(x_1) = 0\). By
\zcref{x1-x0-separation}, \(\chi{}\) can be chosen so that \(\abs{\chi{}'}\lesssim
1\). It now follows from \zcref{dx-dx-psi-estimate} that
\begin{equation}
\begin{split}
&\int _{S^2\times S^1} \frac{1}{\cosh x}(\partial{}_x\psi{})^2(\tau{}_0,x_0,\omega{},\theta{})\dd{}\omega{}\dd{}\theta{}\\
&\le \int _{\Sigma{}(\tau{}_0)\cap \set{x_0\le x\le x_1}} \abs[\Big]{\partial{}_x\Bigl(\chi{}\frac{1}{\cosh x}(\partial{}_x\psi{})^2(\tau{}_0,x,\omega{},\theta{})\Bigr)}\dd{}\mu{} \\
&\le \int _{\Sigma{}(\tau{}_0)\cap \set{s_0-1\le s\le s_0}} \side{RHS}{dx-dx-psi-estimate}\dd{}\mu{} \\
&\lesssim \mathcal{F}_T[\mathfrak{D}^{\le 1}\psi{}](s_0-1,s_0,\tau{}_0) + \int _{\Sigma{}(\tau{}_0)\cap \set{s_0-1\le s\le s_0}} \frac{1}{\cosh x}(P\psi{})^2\dd{}\mu{}.
\end{split}
\end{equation}
Sobolev embedding on \(S^2\times S^1\) completes the proof of \zcref{dx-bound-general} in
the case \(s\ge 1\). If \(0\le s < 1\), then instead of introducing the cutoff
function \(\chi{}\), we simply integrate in the \(\partial_x\)-direction to
\(\mathcal{H}(0)\), which produces a boundary term at \(\mathcal{H}(0)\). If
\(\psi{}\) is \(\U(1)\)-symmetric, then \(\partial_\theta{}\psi{} = 0\), so the above
proof works without the weight \((\cosh x)^{-1}\). In this case, only Sobolev
embedding on \(S^2\) is required.
\end{proof}
\subsection{Pointwise estimates near the bubble}
\subsubsection{Estimates for general scalar fields}
In this section, we prove pointwise estimates in a finite-\(x\) region, in
particular near the bubble.
\begin{proposition}[Estimates for \(\psi\) and its derivatives near the bubble]
Let \(\psi{}\in C^\infty(\mathcal{M})\), and let \(s\ge 0\) and \(\tau{}\ge 0\). There is a universal
constant \(C > 0\) (depending only on the function \(h\) defining the
hyperboloidal foliation introduced in \zcref{hyperboloidal-foliation}) such that if
\(\tau{}-2s \ge C\), then for \(1\le x_0\le \frac{1}{8}(\tau{}-2s)\), then we have
\begin{equation}\label{elliptic-near-1}
\norm{\psi{}}_{L^\infty(\mathcal{H}(s,\tau{})\cap \set{x\le x_0})}^2 \lesssim_{x_0} \mathcal{E}_T[\mathring{\mathfrak{D}}^{\le 3}\psi{}](s,\tau{}) + \int _{\mathcal{H}(s,\tau{})\cap \set{x\le 2x_0}} (P\mathring{\mathfrak{D}}^{\le 2}\psi{})^2\tanh x \dd{}\mu{},
\end{equation}
where the dependence on \(x_0\) is double-exponential, and
\(\mathring{\mathfrak{D}}\in
\set{\Omega{},\partial{}_\tau{},\partial{}_{\theta{}}}\) is a commutator vector
field (see \zcref{commutators}). We also have
\begin{equation}\label{elliptic-near-bubble-equation-dx-dtheta}
\begin{split}
&\norm{\partial{}_x\psi{}}_{L^\infty(\mathcal{H}(s,\tau{})\cap \set{x\le x_0})}^2 + \norm{(\tanh x)^{-1}\partial{}_\theta{}\psi{}}_{L^\infty(\mathcal{H}(s,\tau{})\cap \set{x\le x_0})}^2 \\
&\lesssim_{x_0} \mathcal{E}_T[\mathfrak{D}^{\le 4}\psi{}](s,\tau{}) + \int _{\mathcal{H}(s,\tau{})\cap \set{x\le 2x_0}} (P\mathfrak{D}^{\le 3}\psi{})^2\tanh x \dd{}\mu{},
\end{split}
\end{equation}
where now \(\mathfrak{D}\in \set{\Omega{},\partial_\tau{},\partial{}_{\theta{}},\partial{}_{\hat{y}},\partial{}_{\hat{z}}}\).

Moreover, the same estimates hold with \(\mathcal{H}(s,\tau{})\) replaced by \(\Sigma{}(\tau{})\)
and \(\mathcal{E}_T[\cdot ](s,\tau{})\) replaced by \(E_T[\cdot ](\tau{})\) (now
without any restriction on the relative size of \(\tau{}\) and \(s\), since the
role of \(s\) has been removed).
\label{elliptic-near}
\end{proposition}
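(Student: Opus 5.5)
We prove the estimates by local elliptic regularity for the spatial part of \(P\) on a compact region \(\set{x\le 2x_0}\) of each leaf, followed by Sobolev embedding. On \(\mathcal{H}(s,\tau{})\cap\set{x\le 2x_0}\) the condition \(\tau{}-2s\ge C\) and \(2x_0\le\frac14(\tau{}-2s)\) guarantee that this portion of the leaf lies in \(\set{x\le 1}\cup\set{x\text{ bounded}}\) and stays at \(\tau{}\)-values comparable to \(\tau{}\). Because \(h=1\) for \(x\le1\), the leaf coincides with a constant-\(\tau{}\) slice near the bubble. Away from \(x\le1\) the leaf is a graph \(\tau{}=\tau{}(x)\) with bounded slope (\(h<2\)) over a compact \(x\)-range. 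We therefore work in the global chart \((\hat y,\hat z,\omega,\theta)\), which is smooth across the bubble, restricted to a compact set \(K=\set{x\le 2x_0}\subset\R^2\times S^2\). On \(K\) all weights \(\tanh x\), \(\cosh x\) and \(\mathfrak F\) are bounded above and below by constants depending on \(x_0\). Since \(\mathfrak f\sim e^{\cosh x}\), these constants are double-exponential in \(x_0\), which accounts for the stated dependence.

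\textbf{Step 1 (the elliptic equation).} Using \zcref{P-cart-expression}, rewrite \(P\psi{}=f\) on a leaf as
\[
\mathcal{L}\psi{}\coloneqq{}\frac{\cosh ^2x}{\mathfrak{F}(x)^2}(\partial_{\hat y}^2+\partial_{\hat z}^2)\psi{}+3\cosh x(\hat y\partial_{\hat y}+\hat z\partial_{\hat z})\psi{}-\frac{1}{\cosh^2x}\psi{}=P\psi{}+\partial_\tau^2\psi{}-\frac{1}{\cosh^2\tau{}}\Lapl_{S^2}\psi{}.
\]
On \(\mathcal{H}(s)\) with \(x\ge1\), the tangential derivative in \(x\) mixes in \(\partial_\tau{}\) by \(\partial_x=X-(1-h)\partial_\tau{}\), but this only produces further \(\partial_\tau{}\) terms. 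On the three-dimensional manifold \(\R^2\times S^1\)-orbit direction plus \(S^2\), we treat \(\Lapl_{S^2}\) by commuting with \(\Omega{}\). This is harmless because \(\Omega{}\) are Killing and commute with \(P\). The operator \(\mathcal{L}\) is uniformly elliptic in \((\hat y,\hat z)\) on \(K\) with smooth coefficients. Standard interior \(H^2\) estimates on \(\set{x\le 2x_0}\), with cutoff to \(\set{x\le x_0}\), then bound \(\norm{\psi{}}_{H^2_{\hat y,\hat z}}\) by \(\norm{\psi{}}_{L^2}\), \(\norm{\partial_\tau^2\psi{}}_{L^2}\), \(\norm{\Lapl_{S^2}\psi{}}_{L^2}\) and \(\norm{P\psi{}}_{L^2}\) over \(\set{x\le 2x_0}\). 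Each of these is controlled by \(\mathcal{E}_T[\mathring{\mathfrak D}^{\le1}\psi{}]\) together with the \(P\)-term. The equivalence of the flat Cartesian \(L^2\) measure with \(\tanh x\,\dd{}\mu{}\) on \(K\) holds because \(\tanh x\,\dd x\,\dd\theta{}\) is comparable to \(\dd\hat y\,\dd\hat z\) there.

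\textbf{Step 2 (Sobolev embedding for \zcref{elliptic-near-1}).} Commute with \(\mathring{\mathfrak D}^{\le 2}\). Here \(\partial_\tau{}\) and \(\Omega{}\) supply \(S^2\) and time regularity with only the lower-order error \zcref{frak-D-commutation}(I), and \(\partial_\theta{}\) is Killing. This gives \(H^2\) in \((\hat y,\hat z)\) and \(H^2\) in \(\omega\) for \(\mathring{\mathfrak D}^{\le1}\psi{}\). Mixed Sobolev embedding \(H^2(\R^2)\otimes H^2(S^2)\hookrightarrow L^\infty\) then yields \zcref{elliptic-near-1} with \(\mathring{\mathfrak D}^{\le3}\) in the energy and \(\mathring{\mathfrak D}^{\le2}\) in the \(P\)-term.

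\textbf{Step 3 (the derivative estimate \zcref{elliptic-near-bubble-equation-dx-dtheta}).} Apply the same argument to \(\partial_{\hat y}\psi{}\) and \(\partial_{\hat z}\psi{}\), i.e.\ to \(\mathfrak D^{\le3}\psi{}\) including the Cartesian fields. The commutator \([P,\widehat\partial]\) is handled by term (II) of \zcref{frak-D-commutation}: its weights \(e^{-\cosh x}\) and \(\cosh x/\tanh x\) are bounded on \(K\), so it is dominated by the energy and by \(P\mathfrak D^{\le3}\psi{}\). This gives an \(L^\infty\) bound on \(\widehat\partial\psi{}\). By \zcref{dx-in-terms-of-dxhat}, \(\partial_x\psi{}\) and \((\tanh x)^{-1}\partial_\theta{}\psi{}\) are linear combinations of \(\widehat\partial\psi{}\) with coefficients \(\mathfrak f'\) and \(\mathfrak f/\tanh x\), which are bounded on \(K\). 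This gives the claimed bound.

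The statement on \(\Sigma{}(\tau{})\) follows by the identical argument, with no constraint needed.

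I expect the main obstacle to be Step 1 on the hyperboloidal leaves. There one must check that the tangential second derivatives generate only \(\partial_\tau{}\) terms controlled by \(\mathcal{E}_T\) restricted to \(\mathcal{H}(s,\tau{})\). One must also make sure that the whole region \(\set{x\le2x_0}\) of the leaf stays below the truncation \(\tau{}\). This is exactly the purpose of the hypothesis \(x_0\le\frac18(\tau{}-2s)\), since along \(\mathcal{H}(s)\) we have \(\tau{}\approx 2s+x+O(1)\).
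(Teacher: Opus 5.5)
Your proposal is correct and follows essentially the paper's route: an \(H^2\) elliptic estimate for the spatial part of \(P\) in the Cartesian \((\hat y,\hat z)\) chart on the compact region \(\{x\lesssim x_0\}\), with the hypothesis \(x_0\le\frac18(\tau-2s)\) used only to keep that region inside the truncated leaf. This is followed by Sobolev embedding on \(\mathbb{R}^2\) and on \(S^2\), and then by the same argument applied to \(\partial_{\hat y}\psi,\partial_{\hat z}\psi\) together with the formulas expressing \(\partial_x\) and \(\partial_\theta\) through \(\partial_{\hat y},\partial_{\hat z}\). The paper proves the \(H^2\) bound by hand, integrating \(\chi^2(\mathcal{P}\psi)^2\) by parts with \(\Delta_{S^2}\) kept inside the operator, rather than quoting interior estimates and commuting with \(\Omega\).

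One harmless slip: \(\cosh x/\tanh x\) is not bounded near the bubble. This does not matter, because the energy contains exactly the weight \(\frac{\cosh^2 x}{\tanh^2 x}(\partial_\theta\,\cdot)^2\). In any case you need no commutator with \(\partial_{\hat y},\partial_{\hat z}\), since \(P\mathfrak{D}^{\le 3}\psi\) already appears on the right-hand side.
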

\begin{remark}
The role of the lower bound on \(\tau{}-2s\) is only to ensure that \(\mathcal{H}(s)\cap \set{x\le 4x_0}\subset \mathcal{H}(s,\tau{})\).
\end{remark}
\begin{proof}
We will only prove the estimates for the hyperboloidal foliation
\(\mathcal{H}(s,\tau{})\), since the proof is similar for the foliation \(\Sigma{}(\tau{})\).

\step{Step 0: Choice of \(x_0\).} Let \(x_{\textnormal{max}}\) be the largest possible
\(x\)-value in \(\mathcal{H}(s,\tau{})\). By \zcref{tau-def} and the positivity of \(h\),
we have
\begin{equation}
x_{\textnormal{max}} = \tau{}-2s - \int_{x_{\textnormal{max}}}^\infty h(y)\dd{}y\ge \tau{} - 2s - C_0,
\end{equation}
where \(C_0 \coloneqq{} \int_{0}^\infty h(y)\dd{}y\). It follows that if \(\tau{}-2s\ge 2C_0\), then we
have \(x_{\textnormal{max}}\ge \frac{1}{2}(\tau{}-2s)\). In particular, if
\(x_0\le \frac{1}{8}(\tau{}-2s)\), then \(\mathcal{H}(s)\cap \set{x\le 4x_0}\subset
\mathcal{H}(s,\tau{})\).

\step{Step 1: Proof of \zcref{elliptic-near-1}.} In this step, we prove
\zcref{elliptic-near-1}. The proof proceeds by separating out the
\(\partial_\tau{}\)-derivatives in the wave equation and using the ellipticity
properties of the remaining operator, which consists of spatial derivatives.

\step{Step 1a: An identity for the spatial part of the wave operator.}
Recalling the expression \zcref{P-cart-expression} for the wave equation in
Cartesian coordinates, define the operator
\begin{equation}\label{calP-def}
\mathcal{P}\psi{}\coloneqq{} \frac{1}{\cosh ^2\tau{}}\Lapl _{S^2}\psi{} + \mathfrak{h}(x)(\partial{}_{\hat{y}}^2\psi{} + \partial{}_{\hat{z}}^2\psi{}),\qquad \mathfrak{h}(x) \coloneqq{} \frac{\cosh ^2x}{\mathfrak{F}(x)^2} =  e^{2\cosh x} \cosh ^2x\frac{\tanh^2 (x/2)}{\tanh ^2x},
\end{equation}
where \(\mathfrak{F}(x)\) was defined in \zcref{cart-metric}. Then
\begin{equation}\label{elliptic-calP-expression}
\mathcal{P}\psi{} = P\psi{} + \partial_\tau^2\psi{} + \frac{1}{\cosh ^2x}\psi{} + O(1) \frac{\mathfrak{h}(x)}{\mathfrak{f}'(x)} \widehat{\partial{}}\psi{},
\end{equation}
where \(\mathfrak{f}\) was defined in \zcref{cart-def} and we have schematically
written \(\widehat{\partial{}}\in \set{\partial{}_{\hat{y}},\partial{}_{\hat{z}}}\).

Let \(\chi{} = \chi{}(x)\) be a bump function such that \(\chi{}\equiv 1\) on \(\set{x\le x_0/2}\) and
\(\Supp \chi{}\subset \set{x\le x_0}\). Then
\begin{equation}
\begin{split}
\chi^2(\mathcal{P}\psi{})^2 &= \textnormal{(I)} + \textnormal{(II)} + \textnormal{(III)},
\end{split}
\end{equation}
where
\begin{equation}
\begin{split}
\textnormal{(I)} &\coloneqq{}\chi^2 \frac{1}{\cosh ^4\tau{}}(\Lapl _{S^2}\psi{})^2 + \chi^2\mathfrak{h}^2(\partial{}_{\hat{y}}^2\psi{})^2 + \chi^2H^2(\partial{}_{\hat{z}}^2\psi{})^2, \\
\textnormal{(II)} &\coloneqq{} \chi^2 \frac{2}{\cosh ^2\tau{}}\Lapl _{S^2}\psi{}\cdot \mathfrak{h}\partial{}_{\hat{y}}^2\psi{} + \chi^2 \frac{2}{\cosh ^2\tau{}}\Lapl _{S^2}\psi{}\cdot \mathfrak{h}H\partial{}_{\hat{z}}^2\psi{}, \\
\textnormal{(III)} &\coloneqq{} 2\chi^2\mathfrak{h}^2\partial{}_{\hat{y}}^2\psi{}\partial{}_{\hat{z}}^2\psi{}. \\
\end{split}
\end{equation}
Term \(\textnormal{(I)}\) is positive definite, so we leave it as is. We
differentiate term \(\textnormal{(II)}\) by parts twice:
\begin{equation}
\begin{split}
\textnormal{(II)} &= \mathfrak{h}\chi^2 \frac{2}{\cosh ^2\tau{}}(\abs{\Grad _{S^2}\partial{}_{\hat{y}}\psi{}}^2 + \abs{\Grad _{S^2}\partial{}_{\hat{z}}^2\psi{}}^2) + \widehat{\partial{}}(\cdots{}) + \div_{S^2}(\cdots{}) + O(1) \widehat{\partial{}}(\chi{}\mathfrak{h})\widehat{\partial{}}\psi{}\cdot \chi{} \frac{1}{\cosh ^2\tau{}}\Lapl _{S^2}\psi{}.
\end{split}
\end{equation}
We also differentiate term \(\textnormal{(III)}\) by parts twice:
\begin{equation}
\textnormal{(III)} = 2\chi^2H^2(\partial{}_{\hat{y}}\partial{}_{\hat{z}}\psi{})^2 + \widehat{\partial{}}(\cdots{}) + O(1)\widehat{\partial{}}(\chi{}\mathfrak{h})\widehat{\partial{}}\psi{}\cdot \chi{}\mathfrak{h}\widehat{\partial{}}^2\psi{}
\end{equation}
After combining the expressions for terms
\(\textnormal{(I)}\)--\(\textnormal{(III)}\) and using Young's inequality, we
conclude that
\begin{equation}\label{elliptic-prep-0}
\chi^2 \frac{1}{\cosh ^4\tau{}}(\Lapl _{S^2}\psi{})^2 + \chi^2\mathfrak{h}^2[(\partial{}_{\hat{y}}^2\psi{})^2 + (\partial{}_{\hat{z}}^2\psi{})^2 + (\partial{}_{\hat{y}}\partial{}_{\hat{z}}\psi{})^2]\le \widehat{\partial{}}(\cdots{}) + \div_{S^2}(\cdots{}) + C[\chi^2(\mathcal{P}\psi{})^2 + \abs{\widehat{\partial{}}(\chi{}\mathfrak{h})}^2 (\widehat{\partial{}}\psi{})^2]
\end{equation}
for some constant \(C > 0\).

\step{Step 1b: The pointwise estimate.} We now estimate the term
\(\widehat{\partial{}}(\chi{}\mathfrak{h})\) appearing on the right side of
\zcref{elliptic-prep-0}. From \zcref{hat-derivatives}, we have \(\widehat{\partial{}}x =
O(1)\frac{1}{\mathfrak{f}'(x)}\), and so
\begin{equation}\label{elliptic-dhat-chi-H-calc}
\abs{\widehat{\partial{}}(\chi{}\mathfrak{h})}\lesssim \frac{1}{\mathfrak{f}'(x)}(\chi{}\abs{\mathfrak{h}'} + \mathfrak{h}\abs{\chi{}'})\lesssim \mathbf{1}_{\Supp \chi{}} \frac{\mathfrak{h}(x)}{\mathfrak{f}'(x)}\cosh x,
\end{equation}
where \('\) refers to \(\partial_x\), and in passing to the last estimate we used the
calculation \(\abs{\mathfrak{h}'}\lesssim \mathfrak{h}\cosh x\) (which follows from \zcref{calP-def}). We now
compute
\begin{equation}\label{elliptic-f-prime-calc}
\mathfrak{f}'(x) = \sinh xe^{\cosh x}\tanh (x/2) + \frac{1}{2} \frac{1}{\cosh ^2(x/2)}e^{\cosh x}\sim e^{\cosh x}\cosh x.
\end{equation}
Using \zcref{elliptic-dhat-chi-H-calc,elliptic-f-prime-calc}, we obtain
\begin{equation}\label{elliptic-dx-H-calc-final}
\abs{\widehat{\partial{}}(\chi{}\mathfrak{h})}\lesssim \mathbf{1}_{\Supp \chi{}}e^{\cosh x} \cosh ^2x.
\end{equation}
By combining
\zcref{elliptic-calP-expression,elliptic-prep-0,elliptic-dx-H-calc-final}
we conclude that
\begin{equation}\label{elliptic-pre-estimate}
\begin{split}
&\int _{\mathcal{H}(s,\tau{})} \chi^2\mathfrak{h}^2[\psi^2 + (\partial{}_{\hat{y}}^2\psi{})^2 + (\partial{}_{\hat{z}}^2\psi{})^2 + (\partial{}_{\hat{y}}\partial{}_{\hat{z}}\psi{})^2]\dd{}\hat{y}\dd{}\hat{z}\dd{}\omega{}  \\
&\qquad \lesssim \int _{\mathcal{H}(s,\tau{})\cap \set{x\le x_0}} (P\psi{})^2 + (\partial{}_\tau \partial{}_\tau{}\psi{})^2 + \frac{1}{\cosh ^4x}\psi^2 + e^{\cosh x}\cosh ^2x(\widehat{\partial{}}\psi{})^2\dd{}\hat{y}\dd{}\hat{z}\dd{}\omega{}.
\end{split}
\end{equation}
We now estimate the right side of \zcref{elliptic-pre-estimate} in terms of energies
of \(\psi{}\). We first claim that
\begin{equation}\label{elliptic-dhat-psi-estimate}
(\widehat{\partial{}}\psi{})^2\dd{}\hat{y}\dd{}\hat{z} = \Bigl[\frac{\mathfrak{f}(x)}{\mathfrak{f}'(x)}(\partial{}_x\psi{})^2 + \frac{\mathfrak{f}'(x)}{\mathfrak{f}(x)}(\partial{}_\theta{}\psi{})^2\Bigr]\dd{}x\dd{}\theta{}\sim \Bigl[\frac{\tanh x}{\cosh x}(\partial_x\psi{})^2 + \frac{\cosh x}{\tanh x}(\partial{}_\theta{}\psi{})^2\Bigr]\dd{}x\dd{}\theta{}.
\end{equation}
Indeed, this follows from the identity
\begin{equation}
\dd{}\hat{y}\dd{}\hat{z} = \mathfrak{f}(x)\mathfrak{f}'(x)\dd{}x\dd{}\theta{},
\end{equation}
which follows from \zcref{cart-def}, and the following computation using \zcref{hat-derivatives}:
\begin{equation}\label{dhat-squared-pointwise-bound}
(\widehat{\partial{}}\psi{})^2\lesssim \frac{1}{\mathfrak{f}'(x)^2}(\partial{}_x\psi{})^2 + \frac{1}{\mathfrak{f}(x)^2}(\partial{}_\theta{}\psi{})^2.
\end{equation}
It now follows from \zcref{elliptic-pre-estimate,elliptic-dhat-psi-estimate} that
\begin{equation}\label{dhat-squared-bound}
\begin{split}
&\int _{\mathcal{H}(s,\tau{})} \chi^2\mathfrak{h}^2[\psi^2 + (\partial{}_{\hat{y}}^2\psi{})^2 + (\partial{}_{\hat{z}}^2\psi{})^2 + (\partial{}_{\hat{y}}\partial{}_{\hat{z}}\psi{})^2]\dd{}\hat{y}\dd{}\hat{z}\dd{}\omega{}  \\
&\lesssim e^{3\cosh x_0}\cosh x_0\Bigl(\mathcal{E}_T[\partial{}_\tau^{\le 1}\psi{}](s,\tau{}) + \int _{\mathcal{H}(s,\tau{})} (P\psi{})^2\tanh x\dd{}\mu{}\Bigr).
\end{split}
\end{equation}
We conclude by Sobolev embedding on \(\R^2\) (in \((\hat{y},\hat{z})\)
coordinates) that
\begin{equation}\label{elliptic-near-bubble-prep-10}
\sup_{(x,\theta{})\in \mathcal{H}(s,\tau{})\cap \set{x\le x_0/2}}\int _{S^2}\abs{\psi{}}^2(\tau{},x,\omega{},\theta{})\dd{}\omega{}\lesssim_{x_0} \mathcal{E}_T[\partial{}_\tau^{\le 1}\psi{}](s) + \int _{\mathcal{H}(s,\tau{})} (P\psi{})^2\tanh x \dd{}\mu{},
\end{equation}
To obtain \zcref{elliptic-near-1} from
\zcref{elliptic-near-bubble-prep-10}, use Sobolev embedding on \(S^2\).

\step{Step 2: Proof of
\zcref{elliptic-near-bubble-equation-dx-dtheta}.} Apply
the result of Step 1 to \(\widehat{\partial}\psi{}\) in place of \(\psi{}\)
(where \(\widehat{\partial}\in
\set{\partial{}_{\hat{y}},\partial{}_{\hat{z}}}\)) and use the formula
\zcref{dx-in-terms-of-dxhat} expressing \(\partial_x\) and \(\partial_\theta{}\) in
terms of \(\widehat{\partial}\), noting in particular that \(\mathfrak{f}(x)\)
and \(\mathfrak{f}'(x)\) are bounded in a finite-\(x\) region and
\(\mathfrak{f}(x)\sim \tanh x\) near \(\set{x=0}\).
\end{proof}
\subsubsection{Estimates for \texorpdfstring{\(\U(1)\)}{U(1)}--symmetric scalar fields}
In this section, we prove an estimate for \(\partial_x\psi_0\) near the bubble. This
estimate will be important for our proof of global existence for semilinear
equations in \(\U(1)\)-symmetry in \zcref{sec:semilinear-axisymmetric}, because it only uses
the vector fields \(\mathring{\mathfrak{D}}\), and not the more general set
\(\mathfrak{D}\) as in \zcref{elliptic-near}.
\begin{proposition}[Estimates for \(\partial_x\psi_0\) near the bubble]
Let \(\psi{}\in C^\infty(\mathcal{M})\). We have
\begin{equation}
\norm{\partial_x\psi_0}_{L^\infty(\mathcal{H}(s)\cap \set{x\le 1})}^2\lesssim \mathcal{E}_T[(\mathring{\mathfrak{D}}^{\le 3}\psi{})_0](s)  +\int _{\mathcal{H}(s)\cap \set{x\le 1}}\abs{(P\mathring{\mathfrak{D}}^{\le 2}\psi{})_0}^2 \tanh x\dd{}\mu{}
\end{equation}
where \(\mathring{\mathfrak{D}}\in \set{\partial_\tau{},\Omega{}}\) is a commutator vector field (see
\zcref{commutators}). Moreover, the same estimate holds with
\(\mathcal{E}_T[\cdot ](s)\) replaced by \(E_T[\cdot ](\tau{})\) and
\(\mathcal{H}(s)\) replaced with \(\Sigma{}(\tau{})\).
\label{dxpsi0-near}
\end{proposition}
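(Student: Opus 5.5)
We need to bound $\partial_x\psi_0$ near $x=0$ using only $\partial_\tau$, $\Omega$, and no Cartesian commutators. The plan is to treat the $\U(1)$-symmetric equation as an ODE in $x$ with a regular singular point at the bubble, and to integrate it from $x=0$.

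We may assume $\psi=\psi_0$, since $\mathring{\mathfrak{D}}$ commutes with the averaging. Rearranging \zcref{P-x-expression} gives
\begin{equation*}
\frac{1}{\tanh x}\partial_x\bigl(\tanh x\,\partial_x\psi\bigr)=\partial_x^2\psi+\frac{1}{\cosh x\sinh x}\partial_x\psi = G,
\end{equation*}
where
\begin{equation*}
G\coloneqq P\psi+\partial_\tau^2\psi+\frac{1}{\cosh^2x}\psi-\frac{1}{\cosh^2\tau}\Lapl_{S^2}\psi .
\end{equation*}
The identity holds because $(\tanh x)'/\tanh x=1/(\cosh x\sinh x)$. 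Since $\psi$ is smooth at the bubble and $\U(1)$-symmetric, $\tanh x\,\partial_x\psi\to 0$ as $x\to 0$. Integrating on a fixed slice $\Sigma(\tau)$ therefore gives, for $x_1\le 1$,
\begin{equation*}
\partial_x\psi(x_1)=\frac{1}{\tanh x_1}\int_0^{x_1}\tanh x\, G\,\dd x .
\end{equation*}
Cauchy--Schwarz then yields
\begin{equation*}
\abs{\partial_x\psi(x_1)}^2\le \frac{1}{\tanh^2x_1}\Bigl(\int_0^{x_1}\tanh x\,\dd x\Bigr)\Bigl(\int_0^{x_1}G^2\tanh x\,\dd x\Bigr)\lesssim \int_0^{1}G^2\tanh x\,\dd x ,
\end{equation*}
since $\int_0^{x_1}\tanh x\,\dd x\sim x_1^2$ and $\tanh^2 x_1\sim x_1^2$. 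This is exactly the weight $\tanh x\,\dd x$ appearing in the energies, so no $\hat y,\hat z$ derivatives are needed.

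Next we average over angles and control $G$ in $L^2$. We integrate the pointwise bound over $S^2$, after first applying it to $\Omega^{\le 2}\psi$ so that Sobolev embedding on $S^2$ can be used at the end. The four terms of $G$ are handled as follows:
\begin{itemize}
\item The term $(\partial_\tau^2\psi)^2$ is bounded by $E_T[\partial_\tau\psi]$.
\item The term $\cosh^{-4}x\,\psi^2$ is part of $E_T$.
\item The term $\cosh^{-4}\tau(\Lapl_{S^2}\psi)^2$ is controlled by $E_T[\Omega\psi]$.
\item The term $(P\psi)^2$ gives the inhomogeneous contribution.
\end{itemize}
After commuting with $\Omega^{\le2}$ (which commutes with $P$), this costs $\mathring{\mathfrak D}^{\le 3}$ in the energy and $\mathring{\mathfrak D}^{\le2}$ in $P$, matching the statement. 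This gives the $\Sigma(\tau)$ version directly.

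For the hyperboloidal version, condition \zcref{conditions-on-h-at-0} is used: $h=1$ on $\set{x\le1}$, so $s=\tau+C$ there. Hence $\mathcal H(s)\cap\set{x\le1}=\Sigma(\tau)\cap\set{x\le1}$, and $\partial_x$ is tangent to it. The same integration applies, and the resulting integrals over $\set{x\le1}$ are bounded by $\mathcal E_T$, because $L$ and $\underline L$ with weight $\langle x\rangle^{-2}\sim1$ control $\partial_\tau$ and $\partial_x$ there.

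The main obstacle is the boundary behaviour at $x=0$. We must justify that $\tanh x\,\partial_x\psi_0\to0$, which follows from smoothness of $\psi$ in Cartesian coordinates. We must also check that $\partial_\tau^2\psi$ is genuinely controlled by $E_T[\partial_\tau\psi]$ in the $\tanh x$-weighted $L^2$ norm. Both appear routine, and the regular-singular ODE structure does the real work.
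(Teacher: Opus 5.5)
Your proposal is correct and follows the paper's own proof: the paper also writes the $\mathrm{U}(1)$-symmetric equation as $\partial_x(\tanh x\,\partial_x\psi_0)=\bigl[(P\psi)_0+\partial_\tau^2\psi_0+\cosh^{-2}x\,\psi_0-\cosh^{-2}\tau\,\Lapl_{S^2}\psi_0\bigr]\tanh x$, integrates from the bubble with no boundary term because $\partial_x\psi_0$ is bounded there, and applies Cauchy--Schwarz using $\int_0^{x_0}\tanh x\,\dd{}x\sim\tanh^2x_0$. It then integrates over $S^2\times S^1$, bounds the terms by the $\partial_\tau$-energies of $\psi$, $\partial_\tau\psi$, $\Omega\psi$ plus the $P$-term, and finishes with Sobolev embedding on $S^2$, using your same observation that $\mathcal{H}(s)\cap\set{x\le 1}$ lies in a single constant-$\tau$ slice.
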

\begin{proof}
First, note that \(\mathcal{H}(s)\) and \(\Sigma{}(\tau{})\) are both constant-\(\tau{}\)
hypersurfaces in \(\set{x\le 1}\), so the second statement follows from the
proof of the first. Fix \(s_0\ge 1\) and \(x_0\le 1\). Let \(\tau_0\) be the
\(\tau{}\)-value associated to the points in \(\set{s=s_0}\cap \set{x=x_0}\).
Since \(x_0\le 1\), the set \(\set{0\le x\le x_0}\cap \set{\tau{}=\tau_0}\) is
contained in \(\mathcal{H}(s_0)\) (see \zcref{conditions-on-h-at-0} of
\zcref{hyperboloidal-foliation}).

Rewrite the wave equation \zcref{P-x-expression} for a \(\U(1)\)-symmetric scalar field as
\begin{equation}
\partial{}_x(\tanh x\partial{}_x\psi{}_0) = \Bigl[(P\psi{})_0 + \partial{}_\tau^2\psi_0 + \frac{1}{\cosh ^2x}\psi{}_0 - \frac{1}{\cosh ^2\tau{}}\Lapl _{S^2}\psi{}_0\Bigr]\tanh x.
\end{equation}
Since \(\partial_x\psi_0\) and each term in the brackets on the right-hand side are
bounded as \(x\to 0\), we can integrate in the \(\partial_x\)-direction from \(\set{x=x_0}\) to \(\set{x=0}\),
where \(\tanh x\) vanishes, without boundary term and then use Cauchy--Schwarz
to obtain
\begin{equation}
\begin{split}
&\tanh x_0\abs{\partial{}_x\psi{}_0}(\tau{}_0,x=x_0)\\
&\le \Bigl(\int_0^{x_0} \tanh x\dd{}x\Bigr)^{1/2}\Bigl(\int_0^{x_0}\Bigl[\abs{(P\psi{})_0}^2 + \abs{\partial{}_\tau^2\psi{}_0}^2 + \frac{1}{\cosh ^2x}\abs{\psi{}_0}^2 + \frac{1}{\cosh ^2\tau{}}\abs{\Lapl _{S^2}\psi{}_0}^2\Bigr](\tau{},x)\tanh x\dd{}x\Bigr)^{1/2}.
\end{split}
\end{equation}
After squaring both sides and dividing by
\(\tanh^2 x_0\) (noting that \(x_0\le 1\), so that the integral of \(\tanh x\)
on \([0,x_0]\) is like \(\tanh^2 x_0\)), we obtain
\begin{equation}
\abs{\partial{}_x\psi{}_0}^2(\tau{}_0,x=x_0)\lesssim \int_0^{x_0}\Bigl[\abs{(P\psi{})_0}^2 + \abs{\partial{}_\tau^2\psi{}_0}^2 + \frac{1}{\cosh ^2x}\abs{\psi{}_0}^2 + \frac{1}{\cosh ^2\tau{}}\abs{\Lapl _{S^2}\psi{}_0}^2\Bigr](\tau{}_0,x)\tanh x\dd{}x.
\end{equation}
Integrate over \(S^2\times S^1\) and recall the definition of \(\tau_0\) to obtain
\begin{equation}
\begin{split}
&\int _{S^2\times S^1} \abs{\partial{}_x\psi{}_0}^2(\tau{}=\tau{}_0,x=x_0,\omega{},\theta{})\dd{}\omega{}\dd{}\theta{} \\
&\lesssim  \int _{\mathcal{H}(s_0)\cap \set{x\le 1}} \abs{(P\psi{})_0}^2\tanh x\dd{}\mu{} + \mathcal{E}_T[\psi{}](s_0) + \mathcal{E}_T[\partial{}_\tau{}\psi{}](s_0) + \mathcal{E}_T[\Omega{}\psi{}](s_0).
\end{split}
\end{equation}
Sobolev embedding on \(S^2\) completes the proof.
\end{proof}
\section{Proof of the main results on the linear wave equation}
\label{main-proofs} We now prove each part of \cref{main-theorem-linear-intro}, which
concerns solutions to the linear wave equation. Let \(\varphi{}\in
C^\infty(\mathcal{M})\) solve \(\Box_g\varphi{} = 0\), and let \(\psi{} =
r\varphi{}\). Then \(P\psi{} = 0\), where \(P\) is defined in \zcref{P-def}.
\subsection{Quantitative boundedness and decay, and improved decay for \texorpdfstring{\(\U(1)\)}{U(1)}--symmetric solutions}
In this section, we prove parts \zcref{bounded-intro,U1-improved-decay-intro} of
\zcref{main-theorem-linear-intro}. Recall from \zcref{commutators} that
\(\mathring{\mathfrak{D}}\in \set{\Omega{},\partial_\theta{},\partial_\tau{}}\) is a commutator vector field.

We will show that \(\psi_{\ge 1}\) is bounded in terms of the
\(\partial_\tau{}\)-energy of \(\mathring{\mathfrak{D}}^{\le 3}\psi{}_{\ge
1}\) as follows:
\begin{equation}\label{main-proof-psige1}
\norm{(\cosh x)^{1/2}\psi_{\ge 1}}_{L^\infty(\Sigma{}(\tau{}))}^2\lesssim E_T[\mathring{\mathfrak{D}}^{\le 3}\psi{}_{\ge 1}](0).
\end{equation}
We prove this estimate in \zcref{sec:psige1-bounded}. We
will also prove that, given control of the \(\partial_\tau{}\)-energy of
\(\psi_0\) as well as the \(r^p\)-type energy introduced in \zcref{sec:rp-estimate},
the quantity \(\psi_0\) is not only bounded but enjoys improved decay:
\begin{equation}\label{main-rp-decay}
\norm{\psi{}_0}_{L^\infty(\mathcal{H}(s))}^2\lesssim e^{-cs}\mathcal{E}[\mathring{\mathfrak{D}}^{\le 3}\psi{}_0](0).
\end{equation}
This establishes part \zcref{U1-improved-decay-intro} of \zcref{main-theorem-linear-intro}.
We give the proof in \zcref{main-improved-decay}. The estimate \zcref{main-rp-decay}
applies only in the future of the hyperboloidal hypersurface \(\mathcal{H}(0)\).
In \zcref{main-psi0-bounded}, we extend the boundedness of \(\psi_0\) to the future
of the constant-\(\tau{}\) hypersurface \(\Sigma{}(0)\). Combining this statement with
\zcref{main-proof-psige1}, we obtain part \zcref{bounded-intro} of \zcref{main-theorem-linear-intro}.
\subsubsection{Boundedness of \texorpdfstring{\(\psi{}_{\ge 1}\)}{ψ≥1} in the future of a constant-\texorpdfstring{\(\tau\)}{τ} hypersurface\texorpdfstring{; proof of \zcref{main-proof-psige1}}{}}
\label{sec:psige1-bounded}
To prove \zcref{main-proof-psige1}, we combine \zcref{psige1-far} in the region \(\set{x\ge 1}\)
with \zcref{elliptic-near} and the commutation formula of \zcref{frak-D-commutation} in
the region \(\set{x\le 1}\). Indeed, \zcref{psige1-far} says
\begin{equation}
\norm{(\cosh x)^{1/2}\psi{}_{\ge 1}}_{L^\infty(\Sigma{}(\tau{})\cap \set{x\ge 1})}^2\lesssim E_T[\mathring{\mathfrak{D}}^{\le 3}\psi{}_{\ge 1}](\tau{}),
\end{equation}
while \zcref{elliptic-near-1} of \zcref{elliptic-near} implies
\begin{equation}
\norm{\psi{}_{\ge 1}}_{L^\infty(\Sigma{}(\tau{})\cap \set{x\le 1})}^2\lesssim E_T[\mathring{\mathfrak{D}}^{\le 3}\psi{}](\tau{}) + \int _{\Sigma{}(\tau{})\cap \set{x\le 2}}(P\mathring{\mathfrak{D}}^{\le 2}\psi{})^2\tanh x\dd{}\mu{}\lesssim E_T[\mathring{\mathfrak{D}}^{\le 3}\psi{}](\tau{}),
\end{equation}
where we have used the commutation formula of \zcref{frak-D-commutation} (or, more
precisely, Steps 1 and 2 of the proof there) to estimate
\begin{equation}\label{main-proof-prep}
\int _{\Sigma{}(\tau{})} (P\mathring{\mathfrak{D}}^{\le 2}\psi{}_{\ge 1})^2\tanh x\dd{}\mu{}\lesssim E_T[\mathring{\mathfrak{D}}^{\le 1}\psi{}_{\ge 1}](\tau{}).
\end{equation}
In this way, we obtain the estimate
\begin{equation}
\norm{(\cosh x)^{1/2}\psi_{\ge 1}}_{L^\infty(\Sigma{}(\tau{}))}^2\lesssim E_T[\mathring{\mathfrak{D}}^{\le 3}\psi{}_{\ge 1}](\tau{}).
\end{equation}
By the \(\partial_\tau{}\)-estimate of \zcref{T-estimate-prop}, we conclude that
\zcref{main-proof-psige1}.
\subsubsection{Improved decay for \texorpdfstring{\(\psi_0\)}{ψ0} in the future of a hyperboloidal hypersurface\texorpdfstring{; proof of \zcref{main-rp-decay}}{}}
\label{main-improved-decay} We prove \zcref{main-rp-decay} by combining \zcref{psi0-far} in
the region \(\set{x\ge 1}\) with \zcref{elliptic-near} in the region \(\set{x\le 1}\).
First, \zcref{psi0-far} gives
\begin{equation}
\begin{split}
\norm{\psi{}_0}_{L^\infty(\mathcal{H}(s)\cap \set{x\ge 1})}^2&\lesssim \mathcal{E}[\mathring{\mathfrak{D}}^{\le 2}\psi{}_0](s) + \int _{S^2\times S^1}\abs{\mathring{\mathfrak{D}}^{\le 2}\psi{}_0}^2(s,x=1).
\end{split}
\end{equation}
Here \(\mathcal{H}(s)\) is a leaf of the hyperboloidal foliation introduced in
\zcref{hyperboloidal-foliation}. Estimating the second term on the right-hand side
using \zcref{elliptic-near-bubble-prep-10} applied to
\(\mathring{\mathfrak{D}}^{\le 2}\psi_0\), we obtain
\begin{equation}\label{main-proof-p-1}
\begin{split}
\norm{\psi{}_0}_{L^\infty(\mathcal{H}(s)\cap \set{x\ge 1})}^2&\lesssim \mathcal{E}[\mathring{\mathfrak{D}}^{\le 2}\psi{}_0](s) + \int _{\mathcal{H}(s)}(P\mathring{\mathfrak{D}}^{\le 2}\psi{}_0)^2\tanh x\dd{}\mu{}.
\end{split}
\end{equation}
On the other hand, \zcref{elliptic-near} gives
\begin{equation}\label{main-proof-p-2}
\norm{\psi{}_0}_{L^\infty(\mathcal{H}(s)\cap \set{x\le 1})}^2\lesssim \mathcal{E}[\mathring{\mathfrak{D}}^{\le 3}\psi{}_0](s) + \int _{\mathcal{H}(s)}(P\mathring{\mathfrak{D}}^{\le 2}\psi{}_0)^2\tanh x\dd{}\mu{}.
\end{equation}
As in \zcref{main-proof-prep}, the (proof of the) commutation formula of
\zcref{frak-D-commutation} gives
\begin{equation}\label{main-proof-p-3}
\int _{\mathcal{H}(s)} (P\mathring{\mathfrak{D}}^{\le 2}\psi{}_{0})^2\tanh x\dd{}\mu{}\lesssim \mathcal{E}_T[\mathring{\mathfrak{D}}^{\le 1}\psi{}_{0}](s).
\end{equation}
Combining \zcref{main-proof-p-1,main-proof-p-2,main-proof-p-3}, we obtain
\begin{equation}\label{main-proof-rp-prep-pointwise}
\norm{\psi{}_0}_{L^\infty(\mathcal{H}(s))}^2\lesssim \mathcal{E}[\mathring{\mathfrak{D}}^{\le 3}\psi{}_0](s).
\end{equation}
We now introduce the following key claim.
\begin{lemma}
For \(s_1\le s_2\), we have
\begin{equation}
\mathcal{E}[\mathring{\mathfrak{D}}^{\le k}\psi{}_0](s_2) + \mathcal{B}[\mathring{\mathfrak{D}}^{\le k}\psi{}_0](s_1,s_2) \lesssim \mathcal{E}[\mathring{\mathfrak{D}}^{\le k}\psi{}_0](s_1).
\end{equation}
\label{main-proof-rp-goal}
\end{lemma}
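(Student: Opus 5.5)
The plan is to apply the $r^p$-type estimate \zcref{rp-equation} of \zcref{rp-type-estimate} to each commuted quantity $\mathring{\mathfrak{D}}^{\mathbf{k}}\psi_0$ with $\abs{\mathbf{k}}\le k$, and then sum over $\mathbf{k}$. Each such quantity is $\U(1)$-symmetric, since $(\mathring{\mathfrak{D}}\psi)_0=\mathring{\mathfrak{D}}\psi_0$, so the proposition applies directly. The only thing that must be controlled is the inhomogeneous term
\[
\int_{s_1}^{s_2}\int_{\mathcal{H}(s,\tau)}\sinh x\,\abs{P\mathring{\mathfrak{D}}^{\mathbf{k}}\psi_0}^2\dd{}\mu\dd{}s .
\]
Since $P\psi=0$ and $P$ commutes with averaging over $S^1$, we have $P\psi_0=0$, so this term equals $\int\!\int\sinh x\,\abs{[P,\mathring{\mathfrak{D}}^{\mathbf{k}}]\psi_0}^2$. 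For the restricted set $\mathring{\mathfrak{D}}$, \zcref{frak-D-commutation} says this commutator consists only of term (I):
\[
[P,\mathring{\mathfrak{D}}^{\mathbf{k}}]\psi_0=\frac{1}{\cosh^2\tau}\sum_{\mathbf{k}'<\mathbf{k}}f_{\mathbf{k}',\mathbf{k}}(\tau)\,\Omega\mathring{\mathfrak{D}}^{\mathbf{k}'}\psi_0 ,
\]
with the $f_{\mathbf{k}',\mathbf{k}}$ bounded. The term $\partial_\theta$ acts trivially on $\psi_0$ and does not contribute.

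The key step is to show that the error
\[
\int_{s_1}^{s_2}\int_{\mathcal{H}(s,\tau)}\frac{\sinh x}{\cosh^4\tau}\,\abs{\Omega\mathring{\mathfrak{D}}^{\mathbf{k}'}\psi_0}^2\dd{}\mu\dd{}s
\]
can be absorbed. Since $\Omega$ is a rotation, $\abs{\Omega\phi}\le\abs{\Grad_{S^2}\phi}$. This error is therefore bounded by $\sup_\tau(\cosh\tau)^{-2}$ times the angular part $\int\!\int\frac{\sinh x}{\cosh^2\tau}\abs{\Grad_{S^2}\mathring{\mathfrak{D}}^{\mathbf{k}'}\psi_0}^2$ of the bulk term $\mathcal{B}[\mathring{\mathfrak{D}}^{\mathbf{k}'}\psi_0](s_1,s_2)$. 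That bulk term has an extra factor $(\cosh\tau)^{-2}$ in our favor, which is exactly the room we need. I then argue by induction on $\mathbf{k}$ in the lexicographic order. The base case $\mathbf{k}=0$ is \zcref{rp-equation} with zero right-hand side. For the inductive step, the error for $\mathbf{k}$ involves only $\mathbf{k}'<\mathbf{k}$, so it is bounded by the already-established estimate $\mathcal{B}[\mathring{\mathfrak{D}}^{\mathbf{k}'}\psi_0](s_1,s_2)\lesssim\mathcal{E}[\mathring{\mathfrak{D}}^{\mathbf{k}'}\psi_0](s_1)$. Summing over $\abs{\mathbf{k}}\le k$ gives the lemma, uniformly in $\tau$, and taking the supremum in $\tau$ yields the stated form.

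The main subtlety is that the induction must run in the lexicographic order rather than by total order. This is because the commutator may contain $\Omega\mathring{\mathfrak{D}}^{\mathbf{k}'}$ with $\abs{\mathbf{k}'}+1=\abs{\mathbf{k}}$, so the total order does not decrease. I handle this using \zcref{rearrangement}: it rewrites $\Omega\mathring{\mathfrak{D}}^{\mathbf{k}'}$ as a combination of $\mathring{\mathfrak{D}}^{\mathbf{k}''}$, and I need to check that $\Omega$ has lower lexicographic weight than $\partial_\tau$. Then a commutator arising from $\partial_\tau$ trades one $\partial_\tau$ for angular derivatives, and the multi-index decreases lexicographically. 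Since only finitely many multi-indices are below $\mathbf{k}$, the induction is well-founded.

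A second point to verify is the angular term. If the lexicographic rewrite puts the offending $\Omega\mathring{\mathfrak{D}}^{\mathbf{k}'}$ at the same index as the current $\mathbf{k}$, I would use the bulk term of the current level itself. Because $\sup(\cosh\tau)^{-2}$ is not small near $\tau=0$, absorption there would be handled as in Step 3b of \zcref{rp-type-estimate}: restrict to $\tau\le\tau_0$, which lies inside a finite-$s$ region, and close with Grönwall. I expect the lexicographic induction to make this unnecessary.
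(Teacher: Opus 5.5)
Your proposal is correct and is essentially the paper's argument. You apply \zcref{rp-equation} to each $\mathrm{U}(1)$-symmetric field $\mathring{\mathfrak{D}}^{\mathbf{k}}\psi_0$ and use $P\psi_0=0$, so the only inhomogeneity is the $\partial_\tau$-commutator $(\cosh\tau)^{-2}\Lapl_{S^2}\partial_\tau^{n}(\cdots)$. You bound it by the angular part of $\mathcal{B}$ for a field with one fewer $\partial_\tau$, and induct; the paper does the same bookkeeping as $\Omega^n\mathring{\mathfrak{D}}^{\le k}\to\Omega^{n+1}\mathring{\mathfrak{D}}^{\le k-1}$, terminating at $\Omega^k\psi_0$ where the inhomogeneity vanishes.

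Two small remarks. The Gr\"onwall fallback is never needed. Also, total-order induction fails because $\abs{\mathbf{k}'}=\abs{\mathbf{k}}$ can occur ($\Lapl_{S^2}$ trades one $\partial_\tau$ for two angular derivatives), not because $\abs{\mathbf{k}'}+1=\abs{\mathbf{k}}$; your lexicographic induction handles this case anyway.
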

Given \zcref{main-proof-rp-goal}, we can conclude the improved decay for \(\psi_0\).
\begin{proof}[Proof of \cref{main-rp-decay} given \cref{main-proof-rp-goal}]
The bulk term controls the integrated energy by \zcref{rp-K-bound}, so we obtain
\begin{equation}
\mathcal{E}[\mathring{\mathfrak{D}}^{\le k}\psi{}_0](s_2) + \int_{s_1}^{s_2} \mathcal{E}[\mathring{\mathfrak{D}}^{\le k}\psi{}_0](s)\dd{}s \lesssim \mathcal{E}[\mathring{\mathfrak{D}}^{\le k}\psi{}_0](s_1),
\end{equation}
and so it follows from \zcref{exponential-decay} (with
\(\mathcal{E}[\mathring{\mathfrak{D}}^{\le k}\psi{}_0](\cdot )\) playing the role of
\(f\)) that there exists \(c > 0\) such that
\begin{equation}\label{main-proof-rp-prep-2}
\mathcal{E}[\mathring{\mathfrak{D}}^{\le k}\psi{}_0](s)\lesssim e^{-cs}\mathcal{E}[\mathring{\mathfrak{D}}^{\le k}\psi{}_0](0).
\end{equation}
Substituting \zcref{main-proof-rp-prep-2} with \(k=3\) into
\zcref{main-proof-rp-prep-pointwise}, we conclude the decay statement
\zcref{main-rp-decay}.
\end{proof}
It remains to prove \zcref{main-proof-rp-goal}.
\begin{proof}[Proof of \cref{main-proof-rp-goal}]
Let \(k\ge 0\) and \(n\ge 0\). The
\(r^p\)-type estimate of \zcref{rp-type-estimate} applied to
\(\mathring{\mathfrak{D}}^{\le k}\psi_0\) (which is \(\U(1)\)-symmetric since
\(\psi_0\) is and \(\mathring{\mathfrak{D}}\) preserves \(\U(1)\)-symmetry) gives
\begin{equation}\label{main-proof-rp-estimate}
\mathcal{E}[\Omega^n\mathring{\mathfrak{D}}^{\le k}\psi{}_0](s) + \mathcal{B}[\Omega^n\mathring{\mathfrak{D}}^{\le k}\psi{}_0](0,s) \lesssim \mathcal{E}[\Omega^n\mathring{\mathfrak{D}}^{\le k}\psi{}_0](0) + \int_0^s \int _{\mathcal{H}(s')}\sinh x\abs{P\Omega^n\mathring{\mathfrak{D}}^{\le k}\psi{}_0}^2\dd{}\mu{}\dd{}s.
\end{equation}
Using Steps 1 and 2 of the proof of the commutation formula of
\zcref{frak-D-commutation} and then the estimate \zcref{main-proof-rp-estimate}, for
\(k\ge 1\) one has
\begin{equation}
\begin{split}
&\int_0^s \int _{\mathcal{H}(s')}\sinh x\abs{P\Omega^n\mathring{\mathfrak{D}}^{\le k}\psi{}_0}^2\dd{}\mu{}\dd{}s \lesssim \int_0^s \int _{\mathcal{H}(s')}\frac{\sinh x}{\cosh^2 \tau{}}\abs{\Grad _{S^2}\Omega^{n+1}\mathring{\mathfrak{D}}^{\le k-1}}\dd{}\mu{}\dd{}s \\
&\lesssim \mathcal{B}[\Omega^{n+1}\mathring{\mathfrak{D}}^{\le k-1}](0,s) \lesssim \mathcal{E}[\mathring{\mathfrak{D}}^{\le k+n}\psi{}_0](0) + \int_0^s \int _{\mathcal{H}(s')}\sinh x\abs{P\Omega^{n+1}\mathring{\mathfrak{D}}^{\le k-1}\psi{}_0}^2\dd{}\mu{}\dd{}s.
\end{split}
\end{equation}
By induction on \(k\), we conclude that
\begin{equation}
\int_0^s \int _{\mathcal{H}(s')}\sinh x\abs{P\mathring{\mathfrak{D}}^{\le k}\psi{}_0}^2\dd{}\mu{}\dd{}s \lesssim \mathcal{E}[\mathring{\mathfrak{D}}^{\le k}\psi{}_0](0) + \int_0^s \int _{\mathcal{H}(s')}\sinh x\abs{P\Omega^{k}\psi{}_0}^2\dd{}\mu{}\dd{}s.
\end{equation}
Since \(P\) commutes with \(\Omega{}\) and \(P\psi_0 = (P\psi{})_0 = 0\), we conclude that
\begin{equation}\label{main-proof-rp-estimate-1}
\int_0^s \int _{\mathcal{H}(s')}\sinh x\abs{P\mathring{\mathfrak{D}}^{\le k}\psi{}_0}^2\dd{}\mu{}\dd{}s \lesssim \mathcal{E}[\mathring{\mathfrak{D}}^{\le k}\psi{}_0](0).
\end{equation}
for \(k\ge 0\). Applying \zcref{main-proof-rp-estimate} with \(n=0\) and substituting
\zcref{main-proof-rp-estimate-1}, we obtain \zcref{main-proof-rp-goal}.
\end{proof}
\subsubsection{Boundedness of \texorpdfstring{\(\psi_0\)}{ψ0} in the future of a constant-\texorpdfstring{\(\tau\)}{τ} hypersurface}
\label{main-psi0-bounded}
Strictly speaking, we have only proven the boundedness of \(\psi_0\) in the future
of a hyperboloidal hypersurface \(\mathcal{H}(0)\). To obtain the estimate in
the future of the constant-\(\tau{}\) hypersurface \(\Sigma{}(0)\), it remains
to estimate \(\psi_0\) in a region outside a light cone. We will only sketch the
proof. One can repeat the argument of \zcref{T-estimate-prop} with the multiplier
\(w(-u)\partial_\tau{}\) with \(w(q)\) a smooth increasing function that equals
\(1\) when \(q < 0\) and equals, say \(q^2\) for \(q\ge 1\) (the important thing
is that \(1/w(q)\) is integrable for \(q\) large). This provides control of a
weighted \(\partial_\tau{}\)-energy \(E_{T,w}\) whose integrand is that of the
\(\partial_\tau{}\)-energy \(E_T\) multiplied by \(w(-u)\) (as well as weighted
fluxes \(F_{T,w}\) along ingoing null cones analogous to \(F_T\)). Then the
argument used to prove \zcref{dx-far} can be used to estimate pointwise the quantity
\(w(-u)\partial{}_x\psi{}_0\) by the \(w\)-weighted \(\partial_\tau{}\) energy
(and associated fluxes along ingoing null cones) of
\(\mathring{\mathfrak{D}}^{\le 4}\psi{}_0\). Since \(u =
\frac{1}{2}(\tau{}-x)\), on a constant-\(\tau{}\) hypersurface
\(\Sigma{}(\tau{})\) this provides an estimate for \(\partial_x\psi_0\) that one
can integrate in \(x\) to \(\tau{}\sim x\), where \(\Sigma{}(\tau{})\) meets a
hyperboloidal hypersurface, to obtain an estimate for \(\psi_0\).
\subsection{Lack of improved decay outside \texorpdfstring{\(\U(1)\)}{U(1)}-symmetry}
\label{no-decay} In this section, we prove part \zcref{time-periodic-intro} of
\zcref{main-theorem-linear-intro}.
\begin{proposition}
There exists a spherically symmetric solution \(\psi{}\in C^\infty(\mathcal{M})\) to \(P\psi{} =
0\) with vanishing \(\U(1)\)-symmetric part that is periodic in \(\tau{}\) and
satisfies \(E_T[\psi{}](0) < \infty\).
\label{non-decaying}
\end{proposition}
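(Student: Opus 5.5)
We will construct the solution by separation of variables. We look for \(\psi{}\) that is spherically symmetric (so \(\Lapl_{S^2}\psi{}=0\)) and supported on a single \(S^1\)-mode. The ansatz is
\begin{equation*}
\psi{}(\tau{},x,\theta{}) = \cos(\lambda{}\tau{})\,e^{im\theta{}}\,w(x),\qquad m\in \Z\setminus\set{0},
\end{equation*}
or its real part \(\cos(\lambda\tau)\cos(m\theta)w(x)\). Since \(m\neq 0\), the \(\U(1)\)-symmetric part vanishes automatically. Because the \(\cosh^{-2}\tau\) coefficient multiplies only \(\Lapl_{S^2}\), spherical symmetry removes all explicit \(\tau\)-dependence from \(P\) in \zcref{P-x-expression}. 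The equation \(P\psi=0\) then reduces to the ODE
\begin{equation*}
\mathcal{A}_m w \coloneqq{} -w'' - \frac{1}{\cosh x\sinh x}w' + \frac{1}{\cosh^2 x}w + m^2\frac{\cosh^2 x}{\tanh^2x}w = \lambda^2 w .
\end{equation*}
Such a \(\psi\) is \(2\pi/\lambda\)-periodic in \(\tau\), so it suffices to produce one eigenpair \((\lambda^2,w)\) with \(w\) regular at the bubble and with finite energy.

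\textbf{Step 1: spectral setup.} \(\mathcal{A}_m\) is symmetric on \(L^2((0,\infty),\tanh x\,dx)\), since \((\tanh x)^{-1}\partial_x(\tanh x\,\partial_x) = \partial_x^2 + (\cosh x\sinh x)^{-1}\partial_x\). Its quadratic form is
\begin{equation*}
Q(w)=\int_0^\infty\Bigl[(w')^2 + \frac{w^2}{\cosh^2x} + m^2\frac{\cosh^2x}{\tanh^2x}w^2\Bigr]\tanh x\,dx,
\end{equation*}
which is exactly the \(E_T\) density of \zcref{intro-dt-energy} without the \(\partial_\tau\) term. Equivalently, in terms of \(W=(\tanh x)^{1/2}w\) and \zcref{P-Psi-equation}, we get the Schrödinger operator \(-\partial_x^2+V_m\) on \(L^2(dx)\), with
\begin{equation*}
V_m = m^2\frac{\cosh^2x}{\tanh^2x} - \frac{1}{4\cosh^2x\sinh^2x} .
\end{equation*}
Near \(x=0\) we have \(V_m\sim (m^2-1/4)x^{-2}\), which is limit-point since \(m^2\ge 1\). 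As \(x\to\infty\), \(V_m\to\infty\) like \(e^{2x}\). This is the confining potential.

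\textbf{Step 2: discrete spectrum.} Since \(V_m\) is bounded below and tends to \(+\infty\) at infinity, the Friedrichs extension has compact resolvent. One way to see this is that the form domain embeds compactly into \(L^2\), because the weight \(m^2\cosh^2x/\tanh^2 x\) diverges. Alternatively, one can invoke Molchanov/Weyl-type criteria. Hence there is a lowest eigenvalue \(\lambda_1^2>0\), obtained by minimizing \(Q(w)/\|w\|^2\). Positivity holds because \(Q(w)\ge m^2\|w\|^2\). The minimizer \(w\) satisfies \(Q(w)<\infty\) and, by exponential decay of eigenfunctions for a confining potential, decays rapidly together with its derivatives. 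This gives \(E_T[\psi](0)<\infty\).

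\textbf{Step 3: smoothness on \(\mathcal{M}\).} This is the main obstacle: we must check that \(\psi\) extends smoothly across the bubble \(\set{x=0}\), i.e. is smooth in \((\hat y,\hat z)\). I would use Frobenius at \(x=0\). The indicial equation of \(w''+x^{-1}w'-m^2x^{-2}w\approx0\) gives exponents \(\pm|m|\). Finite energy excludes \(x^{-|m|}\), and the regular solution has the form \(x^{|m|}F(x^2)\) with \(F\) analytic, since the coefficients are even in \(x\). Combined with \(e^{im\theta}\), and using \(\mathfrak{f}(x)\sim x\) with \(\mathfrak{f}\) odd, this yields a smooth function of \(\hat y+i\hat z\) times a smooth function of \(x^2\). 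Alternatively, and more robustly, \(\psi\) is a finite-energy solution of \(P\psi=0\) written in the Cartesian form \zcref{P-cart-expression}. Elliptic regularity for \(\mathcal{P}\) near the bubble, as in the proof of \zcref{elliptic-near}, then gives smoothness, because \(\partial_\tau^2\psi=-\lambda^2\psi\) is as regular as \(\psi\) and we can bootstrap.

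Finally, we take the real part and verify \(P\psi=0\) directly. Periodicity in \(\tau\) is immediate from the ansatz.
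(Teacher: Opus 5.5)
Your proposal is correct and follows essentially the same route as the paper. You restrict to a single non-zero \(S^1\)-mode, show that the confining Schrödinger operator \(-\partial_x^2+V_m\) has discrete spectrum (you use a variational/compact-resolvent argument where the paper cites classical Sturm--Liouville theory), and use a Frobenius expansion at \(x=0\) in even powers to get smoothness across the bubble. Two small sharpenings of the paper's argument are welcome: you note that \(E_T[\psi](0)\) is exactly the quadratic form \(Q(w)\), which makes finiteness immediate, and that \(\lambda^2\ge m^2>0\), so the solution genuinely oscillates in \(\tau\) rather than being \(\tau\)-independent.
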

\begin{proof}
Let \(n\in \Z_{\ge 0}\) and consider the Schrödinger
operator \(\mathcal{A}_n\) acting on \(L^2(0,\infty)\) by
\begin{equation}
\mathcal{A}_nf(x) = -f''(x) + V_nf(x), \qquad V_n(x) = -\frac{1}{4\cosh ^2x\sinh ^2x} + n^2\frac{\cosh ^2x}{\tanh^2 x}.
\end{equation}
Recall the expression \zcref{P-def} for the operator \(P\) and observe that, if
\(\Psi_{n,\lambda{}}\) solves the eigenvalue problem
\begin{equation}\label{eigenvalue-problem}
(\mathcal{A}_n - \lambda^2)\Psi{}_{n,\lambda{}} = 0\textnormal{ for }\lambda{}\in \R,
\end{equation}
then \(\psi{}_{n,\lambda{}}\) defined by
\begin{equation}
\psi{}_{n,\lambda{}}(\tau{},x,\omega{},\theta{}) = e^{-i\lambda{}\tau{}}(\tanh x)^{-1/2}\Psi{}_{n,\lambda{}}(x)e^{-in\theta{}}
\end{equation}
solves \(P\psi_{n,\lambda{}} = 0\). It is clear that \(\psi_{n,\lambda{}}\) is periodic in \(\tau{}\), is
spherically symmetric, and has vanishing \(\U(1)\)-symmetric part when \(n\ge 1\). It
remains to construct \(\Psi_{n,\lambda{}}\) solving \zcref{eigenvalue-problem} and
show that the associated quantity \(\psi_{n,\lambda{}}\) is smooth and has finite
\(E_T\)-energy.

The key claim is that, when \(n\ge 1\), the operator \(\mathcal{A}_n\) has purely
discrete spectrum contained in \(\R_{\ge 0}\), and its eigenfunctions form an
orthonormal basis. Moreover, the eigenfunctions are in the space
\begin{equation}\label{X-space}
X_n = C^\infty((0,\infty))\cap \dot{H}^1([0,\infty),\dd{}x)\cap L^2([0,\infty),V_n(x)\dd{}x).
\end{equation}
In particular, there exists a solution to \zcref{eigenvalue-problem} for some
\(\lambda{}\in \R\). Since \(V_n\in C^\infty((0,\infty))\) and \(V_n(x)\sim
cx^{-2}\) as \(x\to 0\) for \(c=n^2-1/4\ge 3/4\) and \(V_n(x)\to \infty\) as
\(x\to \infty\), this follows from classical results in Sturm--Liouville theory
(see for example \cite[Sec.~II]{titchmarsh1946eigenfunction}). Note that
\zcref{eigenvalue-problem} is a singular Sturm--Liouville problem posed on
\([0,\infty)\) for which both endpoints are in the so-called limit-point case.

We now show that \(\psi_{n,\lambda{}}\) is smooth (as a function on \(\mathcal{M}\)) and
that \(\psi_{n,\lambda{}}\) has finite \(E_T\)-energy. By the method of Frobenius
applied to the regular singular point of \zcref{eigenvalue-problem} at \(x = 0\)
(see \cite[Sec.~5.4]{Olver1997-en} for a reference), the quantity
\(x^{-(2n+1)/2}\Psi{}_{n,\lambda{}}(x)\) has a power series expansion in \(x\)
near \(x=0\). This power series contains only even powers of \(x\), since the
same is true for the power series of \(x^2V_n(x)\) near \(x = 0\). In particular,
\(\Psi_{n,\lambda{}}\) defines an element of \(x^{n +
1/2}C^\infty(\mathcal{M})\), and so \(\psi{}_{n,\lambda{}}\) defines an element
of \(x^ne^{-in\theta{}}C^\infty(\mathcal{M})\subset C^\infty(\mathcal{M})\).
That is, \(\psi_{n,\lambda{}}\) is smooth on \(\mathcal{M}\). Next, we show that
\(\psi_{n,\lambda{}}\) has finite \(E_T\)-energy. We compute
\begin{equation}
\begin{gathered}
\abs{\partial{}_\tau{}\psi{}_{n,\lambda{}}}\le  \lambda{}(\tanh x)^{-1/2}\abs{\Psi{}_{n,\lambda{}}},\qquad \abs{\partial{}_x\psi{}_{n,\lambda{}}}\lesssim (\tanh x)^{-3/2}\abs{\Psi{}_{n,\lambda{}}} + (\tanh x)^{-1/2}\abs{\partial{}_x\Psi{}_{n,\lambda{}}} \\
\abs{\partial{}_\theta{}\psi{}_{n,\lambda{}}}\le  n(\tanh x)^{-1/2}\abs{\Psi{}_{n,\lambda{}}}
\end{gathered}
\end{equation}
It follows that
\begin{equation}
\begin{split}
E_T[\psi_{n,\lambda{}}](0) &= \int _{\Sigma{}(0)} \Bigl[\abs{\partial{}_\tau{}\psi{}_{n,\lambda{}}}^2 + \abs{\partial{}_x\psi{}_{n,\lambda{}}}^2 + \frac{1}{\cosh ^2x}\abs{\psi{}_{n,\lambda{}}}^2 + \frac{\cosh ^2x}{\tanh ^2x}\abs{\partial{}_\theta{}\psi{}_{n,\lambda{}}}^2\Bigr]\tanh x\dd{}\mu{} \\
&\lesssim_{n,\lambda{}} \int _{\Sigma{}(0)} (\cosh x)^2(\tanh x)^{-3}\Psi{}_{n,\lambda{}}^2 + (\tanh x)^{-1}(\partial{}_x\Psi{}_{n,\lambda{}})^2\dd{}\mu{} \\
\end{split}
\end{equation}
As discussed above, the Frobenius method shows that \(\Psi_{n,\lambda{}}\sim x^{n+1/2}\) and
\(\partial_x\Psi_{n,\lambda{}}\sim x^{n-1/2}\) near \(x = 0\). In particular, the
above integrand is integrable near \(x = 0\). The integrability near \(x =
\infty\) follows from the fact that \(\Psi_{n,\lambda{}}\) belongs to the space
\(X_n\) defined in \zcref{X-space}. We conclude that the \(E_T\)-energy of
\(\psi_{n,\lambda{}}\) is finite.
\end{proof}
\subsection{Existence of the radiation field}
\label{linear-radiation-field}
In this section, we prove part \zcref{radiation-field-intro} of
\zcref{main-theorem-linear-intro}. By the estimate \zcref{main-proof-psige1} and the
fact that \(x\to \infty\) at null infinity, we know that \(\psi_{\ge 1}\)
vanishes at null infinity. Thus the limit of \(\psi{}\) at null infinity, if it
exists, is equal to that of \(\psi_0\). The existence of the limit \(\lim_{x\to
\infty}\psi{}_0(s,x,\omega{},\theta{})\) for each fixed
\((s,\omega{},\theta{})\) (where \(s\) is the ``hyperboloidal time'' indexing the
foliation \(\mathcal{H}(s)\) introduced in \zcref{hyperboloidal-foliation}), follows
from the finiteness of \(\mathcal{E}[\mathring{\mathfrak{D}}^{\le
3}\psi{}_0](s)\), using Sobolev embedding and the integrability of \((\sinh
x)(L\psi{})^2\) on \(\mathcal{H}(s)\cap \set{x\ge 1}\) as in the proof of
\zcref{psi0-far}. By \zcref{main-proof-rp-goal}, the desired finiteness of
\(\mathcal{E}[\mathring{\mathfrak{D}}^{\le 3}\psi{}_0](s)\) follows from the
finiteness of \(\mathcal{E}[\mathring{\mathfrak{D}}^{\le 3}\psi{}_0](0)\).

To see that the radiation field can be non-vanishing, we consider a solution
\(\psi{}\) arising from non-zero spherically symmetric and \(\U(1)\)-symmetric
data. We integrate the identity \zcref{T-estimate-1} applied to \(\psi\) over the
region \(\set{0\le s\le s_0}\cap \set{v\le v_0}\) (noting as in the surrounding
discussion there that the boundary terms at the bubble vanish). This yields an
identity
\begin{equation}
\mathcal{E}_T[\psi{}](s_0;v_0) + \mathbf{F}_T[\psi{}](s_0;v_0) = \mathcal{E}_T[\psi{}](0;v_0).
\end{equation}
Here \(\mathcal{E}_T[\psi{}](s;v_0)\) measures the \(\partial_\tau{}\)-energy flux through
\(\mathcal{H}(s)\cap \set{v\le v_0}\) and \(\mathbf{F}_T\) measures the
\(\partial_\tau{}\)-energy flux through \(\set{0\le s \le s_0}\cap
\set{v=v_0}\). The latter is coercive since \(\set{v=v_0}\) is null. We note
that an angular bulk term is not present because \(\psi{}\) is spherically
symmetric. Taking \(v_0\to \infty\), we obtain
\begin{equation}\label{linear-wave-eq-1}
\mathcal{E}_T[\psi{}](s_0) + \mathbf{F}_{T,\mathcal{I}^+}[\psi{}](s_0) = \mathcal{E}_T[\psi{}](0)
\end{equation}
where now \(\mathbf{F}_{T,\mathcal{I}^+}[\psi{}](s_0)\) measures the flux through the portion of
null infinity meeting \(\set{0\le s \le s_0}\) of the \(\partial_\tau{}\)-energy of the
radiation field associated to \(\psi{}\). By \zcref{main-proof-rp-prep-2}, the first term
on the left-hand side of \zcref{linear-wave-eq-1} vanishes as \(s_0\to \infty\). We
therefore obtain an identity
\begin{equation}
\mathbf{F}_{T,\mathcal{I}^+}[\psi{}] = \mathcal{E}_T[\psi{}](0),
\end{equation}
where \(\mathbf{F}_{T,\mathcal{I}^ + }[\psi{}]\) measures the full flux through null infinity
of the \(\partial_\tau{}\)-energy of the radiation field associated to
\(\psi{}\). In particular, we have
\begin{equation}
\mathbf{F}_{T,\mathcal{I}^+}[\psi{}] = c\int_0^\infty \abs{\underline{L}\psi{}_{\mathcal{I}^+}}^2\dd{}u
\end{equation}
for some positive normalization constant \(c > 0\), and where
\(\psi{}_{\mathcal{I}^+}\) is the radiation field associated to \(\psi{}\). We
conclude that if the initial data is non-zero, then the radiation field cannot
identically vanish.
\subsection{Vanishing at the bubble for solutions with vanishing \texorpdfstring{\(\U(1)\)}{U(1)}-symmetric part}
In this section, we prove part \zcref{bubble-vanishing-intro} of
\zcref{main-theorem-linear-intro}.
\begin{lemma}
Let \(\psi{}\in C^\infty(\mathcal{M})\) solve \(P\psi{} = 0\). Suppose that \(E_T[\psi_{\ge 1}](0) < \infty\). Then \(\psi_{\ge 1}\) vanishes at \(\set{x=0}\).
\end{lemma}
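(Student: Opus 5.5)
The idea is that the energy $E_T$ controls $\psi_{\ge 1}$ with the singular weight $(\cosh x)^2/\tanh^2 x \cdot \tanh x \sim x^{-1}$ near $\{x=0\}$. This weight is not integrable at $x=0$, so a smooth function with a nonzero limit there cannot have finite energy. I will therefore argue by contradiction on a single time slice, using the Poincaré inequality on $S^1$ to convert the $\partial_\theta$-term into control of $\psi_{\ge 1}$ itself.

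First I fix $\tau\ge 0$. By the $\partial_\tau$-estimate \eqref{T-estimate-3} applied to $\psi_{\ge 1}$, we have $E_T[\psi_{\ge 1}](\tau)\lesssim E_T[\psi_{\ge 1}](0)<\infty$. This holds because $P\psi_{\ge 1} = (P\psi)_{\ge 1}=0$: $P$ commutes with $\partial_\theta$, hence with the projection onto nonzero modes.

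Next, by the Poincaré inequality on $S^1$ (as in the proof of \zcref{psige1-far}),
\begin{equation*}
\int_{\Sigma(\tau)\cap\{x\le 1\}} \frac{1}{\tanh x}\,\psi_{\ge 1}^2 \dd{}\mu \lesssim \int_{\Sigma(\tau)} \frac{\cosh^2 x}{\tanh^2 x}(\partial_\theta\psi_{\ge1})^2 \tanh x\dd{}\mu \le E_T[\psi_{\ge 1}](\tau)<\infty .
\end{equation*}

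Now I use smoothness. Since $\psi\in C^\infty(\mathcal{M})$, the average $\psi_0$ over $\theta$ (rotation about the origin of the $\R^2_{\hat y,\hat z}$ factor) is smooth, so $\psi_{\ge 1}$ is smooth and continuous up to the bubble. At the bubble the $\theta$-orbit degenerates to a point, so the limit $g(\tau,\omega) := \lim_{x\to 0}\psi_{\ge 1}(\tau,x,\omega,\theta)$ exists and is independent of $\theta$. Suppose $g(\tau,\omega_0)\neq 0$. By continuity, $|\psi_{\ge1}|\ge c>0$ on $\{x\le\delta\}\times U\times S^1$ for some neighbourhood $U$ of $\omega_0$. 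The left-hand side above is then at least $c^2\int_0^\delta \frac{dx}{\tanh x}=\infty$, a contradiction. Hence $g\equiv 0$, i.e.\ $\psi_{\ge1}$ vanishes on $\{x=0\}\cap\Sigma(\tau)$. Since $\tau\ge 0$ was arbitrary, and the past can be handled by time reversal or the same argument, this gives the lemma.

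An alternative route to the same conclusion is to note directly that $(\psi_{\ge1})|_{x=0}$ equals $\psi|_{\mathcal{B}}-(\psi_0)|_{\mathcal{B}}$ and that $\psi_0|_{\mathcal{B}}=\psi|_{\mathcal{B}}$, because the $\theta$-average of a continuous function at a fixed point of the rotation equals its value there. That gives vanishing even without the energy hypothesis. I will mention this but present the energy argument, since it reflects the intended mechanism. The only delicate point is the continuity of $\psi_{\ge1}$ up to $x=0$ uniformly in $\theta$, and this follows from smoothness in the Cartesian coordinates $(\hat y,\hat z)$ of \zcref{cart-def}.
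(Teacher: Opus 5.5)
Your energy argument is the paper's proof. You propagate finiteness of $E_T[\psi_{\ge 1}]$ to $\Sigma(\tau)$ via the $\partial_\tau$-estimate, use the Poincar\'e inequality on $S^1$ to bound $\int \frac{\cosh^2 x}{\tanh x}\psi_{\ge 1}^2\,\mathrm{d}\mu$, and conclude from the non-integrability of $1/\tanh x$ at $x=0$; your added justifications ($P$ commutes with the mode projection, and $\psi_{\ge 1}$ is continuous up to the bubble) are correct. Your side remark is also correct and is the more economical route: the bubble is fixed by the $\mathrm{U}(1)$-action, so $\psi_0=\psi$ there for any continuous $\psi$, and the conclusion holds without the energy hypothesis or the equation.
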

\begin{proof}
Fix \(\tau{}\ge 0\). By \zcref{T-estimate-prop}, we have \(E_T[\psi_{\ge 1}](\tau{})<\infty\). The
\(\partial_\tau{}\)-energy of \(\psi_{\ge 1}\) controls \(\psi_{\ge 1}\) in
\(L^2\) with a weight that is singular as \(x\to 0\):
\begin{equation}\label{pointwise-vanishing-bubble}
\int _{\Sigma{}(\tau{})} \frac{\cosh ^2x}{\tanh x}\psi{}_{\ge 1}^2\dd{}\mu{}\le \int _{\Sigma{}(\tau{})} \frac{\cosh ^2x}{\tanh ^2x}(\partial_\theta{}\psi{}_{\ge 1})^2\tanh x\dd{}\mu{}\le E_T[\psi_{\ge 1}](\tau{}).
\end{equation}
It follows that \(\psi_{\ge 1}(\tau{},x=0) = 0\) if \(E_T[\psi_{\ge 1}](\tau{})\) is finite. Indeed,
otherwise \(\psi_{\ge 1}^2\) would be bounded below away from zero near
\(\set{x=0}\), and the integral on the left-hand side of
\zcref{pointwise-vanishing-bubble} would fail to be finite.
\end{proof}
\section{Proof of the main results on semilinear wave equations}
In this section, we prove \zcref{main-theorem-semilinear-intro}, which establishes
small-data global existence results for a class of semilinear equations. The
main results of this section are \zcref{semilinear,semilinear-axisymmetric}.

In \zcref{sec:nonlinearities}, we introduce the class of nonlinearities under
consideration. In particular, we formulate a suitable version of the null
condition (see \zcref{null-condition}) that includes perturbations of
\(g^{\alpha{}\beta{}}\partial_\alpha{}\varphi{}\partial_\beta{}\varphi{}\) with
growing weights in time. In \zcref{sec:semilinear}, we consider the global existence
problem for semilinear equations satisfying the null condition, without symmetry
assumptions on the initial data. We assume here that the nonlinearity is almost
\(\U(1)\)-symmetric, in the sense that its non-\(\U(1)\)-symmetric part enjoys
sufficiently fast decay in space (see \zcref{axisymmetric-nonlinearity}). When the
nonlinearity and the initial data are exactly \(\U(1)\)-symmetric, we can obtain
improved decay in time results compared to the non-\(\U(1)\)-symmetric setting.
We consider this case in \zcref{sec:semilinear-axisymmetric}.
\subsection{The class of nonlinearities under consideration}
\label{sec:nonlinearities}
\subsubsection{The null condition and (almost) \texorpdfstring{\(\U(1)\)}{U(1)}-symmetric nonlinearities}
\begin{definition}[Null condition]
Let \(\varphi{}\in C^\infty(\mathcal{M})\), and fix a function \(\ell{} : \R\to (0,\infty)\). Let \((U,V)\) be the
Minkowskian double null coordinates (defined in \zcref{double-null}) with
corresponding coordinate vector fields \((\partial{}_U,\partial{}_V)\). We say a
quadratic nonlinearity \(F(\Grad \varphi{})\) satisfies the \emph{null condition with
time weight \(\ell{}\)} if in the region \(\set{x\ge 1}\) we have
\begin{equation}\label{null-condition-far}
\begin{split}
\ell{}(\tau{})^{-1}F(\Grad \varphi{}) &= f_{UV}\abs{U}^{-1/2}V^{1/2}\partial{}_U\varphi{}\partial{}_{V}\varphi{} + f_{\Omega{}\Omega{}}\abs{U}^{-3/4}V^{5/4}\abs{\slashed{\Grad{}}\varphi{}}^2 + f_{\theta{}\theta{}}\abs{U}^{-1/2}V^{1/2}(\partial{}_\theta{}\varphi{})^2 \\
&\qquad + f_{UU}\abs{U}^{1/2}V^{-1/2}(\partial{}_U\varphi{})^2 + f_{VV}\abs{U}^{-3/2}V^{3/2}(\partial{}_V\varphi{})^2 \\
&\qquad + f_{U\Omega{}}\abs{U}^{-1/4}V^{-1/4}\partial{}_U\varphi{}\slashed{\Grad{}}\varphi{} + f_{V\Omega{}}\abs{U}^{-5/4}V^{3/4}\partial{}_V\varphi{}\slashed{\Grad{}}\varphi{} \\
&\qquad +  f_{\theta{}\Omega{}}\abs{U}^{-1/2}V^{1/2}\partial{}_\theta{}\varphi{}\slashed{\Grad{}}\varphi{} + f_{\theta{}U}\abs{U}^{-1/4}V^{-1/4}\partial{}_\theta{}\varphi{}\partial{}_U\varphi{} + f_{\theta{}V}\abs{U}^{-5/4}V^{3/4}\partial{}_\theta{}\varphi{}\partial{}_V\varphi{} \\
\end{split}
\end{equation}
and in the region \(\set{x\le 1}\) we have
\begin{equation}\label{null-condition-near}
\begin{split}
\ell{}(\tau{})^{-1}F(\Grad \varphi{}) &= \tilde{f}_{\tau{}\tau{}}\abs{U}^{-1}(\partial{}_\tau{}\varphi{})^2 + \tilde{f}_{\Omega{}\Omega{}}\abs{U}^{-1/2}V^{3/2}\abs{\slashed{\Grad{}}\varphi{}}^2 + f_{\tau{}\Omega{}}\abs{U}^{-1/2}V^{1/2}\partial{}_\tau{}\varphi{}\slashed{\Grad{}}\varphi{} \\
&\qquad + \sum_{a,b\in \set{\hat{y},\hat{z}}}\tilde{f}_{a,b}\abs{U}^{-1}\partial{}_a\varphi{}\partial{}_b\varphi{} + \sum_{a\in \set{\hat{y},\hat{z}}}\tilde{f}_{\tau{},a}\abs{U}^{-1}\partial{}_\tau{}\varphi{}\partial{}_a\varphi{} + \sum_{a\in \set{\hat{y},\hat{z}}}\tilde{f}_{\Omega{}a}\abs{U}^{-1/2}V^{1/2}\slashed{\Grad{}}\varphi{}\partial{}_a\varphi{}
\end{split}
\end{equation}
for some coefficients \(f_\bullet{}\) and \(\tilde{f}_\bullet{}\) that are bounded together with
all their \(\mathfrak{D}\)-derivatives, where \(\mathfrak{D}\) is a commutator
vector field (see \zcref{commutators}). Here we have schematically written
\(\slashed{\Grad{}}\varphi{}\) for \(r^{-1}\Omega_i\varphi{}\), where \(1\le
i\le 3\), and \(\abs{\slashed{\Grad{}}\varphi{}}^2
\coloneqq{}\sum_{i=1}^3(r^{-1}\Omega{}_i\varphi{})^2\).
\label{null-condition}
\end{definition}
\begin{remark}[Geometric meaning of \(\abs{U}\) and \(V\)]
Recall from \zcref{double-null} that \(\abs{U}\sim e^{-(\tau{}-x)}\) and \(V\sim e^{\tau{} + x}\sim r\).
Since \(\abs{U}\to 0\) as \(\tau{}\to \infty\) for fixed \(x\), weights in \(\abs{U}\)
with positive powers are to be interpreted as weaker near the
bubble than those with negative powers. However, near spatial infinity,
\(\abs{U}\) and \(V\) are comparable, and both weights are stronger with
positive powers than with negative powers.
\end{remark}
\begin{remark}[Interpreting the weights in \(\abs{U}\) and \(V\)]
The weights in \(\abs{U}\) and \(V\) in \zcref{null-condition} are chosen so that each
term in \(F(\Grad \varphi{})\) that appears in the canonical nonlinearity
\(g^{\alpha{}\beta{}}\partial_\alpha{}\varphi{}\partial_\beta{}\varphi{}\)
behaves as it does there. Moreover, terms which do not appear in
\(g^{\alpha{}\beta{}}\partial_\alpha{}\varphi{}\partial_\beta{}\varphi{}\), such
as \((\partial_U\varphi{})^2\), are allowed with weights such that, when
\(F(\Grad \varphi{})\) is written in terms of \(\psi{}\) and expressed in
\((\tau{},x)\) coordinates, all terms produced are no worse than those produced
by \(g^{\alpha{}\beta{}}\partial_\alpha{}\varphi{}\partial_\beta{}\varphi{}\)
when the same procedure is done. For examples, see \zcref{ref-0,ref-1,ref-2}. In
this sense, \zcref{null-condition} recovers the classical null condition in the
asymptotically flat region of the Witten bubble spacetime.
\end{remark}
\begin{remark}[The role of the time weight \(\ell(\tau)\)]
It is easier to prove global existence for nonlinearities with a faster decaying
time weight. We include a time weight \(\ell{}(\tau{})\) in our definition of
the nonlinearity to highlight the fact that there is an exponential amount of
room in \(\tau{}\) when studying the wave equation \(\Box_g\varphi{} =
f\cdot g^{\alpha{}\beta{}}\partial_\alpha{}\varphi{}\partial_\beta{}\varphi{}\) for
\(g\) the Witten bubble metric (see \zcref{canonical-double-null}) and \(f\) a
spacetime function with bounded \(\mathfrak{D}\)-derivatives. This room is
crucial for closing our global existence argument outside of symmetry, because
we allow the solution to grow polynomially in \(\tau{}\) (see \zcref{semilinear}).

By comparing the statements of \zcref{semilinear,semilinear-axisymmetric}, we see
also that there is more room in \(\tau{}\) within \(\U(1)\)-symmetry (namely when
studying a \(\U(1)\)-symmetric nonlinearity, such as
\(g^{\alpha{}\beta{}}\partial_\alpha{}\varphi{}\partial_\beta{}\varphi{}\), with
\(\U(1)\)-symmetric initial data) than outside of symmetry.
\end{remark}
\begin{remark}[Comparing the variable \(\tau\) with the Minkowskian time coordinate \(t\)]
Since the Minkowskian time coordinate \(t\) takes the form \(t = R\cosh x\sinh
\tau{}\), an exponential decay rate in \(\tau{}\) corresponds to a polynomial
decay rate in \(t\), at least in a finite-\(x\) region. Similarly, a polynomial
rate in \(\tau{}\) corresponds to a logarithmic rate in \(t\).
\label{tau-t-comparison}
\end{remark}
\begin{remark}[Cubic and higher-order terms in the nonlinearity]
For the sake of simplicity, we have not included cubic and higher-order terms in
the nonlinearity (such as \(\varphi{}\partial_U\varphi{}\partial_V\varphi{}\) or
\((\partial_U\varphi{})^3\)). However, the addition of such terms would not
require any significant changes in the arguments to follow.
\end{remark}
\begin{definition}[\(\U(1)\)-Symmetric and almost \(\U(1)\)-symmetric nonlinearities]
We say a nonlinearity \(F(\Grad \varphi{})\) is \emph{\(\U(1)\)-symmetric} if its coefficients in \((U,V)\)
coordinates are \(\U(1)\)-symmetric, namely if the expression \zcref{null-condition-far}
holds for all \(x\) and some \(\U(1)\)-symmetric coefficients \(f_\bullet{}\).

A nonlinearity \(F(\Grad \varphi{})\) is \emph{almost \(\U(1)\)-symmetric} if the non-\(\U(1)\)-symmetric
part of its coefficients in \((U,V)\) coordinates decay sufficiently
fast in space. That is, we require that \zcref{null-condition-far} holds in the
region \(\set{x\ge 1}\) for coefficients \(f_\bullet{}\) such that
\(\abs{(\mathfrak{D}^{\mathbf{k}}f_\bullet{})_{\ge 1}}\lesssim (\cosh x)^{-1/2}\) for
all multi-indices \(\mathbf{k}\).
\label{axisymmetric-nonlinearity}
\end{definition}
\subsubsection{Estimates associated to nonlinearities satisfying the null condition}
\begin{lemma}[A nonlinearity that is \(\U(1)\)-symmetric and satisfies the null condition]
Let \(h\) be a \(\U(1)\)-symmetric function that is bounded together with its
\(\mathfrak{D}\)-derivatives. The quadratic nonlinearity \(F(\Grad \varphi{}) =
h\cdot g^{\alpha{}\beta{}}\partial_\alpha{}\varphi{}\partial_\beta{}\varphi{}\)
given by the Witten bubble metric \(g\) is \(\U(1)\)-symmetric (see
\zcref{axisymmetric-nonlinearity}) and satisfies the null condition with time weight
\(e^{-\tau{}}\) (see \zcref{null-condition}).
\label{canonical-double-null}
\end{lemma}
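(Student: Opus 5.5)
The plan is to compute $g^{\alpha\beta}\partial_\alpha\varphi\partial_\beta\varphi$ explicitly in the two coordinate systems used in \zcref{null-condition}. We then read off the coefficients $f_\bullet$ and $\tilde f_\bullet$ after multiplying by $\ell(\tau)^{-1}=e^{\tau}$. Finally we check that they are bounded, together with their $\mathfrak{D}$-derivatives, using \zcref{OD-check}. The key observation is that the Minkowskian weights are exact exponentials in $(\tau,x)$. Since $U=-\frac R2 e^{-2u}$ and $V=\frac R2 e^{2v}$, any monomial $\abs{U}^aV^b$ equals a constant times $e^{-2au+2bv}=e^{(b-a)\tau+(a+b)x}$. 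In particular $\abs{U}^{-1/2}V^{1/2}=e^{\tau}$ exactly.

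\textbf{Far region $\set{x\ge 1}$.} From the metric in $(U,V)$ coordinates in \zcref{double-null}, namely $g=-4(1+e^{-2x})^2\dd U\dd V+r^2g_{S^2}+R^2\tanh^2x\,g_{S^1}$, we get
\begin{equation}
g^{\alpha\beta}\partial_\alpha\varphi\partial_\beta\varphi=-(1+e^{-2x})^{-2}\partial_U\varphi\partial_V\varphi+\abs{\slashed{\Grad}\varphi}^2+R^{-2}\tanh^{-2}x\,(\partial_\theta\varphi)^2 .
\end{equation}
Multiplying by $e^{\tau}$ produces three terms. The first gives $f_{UV}=-h(1+e^{-2x})^{-2}$, since $\abs U^{-1/2}V^{1/2}=e^\tau$. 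The second gives $f_{\Omega\Omega}=c\,h\,e^{-\tau-x/2}$, since $\abs U^{-3/4}V^{5/4}=c\,e^{2\tau+x/2}$. The third gives $f_{\theta\theta}=R^{-2}h\tanh^{-2}x$. All remaining coefficients in \zcref{null-condition-far} are zero.

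These coefficients are bounded on $\set{x\ge1}\cap\set{\tau\ge0}$. Each is a product of $h$, a bounded function of $\tau$ with bounded derivatives (namely $e^{-\tau}$), and a smooth function of $x$ whose derivatives are bounded on $x\ge 1$. By \zcref{OD-check} and the Leibniz rule, and since $h$ is $O_{\mathfrak{D}}(1)$ by assumption, each coefficient is $O_{\mathfrak{D}}(1)$.

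One point needs care: $\partial_{\hat y},\partial_{\hat z}$ applied to a function of $x$ (rather than of $x^2$). On $\set{x\ge1}$ the proof of \zcref{OD-check} already shows that $\partial_{\hat y}^{k_1}\partial_{\hat z}^{k_2}\partial_\theta^N x$ is bounded, so the chain rule suffices there. All coefficients are independent of $\theta$ because $h$ is, so the expression holds globally with $\U(1)$-symmetric coefficients. This is what \zcref{axisymmetric-nonlinearity} requires, once the near-region form is also verified.

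\textbf{Near region $\set{x\le1}$.} Here we use the Cartesian form \zcref{cart-metric}. The inverse metric gives
\begin{equation}
g^{\alpha\beta}\partial_\alpha\varphi\partial_\beta\varphi=R^{-2}\Bigl[-\cosh^{-2}x\,(\partial_\tau\varphi)^2+\mathfrak{F}(x)^{-2}\bigl((\partial_{\hat y}\varphi)^2+(\partial_{\hat z}\varphi)^2\bigr)\Bigr]+\abs{\slashed{\Grad}\varphi}^2 .
\end{equation}
The angular term is exactly $\abs{\slashed\Grad\varphi}^2$, because $r^2=R^2\cosh^2x\cosh^2\tau$.

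The weights become $\abs U^{-1}=\frac 2Re^{\tau-x}$ and $\abs U^{-1/2}V^{3/2}=c\,e^{2\tau+x}$. This gives $\tilde f_{\tau\tau}=-\frac{1}{2R}he^{x}\cosh^{-2}x$, $\tilde f_{\hat y\hat y}=\tilde f_{\hat z\hat z}=\frac{1}{2R}he^{x}\mathfrak{F}^{-2}$, and $\tilde f_{\Omega\Omega}=c^{-1}he^{-\tau-x}$. All other near-region coefficients vanish.

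Smoothness at the bubble is the main obstacle. We must show that $e^{x}\mathfrak{F}(x)^{-2}$, $e^{x}\cosh^{-2}x$ and $e^{-x}$ are $O_{\mathfrak{D}}(1)$ on the compact set $\set{x\le1}$, where $x$ itself is not smooth at the origin of $\R^2$. For the $\cosh$ and $\mathfrak{F}$ factors we write them as smooth functions of $x^2$. Indeed, $\mathfrak{F}=\tanh x/(e^{\cosh x}\tanh(x/2))$ is even, smooth and nonvanishing, with limit $2/e$ at $x=0$. Then \zcref{OD-check}\zcref{OD-2} applies.

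The factors $e^{\pm x}$ are not even in $x$, so \zcref{OD-check} does not apply to them directly. To avoid this, we modify the choice: on $\set{x\le1}$ we have $e^{\tau}\abs{U}=\frac R2 e^{x}$, which is comparable to $1$ but not smooth at $x=0$. We therefore use that the coefficients in \zcref{null-condition-near} may absorb any bounded smooth factor. Concretely, we check whether $e^{x}$ can be replaced via the exact identity by a smooth even function, or instead verify directly that $\partial_{\hat y},\partial_{\hat z}$-derivatives of $e^{x}$ remain bounded. If neither works, we would first try replacing the exact $\abs U$-weights by comparable smooth even weights such as $e^{\tau}\cosh x$. This loses nothing, since only bounds on the coefficients are required.
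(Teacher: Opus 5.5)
Your computation follows the paper's proof. You invert the metric in the \((U,V)\) and Cartesian charts, convert the weights via \(\abs{U}^aV^b=(R/2)^{a+b}e^{(b-a)\tau+(a+b)x}\), and read off the coefficients. The far-region half is complete. Your \(f_{\Omega\Omega}=(R/2)^{-1/2}h\,e^{-\tau-x/2}\) is the accurate value; the paper records \(e^{-\tau}\). Your near-region coefficients match the paper's, and you have the correct sign on \(\tilde f_{\tau\tau}\).

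\textbf{The gap.} The step you flag at the bubble is left open, and your first two fallbacks cannot work. The coefficients are not at your disposal. Since \(\partial_\tau\varphi\) can be prescribed at a point independently of the other first derivatives, the coefficient of \((\partial_\tau\varphi)^2\) is forced to be \(\tilde f_{\tau\tau}=-(2R)^{-1}h\,e^{x}\cosh^{-2}x\). So no identity can trade \(e^x\) for an even function.

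The direct check also fails. Near \(\set{x=0}\), \(x\) is a smooth odd function of the Euclidean radius \(\varrho=(\hat y^2+\hat z^2)^{1/2}\), with slope \(2/e\) at \(\varrho=0\). So for \(h=1\) the coefficient is a radial function \(k(\varrho)\) with \(k'(0)=-1/(eR)\neq 0\). Its first Cartesian derivatives are bounded but discontinuous, whereas \(\partial_{\hat y}^2k\approx k'(0)\sin^2\theta/\varrho\) is unbounded. Equivalently, \(\partial_{\hat y}^2x\) contains \(\sin^2\theta/(\mathfrak{f}\mathfrak{f}')\sim x^{-1}\). The same happens for \(\tilde f_{\hat y\hat y}\propto e^{x}\mathfrak{F}^{-2}\) and \(\tilde f_{\Omega\Omega}\propto e^{-x}\).

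So with the near-region weights exactly as written in the null condition, the claim is false at the bubble. The paper's ``straightforward to check'' passes over exactly this point.

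\textbf{How to close it.} You must adopt your third option and present it as an amendment of the definition, not a verification of it. On \(\set{x\le 1}\), replace \(\abs{U}^{-1}\) and \(\abs{U}^{-1/2}V^{3/2}\) by smooth comparable weights, say \(e^{\tau}\) and \(e^{2\tau}\). These are comparable because \(0\le x\le 1\). The weight \(\abs{U}^{-1/2}V^{1/2}=e^{\tau}\) is already smooth. You then get
\(\tilde f_{\tau\tau}=-R^{-2}h\cosh^{-2}x\), \(\tilde f_{\hat y\hat y}=\tilde f_{\hat z\hat z}=R^{-2}h\,\mathfrak{F}(x)^{-2}\), and \(\tilde f_{\Omega\Omega}=h\,e^{-\tau}\).

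Both \(\cosh^{-2}x\) and \(\mathfrak{F}(x)^{-2}=\tfrac14 e^{2\cosh x}(1+\tanh^2(x/2))^2\) are smooth functions of \(x^2\). So part (2) of \zcref{OD-check} and the Leibniz rule show these coefficients are \(O_{\mathfrak{D}}(1)\) on \(\set{x\le 1}\cap\set{\tau\ge 0}\).

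This amended form is also what the region \(\set{x\le 1}\) of \zcref{Dk-P-psi-estimate} actually needs. There \(\mathfrak{D}^{\mathbf{k}}\) falls on the product of weight and coefficient, and with the original weights it would hit the non-smooth factor \(e^{\pm x}\).
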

\begin{proof}
By the product rule and the fact that \(h\) is bounded together with its
\(\mathfrak{D}\)-derivatives, it is enough to consider the case \(h = 1\).

Let \(\tilde{F}(\Grad \varphi{}) = g^{\alpha{}\beta{}}\partial_\alpha{}\varphi{}\partial_\beta{}\varphi{}\). In \((U,V,\omega{},\theta{})\) coordinates, we have
\begin{equation}\label{canonical-nonlinearity-1}
\tilde{F}(\Grad \varphi{}) = -(1 + e^{-2x})^{-2}\partial_U\varphi{}\partial_V\varphi{} + \abs{\slashed{\Grad{}}\varphi{}}^2 + R^{-2}(\tanh x)^{-2}(\partial{}_\theta{}\varphi{})^2.
\end{equation}
It is clear that \(F(\Grad \varphi{})\) is \(\U(1)\)-symmetric, since all the coefficients and
derivatives present commute with \(\partial_\theta{}\). We now make the useful
observations (see \zcref{double-null}) that
\begin{equation}\label{gmunu-obs}
e^{-\tau{}} = \abs{U}^{1/2}V^{-1/2},\qquad e^x = 2R^{-1}\abs{U}^{1/2}V^{1/2}.
\end{equation}

From the definitions of \(U\) and \(V\) (see \zcref{double-null}), we have
\(e^{-\tau{}} = \abs{U}^{1/2}V^{-1/2}\). Using \zcref{gmunu-obs}, we find that
\(e^\tau{}\tilde{F}(\Grad \varphi{})\) is in the form \zcref{null-condition-far} with
non-vanishing coefficients
\begin{equation}
f_{UV} = -(1+e^{-2x})^{-2},\qquad f_{\Omega{}\Omega{}} = e^{-\tau{}},\qquad f_{\theta{}\theta{}} = R^{-2}(\tanh x)^{-2}.
\end{equation}
It is straightforward to check that these are bounded together with their
\(\mathfrak{D}\)-derivatives in the region \(\set{x\ge 1}\), recalling the
definitions of the \(\mathfrak{D}\)-derivatives from \zcref{commutators}.

In \((\tau{},\hat{y},\hat{z},\omega{})\) coordinates, we have (from \zcref{cart-metric})
\begin{equation}
\tilde{F}(\Grad \varphi{}) = -\rho^{-2}(\partial{}_\tau{}\varphi{})^2 + R^{-2}\mathfrak{F}(x)^{-2}((\partial{}_{\hat{y}}\varphi{})^2 + (\partial{}_{\hat{z}}\varphi{})^2) + \abs{\slashed{\Grad{}}\varphi{}}^2.
\end{equation}
Using \zcref{gmunu-obs}, we find that \(e^\tau{}F(\Grad \varphi{})\) is in the form
\zcref{null-condition-near} with non-vanishing coefficients
\begin{equation}
\tilde{f}_{\tau{}\tau{}} = 2R^{-1}e^{-x}(1 - 1/(1+e^{2x}))^2,\qquad \tilde{f}_{\hat{y}\hat{y}} = \tilde{f}_{\hat{z}\hat{z}} = \frac{1}{2}R^{-1}e^xe^{2\cosh x} \frac{\tanh(x/2)^2}{\tanh(x)^2},\qquad \tilde{f}_{\Omega{}\Omega{}} = 2R^{-1}e^{-(\tau{}+x)}.
\end{equation}
It is straightforward to check that these are bounded together with their
\(\mathfrak{D}\)-derivatives in the region \(\set{x\le 1}\).
\end{proof}
\begin{lemma}[Converting the equation solved by \(\varphi\) to an estimate for \(\psi\)]
Suppose \(\varphi{}\in C^\infty(\mathcal{M})\) solves \(\Box{}\varphi{} = F(\Grad \varphi{})\) for a quadratic
nonlinearity \(F(\Grad \varphi{})\) satisfying the null condition with time
weight \(\ell{}(\tau{})\) (see \zcref{null-condition}). Then \(\psi{} = r\varphi{}\)
satisfies the following estimate:
\begin{equation}\label{DkPpsi-semilinear}
\begin{split}
\abs{\mathfrak{D}^{\mathbf{k}}P\psi{}} &\lesssim \Bigl(\sum_{k=0}^{\abs{\mathbf{k}}}\abs{\partial{}_\tau^k\ell{}(\tau{})}\Bigr)\sum_{\mathbf{k}_1 + \mathbf{k}_2 \le \mathbf{k}} \Bigl[\frac{1}{\cosh x}(\abs{\partial{}_\tau{}\mathfrak{D}^{\mathbf{k}_1}\psi{}} + \abs{\partial{}_x\mathfrak{D}^{\mathbf{k}_1}\psi{}} + \abs{\mathfrak{D}^{\mathbf{k}_1}\psi{}})(\abs{\partial{}_\tau{}\mathfrak{D}^{\mathbf{k}_2}\psi{}} + \abs{\partial{}_x\mathfrak{D}^{\mathbf{k}_2}\psi{}} + \abs{\mathfrak{D}^{\mathbf{k}_2}\psi{}})  \\
&\qquad + \frac{1}{(\cosh x)^{1/2}\cosh \tau{}}\abs{\Grad _{S^2}\mathfrak{D}^{\mathbf{k}_1}\psi{}}\Bigl(\frac{1}{\cosh x}(\abs{\partial{}_\tau{}\mathfrak{D}^{\mathbf{k}_2}\psi{}} +\abs{\partial{}_x\mathfrak{D}^{\mathbf{k}_2}\psi{}} + \abs{\mathfrak{D}^{\mathbf{k}_2}\psi{}}) + \abs{\Grad _{S^2}\mathfrak{D}^{\mathbf{k}_2}\psi{}}\Bigr) \\
&\qquad + (\cosh x)^{1/2}\abs{\partial{}_\theta{}\mathfrak{D}^{\mathbf{k}_1}\psi{}}\Bigl((\cosh x)^{1/2}\abs{\partial{}_\theta{}\mathfrak{D}^{\mathbf{k}_2}\psi{}} + \frac{1}{\cosh x}(\abs{\partial{}_\tau{}\mathfrak{D}^{\mathbf{k}_2}\psi{}} + \abs{\partial{}_x\mathfrak{D}^{\mathbf{k}_2}\psi{}} + \abs{\mathfrak{D}^{\mathbf{k}_2}\psi{}})\\
&\qquad + \frac{1}{(\cosh x)^{1/2}\cosh \tau{}}\abs{\Grad _{S^2}\mathfrak{D}^{\mathbf{k}_2}\psi{}}\Bigr)\Bigr],
\end{split}
\end{equation}
Here \(P\) is the operator defined in \zcref{P-def} and \(\mathfrak{D}\) is a
commutator vector field (see \zcref{commutators}). The statement also holds with
\(\mathfrak{D}\) replaced with \(\mathring{\mathfrak{D}}\).
\label{Dk-P-psi-estimate}
\end{lemma}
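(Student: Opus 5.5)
We first prove the case \(\mathbf{k}=0\) by direct substitution, and then obtain the general case from the Leibniz rule. Since \(P\psi{} = \rho^2 r\Box(r^{-1}\psi{}) = \rho^2 r F(\Grad \varphi{})\) with \(\varphi{} = r^{-1}\psi{}\), we must estimate \(\rho^2 r\) times each term of \zcref{null-condition-far} (in \(\set{x\ge 1}\)) and of \zcref{null-condition-near} (in \(\set{x\le 1}\)). Everything is expressed in \((\tau{},x)\) coordinates using the dictionary of \zcref{double-null}: \(\abs{U}\sim e^{-(\tau{}-x)}\), \(V\sim e^{\tau{}+x}\sim r\), \(\rho\sim \abs{U}^{1/2}V^{1/2}\sim \cosh x\), and \(\abs{U}\partial_U\sim \underline{L}\), \(V\partial_V\sim L\) (compare \zcref{LU-relation}). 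We also use \(\slashed{\Grad{}}=r^{-1}\Omega\).

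The derivatives of \(\varphi{}\) are converted using \(L\varphi{} = r^{-1}(L\psi{} - (Lr/r)\psi{})\) and \(\underline{L}\varphi{} = r^{-1}(\underline{L}\psi{} - (\underline{L}r/r)\psi{})\). Here \(Lr/r = \tanh x + \tanh\tau{}\) and \(\underline{L}r/r = \tanh\tau{} - \tanh x\) are bounded with bounded \(\mathfrak{D}\)-derivatives by \zcref{OD-check}. Thus each null derivative of \(\varphi{}\) equals \(r^{-1}\) times a bounded combination of \(\partial_\tau\psi{},\partial_x\psi{},\psi{}\), while \(\partial_\theta\varphi{} = r^{-1}\partial_\theta\psi{}\) and \(\slashed{\Grad{}}\varphi{} = r^{-2}\Omega\psi{}\).

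Next we track weights term by term, writing \(r\sim e^\tau\cosh x\). For the \(UV\) term we get
\[
\rho^2 r\,\abs{U}^{-1/2}V^{1/2}\,\partial_U\varphi{}\,\partial_V\varphi{} \sim \rho^2 r\,\abs{U}^{-3/2}V^{-1/2}\,r^{-2}(\cdots)(\cdots).
\]
Since \(\abs{U}^{-3/2}V^{-1/2}\sim e^{\tau{}}e^{-2x}\), this is \(\sim e^{\tau{}}r^{-1}\cosh^2 x\,e^{-2x}(\cdots)(\cdots)\sim (\cosh x)^{-1}(\cdots)(\cdots)\). This matches the first line of \zcref{DkPpsi-semilinear}. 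Note that the \(\ell\) factor has been pulled out, so no leftover \(e^{\tau}\) appears: the exponents in \zcref{null-condition} were evidently designed so that \(\tau\)-powers cancel exactly.

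One checks similarly that the \(\Omega\Omega\), \(U\Omega\), \(V\Omega\) terms produce the \((\cosh x)^{-1/2}(\cosh\tau{})^{-1}\abs{\Grad_{S^2}\psi{}}\) factors. The \(\theta\theta\), \(\theta U\), \(\theta V\), \(\theta\Omega\) terms produce the \((\cosh x)^{1/2}\abs{\partial_\theta\psi{}}\) factors. The \(UU\) and \(VV\) terms give no worse weights than the \(UV\) term. This bookkeeping is the main (tedious) obstacle: for each of the ten terms one must verify that the \(\tau\)-exponents cancel and the \(x\)-exponents are at least as good as claimed. If some pure weight comes out slightly worse, the fallback is to rely on \(\abs{U}\le V\), i.e. \(x\ge 0\), to trade \(\abs{U}\)-powers for \(V\)-powers.

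In \(\set{x\le 1}\) we use \zcref{null-condition-near}. Here \(\cosh x\sim 1\), \(\abs{U}^{-1}\sim e^{\tau}\) and \(r\sim e^\tau\). The Cartesian derivatives satisfy \(\partial_{\hat y}\varphi{} = r^{-1}\partial_{\hat y}\psi{}\), because \(\partial_{\hat y}r = R\cosh\tau{}\,\partial_{\hat y}\cosh x\) is smooth and \(O_{\mathfrak{D}}(r)\). Each Cartesian derivative is therefore a \(\mathfrak{D}\)-derivative of \(\psi{}\) up to zeroth-order terms, and \(\abs{\widehat\partial\psi{}}\lesssim \abs{\partial_x\psi{}}+\abs{\partial_\theta\psi{}}\) by \zcref{hat-derivatives}. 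All terms thus fit the same template with bounded weights.

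For general \(\mathbf{k}\), we apply \(\mathfrak{D}^{\mathbf{k}}\) to the resulting expression \(P\psi{} = \ell(\tau{})\sum c\,A(\psi{})B(\psi{})\), where the coefficients \(c\) are products of \(f_\bullet\) and weights that are \(O_{\mathfrak{D}}\) of the stated sizes by \zcref{OD-check}. The Leibniz rule distributes derivatives onto \(\ell\), which only sees \(\partial_\tau\) and gives \(\sum_k\abs{\partial_\tau^k\ell}\), onto \(c\), and onto the two factors. To commute \(\mathfrak{D}\) past \(\partial_\tau,\partial_x,\Grad_{S^2},\partial_\theta\), we note that \(\Omega,\partial_\theta,\partial_\tau\) commute with \(\partial_x\), \(\partial_\theta\) and \(\Grad_{S^2}\) up to lower rotations. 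The Cartesian fields \(\widehat\partial\) fail to commute with \(\partial_x\) and \(\partial_\theta\) only by terms with coefficients \(O_{\mathfrak{D}}\) of \((\mathfrak{f}')^{-1}\cosh x\)-type size, which are harmless. Finally, \zcref{rearrangement} reorders the products into \(\mathfrak{D}^{\mathbf{k}_1}\), \(\mathfrak{D}^{\mathbf{k}_2}\) with \(\mathbf{k}_1+\mathbf{k}_2\le\mathbf{k}\). For \(\mathring{\mathfrak{D}}\), the same argument applies and is simpler, since no Cartesian fields appear.
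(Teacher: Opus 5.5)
Your treatment of \(\set{x\ge 1}\) is the same as the paper's. You substitute \(\varphi=r^{-1}\psi\), use \(\underline{L}=2\abs{U}\partial_U\), \(L=2V\partial_V\), \(e^{-\tau}=\abs{U}^{1/2}V^{-1/2}\), \(e^{x}=2R^{-1}\abs{U}^{1/2}V^{1/2}\), and arrive at the list \zcref{semilinearity-decomposition}. Then you distribute \(\mathfrak{D}^{\mathbf{k}}\) by Leibniz; the commutators of \(\widehat{\partial}\) with \(\partial_x,\partial_\theta\) are exponentially small there. Your weights are right term by term: \(\Omega\Omega\) gives \((\cosh x)^{-1/2}(\cosh\tau)^{-1}\), \(\theta\Omega\) gives \((\cosh\tau)^{-1}\), \(\theta U,\theta V\) give \((\cosh x)^{-1/2}\), and \(UU,VV\) give the same \((\cosh x)^{-1}\) as \(UV\). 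So the fallback via \(\abs{U}\le V\) is never needed.

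The gap is in \(\set{x\le 1}\), at the step \(\abs{\widehat{\partial}\psi}\lesssim\abs{\partial_x\psi}+\abs{\partial_\theta\psi}\).

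\begin{itemize}
\item In \zcref{hat-derivatives} the coefficient of \(\partial_\theta\) is \(1/\mathfrak{f}(x)\sim 1/x\). The true bound is therefore \(\abs{\widehat{\partial}\psi}\lesssim\abs{\partial_x\psi}+(\tanh x)^{-1}\abs{\partial_\theta\psi}\).
\item For the same reason, your claim that \(\widehat{\partial}\) commutes with \(\partial_x\) up to harmless terms holds only for \(x\ge 1\). Near the bubble, \([\partial_{\hat y},\partial_x]\) contains \(-\sin\theta\,\mathfrak{f}'\mathfrak{f}^{-2}\partial_\theta\), whose coefficient is \(\sim x^{-2}\).
\end{itemize}

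No bookkeeping repairs this, because \zcref{DkPpsi-semilinear} itself fails there. The Cartesian terms allowed by \zcref{null-condition-near} genuinely produce \((\tanh x)^{-2}(\partial_\theta\psi)^2\). For instance, \(g^{\alpha\beta}\partial_\alpha\varphi\partial_\beta\varphi\) (time weight \(e^{-\tau}\)) contains \(R^{-2}(\tanh x)^{-2}(\partial_\theta\varphi)^2\) by \zcref{canonical-nonlinearity-1}.

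Here is a concrete check. At a point with \(\theta=\pi/2\) and small \(x\), take a solution whose Cauchy data make the \(1\)-jet of \(\psi\) agree with that of \(\hat y\). Then \(\psi=\partial_\tau\psi=\partial_x\psi=\Omega\psi=0\) and \(\partial_\theta\psi=-\mathfrak{f}(x)\). A direct computation gives \(P\psi=\cosh x/(R\,\mathfrak{F}(x)^2\cosh\tau)\sim(\cosh\tau)^{-1}\). The right-hand side of \zcref{DkPpsi-semilinear} with \(\mathbf{k}=0\) is only \(\lesssim e^{-\tau}\mathfrak{f}(x)^2\sim e^{-\tau}x^2\).

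A second, smaller problem has the same source. On \(\set{x\le 1}\) your weight \(\rho^2r^{-1}\abs{U}^{-1}=2\cosh x\,e^{-x}e^{\tau}/\cosh\tau\) is not \(O_{\mathfrak{D}}(1)\). The reason is that \(x\) is an odd function of \(\sqrt{\hat y^2+\hat z^2}\), so \(e^{-x}\) is not smooth at the bubble, and \zcref{OD-check} only covers functions of \(\tau\) and of \(x^2\).

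The paper's proof never confronts this: it writes out only \(\set{x\ge 1}\) and declares the near region ``proved similarly.'' The repair is to the statement.

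\begin{itemize}
\item Replace the weight \((\cosh x)^{1/2}\) on each \(\partial_\theta\mathfrak{D}^{\mathbf{k}_i}\psi\) factor by \((\cosh x)^{1/2}(\tanh x)^{-1}\). This changes nothing for \(x\ge 1\).
\item Read each near-region weight \(\abs{U}^{a}V^{b}\) in \zcref{null-condition-near} as \(e^{(b-a)\tau}\) times an \(O_{\mathfrak{D}}(1)\) coefficient. This is what the canonical nonlinearity actually satisfies.
\end{itemize}

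Your near-region argument then closes. You must keep \(\widehat{\partial}\) outermost via \zcref{rearrangement} (note \([\partial_\theta,\partial_{\hat y}]=-\partial_{\hat z}\)). Only at the end bound \(\abs{\widehat{\partial}\mathfrak{D}^{\mathbf{k}_i}\psi}\lesssim\abs{\partial_x\mathfrak{D}^{\mathbf{k}_i}\psi}+(\tanh x)^{-1}\abs{\partial_\theta\mathfrak{D}^{\mathbf{k}_i}\psi}\).

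The stronger weight costs nothing downstream. \(E_T\) contains \(\frac{\cosh^2x}{\tanh^2x}(\partial_\theta\psi)^2\tanh x\). Pointwise, \((\tanh x)^{-1}\abs{\partial_\theta\psi}\lesssim\abs{\widehat{\partial}\psi}\) on \(\set{x\le 1}\), by \zcref{dx-in-terms-of-dxhat} and \(\mathfrak{f}(x)\sim\tanh x\) there.
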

\begin{proof}
To establish \zcref{Dk-P-psi-estimate}, we express the nonlinearity \(F(\Grad
\varphi{})\) in terms of \(\psi{}\) to derive an expression for \(P\psi{}\). In
particular, this produces zeroth-order terms in \(\psi{}\), even though
\(F(\Grad \varphi{})\) does not contain zeroth-order terms in \(\varphi{}\). For
example, we have the expansion \(2V\partial{}_V\varphi{} = L\varphi{} =
r^{-1}(L\psi{} - r^{-1}Lr\cdot \psi{})\), where \(L = \partial_\tau{} +
\partial_x\). We then apply \(\mathfrak{D}^{\mathbf{k}}\) and estimate the
resulting terms using the triangle inequality. Throughout the proof, we will
implicitly use \zcref{OD-check} to verify that various quantities are
\(O_{\mathfrak{D}}(1)\).

We now turn to the proof. We start by computing \(\underline{L} =
2\abs{U}\partial_U\) and \(L =2V\partial{}_V\). Using \zcref{gmunu-obs}, we now
compute
\begin{equation}\label{ref-0}
\rho^2r\cdot \abs{U}^{-1/2}V^{1/2}\partial{}_U\varphi{}\partial{}_V\varphi{} = 2R^{-1}(e^{-x}\cosh x)^2(e^{-\tau{}}\cosh \tau{})^{-1}\cdot  \frac{1}{\cosh x}r^2\underline{L}\varphi{}L\varphi{} = O_{\mathfrak{D}}(1)\frac{1}{\cosh x}r^2\underline{L}\varphi{}L\varphi{}.
\end{equation}
Here we write \(O_{\mathfrak{D}}(1)\) to mean a quantity that is bounded
together with all its \(\mathfrak{D}\)-derivatives \emph{in the region \(\set{x\ge
1}\)}. Since \(\underline{L}r/r\) and \(Lr/r\) are \(O_{\mathfrak{D}}(1)\) (being
linear combinations of \(\tanh \tau{}\) and \(\tanh x\)), we find that
\(\rho^2r\cdot \abs{U}^{-1/2}V^{1/2}\partial{}_U\varphi{}\partial{}_V\varphi{}\)
is an \(O_{\mathfrak{D}}(1)\)-linear combination of the following quantities:
\begin{equation}\label{ref-1}
\frac{1}{\cosh x}\underline{L}\psi{}L\psi{},\quad \frac{1}{\cosh x}\psi{}\underline{L}\psi{},\quad \frac{1}{\cosh x}\psi{}L\psi{},\quad \frac{1}{\cosh x}\psi^2.
\end{equation}
To give another example, we compute
\begin{equation}\label{ref-2}
\rho^2r\cdot \abs{U}^{1/2}V^{-1/2}(\partial{}_U\varphi{})^2 = O_{\mathfrak{D}}(1) \frac{1}{\cosh x}r^2\underline{L}\varphi{}\underline{L}\varphi{}.
\end{equation}
In this way we find that if \zcref{null-condition-far} holds, then in the region
\(\set{x\ge 1}\), the quadratic nonlinearity \(\rho^2r\cdot
\ell{}(\tau{})^{-1}F(\Grad \varphi{})\) is a \(O_{\mathfrak{D}}(1)\)-linear
combination of the following quantities:
\begin{equation}\label{semilinearity-decomposition}
\begin{gathered}
\frac{1}{\cosh x}\partial{}_\tau{}\psi{}\partial{}_\tau{}\psi{},\quad \frac{1}{\cosh x}\partial{}_\tau{}\psi{}\partial{}_x\psi{},\quad \frac{1}{\cosh x} \partial{}_x\psi{}\partial{}_x\psi{},\\
\frac{1}{(\cosh x)^{3/2}\cosh \tau{}}\partial{}_\tau\psi{}\Omega{}\psi{},\quad \frac{1}{(\cosh x)^{3/2}\cosh \tau{}}\partial{}_x\psi{}\Omega{}\psi{},\quad  \frac{1}{(\cosh x)^{1/2}\cosh \tau{}}\Omega{}\psi{}\Omega{}\psi{},\\
(\cosh x)(\partial{}_\theta{}\psi{})^2,\quad \frac{1}{(\cosh x)^{1/2}}\partial{}_\theta{}\psi{}\partial{}_\tau\psi{},\quad \frac{1}{(\cosh x)^{1/2}}\partial{}_\theta{}\psi{}\partial{}_x\psi{}, \quad \frac{\cosh x}{\cosh \tau{}}\partial{}_\theta{}\psi{}\Omega{}\psi{}, \\
\frac{1}{\cosh x}\psi{}\partial{}_\tau\psi{},\quad \frac{1}{\cosh x}\psi{}\partial{}_x\psi{},\quad \frac{1}{(\cosh x)^{3/2}\cosh \tau{}}\psi{}\Omega{}\psi{},\quad \frac{1}{(\cosh x)^{1/2}}\psi{}\partial{}_\theta{}\psi{}, \quad \frac{1}{\cosh x}\psi^2.
\end{gathered}
\end{equation}
Since (in the region \(\set{x\ge 1}\))
\begin{equation}
[\widehat{\partial},\partial{}_\tau{}] = 0,\qquad  [\widehat{\partial},\partial{}_x] = \frac{\mathfrak{f}''(x)}{\mathfrak{f}'(x)^2} O_{\mathfrak{D}}(1)\partial{}_x + \frac{\mathfrak{f}'(x)}{\mathfrak{f}(x)^2}O_{\mathfrak{D}}(1)\partial{}_\theta{},
\end{equation}
from the form of \(\mathfrak{f}\) (see \zcref{cart-def}) it follows that
\begin{equation}
\abs{[\partial{}_\tau{},\mathfrak{D}^{\mathbf{k}}]\psi{}} + \abs{[\partial{}_x,\mathfrak{D}^{\mathbf{k}}]\psi{}} \lesssim \sum_{\abs{\mathbf{k}'}<\abs{\mathbf{k}}}e^{-\frac{1}{2}\cosh x}(\abs{\partial{}_\tau{}\mathfrak{D}^{\mathbf{k}'}\psi{}} + \abs{\partial{}_x\mathfrak{D}^{\mathbf{k}'}\psi{}} + \abs{\partial{}_\theta{}\mathfrak{D}^{\mathbf{k}'}\psi{}}).
\end{equation}
From \zcref{hat-derivatives}, it follows that
\begin{equation}
\abs{[\partial{}_\theta{},\mathfrak{D}^{\mathbf{k}}]\psi{}}\lesssim \sum_{\abs{\mathbf{k}'}<\abs{\mathbf{k}}}e^{-\frac{1}{2}\cosh x}(\abs{\partial{}_\tau{}\mathfrak{D}^{\mathbf{k}'}\psi{}} + \abs{\partial{}_x\mathfrak{D}^{\mathbf{k}'}\psi{}} + \abs{\partial{}_\theta{}\mathfrak{D}^{\mathbf{k}'}\psi{}}).
\end{equation}
Since \(P\psi{} = \rho^2rF(\Grad \varphi{})\) when \(\Box{}\varphi{} = F(\Grad \varphi{})\), the result in the region
\(\set{x\ge 1}\) follows. The result in the region \(\set{x\le 1}\) is proved
similarly. We note only that, in this region, decaying \(x\)-weights are
irrelevant.

Since \(\partial_\tau{}\) and \(\partial_x\) commute with the restricted set of commutators
\(\mathring{\mathfrak{D}}\), the final statement of the proposition also holds.
\end{proof}
\subsection{Global existence for solutions to a semilinear equation outside of symmetry}
\label{sec:semilinear} The main result of this section is \zcref{semilinear}, which is a
small-data global existence result for solutions to a class of semilinear
equations including perturbations of \(\Box{}\varphi{} =
g^{\alpha{}\beta{}}\partial_\alpha{}\varphi{}\partial_\beta{}\varphi{}\). We do
not make any symmetry assumptions on the initial data (in contrast to the
results of \zcref{sec:semilinear-axisymmetric}).
\subsubsection{A refined estimate for almost \texorpdfstring{\(\U(1)\)}{U(1)}-symmetric nonlinearities}
We first prove a refined version of \zcref{Dk-P-psi-estimate} (which holds for
arbitrary nonlinearities satisfying the null condition) that applies to almost
\(\U(1)\)-symmetric nonlinearities (see \zcref{axisymmetric-nonlinearity}). The
non-\(\U(1)\)-symmetric part of such a nonlinearity has extra decay in space, and so a
certain ``bad frequency interaction'' appears only with a decaying weight in \(x\).
\begin{lemma}[Converting the equation solved by \(\varphi\) to an estimate for \(\psi\); almost \(\U(1)\)-symmetric case]
Suppose \(\varphi{}\in C^\infty(\mathcal{M})\) solves \(\Box{}\varphi{} = F(\Grad \varphi{})\) for a quadratic
nonlinearity \(F(\Grad \varphi{})\) satisfying the null condition with time
weight \(\ell{}(\tau{})\) (see \zcref{null-condition}). Suppose that \(F(\Grad
\varphi{})\) is almost \(\U(1)\)-symmetric (see \zcref{axisymmetric-nonlinearity}). Then,
in the region \(\set{x\ge 1}\), the quantity \(\psi{} = r\varphi{}\) satisfies the estimate
\begin{equation}\label{Dk-P-zero-mode}
\begin{split}
&\int _{S^1}\abs{(\mathfrak{D}^{\mathbf{k}}P\psi{})_0}^2\dd{}\theta{} \lesssim  \Bigl(\sum_{k=0}^{\abs{\mathbf{k}}}\abs{\partial{}_\tau^k\ell{}(\tau{})}^2\Bigr)\sum_{\mathbf{k}_1 + \mathbf{k}_2\le \mathbf{k}}\int _{S^1}(\textnormal{I}) + \textnormal{(II)}\dd{}\theta{},
\end{split}
\end{equation}
where
\begin{equation}\label{Dk-P-psi-estimate-I}
\begin{split}
\textnormal{(I)} &\coloneqq{}\frac{1}{\cosh^2 x}(\abs{\partial{}_\tau{}\mathfrak{D}^{\mathbf{k}_1}\psi{}}^2 + \abs{\partial{}_x\mathfrak{D}^{\mathbf{k}_1}\psi{}}^2)(\abs{\partial{}_\tau{}\mathfrak{D}^{\mathbf{k}_2}\psi{}}^2 + \abs{\partial{}_x\mathfrak{D}^{\mathbf{k}_2}\psi{}}^2) \\
&\qquad + \frac{1}{\cosh x\cosh^2 \tau{}}\abs{\Grad _{S^2}\mathfrak{D}^{\mathbf{k}_1}\psi{}}^2\Bigl(\frac{1}{\cosh^2 x}(\abs{\partial{}_\tau{}\mathfrak{D}^{\mathbf{k}_2}\psi{}}^2 + \abs{\partial{}_x\mathfrak{D}^{\mathbf{k}_2}\psi{}}^2 + \abs{\mathfrak{D}^{\mathbf{k}_2}\psi{}}^2)  + \abs{\Grad _{S^2}\mathfrak{D}^{\mathbf{k}_2}\psi{}}^2\Bigr) \\
&\qquad + \cosh x\abs{\partial{}_\theta{}\mathfrak{D}^{\mathbf{k}_1}\psi{}}^2\Bigl(\cosh x\abs{\partial{}_\theta{}\mathfrak{D}^{\mathbf{k}_2}\psi{}}^2 + \frac{1}{\cosh^2 x}(\abs{\partial{}_\tau{}\mathfrak{D}^{\mathbf{k}_2}\psi{}}^2 + \abs{\partial{}_x\mathfrak{D}^{\mathbf{k}_2}\psi{}}^2) \\
&\qquad +  \frac{1}{\cosh x\cosh^2 \tau{}}\abs{\Grad _{S^2}\mathfrak{D}^{\mathbf{k}_2}\psi{}}^2\Bigr) \\
\end{split}
\end{equation}
and
\begin{equation}\label{Dk-P-psi-estimate-II}
\begin{split}
\textnormal{(II)} &\coloneqq{} \frac{1}{\cosh^2 x}\bigl(\abs{\partial{}_\tau{}(\mathfrak{D}^{\mathbf{k}_1}\psi{})_0}^2 + \abs{\partial{}_x(\mathfrak{D}^{\mathbf{k}_1}\psi{})_0}^2 + \abs{(\mathfrak{D}^{\mathbf{k}_2}\psi{})_0}^2\bigr)\abs{(\mathfrak{D}^{\mathbf{k}_2}\psi{})_0}^2 \\
&\qquad +  \frac{1}{\cosh x}\bigl(\abs{\partial{}_\tau{}(\mathfrak{D}^{\mathbf{k}_1}\psi{})_{\ge 1}}^2  + (\cosh x)^{-1}\abs{\partial{}_x(\mathfrak{D}^{\mathbf{k}_1}\psi{})_{\ge 1}}^2 \\
&\qquad + \cosh x\abs{\partial{}_\theta{}(\mathfrak{D}^{\mathbf{k}_1}\psi{})_{\ge 1}}^2 + \abs{(\mathfrak{D}^{\mathbf{k}_2}\psi{})_{\ge 1}}^2\bigr)\abs{(\mathfrak{D}^{\mathbf{k}_2}\psi{})_{\ge 1}}^2  \\
&\qquad +   \frac{1}{\cosh^2 x}\bigl(\abs{\partial{}_\tau{}\mathfrak{D}^{\mathbf{k}_1}\psi{}}^2 + (\cosh x)^{-1}\abs{\partial{}_x\mathfrak{D}^{\mathbf{k}_1}\psi{}}^2 + \cosh x\abs{\partial{}_\theta{}\mathfrak{D}^{\mathbf{k}_1}\psi{}}^2 + \abs{\mathfrak{D}^{\mathbf{k}_2}\psi{}}^2\bigr)\abs{\mathfrak{D}^{\mathbf{k}_2}\psi{}}^2.
\end{split}
\end{equation}
Here \(P\) is the operator defined in \zcref{P-def} and \(\mathfrak{D}\) is a
commutator vector field (see \zcref{commutators}).
\label{Dk-P-psi-estimate-almost-axisymmetric}
\end{lemma}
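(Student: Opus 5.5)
We reuse the proof of \zcref{Dk-P-psi-estimate}. There we showed that in the region \(\set{x\ge 1}\) the quantity \(P\psi{}=\rho^2rF(\Grad \varphi{})\) is \(\ell{}(\tau{})\) times an \(O_{\mathfrak{D}}(1)\)-linear combination of the quadratic expressions in \zcref{semilinearity-decomposition}. The only new input is to track the \(\theta{}\)-dependence of the \(O_{\mathfrak{D}}(1)\) coefficients. Write each coefficient as \(c = c_0 + c_{\ge 1}\). The geometric factors (powers of \(\cosh x\), \(\tanh x\), \(e^{-x}\), \(\tanh \tau{}\), etc.) are \(\U(1)\)-symmetric, so \(c_0\) is bounded together with its \(\mathfrak{D}\)-derivatives. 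By \zcref{axisymmetric-nonlinearity}, \(\abs{(\mathfrak{D}^{\mathbf{k}}c)_{\ge 1}}\lesssim (\cosh x)^{-1/2}\).

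\textbf{Step 1.} Apply \(\mathfrak{D}^{\mathbf{k}}\) and distribute by the Leibniz rule and \zcref{rearrangement}. Commutators \([\partial_\tau{},\mathfrak{D}^{\mathbf{k}}]\), \([\partial_x,\mathfrak{D}^{\mathbf{k}}]\) and \([\partial_\theta{},\mathfrak{D}^{\mathbf{k}}]\) are handled exactly as in the proof of \zcref{Dk-P-psi-estimate}, since they carry \(e^{-\frac12\cosh x}\) weights. This gives a sum of terms \(\partial^j_\tau\ell{}\cdot \mathfrak{D}^{\mathbf{k}_0}c\cdot A(\mathfrak{D}^{\mathbf{k}_1}\psi{})\, B(\mathfrak{D}^{\mathbf{k}_2}\psi{})\), where \(A,B\) range over \(\partial_\tau{},\partial_x,\Omega{},\partial_\theta{}\) and the identity, with the weights listed in \zcref{semilinearity-decomposition}. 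Cauchy--Schwarz in the finite sum produces the prefactor \(\sum_k\abs{\partial_\tau^k\ell{}}^2\). It then suffices to bound \(\int _{S^1}\abs{(cAB)_0}^2\dd{}\theta{}\) for each term, where \((\cdot )_0\) is the \(S^1\)-average, so \(\abs{(G)_0}^2\le \frac{1}{2\pi}\int _{S^1}\abs{G}^2\).

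\textbf{Step 2.} Sort the terms.

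\emph{Terms already in (I).} Every term in which neither factor is undifferentiated goes into (I). So does every term carrying a \(\partial_\theta{}\)-factor. For these we bound \(\abs{(cAB)_0}^2\le \int _{S^1}\abs{c}^2\abs{A}^2\abs{B}^2\) and read off the weights: \((\cosh x)^{-1}\) squared, \((\cosh x)^{-3/2}(\cosh \tau{})^{-1}\) squared, \(\cosh x\) squared, and so on. The few terms whose weights are slightly weaker than those displayed in (I) are placed in (II) instead.

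\emph{Zeroth-order terms.} The remaining terms contain an undifferentiated \(\mathfrak{D}^{\mathbf{k}_2}\psi{}\): \(\frac{1}{\cosh x}\mathfrak{D}^{\mathbf{k}_1}\psi{}\,\mathfrak{D}^{\mathbf{k}_2}\psi{}\), \(\frac{1}{\cosh x}\partial\mathfrak{D}^{\mathbf{k}_1}\psi{}\,\mathfrak{D}^{\mathbf{k}_2}\psi{}\), and \((\cosh x)^{-1/2}\partial_\theta{}\mathfrak{D}^{\mathbf{k}_1}\psi{}\,\mathfrak{D}^{\mathbf{k}_2}\psi{}\). For these we use the frequency decomposition from the introduction,
\[(cgh)_0 = c_0g_0h_0 + c_0(g_{\ge 1}h_{\ge 1})_0 + (c_{\ge 1}g_{\ge 1}h_0)_0 + (c_{\ge 1}g_0h_{\ge 1})_0 + (c_{\ge 1}(g_{\ge 1}h_{\ge 1})_{\ge 1})_0,\]
with \(h=\mathfrak{D}^{\mathbf{k}_2}\psi{}\). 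Note that \((\mathfrak{D}\psi)_{0}\) need not equal \(\mathfrak{D}\psi_0\) for \(\widehat{\partial}\); this is harmless because we never move \(\mathfrak{D}\) through the average.
\begin{itemize}
\item The term \(c_0g_0h_0\) gives the first line of (II), with weight \((\cosh x)^{-2}\).
\item The term \(c_0(g_{\ge 1}h_{\ge 1})_0\) gives the second line, with weight \((\cosh x)^{-1}\). Here \(\partial_x g\) carries the extra \((\cosh x)^{-1}\), and \(\partial_\theta{}g\) carries \(\cosh x\) after accounting for the original weight \((\cosh x)^{-1/2}\) squared.
\item All terms containing \(c_{\ge 1}\) gain the factor \((\cosh x)^{-1}\) from \(\abs{c_{\ge 1}}^2\). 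Bounding these crudely by \(\int \abs{g}^2\abs{h}^2\) gives the third line of (II), with weight \((\cosh x)^{-2}\).
\end{itemize}

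\textbf{Main obstacle.} The delicate point is the second line, where we must check that each zero-mode-free product really carries the stated weights. In particular, we need the \(\partial_x(\cdot )_{\ge 1}\) factor to appear with \((\cosh x)^{-2}\) overall rather than \((\cosh x)^{-1}\). This requires checking in \zcref{semilinearity-decomposition} that every \(\partial_x\psi{}\cdot \psi{}\) term comes with weight \(\frac{1}{\cosh x}\), whose square is \((\cosh x)^{-2}\). If a weaker weight appears, we would absorb it using \(\abs{c_{\ge 1}}\) when \(c\) is non-symmetric. When \(c\) is symmetric, we would accept the weight displayed in the second line, which is exactly what (II) allows.
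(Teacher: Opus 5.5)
Your route is the paper's. You reuse the list of quadratic terms from the proof of the preceding lemma, bound zero modes crudely via \(\int_{S^1}\abs{(G)_0}^2\,\mathrm{d}\theta\le\int_{S^1}\abs{G}^2\,\mathrm{d}\theta\) to produce (I), and split the zeroth-order terms into \(S^1\)-modes using \(\abs{c_{\ge 1}}^2\lesssim(\cosh x)^{-1}\) to produce (II).

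However, one step fails as written. If you bound the three \(c_{\ge 1}\) pieces of the five-term decomposition separately, you get mixed terms such as
\[
\int_{S^1}\abs{(c_{\ge 1}g_{\ge 1}h_0)_0}^2\,\mathrm{d}\theta\lesssim \frac{1}{\cosh x}\abs{h_0}^2\int_{S^1}\abs{g_{\ge 1}}^2\,\mathrm{d}\theta .
\]
This is controlled by \(\norm{g}_{L^2(S^1)}^2\norm{h}_{L^2(S^1)}^2\) but not by \(\int_{S^1}\abs{g}^2\abs{h}^2\,\mathrm{d}\theta\). To see this, take \(g,h\) with disjoint \(\theta\)-supports: the latter vanishes while the individual piece need not. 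The term also matches neither of the first two lines of (II), which contain only \(\abs{g_0}^2\abs{h_0}^2\) and \(\abs{g_{\ge1}}^2\abs{h_{\ge1}}^2\). You must recombine first:
\[
(c_{\ge 1}g_{\ge 1}h_0)_0+(c_{\ge 1}g_0h_{\ge 1})_0+(c_{\ge 1}(g_{\ge 1}h_{\ge 1})_{\ge 1})_0=(c_{\ge 1}(gh)_{\ge 1})_0 .
\]
Bessel's inequality then gives \(\int_{S^1}\abs{(c_{\ge1}(gh)_{\ge1})_0}^2\,\mathrm{d}\theta\le\norm{c_{\ge1}}_{L^\infty(S^1)}^2\int_{S^1}\abs{gh}^2\,\mathrm{d}\theta\). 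This is exactly the three-term decomposition \((fgh)_0=f_0g_0h_0+f_0(g_{\ge1}h_{\ge1})_0+(f_{\ge1}(gh)_{\ge1})_0\) that the paper's proof uses.

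Your sorting of terms also needs repair.
\begin{itemize}
\item Since (II) contains no angular derivatives, zeroth-order terms with an \(\Omega\)-factor, such as \((\cosh x)^{-3/2}(\cosh\tau)^{-1}\psi\,\Omega\psi\), must go into the second line of (I) by the crude bound. Your rule misses them.
\item Conversely, \(\partial_\theta\psi\cdot\psi\) cannot go into (I), which has no such product. You do treat it correctly in (II), but that contradicts your own rule.
\item The paper's rule is: an \(\Omega\)-factor or no zeroth-order factor sends a term to (I); everything else goes through the mode split.
\end{itemize}
Finally, your ``main obstacle'' is not one. The term \(\psi\,\partial_x\psi\) carries exactly \((\cosh x)^{-1}\) in the list, and every other weight in (II) is at least as weak as the true one. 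The only term whose listed weight looks too weak for (I) is \(\partial_\theta\psi\,\Omega\psi\). It cannot be moved to (II), since (II) has no angular derivatives. Recomputing it from the \(f_{\theta\Omega}\) term of the null condition gives weight \((\cosh\tau)^{-1}\), which fits the third line of (I).
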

\begin{proof}
We begin with an outline of the proof. We will project each term in the
expression for \(\mathfrak{D}^{\mathbf{k}}P\psi{}\) derived in the proof of
\zcref{Dk-P-psi-estimate} (see \zcref{semilinearity-decomposition}) to the zero
\(S^1\)-mode. Some of these terms, namely those that either contain an
\(\Omega{}\)-derivative or do not contain a zeroth-order term, are easier to
estimate in the proof of global existence (see \zcref{semilinear}). For these terms
we simply estimate the (\(L^2\) integral over \(S^1\) of the) zero mode by the
function itself. This is how the terms in \(\textnormal{(I)}\) are produced. The
terms in \(\textnormal{(II)}\) arise from the terms in the expression for
\(\mathfrak{D}^{\mathbf{k}}P\psi{}\) that contain a zeroth-order term and do not
contain an \(\Omega{}\)-derivative. To estimate these terms, we must distinguish
between the zero \(S^1\)-mode and the higher modes, and exploit the extra decay
enforced by the definition of almost \(\U(1)\)-symmetry.

We illustrate this with an example. One term that could appear is
\((f_{\bullet{}}\mathfrak{D}^{\mathbf{k}}\psi{}\partial{}_x\psi{})_0\), where
\(\mathfrak{D}^{\mathbf{k}}\) is a top-order derivative and \(f_\bullet{}\) is
one of the coefficients of the nonlinearity \(F(\Grad \varphi{})\) (see
\zcref{null-condition}). We claim that the worst term, with respect to decay in
\(x\), is the one where we must estimate \((\mathfrak{D}^{\mathbf{k}}\psi)_0\)
and \(\partial{}_x\psi{}_{\ge 1}\). Indeed, if the top-order term were
\((\mathfrak{D}^{\mathbf{k}}\psi)_{\ge 1}\), we could control it in \(L^2\) by
\(\partial_\theta{}\mathfrak{D}^{\mathbf{k}}\psi\) (using a Poincaré inequality
in \(S^1\)), which appears in the \(\partial_\tau{}\)-energy with a very strong
weight in \(x\). On the other hand, to control
\(\mathfrak{D}^{\mathbf{k}}\psi_0\) in \(L^2\), we must use the \(r^p\)-type
estimate of \zcref{rp-type-estimate}, which comes with a weaker weight in \(x\). The
most dangerous term is therefore one in which
\(\mathfrak{D}^{\mathbf{k}}\psi_0\) is multiplied by a quantity with the worst
possible decay in \(x\) that we must put in \(L^\infty\). In view of the
pointwise estimates of \zcref{global-pointwise-s,global-pointwise-tau}, this term is \(\partial_x\psi_{\ge
1}\). That is, the worst term is
\(\mathfrak{D}^{\mathbf{k}}\psi_0\partial_x\psi_{\ge 1}\). If \(f_\bullet{}\) is
\(\U(1)\)-symmetric, namely if \(f_\bullet{} = (f_\bullet{})_0\), then this term cannot
appear in \((f_\bullet{}\mathfrak{D}^\mathbf{k}\psi{}\partial_x\psi{})_0\). When
\(f_\bullet{}\) is not \(\U(1)\)-symmetric, there will be a term that looks like
\((f_\bullet{})_{\ge 1}\mathfrak{D}^{\mathbf{k}}\psi_0\partial_x\psi_{\ge 1}\).
The definition of almost \(\U(1)\)-symmetry is designed exactly so that
\((f_\bullet{})_{\ge 1}\) has enough \(x\)-decay that we can control this term.

We now turn to the proof of the estimate \zcref{Dk-P-zero-mode}. By the argument
used to prove \zcref{Dk-P-psi-estimate} (see in particular
\zcref{semilinearity-decomposition}), \((\mathfrak{D}^{\mathbf{k}}P\psi{})_0\) is a
sum of projections to the zero \(S^1\)-mode of terms that are products of
\(\mathfrak{D}^{\mathbf{k}_1}f_\bullet{}\) and one of the terms appearing in
\zcref{semilinearity-decomposition}, with the two occurrences of \(\psi{}\) there
replaced by \(\mathfrak{D}^{\mathbf{k}_2}\psi{}\) and
\(\mathfrak{D}^{\mathbf{k}_3}\psi{}\), for \(\mathbf{k}_1 + \mathbf{k}_2 +
\mathbf{k}_3\le \mathbf{k}\), potentially also multiplied by a \(\U(1)\)-symmetric
\(O_{\mathfrak{D}}(1)\) function (which will not affect the arguments below). An
example of such a term is
\begin{equation}
\textnormal{term} = \frac{1}{\cosh x}\mathfrak{D}^{\mathbf{k}_1}f_\bullet{}\partial{}_\tau{}\mathfrak{D}^{\mathbf{k}_2}\psi{}\partial{}_x\mathfrak{D}^{\mathbf{k}_3}\psi{}.
\end{equation}
If either one of the derivatives is an \(\Omega{}\)-derivative or there is no
zeroth-order term in this expression, we control
\begin{equation}
\int _{S^1} \abs{(\textnormal{term})_0}^2\dd{}\theta{}\le  \int _{S^1} \abs{\textnormal{term}}^2\dd{}\theta{}\lesssim \sum_{\mathbf{k}_1 + \mathbf{k}_2\le \mathbf{k}} \int _{S^1} \frac{1}{\cosh^2 x}\abs{\partial{}_\tau{}\mathfrak{D}^{\mathbf{k}_1}\psi{}}^2\abs{\mathfrak{D}^{\mathbf{k}_2}\psi{}}^2\dd{}\theta{}.
\end{equation}
This is how the terms in \(\textnormal{(I)}\) in \zcref{Dk-P-zero-mode} are
produced. The terms in \(\textnormal{(II)}\) arise when none of the terms in the
product is an \(\Omega{}\)-derivative, and (at least) one of the terms in the
product is a zeroth-order term. An example of such a term is
\begin{equation}
\textnormal{term}' = \frac{1}{\cosh x}\mathfrak{D}^{\mathbf{k}_1}f_\bullet{}\mathfrak{D}^{\mathbf{k}_2}\psi{}\partial{}_x\mathfrak{D}^{\mathbf{k}_3}\psi{}.
\end{equation}
For this we use the decomposition
\begin{equation}
(fgh)_0 = f_0g_0h_0 + f_0(g_{\ge 1}h_{\ge 1})_0 + (f_{\ge 1}(gh)_{\ge 1})_0,
\end{equation}
valid for functions \(f,g,h:S^1\to \C\), and estimate using Parseval's identity
\begin{equation}
\begin{split}
\int _{S^1}\abs{(fgh)_0}^2\dd{}\theta{} &\lesssim \int _{S^1}\abs{f_0}^2\abs{g_0}^2\abs{h_0}^2\dd{}\theta{}  +  \int _{S^1}\abs{f_0}^2\abs{(g_{\ge 1}h_{\ge 1})_0}^2\dd{}\theta{}  + \int _{S^1} \abs{f_{\ge 1}}^2\abs{(gh)_{\ge 1}}^2\dd{}\theta{} \\
&\lesssim \norm{f}_{L^\infty(S^1)}^2\int _{S^1}\abs{g_0}^2\abs{h_0}^2\dd{}\theta{} + \norm{f}_{L^\infty(S^1)}^2\int _{S^1}\abs{g_{\ge 1}}^2\abs{h_{\ge 1}}^2\dd{}\theta{} + \norm{f_{\ge 1}}_{L^\infty(S^1)}^2\int _{S^1}\abs{g}^2\abs{h}^2\dd{}\theta{}.
\end{split}
\end{equation}
Using now the assumption that \(F(\Grad \varphi{})\) is almost \(\U(1)\)-symmetric (see \zcref{axisymmetric-nonlinearity}), so that
\(\abs{(\mathfrak{D}^{\mathbf{k}}f_\bullet{})_{\ge 1}}^2\lesssim 1/(\cosh x)\) (in addition
to simply \(\abs{\mathfrak{D}^{\mathbf{k}}f_\bullet{}}\lesssim 1\)), we obtain
\begin{equation}
\begin{split}
\int _{S^1}\abs{(\textnormal{term}')_0}^2\dd{}\theta{} &\lesssim \sum_{\mathbf{k}_1+ \mathbf{k}_2\le \mathbf{k}} \int _{S^1} \frac{1}{\cosh^2 x}\abs{\partial{}_x(\mathfrak{D}^{\mathbf{k}_1}\psi{})_0}^2\abs{(\mathfrak{D}^{\mathbf{k}_2}\psi{})_0}^2\dd{}\theta{} \\
&\qquad  +  \int _{S^1} \frac{1}{\cosh^2 x}\abs{\partial{}_x(\mathfrak{D}^{\mathbf{k}_1}\psi{})_{\ge 1}}^2\abs{(\mathfrak{D}^{\mathbf{k}_2}\psi{})_{\ge 1}}^2\dd{}\theta{} \\
&\qquad  +  \int _{S^1} \frac{1}{\cosh^3 x}\abs{\partial{}_x\mathfrak{D}^{\mathbf{k}_1}\psi{}}^2\abs{\mathfrak{D}^{\mathbf{k}_2}\psi{}}^2\dd{}\theta{}.
\end{split}
\end{equation}
Repeating this procedure for all terms in \zcref{semilinearity-decomposition}, we get
the terms in \(\textnormal{(II)}\).
\end{proof}
\subsubsection{The global existence result}
\begin{proposition}[Global existence for solutions to a semilinear wave equation outside of symmetry]
Let \(F(\Grad \varphi{})\) be a quadratic nonlinearity satisfying the null condition with
time weight \(\ell{}(\tau{})\) (see \zcref{null-condition}). Suppose also that
\(F(\Grad \varphi{})\) is almost \(\U(1)\)-symmetric in the sense of
\zcref{axisymmetric-nonlinearity}.

Let \(k_{\textnormal{max}}\ge 8\). There is a large constant \(C_\ast{} =
C_\ast{}(k_{\textnormal{max}})\) and a small constant \(\epsilon{}_0 > 0\)
(independent of \(k_{\textnormal{max}}\)) such that whenever the time weight
\(\ell{}(\tau{})\) associated to the nonlinearity \(F(\Grad \varphi{})\) decays
at a sufficiently fast polynomial rate described by \(C_\ast{}\), in the sense
that \(\abs{\partial_\tau^k\ell{}(\tau{})}\lesssim (1+\tau{})^{-C_\ast{}}\) for
\(0\le k\le k_{\textnormal{max}}\), the following holds. Given
sufficiently small smooth initial data on \(\Sigma{}(0)\), of size at most \(\epsilon_0\), the unique solution of
\begin{equation}\label{box-semilinear-equation}
\Box{}\varphi{}=F(\Grad \varphi{})
\end{equation}
achieving this data is smooth and exists globally to the future of \(\Sigma{}(0)\).
More precisely, we assume that the initial data satisfies
\begin{equation}\label{data-small}
\sum_{\abs{\mathbf{k}}\le k_{\textnormal{max}}} E_T[\mathfrak{D}^{\mathbf{k}}\psi{}](0) + \sum_{\abs{\mathbf{k}}\le k_{\textnormal{max}}} E_p[(\mathfrak{D}^{\mathbf{k}}\psi{})_0](0) + \sum_{\abs{\mathbf{k}}\le k_{\textnormal{max}}-1}\sup_{x\ge 1}\int _{S^2\times S^1}\abs{(\mathfrak{D}^{\mathbf{k}}\psi{})_0}^2(\tau{}=0,x)\dd{}\omega{}\dd{}\theta{}\le \epsilon{}
\end{equation}
for some \(\epsilon{}\in [0,\epsilon{}_0]\).

Moreover, the solution grows at most polynomially in \(\tau{}\) in the norm of \zcref{data-small}, in the sense that
there exists a constant \(B = B(k_{\textnormal{max}}) > 0\) such that for \(\tau{}\ge
0\), we have
\begin{equation}\label{energy-growth}
\begin{split}
&\sum_{\abs{\mathbf{k}}\le k_{\textnormal{max}}} E_T[\mathfrak{D}^{\mathbf{k}}\psi{}](\tau{}) + \sum_{\abs{\mathbf{k}}\le k_{\textnormal{max}}} E_p[(\mathfrak{D}^{\mathbf{k}}\psi{})_{0}](\tau{}) +  \sum_{\abs{\mathbf{k}}\le k_{\textnormal{max}}-1}\sup_{x\ge 1}\int _{S^2\times S^1}\abs{(\mathfrak{D}^{\mathbf{k}}\psi{})_0}^2(\tau{},x)\dd{}\omega{}\dd{}\theta{}
\lesssim \epsilon{}(1 + \tau{})^B.
\end{split}
\end{equation}
We also establish the pointwise estimates
\begin{equation}\label{glob-exist-pointwise}
\begin{split}
&\sum_{\abs{\mathbf{k}}\le k_{\textnormal{max}}-3}\norm{\mathfrak{D}^{\mathbf{k}}\psi{}}_{L^\infty(\Sigma{}(\tau{}))}^2 + \sum_{\abs{\mathbf{k}}\le k_{\textnormal{max}}-4}\norm{\partial{}_x(\mathfrak{D}^{\mathbf{k}}\psi{})_0}_{L^\infty(\Sigma{}(\tau{}))}^2 + \sum_{\abs{\mathbf{k}}\le k_{\textnormal{max}}-4}\norm{(\cosh x)^{-1/2}\partial{}_x\mathfrak{D}^{\mathbf{k}}\psi{}}_{L^\infty(\Sigma{}(\tau{}))}^2 \\
&\qquad + \sum_{\abs{\mathbf{k}}\le k_{\textnormal{max}}-4}\norm{(\cosh x)^{1/2}\partial{}_\theta{}\mathfrak{D}^{\mathbf{k}}\psi{}}_{L^\infty(\Sigma{}(\tau{}))}^2 \\
&\lesssim \epsilon{}(1+\tau{})^{B}.
\end{split}
\end{equation}
Here the constants implicit in \(\lesssim\) depend on \(k_{\textnormal{max}}\).
\label{semilinear}
\end{proposition}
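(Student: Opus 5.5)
We argue by continuity, with bootstrap assumptions that let the energy grow polynomially. Let \(\mathbb{E}(\tau)\) denote the left-hand side of \zcref{energy-growth}. On a maximal interval \([0,\tau_f]\) of existence (given by local well-posedness) we assume
\[
\mathbb{E}(\tau)\le 2A\epsilon(1+\tau)^B,\qquad \tau\in[0,\tau_f],
\]
with \(A\), \(B\) to be fixed. We then show that \(\mathbb{E}(\tau)\le A\epsilon(1+\tau)^B\) on the same interval. Once \(\epsilon_0\) is small, this gives global existence and \zcref{energy-growth}.

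The first step is to convert the bootstrap into pointwise bounds \zcref{glob-exist-pointwise} via \zcref{global-pointwise-tau}. The inhomogeneous terms \(\int(P\mathfrak{D}^{\le 3}\psi)^2\) appearing there are bounded using \zcref{Dk-P-psi-estimate} and a low-order bootstrap, so they are \(\lesssim\epsilon^2(1+\tau)^{2B}\ell^2\). The \(\psi_0\) bound \zcref{Loo-0} needs \(F_p\) and the \(\sup_{x\ge1}\) term. Both are part of \(\mathbb{E}\) or are controlled by \zcref{rp-equation-2}; the flux \(F_p\) is bounded by the same \(r^p\) estimate. Since \(k_{\max}\ge 8\), all lower-order factors (at most \(k_{\max}/2+4\) derivatives) fall in this range. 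This gives \(|\psi_0|,|\partial_x\psi_0|\lesssim(\epsilon(1+\tau)^B)^{1/2}\), together with \(|\psi_{\ge1}|\lesssim(\cosh x)^{-1/2}(\cdots)^{1/2}\) and \(|\partial_x\psi_{\ge1}|\lesssim(\cosh x)^{1/2}(\cdots)^{1/2}\).

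The second step is the energy estimates at top order. Apply \zcref{T-estimate-3} to \(\mathfrak{D}^{\mathbf{k}}\psi\), and apply \zcref{rp-equation-2} to \((\mathfrak{D}^{\mathbf{k}}\psi)_0\). The latter is legitimate because \(((\mathfrak{D}^{\mathbf{k}}\psi)_0\) is \(\U(1)\)-symmetric, and \(P\) commutes with the zero-mode projection, so \(P(\mathfrak{D}^{\mathbf{k}}\psi)_0=(P\mathfrak{D}^{\mathbf{k}}\psi)_0\). Write \(P\mathfrak{D}^{\mathbf{k}}\psi=\mathfrak{D}^{\mathbf{k}}P\psi+[P,\mathfrak{D}^{\mathbf{k}}]\psi\) and handle the commutator with \zcref{frak-D-commutation}.

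Term (I) of that commutator is \(\cosh^{-2}\tau\,\Omega\mathfrak{D}^{\mathbf{k}'}\psi\) with \(\mathbf{k}'<\mathbf{k}\) lexicographically. In the \(T\)-estimate it is bounded by Cauchy--Schwarz: \(\int e^{-\tau}E_T^{1/2}E_T^{1/2}\,d\tau\), with an extra \(e^{-\tau}\) to spare. In the \(r^p\)-estimate it is absorbed by the angular bulk term in \(B\), using induction in the lexicographic order as in the proof of \zcref{main-proof-rp-goal}. Term (II) carries the weight \(e^{-\cosh x}\) and is harmless; its \(\mathfrak{D}^{\mathbf{k}'}P\psi\) piece is fed back into the nonlinear estimate. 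The constant \(B\) enters here: each lexicographic induction step may cost a factor that we allow to grow with \(\tau\). We take \(B\) depending on \(k_{\max}\) and let the top-order estimates inherit the lower-order growth.

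The third step, the main one, is the nonlinear term. For the \(T\)-estimate we use \zcref{Dk-P-psi-estimate}, placing the lower-order factor in \(L^\infty\) and the higher-order factor in \(L^2\), as sketched in the introduction. The worst weight is \((\cosh x)^{1/2}\) from \(\partial_x\psi_{\ge1}\). It is paired with \((\cosh x)^{-1}\) and a zeroth-order top-order term, which is controlled by \(E_T\) for the \(\ge1\) modes through Poincaré, and by the \(\cosh^{-1}x\,\psi_0^2\) term of \(E_p\) for the zero mode. This gives a contribution \(\int|\ell|\,\epsilon^{3/2}(1+\tau)^{3B/2}\,d\tau\). For the \(r^p\)-estimate we use \zcref{Dk-P-psi-estimate-almost-axisymmetric}. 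In term (II) there, the dangerous interaction \((\mathfrak{D}^{\mathbf{k}}\psi)_0\,\partial_x\psi_{\ge1}\) appears only with the extra factor \((\cosh x)^{-1}\), supplied by almost \(\U(1)\)-symmetry. After multiplying by \(\sinh x\), each summand becomes an \(L^\infty\) bound times an energy density. This gives a contribution \(\int\ell^2\,\epsilon^2(1+\tau)^{2B}\,d\tau\). The terms in (I) with \(\cosh x|\partial_\theta\cdot|^2\) are paired against \(\cosh x\,|\partial_\theta\psi|^2\) in \(L^\infty\) from \zcref{glob-exist-pointwise}, and against the \(\cosh^2x/\tanh^2x\,(\partial_\theta\cdot)^2\) term of \(E_T\).

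With \(C_\ast>2B+2\), both time integrals converge. Hence \(\mathbb{E}(\tau)\le C\epsilon+C\epsilon^{3/2}\) plus the lower-order, commutator-induced growth. Choosing \(A\) large, then \(B\) large enough to absorb the lexicographic hierarchy, and then \(\epsilon_0\) small, improves the bootstrap.

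I expect the main obstacle to be the bookkeeping of the \(\partial_{\hat y},\partial_{\hat z}\) commutators near the bubble, including their interaction with the \(S^1\)-mode splitting, while keeping the growth exponent uniform. My first attempt would be to fix a decreasing sequence of exponents \(B_{\mathbf{k}}\) along the lexicographic order and bootstrap each level separately.
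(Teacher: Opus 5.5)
Your architecture (continuity argument, lower-order factors in $L^\infty$ and top-order factors in $L^2$, the zero-mode/Poincar\'e split of the zeroth-order top-order term, and \zcref{Dk-P-psi-estimate-almost-axisymmetric} for the $r^p$ inhomogeneity) matches the paper. However, the bootstrap as you set it up cannot close, and the obstruction is exactly the term you call harmless.

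Term (II) of \zcref{frak-D-commutation} comes from the Cartesian commutators $\partial_{\hat y},\partial_{\hat z}$. It has the spatial weight $e^{-\cosh x}$ but no decay in $\tau$, and it is linear in $\psi$, so it carries no factor of $\epsilon$. In the $\partial_\tau$-estimate for $\mathfrak{D}^{\mathbf{k}}\psi$ it contributes $\int_0^\tau E_T[\mathfrak{D}^{\mathbf{k}}\psi]^{1/2}E_T[\mathfrak{D}^{\mathbf{k}'}\psi]^{1/2}\,d\tau'$ with $\mathbf{k}'<\mathbf{k}$. In the $r^p$-estimate it contributes $\int_0^\tau E_T[\mathfrak{D}^{\mathbf{k}'}\psi]\,d\tau'$. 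Under your single assumption $\mathbb{E}\le 2A\epsilon(1+\tau)^B$, both are of size $A\epsilon(1+\tau)^{B+1}$ with an $O(1)$ constant. So no choice of $A$, $B$, or $\epsilon_0$ returns $A\epsilon(1+\tau)^B$. Lumping all orders into one quantity and using Gr\"onwall would instead give $e^{C\tau}$ growth, which the polynomially decaying weight $\ell$ cannot absorb. The linear commutator term (I) likewise needs some mechanism other than smallness of $\epsilon$ to be absorbed. Relatedly, in your first step $\int(P\mathfrak{D}^{\le 3}\psi)^2$ is not $O(\epsilon^2\ell^2)$, because it contains the same linear commutators. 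This is why the paper makes $\int_{\Sigma(\tau)}(P\mathfrak{D}^{\mathbf{k}}\psi)^2\tanh x\,d\mu$ a separate bootstrap quantity.

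What is missing is the triangular structure of the commutators: the error for $\mathbf{k}$ involves only $\mathbf{k}'<\mathbf{k}$ in the lexicographic order. The paper bootstraps each multi-index separately. It assumes that $E_T[\mathfrak{D}^{\mathbf{k}}\psi]$, $\int_{\Sigma(\tau)}(P\mathfrak{D}^{\mathbf{k}}\psi)^2\tanh x\,d\mu$, and $E_p[(\mathfrak{D}^{\mathbf{k}}\psi)_0]+F_p[(\mathfrak{D}^{\mathbf{k}}\psi)_0](0,\tau)$ are each at most $A_{\mathbf{k}}\epsilon(1+\tau)^{2\#\mathbf{k}}$, with $A_{\mathbf{k}'}\ll A_{\mathbf{k}}$ for $\mathbf{k}'<\mathbf{k}$. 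The bad term is then
\[
\lesssim A_{\mathbf{k}}^{1/2}A_{<\mathbf{k}}^{1/2}\epsilon\int_0^\tau(1+\tau')^{2\#\mathbf{k}-1}\,d\tau'\lesssim A_{\mathbf{k}}^{1/2}A_{<\mathbf{k}}^{1/2}\epsilon(1+\tau)^{2\#\mathbf{k}},
\]
which is absorbed by the separation of constants rather than by smallness of $\epsilon$. The nonlinear terms close once $C_\ast=2+3\#\mathbf{k}_{\max}$, which gives $B=2\#\mathbf{k}_{\max}$.

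Your closing remark about level-dependent exponents points in this direction, with two corrections. The exponents must increase with $\mathbf{k}$, since each level is driven by time integrals of the lower ones; with the Cauchy--Schwarz pairing this costs two powers of $(1+\tau)$ in the energy per step. You also need the graded constants. This mechanism is the actual source of the polynomial growth (compare \zcref{weak-trapping}), not a bookkeeping detail.

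Finally, you do not say how to improve the $\sup_{x\ge 1}\int_{S^2\times S^1}|(\mathfrak{D}^{\mathbf{k}}\psi)_0|^2$ part of $\mathbb{E}$. The paper integrates along outgoing cones as in \zcref{psi0-far-T-prep}. It bounds the resulting $F_p$ flux by the bootstrap, and the boundary values at $x=1$ by the elliptic estimate \zcref{elliptic-near-bubble-prep-10}.
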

\begin{remark}[Global existence for the canonical nonlinearity]
\Cref{semilinear} can in particular be applied to the nonlinearity \(F(\Grad \varphi{}) =
f\cdot g^{\alpha{}\beta{}}\partial_\alpha{}\varphi{}\partial_\beta{}\varphi{}\) for \(f\) a \(\U(1)\)-symmetric function with bounded \(\mathfrak{D}\)-derivatives.
Indeed, by \zcref{canonical-double-null}, this nonlinearity is \(\U(1)\)-symmetric (hence
almost \(\U(1)\)-symmetric in the sense of \zcref{axisymmetric-nonlinearity}) and satisfies
the null condition as in \zcref{null-condition} with the exponentially decaying time
weight \(e^{-\tau{}}\).
\end{remark}
\begin{remark}[Why the solution is allowed to grow in time]
The polynomial growth in \(\tau{}\) of \(\psi{}\) in \zcref{semilinear} corresponds to a
logarithmic growth in the Minkowskian time coordinate \(t = \rho{}\sinh
\tau{}\). In particular, our proof does not show that the solution \(\varphi{}\)
has a radiation field, namely a finite limit for \(\psi{}=r\varphi{}\) as
\(\tau{}\to \infty\) along curves where \(\tau{}-x\) is constant.

We now explain the source of this growth in our proof. We must control
\(\partial_x\psi_{\ge 1}\) pointwise, all the way up to the bubble. In order to do so, we
use the Cartesian commutator vector fields \(\partial_{\hat{y}}\) and
\(\partial_{\hat{z}}\) (see \zcref{commutators}). This is because \(\partial_x\)
is a bounded linear combination of \(\partial_{\hat{y}}\) and
\(\partial{}_{\hat{z}}\). Commuting with these vector fields produces a
(non-\(\U(1)\)-symmetric) bulk error term that is quadratic in \(\psi_{\ge 1}\) and its
derivatives. Since \(\psi_{\ge 1}\) may be periodic in time (see
\zcref{time-periodic-intro} of \zcref{main-theorem-linear-intro}), we must treat this
error term as the integral in \(\tau{}\) of the energy flux through
\(\Sigma{}(\tau{})\). In particular, this error term grows in time.
\label{weak-trapping}
\end{remark}
\begin{remark}[The unique value of \(p\) in the \(r^p\)-type estimates that we can use to close the proof]
In \zcref{rp-relation}, we explained that the \(r^p\)-type estimate of
\zcref{rp-type-estimate}, which is crucial in our proof of \zcref{semilinear},
corresponds to the value \(p = 3/2\) in the usual \(r^p\)-weighted energy
estimates of Dafermos--Rodnianski \cite{rp-method} on Minkowski space \(\R^{3 +
1}\). By replacing \(\sinh x\) in the estimate for \zcref{rp-type-estimate} with
\((\sinh x)^q\) for \(q\in (0,2)\), we can extend our estimates to the range
analogous to \(p\in (1,2)\).

However, we must use the estimate corresponding to \(p = 3/2\), or \(q = 1\), to
close our proof of \zcref{semilinear}. The restriction \(q\ge 1\) comes from the
need to control a spacetime integral of a zeroth-order term \((\cosh
x)^{-1}\abs{\psi{}_0}^2\) in a large-\(x\) region (see
\zcref{Dk-Ppsi-estimate-pointwise}), while the \((\sinh x)^q\)-estimate would
control a term \((\sinh x)^q(\cosh x)^{-2}\abs{\psi{}_0}^2\). On the other hand,
we must have \(q\le 1\) because the \((\sinh x)^q\)-estimate would produce a
spacetime error term of the form \((\sinh x)^q(\cosh
x)^{-1}(\partial{}_\tau{}\psi{}_0)^2\) (see \zcref{semilinear-hardest-term-rp-I}),
while the best \(x\)-weight we can control in the bulk corresponds to
\((\partial_\tau{}\psi_0)^2\) (by treating the bulk integral as an integral of the
\(\partial_\tau{}\)-energy fluxes).
\label{semilinear-rp-value}
\end{remark}
\begin{proof}
The uniqueness aspect of \zcref{semilinear} follows from a suitable local
well-posedness theory in the norm of \zcref{data-small}. Given the a priori
estimates and pointwise estimates we have established (with respect to this
norm), the development of such a theory is standard (see for example
\cite[Chap.~9]{MR2527641} for details). The smoothness of the solution given smooth
initial data and the statement that \(\epsilon_0\) can be chosen independent of
\(k_{\textnormal{max}}\) follows from a persistence of regularity statement that
is part of the well-posedness theory. For the rest of the proof, we treat
\(k_{\textnormal{max}}\) as fixed.

To prove the global existence part of \zcref{semilinear}, we will use a continuity,
or bootstrap, argument. That is, we will show that the so-called ``bootstrap set''
of times \(\tau_{\textnormal{boot}}\in (0,\infty)\) for which the solution
exists and satisfies certain ``bootstrap assumptions'' (see
\zcref{b-boot-1,b-boot-2,b-boot-3}) in the region \(\set{0\le \tau{}\le
\tau{}_{\textnormal{boot}}}\) is non-empty, open, and closed in \((0,\infty)\).
Since the interval \((0,\infty)\) is connected, this shows that the solution
exists globally to the future of \(\Sigma{}(0)\) and satisfies the desired
estimates in this region. The non-emptiness of the bootstrap set follows from
local existence, and the closedness follows from a suitable Cauchy stability
statement that is part of the local well-posedness theory. We will focus on the
most difficult part, which is to show that the bootstrap set is open. This
follows from a standard argument using local existence together with the
``improvement of the bootstrap assumptions,'' namely the statement that the
bootstrap assumptions \zcref{b-boot-1,b-boot-2,b-boot-3} imply the same estimates
with a constant on the right-hand side that is improved by a factor \(1/2\).
This is done in Steps 2--4.

In Step 0, we introduce three bootstrap assumptions. In Step 1, we derive
pointwise estimates as consequences of the bootstrap assumptions. In Step 2--4,
we improve the bootstrap assumptions, and in Step 5, we complete the proof.

\step{Step 0: Setup and bootstrap assumptions.} For \(\abs{\mathbf{k}}\le
k_{\textnormal{max}}\), define constants \(A_{\mathbf{k}}\gg 1\) such that
\(A_{\mathbf{k}'}\ll A_{\mathbf{k}}\) whenever \(\mathbf{k}'<\mathbf{k}\). Write
\(A_{<\mathbf{k}} \coloneqq{} \max_{\mathbf{k}'<\mathbf{k}}A_{\mathbf{k}'}\) and
\(A \coloneqq{}\max _{\abs{\mathbf{k}}\le k_{\textnormal{max}}}A_{\mathbf{k}}\).
We assume \(\epsilon{}_0\) is chosen so that \(A\epsilon{}_0\ll 1\). Let
\(\mathbf{k}_{\textnormal{max}}\) be the largest multi-index with magnitude at
most \(k_{\textnormal{max}}\). For a multi-index \(\mathbf{k}\), let
\(\#\mathbf{k}\) denote the number of multi-indices \(\mathbf{k}'\) (including
the zero multi-index) such that \(\mathbf{k}' < \mathbf{k}\). Define
\(C_\ast{}\coloneqq{}2 + 3\#\mathbf{k}_{\textnormal{max}}\). For
\(\abs{\mathbf{k}}\le k_{\textnormal{max}}\), we suppose that the following
bootstrap assumptions hold for \(\tau{}\in{}[0,\tau{}_{\textnormal{boot}}]\) and
some \(\tau_{\textnormal{boot}} > 0\):
\begin{align}
E_T[\mathfrak{D}^{\mathbf{k}}\psi{}](\tau{})&\le A_{\mathbf{k}}\epsilon{}(1+\tau{})^{2\#\mathbf{k}}, \label{b-boot-1}\\
\int _{\Sigma{}(\tau{})}(P\mathfrak{D}^{\mathbf{k}}\psi{})^2\tanh x\dd{}\mu{}&\le A_{\mathbf{k}}\epsilon{}(1+\tau{})^{2\#\mathbf{k}},\label{b-boot-2} \\
E_p[(\mathfrak{D}^{\mathbf{k}}\psi{})_0](\tau{}) + F_p[(\mathfrak{D}^{\mathbf{k}}\psi{})_0](0,\tau{}) &\le A_{\mathbf{k}}\epsilon{}(1+\tau{})^{2\#\mathbf{k}}. \label{b-boot-3}
\end{align}
Here \(P\) is the operator defined in \zcref{P-def}.

\step{Step 1: Pointwise estimates.} We first derive some pointwise estimates that are
consequences of the bootstrap assumptions. By the bootstrap assumptions
\zcref{b-boot-1,b-boot-2}, and the estimates \zcref{Loo-0,Loo-1,Loo-2a,Loo-2b} of
\zcref{global-pointwise-tau}, and the smallness assumption \zcref{data-small} on the initial
data, the following pointwise estimates hold for \(\tau{}\in
[0,\tau{}_{\textnormal{boot}}]\):
\begin{equation}\label{pointwise-bootstrap}
\begin{split}
&\sum_{\abs{\mathbf{k}}\le k_{\textnormal{max}}-3}\norm{\mathfrak{D}^{\mathbf{k}}\psi{}}_{L^\infty(\Sigma{}(\tau{}))}^2 + \sum_{\abs{\mathbf{k}}\le k_{\textnormal{max}}-4}\norm{\partial{}_\tau{}\mathfrak{D}^{\mathbf{k}}\psi{}}_{L^\infty(\Sigma{}(\tau{}))}^2  +
 \sum_{\abs{\mathbf{k}}\le k_{\textnormal{max}}-4}\norm{(\cosh x)^{1/2}\partial{}_\theta{}\mathfrak{D}^{\mathbf{k}}\psi{}}_{L^\infty(\Sigma{}(\tau{}))}^2\\
&\qquad + \sum_{\abs{\mathbf{k}}\le k_{\textnormal{max}}-4}\norm{\partial{}_x(\mathfrak{D}^{\mathbf{k}}\psi{})_0}_{L^\infty(\Sigma{}(\tau{})\cap \set{x\ge 1})}^2 + \sum_{\abs{\mathbf{k}}\le k_{\textnormal{max}}-4}\norm{(\cosh x)^{-1/2}\partial{}_x\mathfrak{D}^{\mathbf{k}}\psi{}}_{L^\infty(\Sigma{}(\tau{}))}^2 \\
&\lesssim A\epsilon{}(1+\tau{})^{2\#\mathbf{k}_{\textnormal{max}}}.
\end{split}
\end{equation}
To control the terms on the first line, we have used the fact that
\(\partial_\tau{}\) and \(\partial_{\theta{}}\) are among the commutator vector
fields \(\mathfrak{D}\) and that
\(\partial_\theta{}\mathfrak{D}^{\mathbf{k}}\psi{} =
(\partial_\theta{}\mathfrak{D}^{\mathbf{k}}\psi{})_{\ge 1} =
(\mathfrak{D}^{\mathbf{k}'}\psi{})_{\ge 1}\) for \(\abs{\mathbf{k}'}\le
\abs{\mathbf{k}} + 1\), and then used \zcref{Loo-0,Loo-1}. The terms on the second
line are controlled using \zcref{Loo-2a,Loo-2b}.

\step{Step 2: Improving the bootstrap assumption \zcref{b-boot-1}.} In this step, we improve
the bootstrap assumption \zcref{b-boot-1} on the \(\partial_\tau{}\)-energy (and corresponding
flux on ingoing null cones) associated to \(\mathfrak{D}^{\mathbf{k}}\psi{}\). The key step is to prove the following
control for the bulk term that appears on the right-hand side of the
\(\partial_\tau{}\)-energy estimate for \(\mathfrak{D}^{\mathbf{k}}\psi{}\):
\begin{equation}\label{semilinear-T-bulk}
\int_0^{\tau{}_{\textnormal{boot}}} \int _{\Sigma{}(\tau{})}\abs{\partial{}_\tau{}\mathfrak{D}^{\mathbf{k}}\psi{}}\abs{P\mathfrak{D}^{\mathbf{k}}\psi{}}\tanh x\dd{}\mu{}\dd{}\tau{} \lesssim A_{\mathbf{k}}\epsilon{}\cdot A^{1/2}\epsilon^{1/2} + A_{\mathbf{k}}^{1/2}A_{<\mathbf{k}}^{1/2}\epsilon{}(1+\tau{})^{2\#\mathbf{k}}, \quad \abs{\mathbf{k}}\le k_{\textnormal{max}}.
\end{equation}
Indeed, once we have \zcref{semilinear-T-bulk}, the \(\partial_\tau{}\)-energy estimate of
\zcref{T-estimate-prop} (applied on the interval \([0,\tau{}]\) for \(\tau{}\in
[0,\tau{}_{\textnormal{boot}}]\)) and the smallness assumption \zcref{data-small}
imply
\begin{equation}\label{b-boot-1-improvement}
E_T[\mathfrak{D}^{\mathbf{k}}\psi{}](\tau{}) + F_T[\mathfrak{D}^{\mathbf{k}}\psi{}](\tau{})\lesssim \epsilon{} + A_{\mathbf{k}}\epsilon{}\cdot A^{1/2}\epsilon^{1/2} + A_{\mathbf{k}}^{1/2}A_{<\mathbf{k}}^{1/2}\epsilon{}(1+\tau{})^{2\#\mathbf{k}}.
\end{equation}
Since \(A_0\gg 1\) and \(A\epsilon{}\ll 1\) and \(A_{<\mathbf{k}}\ll A_{\mathbf{k}}\), the right side of
\zcref{b-boot-1-improvement} can be made less than
\(\frac{1}{2}A_{\mathbf{k}}\epsilon{}(1+\tau{})^{2\#\mathbf{k}}\). This improves
the bootstrap assumption \zcref{b-boot-1}. To establish \zcref{semilinear-T-bulk}, we
substitute \zcref{semilinear-prep-0} (proved in Step 2a) and \zcref{semilinear-DkPpsi}
(proved in Step 2b) into the basic estimate
\begin{equation}\label{basic-estimate-commutation}
\abs{P\mathfrak{D}^{\mathbf{k}}\psi{}}\le \abs{\mathfrak{D}^{\mathbf{k}}P\psi{}} + \abs{[P,\mathfrak{D}^{\mathbf{k}}]\psi{}}.
\end{equation}

\step{Step 2a: Estimate for the quadratic terms on the right-hand side of \zcref{basic-estimate-commutation}.} We first estimate the contribution
coming from the quadratic terms in \(\mathfrak{D}^{\mathbf{k}}P\psi{}\) (as
opposed to the linear terms in \([P,\mathfrak{D}^{\mathbf{k}}]\psi{}\), which we
consider in Step 2b). Use \zcref{Dk-P-psi-estimate} and the assumptions on the time
weight \(\ell{}\) to estimate
\begin{equation}
\begin{split}
\abs{\mathfrak{D}^{\mathbf{k}}P\psi{}} &\lesssim (1+\tau{})^{-C_\ast{}}\sum_{\mathbf{k}_1 + \mathbf{k}_2 \le  \mathbf{k}}[\textnormal{(I)} + \textnormal{(II)} + \textnormal{(III)}],
\end{split}
\end{equation}
where
\begin{equation}\label{semilinear-I--III}
\begin{split}
\textnormal{(I)}&\coloneqq{}\sum_{\mathbf{k}_1 + \mathbf{k}_2 \le  \mathbf{k}}\Bigl[\frac{1}{\cosh x}(\abs{\partial{}_\tau{}\mathfrak{D}^{\mathbf{k}_1}\psi{}} + \abs{\partial{}_x\mathfrak{D}^{\mathbf{k}_1}\psi{}} + \abs{\mathfrak{D}^{\mathbf{k}_1}\psi{}})(\abs{\partial{}_\tau{}\mathfrak{D}^{\mathbf{k}_2}\psi{}} + \abs{\partial{}_x\mathfrak{D}^{\mathbf{k}_2}\psi{}} + \abs{\mathfrak{D}^{\mathbf{k}_2}\psi{}})\Bigr] \\
\textnormal{(II)} &\coloneqq{} \sum_{\mathbf{k}_1 + \mathbf{k}_2 \le  \mathbf{k}}\frac{1}{\cosh \tau{}}\abs{\Grad _{S^2}\mathfrak{D}^\mathbf{k_1}\psi{}}\Bigl(\abs{\Grad _{S^2}\mathfrak{D}^\mathbf{k_2}\psi{}} + \frac{1}{\cosh x}(\abs{\partial{}_\tau{}\mathfrak{D}^\mathbf{k_2}\psi{}} + \abs{\partial{}_x\mathfrak{D}^\mathbf{k_2}\psi{}} + \abs{\mathfrak{D}^\mathbf{k_2}\psi{}}) \\
&\qquad + (\cosh x)^{1/2}\abs{\partial{}_\theta{}\mathfrak{D}^{\mathbf{k}_2}\psi{}}\Bigr),  \\
\textnormal{(III)} &\coloneqq{}\sum_{\mathbf{k}_1 + \mathbf{k}_2 \le  \mathbf{k}}(\cosh x)^{1/2}\abs{\partial{}_\theta{}\mathfrak{D}^{\mathbf{k}_1}\psi{}}\Bigl((\cosh x)^{1/2}\abs{\partial{}_\theta{}\mathfrak{D}^{\mathbf{k}_{2}}\psi{}} + \frac{1}{\cosh x}(\abs{\partial{}_\tau{}\mathfrak{D}^{\mathbf{k}_2}\psi{}} + \abs{\partial{}_x\mathfrak{D}^{\mathbf{k}_2}\psi{}} + \abs{\mathfrak{D}^{\mathbf{k}_2}\psi{}}) \\
&\qquad + \frac{1}{\cosh \tau{}}\abs{\Grad _{S^2}\mathfrak{D}^{\mathbf{k}_2}\psi{}}\Bigr).
\end{split}
\end{equation}
We can estimate terms \(\textnormal{(I)}\)--\(\textnormal{(III)}\) by
considering the case when \(\abs{\mathbf{k}_1}\le k_{\textnormal{max}}/2\) and the case when
\(\abs{\mathbf{k}_2}\le k_{\textnormal{max}}/2\) and bounding the term with
fewer \(\mathfrak{D}\)-derivatives using \zcref{pointwise-bootstrap}; since
\(k_{\textnormal{max}}\ge 8\), these terms will indeed have at most
\(k_{\textnormal{max}} - 4\) many \(\mathfrak{D}\)-derivatives. In this way
(using also that the rotations \(\Omega{}_i\) are commutator vector fields to
put terms of the form \(\abs{\Grad _{S^2}\mathfrak{D}^{\mathbf{k}'}\psi{}}
\lesssim \abs{\Omega{}\mathfrak{D}^{\mathbf{k}'}\psi{}}\) in \(L^\infty\)), we obtain
\begin{equation}\label{Dk-Ppsi-estimate-pointwise}
\begin{split}
\abs{\mathfrak{D}^{\mathbf{k}}P\psi{}} &\lesssim A^{1/2}\epsilon^{1/2}(1+\tau{})^{-C_\ast{}+\#\mathbf{k}_{\textnormal{max}}}\sum_{\mathbf{k}'\le \mathbf{k}} \Bigl[\frac{1}{(\cosh x)^{1/2}}(\abs{\partial{}_\tau{}\mathfrak{D}^{\mathbf{k}'}\psi{}} + \abs{\partial{}_x\mathfrak{D}^{\mathbf{k}'}\psi{}}) \\
&\qquad + \frac{1}{\cosh \tau{}}\abs{\Grad _{S^2}\mathfrak{D}^{\mathbf{k}'}\psi{}} + (\cosh x)^{1/2}\abs{\partial{}_\theta{}\mathfrak{D}^{\mathbf{k}'}\psi{}} \Bigr]\\
&\qquad +A^{1/2}\epsilon^{1/2}(1+\tau{})^{-C_\ast{}+\#\mathbf{k}_{\textnormal{max}}}\sum_{\mathbf{k}'\le \mathbf{k}}\frac{1}{(\cosh x)^{1/2}}\abs{\mathfrak{D}^{\mathbf{k}'}\psi{}}.
\end{split}
\end{equation}
Note that the zeroth-order term on the right-hand side of
\zcref{Dk-Ppsi-estimate-pointwise} comes from term \(\textnormal{(I)}\), after
estimating \(\abs{\partial_x\mathfrak{D}^{\mathbf{k}'}\psi{}}\lesssim (1 +
\tau{})^{\#\mathbf{k}_{\textnormal{max}}}(\cosh x)^{1/2}\) using \zcref{pointwise-bootstrap}. We now multiply
by \(\abs{\partial_\tau{}\mathfrak{D}^{\mathbf{k}}\psi{}}\), integrate in
spacetime, apply Cauchy--Schwarz in space, and use
\zcref{Dk-Ppsi-estimate-pointwise} to obtain
\begin{equation}\label{semilinear-Dk-P-psi-prep}
\begin{split}
&\int_0^{\tau{}_{\textnormal{boot}}} \int _{\Sigma{}(\tau{})}\abs{\partial{}_\tau{}\mathfrak{D}^{\mathbf{k}}\psi{}}\abs{\mathfrak{D}^{\mathbf{k}}P\psi{}}\tanh x\dd{}\mu{}\dd{}\tau{} \\
&\int_0^{\tau{}_{\text{boot}}} \Bigl(\int _{\Sigma{}(\tau{})}\abs{\partial{}_\tau{}\mathfrak{D}^{\mathbf{k}}\psi{}}^2\tanh x\dd{}\mu{}\Bigr)^{1/2}\Bigl(\int _{\Sigma{}(\tau{})}\abs{\mathfrak{D}^{\mathbf{k}}P\psi{}}^2\tanh x\Bigr)^{1/2}\dd{}\tau{} \\
&\lesssim A^{1/2}\epsilon^{1/2}\int_0^{\tau{}_{\text{boot}}} (1+\tau{})^{-C_\ast{}+\#\mathbf{k}_{\textnormal{max}}}\mathfrak{E}_1(\tau{})^{1/2}(\mathfrak{E}_2(\tau{}) + \mathfrak{E}_3(\tau{}))^{1/2}\dd{}\tau{}.
\end{split}
\end{equation}
Here we have written
\begin{equation}\label{semilinear-frak}
\begin{split}
\mathfrak{E}_1(\tau{}) &\coloneqq{}\int _{\Sigma{}(\tau{})}(\partial{}_\tau{}\mathfrak{D}^{\mathbf{k}}\psi{})^2 \tanh x\dd{}\mu{},\\
\mathfrak{E}_2(\tau{}) &\coloneqq{}\sum_{\mathbf{k}'\le \mathbf{k}}\int _{\Sigma{}(\tau{})}\Bigl[(\partial{}_\tau{}\mathfrak{D}^{\mathbf{k}}\psi{})^2 +  (\partial{}_x\mathfrak{D}^{\mathbf{k}'}\psi{})^2 + \frac{1}{\cosh ^2\tau{}}\abs{\Grad _{S^2}\mathfrak{D}^{\mathbf{k}'}\psi{}}^2 + (\cosh x)(\partial{}_\theta{}\mathfrak{D}^{\mathbf{k}'}\psi{})^2\Bigr]\tanh x\dd{}\mu{}, \\
\mathfrak{E}_3(\tau{}) &\coloneqq{} \sum_{\mathbf{k}'\le \mathbf{k}}\int _{\Sigma{}(\tau{})}\frac{1}{\cosh x}\abs{\mathfrak{D}^{\mathbf{k}'}\psi{}}^2\dd{}\mu{}.
\end{split}
\end{equation}
Terms \(\mathfrak{E}_1\) and \(\mathfrak{E}_2\) are controlled by the
\(\partial_\tau{}\)-energy of \(\mathfrak{D}^{\mathbf{k}}\psi{}\) and by the sum
of the \(\partial_\tau{}\)-energies of \(\mathfrak{D}^{\mathbf{k}'}\psi{}\) for
\(\mathbf{k}'\le \mathbf{k}\), respectively. To control \(\mathfrak{E}_3\), we
will consider separately the zero \(S^1\)-mode and the higher \(S^1\)-modes,
splitting
\begin{equation}\label{semilinear-split}
\frac{1}{\cosh x}\abs{\mathfrak{D}^{\mathbf{k}'}\psi{}}^2\le \frac{1}{\cosh x}\abs{(\mathfrak{D}^{\mathbf{k}'}\psi{})_{0}}^2 + \frac{1}{\cosh x}\abs{(\mathfrak{D}^{\mathbf{k}'}\psi{})_{\ge 1}}^2.
\end{equation}
The term involving the zero mode is controlled by the
\(r^p\)-type energy of \((\mathfrak{D}^{\mathbf{k}}\psi{})_0\). The term involving the \(\ge 1\) modes is controlled by the
\(\partial_\theta{}\)-term in the \(\partial_\tau{}\)-energy of \(\mathfrak{D}^{\mathbf{k}}\psi{}\), using a Poincaré
inequality on \(S^1\). By the bootstrap assumptions \zcref{b-boot-1,b-boot-3}, we
have
\begin{equation}\label{semilinear-E-estimate}
\mathfrak{E}_1(\tau{}) + \mathfrak{E}_2(\tau{}) + \mathfrak{E}_3(\tau{})\lesssim A_{\mathbf{k}}(1+\tau{})^{2\#\mathbf{k}}\epsilon{}\lesssim A_{\mathbf{k}}(1+\tau{})^{2\#\mathbf{k}_{\textnormal{max}}}\epsilon{}.
\end{equation}
Substituting \zcref{semilinear-E-estimate} into \zcref{semilinear-Dk-P-psi-prep} and
using the integrability in \(\tau{}\) of \((1+\tau{})^{-C_\ast{} +
3\#\mathbf{k}_{\textnormal{max}}}\) (by the definition of \(C_\ast{}\)), we
conclude that
\begin{equation}\label{semilinear-DkPpsi}
\begin{split}
&\int_0^{\tau{}_{\text{boot}}} \int _{\Sigma{}(\tau{})}\abs{\partial{}_\tau{}\mathfrak{D}^{\mathbf{k}}\psi{}}\abs{\mathfrak{D}^{\mathbf{k}}P\psi{}}\tanh x\dd{}\mu{}\dd{}\tau{} \lesssim A_{\mathbf{k}}\epsilon{}\cdot A^{1/2}\epsilon^{1/2}.
\end{split}
\end{equation}

\step{Step 2b: Estimate for the linear terms on the right-hand side of \zcref{basic-estimate-commutation}.} We now estimate the contribution of the
linear terms in \([P,\mathfrak{D}^{\mathbf{k}}]\psi{}\). By the commutation
formula of \zcref{frak-D-commutation}, we see that
\(\abs{[P,\mathfrak{D}^{\mathbf{k}}]\psi{}}^2\tanh x\) is controlled by the
integrand of the \(\partial_\tau{}\)-energy and
\(\sum_{\mathbf{k}'<\mathbf{k}}\abs{\mathfrak{D}^{\mathbf{k}'}P\psi{}}\). After
applying Cauchy--Schwarz in space, we therefore obtain
\begin{equation}\label{semilinear-P-Dk-comm}
\begin{split}
&\int_0^{\tau{}_{\text{boot}}} \int _{\Sigma{}(\tau{})}\abs{\partial{}_\tau{}\mathfrak{D}^{\mathbf{k}}\psi{}}\abs{[P,\mathfrak{D}^{\mathbf{k}}]\psi{}}\tanh x\dd{}\mu{}\dd{}\tau{} \\
&\lesssim \sum_{\mathbf{k}'<\mathbf{k}} \int_0^{\tau{}_{\text{boot}}} \int _{\Sigma{}(\tau{})}\abs{\partial{}_\tau{}\mathfrak{D}^{\mathbf{k}}\psi{}}\abs{\mathfrak{D}^{\mathbf{k}'}P\psi{}}\tanh x\dd{}\mu{}\dd{}\tau{} + \sum_{\mathbf{k}'<\mathbf{k}}\int_0^{\tau{}_{\text{boot}}} E_T[\mathfrak{D}^{\mathbf{k}}\psi{}](\tau{})^{1/2}E_T[\mathfrak{D}^{\mathbf{k}'}\psi{}](\tau{})^{1/2}\dd{}\tau{}.
\end{split}
\end{equation}
The first term on the right-hand side of \zcref{semilinear-P-Dk-comm} is controlled
by \zcref{semilinear-DkPpsi}. Since \(\#\mathbf{k}'\le \#\mathbf{k}-1\) whenever
\(\mathbf{k}' < \mathbf{k}\), we can use the bootstrap assumption \zcref{b-boot-1}
to estimate the second term on the right-hand side of \zcref{semilinear-P-Dk-comm} as
follows:
\begin{equation}\label{semilinear-prep-0}
 \sum_{\mathbf{k}'<\mathbf{k}}\int_0^{\tau{}_{\text{boot}}} E_T[\mathfrak{D}^{\mathbf{k}}\psi{}](\tau{})E_T[\mathfrak{D}^{\mathbf{k}'}\psi{}](\tau{})\dd{}\tau{}\lesssim A_{\mathbf{k}}^{1/2}A_{<\mathbf{k}}^{1/2}\epsilon{}\int_0^{\tau{}_{\text{boot}}} (1+\tau{})^{\#\mathbf{k}}(1+\tau{})^{\#\mathbf{k}-1} \dd{}\tau{}\lesssim  A_{\mathbf{k}}^{1/2}A_{<\mathbf{k}}^{1/2}\epsilon{}(1+\tau{})^{2\#\mathbf{k}}.
\end{equation}

\step{Step 3: Improving the bootstrap assumption \zcref{b-boot-2}.} In this step we improve
the bootstrap assumption \zcref{b-boot-2}. By the basic estimate
\zcref{basic-estimate-commutation} and the commutation formula of
\zcref{frak-D-commutation} we have
\begin{equation}
\begin{split}
\int _{\Sigma{}(\tau{})}\abs{P\mathfrak{D}^{\mathbf{k}}\psi{}}^2\tanh x\dd{}\mu{} &\lesssim \sum_{\mathbf{k}'\le \mathbf{k}}\int _{\Sigma{}(\tau{})}\abs{\mathfrak{D}^{\mathbf{k}'}P\psi{}}^2\tanh x\dd{}\mu{} + \sum_{\mathbf{k}'<\mathbf{k}}E_T[\mathfrak{D}^{\mathbf{k}'}\psi{}](\tau{}).
\end{split}
\end{equation}
Estimating the first term on the right-hand side as in Step 2 (see
\zcref{semilinear-Dk-P-psi-prep,semilinear-frak} and the following discussion), we
obtain
\begin{equation}
\int _{\Sigma{}(\tau{})}\abs{P\mathfrak{D}^{\mathbf{k}}\psi{}}^2\tanh x\dd{}\mu{} \lesssim  A\epsilon{}(1+\tau{})^{2(-C_\ast{}+\#\mathbf{k}_{\textnormal{max}})}\sum_{\mathbf{k}'\le \mathbf{k}}(E_T[\mathfrak{D}^{\mathbf{k}'}\psi{}](\tau{}) + E_p[(\mathfrak{D}^{\mathbf{k}'}\psi{})_0](\tau{}) ) + \sum_{\mathbf{k}'<\mathbf{k}}E_T[\mathfrak{D}^{\mathbf{k}'}\psi{}](\tau{}).
\end{equation}
By the definition of \(C_\ast{}\), the bootstrap assumption \zcref{b-boot-1}, and the
fact that \(\#\mathbf{k}'\le \#\mathbf{k}-1\) when \(\mathbf{k}'<\mathbf{k}\),
we conclude that
\begin{equation}\label{PDk-Psi-on-Sigma-tau}
\int _{\Sigma{}(\tau{})}\abs{P\mathfrak{D}^{\mathbf{k}}\psi{}}^2\tanh x\dd{}\mu{} \lesssim (A\epsilon{}\cdot A_{\mathbf{k}}\epsilon{} + A_{<\mathbf{k}}\epsilon{})(1+\tau{})^{2\#\mathbf{k}-2}
\end{equation}
Since \(A\epsilon{}\ll 1\) and \(A_{<\mathbf{k}}\ll A_{\mathbf{k}}\), this improves the
bootstrap assumption \zcref{b-boot-2}.

\step{Step 4: Improving the bootstrap assumption \zcref{b-boot-3}.} In this step, we
improve the bootstrap assumption \zcref{b-boot-3} on the \(r^p\)-type energy of
\((\mathfrak{D}^{\mathbf{k}}\psi{})_0\). The key step is to prove the following
control for the bulk term that appears on the right-hand side of the
\(r^p\)-type energy estimate for \((\mathfrak{D}^{\mathbf{k}}\psi{})_0\):
\begin{equation}\label{semilinear-3-goal}
\int_0^{\tau{}_{\textnormal{boot}}} \int _{\Sigma{}(\tau{})} \sinh x \abs{P(\mathfrak{D}^{\mathbf{k}}\psi{})_0}^2\dd{}\mu{}\dd{}\tau{}\lesssim (A\epsilon{}\cdot A_{\mathbf{k}}\epsilon{} + A_{<\mathbf{k}}\epsilon{})(1+\tau{})^{2\#\mathbf{k}}.
\end{equation}
Indeed, once we obtain \zcref{semilinear-3-goal}, we can apply the \(r^p\)-type
energy estimate of \zcref{rp-type-estimate} to the \(\U(1)\)-symmetric scalar field
\((\mathfrak{D}^{\mathbf{k}}\psi{})_0\) on the time interval \([0,\tau{}]\)
(where \(\tau{}\in [0,\tau{}_{\textnormal{boot}}]\)) and use the smallness
assumption \zcref{data-small} on the initial data to obtain
\begin{equation}\label{b-boot-3-improvement}
\begin{split}
&E_T[(\mathfrak{D}^{\mathbf{k}}\psi{})_0](\tau{}) + E_p[(\mathfrak{D}^{\mathbf{k}}\psi{})_0](\tau{}) + F_p[(\mathfrak{D}^{\mathbf{k}}\psi{})_0](0,\tau{}) + B[(\mathfrak{D}^{\mathbf{k}}\psi{})_0](0,\tau{}) \\
&\lesssim E_T[(\mathfrak{D}^{\mathbf{k}}\psi{})_0](0) + E_p[(\mathfrak{D}^{\mathbf{k}}\psi{})_0](0) + \int_{0}^{\tau{}} \int _{\Sigma{}(\tau{}')} \sinh x\abs{P(\mathfrak{D}^{\mathbf{k}}\psi{})_0}^2\dd{}\mu{}\dd{}\tau{}' \\
&\lesssim \epsilon{} + (A\epsilon{}\cdot A_{\mathbf{k}}\epsilon{} + A_{<\mathbf{k}}\epsilon{})(1+\tau{})^{2\#\mathbf{k}}.
\end{split}
\end{equation}
Since \(A_0\gg 1\) and \(A\epsilon{}\ll 1\) and \(A_{<\mathbf{k}}\ll A_{\mathbf{k}}\), the
right side of \zcref{b-boot-3-improvement} can be made less than
\(\frac{1}{2}A_{\mathbf{k}}\epsilon{}(1+\tau{})^{2\#\mathbf{k}}\). This improves
the bootstrap assumption \zcref{b-boot-3}.

We now establish \zcref{semilinear-3-goal}. Since \(P\) is \(\U(1)\)-symmetric, we have
\begin{equation}\label{semilinear-step-3-goal-prep}
\begin{split}
&\int_0^{\tau{}_{\textnormal{boot}}} \int _{\Sigma{}(\tau{})} \sinh x \abs{P(\mathfrak{D}^{\mathbf{k}}\psi{})_0}^2\dd{}\mu{}\dd{}\tau{} = \int_0^{\tau{}_{\textnormal{boot}}} \int _{\Sigma{}(\tau{})} \sinh x \abs{(P\mathfrak{D}^{\mathbf{k}}\psi{})_0}^2\dd{}\mu{}\dd{}\tau{} \\
&\lesssim \int_0^{\tau{}_{\textnormal{boot}}} \int _{\Sigma{}(\tau{})\cap \set{x\le 1}} \abs{P\mathfrak{D}^{\mathbf{k}}\psi{}}^2\tanh x\dd{}\mu{}\dd{}\tau{} \\
&\qquad + \int_0^{\tau{}_{\textnormal{boot}}} \int _{\Sigma{}(\tau{})\cap \set{x\ge 1}} \sinh x \abs{([P,\mathfrak{D}^{\mathbf{k}}]\psi{})_0}^2\dd{}\mu{}\dd{}\tau{}   +  \int_0^{\tau{}_{\textnormal{boot}}} \int _{\Sigma{}(\tau{})\cap \set{x\ge 1}} \sinh x \abs{(\mathfrak{D}^{\mathbf{k}}P\psi{})_0}^2\dd{}\mu{}\dd{}\tau{} .
\end{split}
\end{equation}
It remains to estimate the three terms on the right-hand side of
\zcref{semilinear-step-3-goal-prep}. We obtain \zcref{semilinear-3-goal}
by substituting
\zcref{semilinear-P-Dk-prep-0,semilinear-step-4-prep-g-conc,semilinear-P-Dk-worst-prep-2}
(the results of Steps 4a--c) into \zcref{semilinear-step-3-goal-prep}.

\step{Step 4a: Estimate for the first term on the right-hand side of
\zcref{semilinear-step-3-goal-prep}.} We have already estimated this term, in
\zcref{PDk-Psi-on-Sigma-tau}:
\begin{equation}\label{semilinear-P-Dk-prep-0}
\int_0^{\tau{}_{\textnormal{boot}}} \int _{\Sigma{}(\tau{})\cap \set{x\le 1}} \abs{P\mathfrak{D}^{\mathbf{k}}\psi{}}^2\tanh x\dd{}\mu{}\dd{}\tau{}\lesssim (A\epsilon{}\cdot A_{\mathbf{k}}\epsilon{} + A_{<\mathbf{k}}\epsilon{})(1+\tau{})^{2\#\mathbf{k}}.
\end{equation}

\step{Step 4b: Estimate for the second term on the right-hand side of
\zcref{semilinear-step-3-goal-prep}.} Let \(\textnormal{(I)}\) and
\(\textnormal{(II)}\) be as in the commutation formula of
\zcref{frak-D-commutation}, so that
\begin{equation}\label{semilinear-step-4-prep-g}
\int_0^{\tau{}_{\textnormal{boot}}} \int _{\Sigma{}(\tau{})} \sinh x \abs{([P,\mathfrak{D}^{\mathbf{k}}]\psi{})_0}^2\dd{}\mu{}\dd{}\tau{}\le \int_0^{\tau{}_{\textnormal{boot}}} \int _{\Sigma{}(\tau{})} \sinh x \abs{\textnormal{(I)}_0}^2\dd{}\mu{}\dd{}\tau{} + \int_0^{\tau{}_{\textnormal{boot}}} \int _{\Sigma{}(\tau{})} \sinh x \abs{\textnormal{(II)}}^2\dd{}\mu{}\dd{}\tau{}.
\end{equation}
Note that we have used Parseval's identity on \(S^1\) for term
\(\textnormal{(II)}\). Recalling the definition of \(\textnormal{(I)}\) in
\zcref{P-D-comm-I} and noting that the integral is supported in \(\set{x\ge 1}\), we
have
\begin{equation}\label{semilinear-step-4-prep-g2}
\begin{split}
\int_0^{\tau{}_{\textnormal{boot}}} \int _{\Sigma{}(\tau{})} \sinh x \abs{\textnormal{(I)}_0}^2\dd{}\mu{}\dd{}\tau{} &\lesssim \sum_{\mathbf{k}'<\mathbf{k}}\int_0^{\tau{}_{\textnormal{boot}}} \int _{\Sigma{}(\tau{})} \frac{\sinh x}{\cosh ^4\tau{}} \abs{\Grad _{S^2}(\mathfrak{D}^{\mathbf{k}'}\psi{})_0}^2 \dd{}\mu{}\dd{}\tau{} \\
&\lesssim \sum_{\mathbf{k}'<\mathbf{k}} \int_0^{\tau{}_{\textnormal{boot}}} \frac{1}{\cosh^2 \tau{}} E_p[(\mathfrak{D}^{\mathbf{k}'}\psi{})_0](\tau{})\dd{}\tau{}\\
&\lesssim A_{<\mathbf{k}}\epsilon{}.
\end{split}
\end{equation}
To pass to the final line, we used the bootstrap assumption \zcref{b-boot-3} and
used the integrability of \((\cosh \tau{})^{-1}\). Using
the fact that \(\#\mathbf{k}'\le \#\mathbf{k}-1\) whenever
\(\mathbf{k}'<\mathbf{k}\) and recalling the definition of \(\textnormal{(II)}\)
in \zcref{P-D-comm-I}, we see that the second term on the right-hand side of
\zcref{semilinear-step-4-prep-g} can be controlled as follows:
\begin{equation}\label{semilinear-step-4-prep-g3}
\begin{split}
\int_0^{\tau{}_{\textnormal{boot}}} \int _{\Sigma{}(\tau{})} \sinh x \abs{\textnormal{(II)}}^2\dd{}\mu{}\dd{}\tau{}&\lesssim \sum_{\mathbf{k}'<\mathbf{k}}\int_0^{\tau{}_{\textnormal{boot}}} E_T[\mathfrak{D}^{\mathbf{k}'}\psi{}](\tau{})\dd{}\tau{} \\
&\lesssim A_{<\mathbf{k}}\epsilon{}\int_0^{\tau{}_{\textnormal{boot}}} (1+\tau{})^{2\#\mathbf{k}-2}\dd{}\tau{}\lesssim A_{<\mathbf{k}}\epsilon{}(1+\tau{})^{2\#\mathbf{k}-1}.
\end{split}
\end{equation}
Combining
\zcref{semilinear-step-4-prep-g,semilinear-step-4-prep-g2,semilinear-step-4-prep-g3},
we conclude that
\begin{equation}\label{semilinear-step-4-prep-g-conc}
\int_0^{\tau{}_{\textnormal{boot}}} \int _{\Sigma{}(\tau{})} \sinh x \abs{([P,\mathfrak{D}^{\mathbf{k}}]\psi{})_0}^2\dd{}\mu{}\dd{}\tau{}\lesssim  A_{<\mathbf{k}}\epsilon{}(1+\tau{})^{2\#\mathbf{k}-1}.
\end{equation}

\step{Step 4c: Estimate for the third term on the right-hand side of
\zcref{semilinear-step-3-goal-prep}.} The hardest term to control on the right-hand
side of \zcref{semilinear-step-3-goal-prep} is the final one. Since \(F(\Grad
\varphi{})\) is almost \(\U(1)\)-symmetric, we can apply
\zcref{Dk-P-psi-estimate-almost-axisymmetric}. Let \(\textnormal{(III)}\) and
\(\textnormal{(IV)}\) equal \(\textnormal{(I)}\) in \zcref{Dk-P-psi-estimate-I} and
\(\textnormal{(II)}\) in \zcref{Dk-P-psi-estimate-II} of
\zcref{Dk-P-psi-estimate-almost-axisymmetric}, respectively, so that, by the
assumptions on the time weight \(\ell{}\), we have
\begin{equation}\label{semilinear-hardest-term-rp-sinh}
\int _{\Sigma{}(\tau{})\cap \set{x\ge 1}} \sinh x \abs{(\mathfrak{D}^{\mathbf{k}}P\psi{})_0}^2\dd{}\mu{}\lesssim (1+\tau{})^{-2C_\ast{}}\int _{\Sigma{}(\tau{})\cap \set{x\ge 1}} \sinh x[(\textnormal{III}) + \textnormal{(IV)}]\dd{}\mu{}.
\end{equation}
Controlling the terms on the right-hand side with fewer
\(\mathfrak{D}\)-derivatives using the pointwise estimates of
\zcref{pointwise-bootstrap} as in Step 2, we obtain
\begin{equation}\label{semilinear-hardest-term-rp-I}
\textnormal{(III)}\lesssim A\epsilon{}(1+\tau{})^{2+2\#\mathbf{k}_{\textnormal{max}}}\sum_{\mathbf{k}'\le \mathbf{k}}\frac{1}{\cosh x}\Bigl[(\partial{}_\tau{}\mathfrak{D}^{\mathbf{k}'}\psi{})^2 + (\partial{}_x\mathfrak{D}^{\mathbf{k}'}\psi{})^2 + \frac{1}{\cosh \tau{}}\abs{\Grad _{S^2}\mathfrak{D}^{\mathbf{k}'}\psi{}}^2 + \cosh ^2x(\partial{}_\theta{}\mathfrak{D}^{\mathbf{k}'}\psi{})^2\Bigr]
\end{equation}
and
\begin{equation}\label{semilinear-hardest-term-rp-II}
\begin{split}
\textnormal{(IV)}&\lesssim \side{RHS}{semilinear-hardest-term-rp-I} + A\epsilon{}(1+\tau{})^{2\#\mathbf{k}_{\textnormal{max}}}\sum_{\mathbf{k}'\le \mathbf{k}} \frac{1}{\cosh x}\abs{(\mathfrak{D}^{\mathbf{k}'}\psi{})_{\ge 1}}^2 +  \frac{1}{\cosh ^2x}\abs{(\mathfrak{D}^{\mathbf{k}'}\psi{})_0}^2.
\end{split}
\end{equation}
Substituting \zcref{semilinear-hardest-term-rp-I,semilinear-hardest-term-rp-II} into
\zcref{semilinear-hardest-term-rp-sinh}, we obtain
\begin{equation}\label{semilinear-P-Dk-worst-prep}
\begin{split}
&\int _{\Sigma{}(\tau{})\cap \set{x\ge 1}}\sinh x \abs{(\mathfrak{D}^{\mathbf{k}}P\psi{})_0}^2\dd{}\mu{}\\
&\lesssim A\epsilon{}(1+\tau{})^{2(-C_\ast{} + \#\mathbf{k}_{\textnormal{max}})}\sum_{\mathbf{k}'\le \mathbf{k}}\int _{\Sigma{}(\tau{})} \Bigl[\abs{\partial{}_\tau{}\mathfrak{D}^{\mathbf{k}'}\psi{}}^2 + \abs{\partial{}_x\mathfrak{D}^{\mathbf{k}'}\psi{}}^2 \\
&\qquad + \frac{1}{\cosh^2 \tau{}}\abs{\Grad _{S^2}\mathfrak{D}^{\mathbf{k}'}\psi{}}^2 + \cosh^2 x\abs{\partial{}_\theta{}\mathfrak{D}^{\mathbf{k}'}\psi{}}^2\Bigr]\tanh x\dd{}\mu{} \\
&\qquad  + A\epsilon{}(1+\tau{})^{2(-C_\ast{} + \#\mathbf{k}_{\textnormal{max}})}\sum_{\mathbf{k}'\le \mathbf{k}}\int _{\Sigma{}(\tau{})} \abs{(\mathfrak{D}^{\mathbf{k}'}\psi{})_{\ge 1}}^2\tanh x + \frac{1}{\cosh x}\abs{(\mathfrak{D}^{\mathbf{k}'}\psi{})_0}^2\dd{}\mu{}
\end{split}
\end{equation}
The first integral on the right-hand side of \zcref{semilinear-P-Dk-worst-prep} is
bounded by the \(\partial_{\tau{}}\)-energy. The \(\partial_\tau{}\)-energy also
controls, by a Poincaré inequality on \(S^1\), the first zeroth-order term in
the second integral on the right-hand side of \zcref{semilinear-P-Dk-worst-prep}
(since this term is supported on non-zero \(S^1\)-modes), while the second term
is controlled by the \(r^p\)-type energy of
\((\mathfrak{D}^{\mathbf{k}'}\psi{})_0\). After integrating in \(\tau{}\) (and
using the largeness of \(C_\ast{}\)), we obtain from \zcref{semilinear-P-Dk-worst-prep}
\begin{equation}\label{semilinear-P-Dk-worst-prep-2}
\begin{split}
&\int_0^{\tau{}_{\textnormal{boot}}} \int _{\Sigma{}(\tau{})\cap \set{x\ge 1}}\sinh x \abs{(\mathfrak{D}^{\mathbf{k}}P\psi{})_0}^2\dd{}\mu{}\dd{}\tau{} \\
&\lesssim A\epsilon{}\sum_{\mathbf{k}'\le \mathbf{k}}\int_0^{\tau{}_{\textnormal{boot}}} (1+\tau{})^{2(-C_\ast{} + \#\mathbf{k}_{\textnormal{max}})}(E_T[\mathfrak{D}^{\mathbf{k}'}\psi{}](\tau{}) + E_p[(\mathfrak{D}^{\mathbf{k}'}\psi{})_0](\tau{}))\dd{}\tau{} \\
&\lesssim A\epsilon{}\cdot A_{\mathbf{k}}\epsilon{}(1+\tau{})^{2\#\mathbf{k}}.
\end{split}
\end{equation}
To pass to the final line, we used the bootstrap assumptions
\zcref{b-boot-1,b-boot-3}.

\step{Step 5: Completing the proof of \zcref{energy-growth}.} In view of the (now improved)
bootstrap assumptions \zcref{b-boot-1,b-boot-3}, the pointwise estimates of
\zcref{pointwise-bootstrap}, it only remains to show that
\begin{equation}\label{final-pointwise-goal}
\sum_{\abs{\mathbf{k}}\le k_{\textnormal{max}}}\sup_{x\ge 1}\int _{S^2\times S^1}\abs{(\mathfrak{D}^{\mathbf{k}}\psi{})_0}^2(\tau{},x)\dd{}\omega{}\dd{}\theta{}\lesssim \epsilon{}(1 + \tau{})^B
\end{equation}
for some \(B = B(k_{\textnormal{max}})\). We will show that we can take \(B
=2\#\mathbf{k}_{\textnormal{max}}\). By \zcref{psi0-far-T-prep}, the smallness assumption
\zcref{data-small}, and the bootstrap assumption \zcref{b-boot-3} we have
\begin{equation}\label{final-pointwise-1}
\begin{split}
&\sum_{\abs{\mathbf{k}}\le k_{\textnormal{max}}-1}\int _{S^2\times S^1}\abs{(\mathfrak{D}^{\mathbf{k}}\psi{})_0}^2(\tau{},x)\dd{}\omega{}\dd{}\theta{} \\
&\lesssim \sum_{\abs{\mathbf{k}}\le k_{\textnormal{max}}-1}\Bigl[\sup_{x\ge 1}\int _{S^2\times S^1}\abs{(\mathfrak{D}^{\mathbf{k}}\psi{})_0}^2(\tau{}=0,x)\dd{}\omega{}\dd{}\theta{} \\
&\qquad + \sup_{\tau{}\in [0,\tau{}_0]}\int _{S^2\times S^1}\abs{(\mathfrak{D}^{\mathbf{k}}\psi{})_0}^2(\tau{},x=1)\dd{}\omega{}\dd{}\theta{} + F_p[(\mathfrak{D}^{\mathbf{k}}\psi{})_0](0,\tau{}_0)\Bigr] \\
&\lesssim A_{\mathbf{k}}\epsilon{}(1+\tau{})^{2\#\mathbf{k}_{\textnormal{max}}-2} + \sum_{\abs{\mathbf{k}}\le k_{\textnormal{max}}-1}\sup_{\tau{}\in [0,\tau{}_0]}\int _{S^2\times S^1}\abs{(\mathfrak{D}^{\mathbf{k}}\psi{})_0}^2(\tau{},x=1)\dd{}\omega{}\dd{}\theta{}.
\end{split}
\end{equation}
Now \zcref{elliptic-near-bubble-prep-10} and the bootstrap assumptions \zcref{b-boot-1,b-boot-2} give
\begin{equation}\label{final-pointwise-2}
\begin{split}
\sum_{\abs{\mathbf{k}}\le k_{\textnormal{max}}-1}\sup_{\tau{}\in [0,\tau{}_0]}\int _{S^2\times S^1}\abs{(\mathfrak{D}^{\mathbf{k}}\psi{})_0}^2(\tau{},x=1)\dd{}\omega{}\dd{}\theta{}&\lesssim \sum_{\abs{\mathbf{k}}\le k_{\textnormal{max}}-1}E_T[(\partial{}_\tau^{\le 1}\mathfrak{D}^{\mathbf{k}}\psi{})_0](\tau{}) + \int _{\Sigma{}(\tau{})} (P\mathfrak{D}^{\mathbf{k}}\psi{})_0^2\tanh x \dd{}\mu{} \\
&\lesssim A_{\mathbf{k}}\epsilon{}(1+\tau{})^{2\#\mathbf{k}_{\textnormal{max}}},
\end{split}
\end{equation}
since \(\partial_\tau{}\) is a commutator vector field. Combining
\zcref{final-pointwise-1,final-pointwise-2}, we get \zcref{final-pointwise-goal} (with
\(B = 2\#\mathbf{k}_{\textnormal{max}}\)).
\end{proof}
\subsection{Global existence and improved decay for \texorpdfstring{\(\U(1)\)}{U(1)}-symmetric solutions to a semilinear equation}
\label{sec:semilinear-axisymmetric} In this section, we prove
\zcref{semilinear-axisymmetric}, which is a small-data global existence result for
solutions to semilinear equations satisfying the null condition under the
additional assumption that the nonlinearity and initial data are exactly
\(\U(1)\)-symmetric. This assumption allows us to establish improved decay relative to
the setting of \zcref{semilinear}, which does not make symmetry assumptions on the
nonlinearity or the initial data. The reason for this is that \(\U(1)\)-symmetry gives
access to the \(r^p\)-type estimate of \zcref{rp-type-estimate} for the full
solution. To capture the improved decay, we use the hyperboloidal foliation
\(\mathcal{H}(s)\), whose leaves are asymptotically null (see
\zcref{hyperboloidal-foliation}), as opposed to the constant-\(\tau\) foliation
(whose leaves reach spacelike infinity) that we used in \zcref{semilinear}.
\begin{proposition}[Global existence and decay for \(\U(1)\)-symmetric solutions to a semilinear wave equation]
Let \(k_{\textnormal{max}}\ge 8\). Let \(F(\Grad \varphi{})\) be a \(\U(1)\)-symmetric quadratic
nonlinearity satisfying the null condition with time weight \(\ell{}\) (see
\zcref{null-condition,axisymmetric-nonlinearity}) such that \(\abs{\partial_\tau^k\ell{}(\tau{})}\lesssim
1\) for \(0\le k\le k_{\textnormal{max}}\). By \zcref{canonical-double-null},
\(F(\Grad \varphi{})\) can in particular be \(f\cdot g^{\alpha{}\beta{}}\partial{}_\alpha{}\varphi{}\partial{}_\beta{}\varphi{}\) for \(f\) a \(\U(1)\)-symmetric
spacetime function with bounded \(\mathfrak{D}\)-derivatives.

Then there exists \(\epsilon{}_0 > 0\) (independent of \(k_{\textnormal{max}}\)) such that whenever \(\epsilon{}\in [0,\epsilon_0]\), the unique
solution to
\begin{equation}\label{box-semilinear-equation-asym}
\Box{}\varphi{}=F(\Grad \varphi{})
\end{equation}
arising from given smooth \(\U(1)\)-symmetric data that is posed on the hyperboloidal
hypersurface \(\mathcal{H}(0)\) and that satisfies
\begin{equation}\label{axisymmetric-data-small}
\sum_{\abs{\mathbf{k}}\le k_{\textnormal{max}}} \mathcal{E}[\mathring{\mathfrak{D}}^{\mathbf{k}}\psi{}](0) + \sum_{\abs{\mathbf{k}}\le k_{\textnormal{max}}-4} \norm{\partial{}_x\mathring{\mathfrak{D}}^{\mathbf{k}}\psi{}}_{L^\infty(\mathcal{H}(0))}^2\le \epsilon{}
\end{equation}
exists globally to the future of \(\mathcal{H}(0)\). Moreover, there is a
universal constant \(c > 0\) such that the following exponential decay statement
holds for the energy and pointwise norm of the solution:
\begin{equation}\label{axisymmetric-exp-decay}
\sum_{\abs{\mathbf{k}}\le k_{\textnormal{max}}}\mathcal{E}[\mathring{\mathfrak{D}}^{\mathbf{k}}\psi{}](s) + \sum_{\abs{\mathbf{k}}\le k_{\textnormal{max}}-4} \norm{\partial{}_x\mathring{\mathfrak{D}}^{\mathbf{k}}\psi{}}_{L^\infty(\mathcal{H}(s))}^2 + \sum_{\abs{\mathbf{k}}\le k_{\textnormal{max}}-3}\norm{\mathring{\mathfrak{D}}^{\mathbf{k}}\psi{}}_{{L^\infty(\mathcal{H}(s))}}^2\lesssim e^{-cs}\epsilon{}.
\end{equation}
Moreover, the solution \(\psi{}\) has a finite limit at null infinity.
\label{semilinear-axisymmetric}
\end{proposition}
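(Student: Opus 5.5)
We propose a bootstrap argument modeled on the proof of \zcref{semilinear}, with three changes. We work with the hyperboloidal foliation \(\mathcal{H}(s)\). We use only the restricted commutators \(\mathring{\mathfrak{D}}\in\set{\Omega,\partial_\tau,\partial_\theta}\); since the solution is \(\U(1)\)-symmetric, \(\partial_\theta\) acts trivially, so effectively only \(\Omega,\partial_\tau\). Finally, we aim for exponential decay in \(s\) rather than polynomial growth in \(\tau\).

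Since the data and \(F\) are \(\U(1)\)-symmetric, uniqueness gives \(\psi=\psi_0\). Then \zcref{frak-D-commutation} has no term \textnormal{(II)}, and \([P,\mathring{\mathfrak{D}}^{\mathbf{k}}]\) only produces angular terms with a factor \((\cosh\tau)^{-2}\). These are controlled by the bulk term \(\mathcal{B}\) of \zcref{rp-type-estimate}, exactly as in the proof of \zcref{main-proof-rp-goal}.

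\textbf{Bootstrap assumptions.} For \(s\in[0,s_{\textnormal{boot}}]\) and \(\abs{\mathbf{k}}\le k_{\textnormal{max}}\) we assume
\[
\mathcal{E}[\mathring{\mathfrak{D}}^{\mathbf{k}}\psi](s)+\mathcal{B}[\mathring{\mathfrak{D}}^{\mathbf{k}}\psi](s_1,s)\le A\epsilon e^{-cs_1}\quad(s_1\le s),
\qquad
\norm{\partial_x\mathring{\mathfrak{D}}^{\le k_{\textnormal{max}}-4}\psi}_{L^\infty(\mathcal{H}(s))}^2\le A\epsilon e^{-cs}.
\]

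\textbf{Pointwise step.} \zcref{global-pointwise-s}, estimates \zcref{Loo-1-asym} and \zcref{Loo-3-asym}, together with \zcref{dxpsi0-near}, give \(\abs{\mathring{\mathfrak{D}}^{\le k_{\textnormal{max}}-3}\psi}^2+\abs{\partial_x\mathring{\mathfrak{D}}^{\le k_{\textnormal{max}}-4}\psi}^2\lesssim A\epsilon e^{-cs}\). These estimates require \(\tau-2s\ge C\). The complementary region \(\set{\tau-2s\le C}\) near \(\mathcal{H}(s)\) is a compact-in-\(x\) piece of a slab, where \zcref{elliptic-near} on \(\Sigma(\tau)\) and local control suffice. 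The \(P\)-terms on the right-hand sides are quadratic, hence lower order.

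\textbf{Improving the energy.} Apply \zcref{rp-type-estimate} to \(\mathring{\mathfrak{D}}^{\mathbf{k}}\psi\). The inhomogeneity is \(\int_{s_1}^{s_2}\int_{\mathcal{H}(s)}\sinh x\abs{P\mathring{\mathfrak{D}}^{\mathbf{k}}\psi}^2\). Expand it with \zcref{Dk-P-psi-estimate}; the \(\partial_\theta\)-terms vanish, and \(\abs{\ell}\lesssim 1\) only. Put the lower-order factor in \(L^\infty\) using the decaying pointwise bounds. The resulting integrand must then be bounded by \(A\epsilon e^{-cs}\) times the density of the bulk \(\mathcal{B}\). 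The worst terms are those with a zeroth-order factor, for example \((\cosh x)^{-2}\sinh x\abs{\mathring{\mathfrak{D}}^{\mathbf{k}}\psi}^2\abs{\partial_x\psi}^2\). This is \(\lesssim A\epsilon e^{-cs}(\cosh x)^{-1}\psi^2\), which \(\mathcal{B}\) controls.

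\textbf{Main obstacle.} The terms with \(\partial_\tau\) or \(\underline{L}\) on the top-order factor are the delicate ones. They give \((\cosh x)^{-1}(\underline{L}\mathring{\mathfrak{D}}^{\mathbf{k}}\psi)^2\), while \(\mathcal{B}\) only controls \(\langle x\rangle^{-2}\tanh x(\underline{L}\psi)^2\); this is fine, since \((\cosh x)^{-1}\lesssim\langle x\rangle^{-2}\) for \(x\ge1\). Near \(x\le1\) all weights are comparable. So exact \(\U(1)\)-symmetry, which kills the bad \(\partial_x\psi_{\ge1}\) interaction, makes everything close. We obtain
\[
\mathcal{E}(s_2)+\mathcal{B}(s_1,s_2)\lesssim \mathcal{E}(s_1)+A\epsilon\,\mathcal{B}(s_1,s_2)+\text{commutator terms},
\]
and the \(A\epsilon\mathcal{B}\) term is absorbed for \(A\epsilon\ll1\). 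The commutator terms are handled by induction on \(\mathbf{k}\) as in \zcref{main-proof-rp-goal}, using the \((\cosh\tau)^{-2}\) angular factor. By \zcref{rp-K-bound} and \zcref{exponential-decay}, this gives \(\mathcal{E}(s)\lesssim\epsilon e^{-cs}\) with a constant independent of \(A\), which improves the bootstrap.

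\textbf{Improving \(\partial_x\).} The \(\partial_x\) bound follows from \zcref{dx-bound-axisymmetric} and \zcref{dxpsi0-near}. These involve \(\mathcal{F}_T\) on \(\mathcal{R}(s-1,s)\), which is bounded by \(\mathcal{E}(s-1)\) via \zcref{rp-equation}; the data term enters when \(s<1\).

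\textbf{Radiation field.} The limit at null infinity follows as in \zcref{linear-radiation-field}: integrate \(X\psi\) along \(\mathcal{H}(s)\), using the integrability of \(\sinh x(L\Psi)^2\) and \(h\lesssim\langle x\rangle^{-2}\) as in \zcref{psi0-far}.
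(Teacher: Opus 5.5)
Your proposal is correct and follows the paper's proof. Both arguments run a bootstrap on the hyperboloidal foliation using only \(\mathring{\mathfrak{D}}\), take pointwise bounds from \zcref{Loo-1-asym,Loo-3-asym}, absorb the \(A\epsilon\,\mathcal{B}\) term in the \(r^p\)-type estimate, handle the angular commutator terms by induction on \(\mathbf{k}\), and finish with \zcref{exponential-decay}. Your bootstrapping of the decay and of the \(\partial_x\) bound directly is an inessential variation; the paper bootstraps only smallness of \(\mathcal{E}+\mathcal{F}_T\) and of the \(L^2\) norms of \(P\mathring{\mathfrak{D}}^{\mathbf{k}}\psi\).

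One correction concerns the hypothesis \(\tau-2s\ge C\) in \zcref{global-pointwise-s}. It constrains the truncation parameter of the leaf \(\mathcal{H}(s,\tau)\), not the points where the bound holds, so no complementary region needs separate treatment: you simply take the truncation large, since \(\mathcal{E}[\cdot](s)=\sup_\tau\mathcal{E}[\cdot](s,\tau)\). This matters because \(E_T\)-energies on full slices \(\Sigma(\tau)\) are not available for data posed on \(\mathcal{H}(0)\).
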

\begin{remark}[Improvement in time decay relative to the setting outside symmetry]
We only require the time weight \(\ell{}(\tau{})\) (and its derivatives) to be bounded,
whereas in \zcref{semilinear} we assume that it decays polynomially in \(\tau{}\).

The exponential decay statement \zcref{axisymmetric-exp-decay} can be contrasted
with the polynomial growth statements \zcref{energy-growth,glob-exist-pointwise} of
\zcref{semilinear}. Since \(s\sim u=\frac{1}{2}(\tau{}-x)\), the quantity \(\psi{}\)
decays exponentially in \(\tau{}\) for fixed \(x\), whereas in \zcref{semilinear}
the solution may grow polynomially in \(\tau{}\) for fixed \(x\). Since the
Minkowskian null coordinate \(U\) satisfies \(\abs{U}\sim e^{-2u}\) (see
\zcref{double-null}), the exponential decay in \(s\) corresponds to a polynomial
decay in \(\abs{U}\).
\end{remark}
\begin{remark}[Prescribing Cauchy data on a hypersurface reaching spacelike infinity]
\Cref{semilinear-axisymmetric} is stated for data posed on a hyperboloidal
hypersurface reaching null infinity. We make this choice to highlight the
improved decay in the interior of a light cone. However, one could extend
our result to the exterior region. That is, one could show using our estimates
that small data on a Cauchy hypersurface reaching spacelike infinity (such as
the surface \(\set{\tau{}=0}\) considered in \zcref{semilinear}) produces a solution
that exists up to the hyperboloidal hypersurface and induces small data there
(in the norm we use). In fact, proving this result is easier than the result in
the interior region. This is because in the exterior region one can use the
weighted multiplier \(w(-u)\partial_\tau{}\) (for example with \(w(q)\sim
e^{\delta{}q}\) when \(q > 0\), where \(\delta{} > 0\)), which produces an
energy estimate analogous to the \(\partial_\tau{}\)-energy estimate of
\zcref{T-estimate-prop}, but with control of better bulk terms.
\end{remark}
\begin{proof}
At the outset, we note that \(\mathring{\mathfrak{D}}\psi{}\) is \(\U(1)\)-symmetric when
\(\psi{}\) is. As discussed in the introduction to the proof of \zcref{semilinear},
the desired results (apart from the improved decay statement
\zcref{axisymmetric-exp-decay}) follow from the ``improvement of the bootstrap
assumptions'' together with a suitable local well-posedness theory, which follows
from our estimates using standard techniques. We will establish the exponential
decay statement \zcref{axisymmetric-exp-decay} after we improve the bootstrap
assumptions. In particular, we only need smallness, and not improved decay, to
prove global existence.

\step{Step 0: Setup and bootstrap assumptions.} We first set up the bootstrap argument.
For \(\abs{\mathbf{k}}\le k_{\textnormal{max}}\), define constants \(A_{\mathbf{k}}\gg 1\) such
that \(A_{\mathbf{k}'}\ll A_{\mathbf{k}}\) whenever \(\mathbf{k}'<\mathbf{k}\).
Write \(A_{<\mathbf{k}} \coloneqq{}
\max_{\mathbf{k}'<\mathbf{k}}A_{\mathbf{k}'}\) and \(A \coloneqq{}\max
_{\abs{\mathbf{k}}\le k_{\textnormal{max}}}A_{\mathbf{k}}\). We assume
\(\epsilon{}_0\) is chosen so that \(A\epsilon{}_0\ll 1\). For
\(\abs{\mathbf{k}}\le k_{\textnormal{max}}\), we suppose that the following
bootstrap assumptions hold for \(s\in{}[0,s_{\textnormal{boot}}]\) for some
\(s_{\textnormal{boot}} > 0\):
\begin{align}
\mathcal{E}[\mathring{\mathfrak{D}}^{\mathbf{k}}\psi{}](s) + \mathcal{F}_T[\mathring{\mathfrak{D}}^{\mathbf{k}}\psi{}](0,s)&\le A_\mathbf{k}\epsilon{}, \label{a-boot-1} \\
\int _{\mathcal{H}(s)} (P\mathring{\mathfrak{D}}^\mathbf{k}\psi{})^2\tanh x\dd{}\mu{} + \sup_{\tau{}\ge 0}\int _{\Sigma{}(\tau{})\cap \mathcal{R}(0,s)} (P\mathring{\mathfrak{D}}^\mathbf{k}\psi{})^2\tanh x\dd{}\mu{}&\le A_{\mathbf{k}}\epsilon{}, \label{a-boot-2}
\end{align}

\step{Step 1: Pointwise estimates.} It follows from the bootstrap assumptions
\zcref{a-boot-1,a-boot-2}, the assumption \zcref{axisymmetric-data-small} on the
smallness of the initial data, and the pointwise estimates
\zcref{Loo-1-asym,Loo-3-asym} of \zcref{global-pointwise-s} that
\begin{equation}\label{pointwise-bootstrap-cons}
\sum_{\abs{\mathbf{k}}\le k_{\textnormal{max}}-3}\norm{\mathring{\mathfrak{D}}^{\mathbf{k}}\psi{}}_{L^\infty(\mathcal{H}(s))}^2 + \sum_{\abs{\mathbf{k}}\le k_{\textnormal{max}}-4}\norm{\underline{L}\mathring{\mathfrak{D}}^{\mathbf{k}}\psi{}}_{L^\infty(\mathcal{H}(s))}^2 + \sum_{\abs{\mathbf{k}}\le k_{\textnormal{max}}-4}\norm{L\mathring{\mathfrak{D}}^{\mathbf{k}}\psi{}}_{L^\infty(\mathcal{H}(s))}^2\lesssim A\epsilon{}
\end{equation}
for \(s\in{}[0,s_{\textnormal{boot}}]\). We have used the fact that \(\partial_\tau{}\) is
among the commutator vector fields \(\mathring{\mathfrak{D}}\).

\step{Step 2: Pointwise estimate for the inhomogeneity in the equation solved by
\(\mathring{\mathfrak{D}}^{\mathbf{k}}\psi{}\).} Using the pointwise estimates of
\zcref{pointwise-bootstrap-cons}, we obtain from \zcref{Dk-P-psi-estimate}, the
commutation formula of \zcref{frak-D-commutation}, the \(\U(1)\)-symmetry of \(\psi{}\), and the
assumptions on \(\ell{}\) the estimate
\begin{equation}\label{asym-commutation-estimate}
\begin{split}
\abs{P\mathring{\mathfrak{D}}^{\mathbf{k}}\psi{}}^2 &\lesssim \abs{\mathring{\mathfrak{D}}^{\mathbf{k}}P\psi{}}^2 + \abs{[P,\mathring{\mathfrak{D}}^{\mathbf{k}}]\psi{}}^2 \\
&\lesssim A\epsilon{}\sum_{\mathbf{k}'\le \mathbf{k}}\Bigl[\frac{1}{\cosh^2 x}((\underline{L}\mathring{\mathfrak{D}}^{\mathbf{k}'}\psi{})^2 + (L\mathring{\mathfrak{D}}^{\mathbf{k}'}\psi{})^2 + (\mathring{\mathfrak{D}}^{\mathbf{k}'}\psi{})^2) + \frac{1}{\cosh ^2\tau{}}\abs{\Grad _{S^2}\mathring{\mathfrak{D}}^{\mathbf{k}'}\psi{}}^2\Bigr] \\
&\qquad + \sum_{\mathbf{k}'<\mathbf{k}} \Bigl[\frac{1}{\cosh ^2x}((\underline{L}\mathring{\mathfrak{D}}^{\mathbf{k}'}\psi{})^2 + (L\mathring{\mathfrak{D}}^{\mathbf{k}'}\psi{})^2 + (\mathring{\mathfrak{D}}^{\mathbf{k}'}\psi{})^2)  + \frac{1}{\cosh ^2\tau{}}\abs{\Grad _{S^2}\mathring{\mathfrak{D}}^{\mathbf{k}'}\psi{}}^2\Bigr]. \end{split}
\end{equation}
To obtain this estimate, we have put terms arising from the estimate of
\zcref{Dk-P-psi-estimate} with fewer derivatives in \(L^\infty\), arguing as in Step
2 of the proof of \zcref{semilinear} (see the discussion after
\zcref{semilinear-I--III}).

\step{Step 3: Improving the bootstrap assumptions and establishing global existence}.
In this step, we improve the bootstrap assumptions \zcref{a-boot-1,a-boot-2}. We
first improve \zcref{a-boot-2} using \zcref{a-boot-1} in Step 3a, and then improve
\zcref{a-boot-1} in Step 3b. This completes the proof of global existence.

\step{Step 3a: Improving the bootstrap assumption \zcref{a-boot-2}.} We first
improve the bootstrap assumption \zcref{a-boot-2}. As a direct consequence
of \zcref{asym-commutation-estimate}, we have
\begin{equation}\label{asym-boot-2}
\begin{split}
\int _{\mathcal{H}(s)} (P\mathring{\mathfrak{D}}^\mathbf{k}\psi{})^2\tanh x\dd{}\mu{} \lesssim A\epsilon{}\sum_{\mathbf{k}'\le \mathbf{k}}\mathcal{E}[\mathring{\mathfrak{D}}^{\mathbf{k}'}\psi{}](s) + \sum_{\mathbf{k}'<\mathbf{k}}\mathcal{E}[\mathring{\mathfrak{D}}^{\mathbf{k}'}\psi{}](s)
\end{split}
\end{equation}
for \(s\in{}[0,s_{\textnormal{boot}}]\). Similarly, \zcref{asym-commutation-estimate} implies (for \(0\le s_1\le s_2\le
s_{\textnormal{boot}}\))

\begin{equation}\label{asym-boot-3}
\sup_{\tau{}\ge 0}\int _{\Sigma{}(\tau{})\cap \mathcal{R}(s_1,s_2)} (P\mathring{\mathfrak{D}}^\mathbf{k}\psi{})^2\tanh x\dd{}\mu{}\lesssim A\epsilon{}\sum_{\mathbf{k}'\le \mathbf{k}}\mathcal{F}_T[\mathring{\mathfrak{D}}^{\mathbf{k}'}\psi{}](s_1,s_2) + \sum_{\mathbf{k}'<\mathbf{k}}\mathcal{F}_T[\mathring{\mathfrak{D}}^{\mathbf{k}'}\psi{}](s_1,s_2).
\end{equation}
Using the bootstrap assumption \zcref{a-boot-1}, we conclude from
\zcref{asym-boot-2,asym-boot-3} (applied with \(s_1 = 0\) and \(s_2=s\) for
\(s\in{}[0,s_{\textnormal{boot}}]\)) that
\begin{equation}
\int _{\mathcal{H}(s)} (P\mathring{\mathfrak{D}}^\mathbf{k}\psi{})^2\tanh x\dd{}\mu{} + \sup_{\tau{}\ge 0}\int _{\Sigma{}(\tau{})\cap \mathcal{R}(0,s)} (P\mathring{\mathfrak{D}}^\mathbf{k}\psi{})^2\tanh x\dd{}\mu{} \lesssim A_{\mathbf{k}}A\epsilon^2 + A_{<\mathbf{k}}\epsilon{}.
\end{equation}
Since \(A_{<\mathbf{k}}\ll A_{\mathbf{k}}\) and \(A\epsilon{}\ll 1\), the right side of the
above estimate is less than \(A_{\mathbf{k}}\epsilon{}/2\) when \(\epsilon{}\)
is small enough. This improves the bootstrap assumptions
\zcref{a-boot-2}.

\step{Step 3b: Improving the bootstrap assumption \zcref{a-boot-1}.} We now improve the
bootstrap assumption \zcref{a-boot-1}. It suffices to show that
\begin{equation}\label{semilinear-energy-decay-prep}
\sum_{\mathbf{k}'\le \mathbf{k}}\mathcal{E}[\mathring{\mathfrak{D}}^{\mathbf{k}'}\psi{}](s_2) + \mathcal{F}_T[\mathring{\mathfrak{D}}^{\mathbf{k}'}\psi{}](s_1,s_2) + \mathcal{B}[\mathring{\mathfrak{D}}^{\mathbf{k}'}\psi{}](s_1,s_2)\lesssim \sum_{\mathbf{k}'\le \mathbf{k}}\mathcal{E}[\mathring{\mathfrak{D}}^{\mathbf{k}'}\psi{}](s_1),
\end{equation}
for \(0\le s_1\le s_2\le s_{\textnormal{boot}}\), where \(\mathcal{B}\) denotes the
bulk term defined in \zcref{rp-type-estimate}. Indeed, applying
\zcref{semilinear-energy-decay-prep} with \(s_1=0\) and \(s_2=s\) for
\(s\in{}[0,s_{\textnormal{boot}}]\), and using the smallness of the initial
data in \zcref{axisymmetric-data-small}, we obtain
\begin{equation}
\sum_{\mathbf{k}'\le \mathbf{k}}\mathcal{E}[\mathring{\mathfrak{D}}^{\mathbf{k}'}\psi{}](s) \lesssim \epsilon{}.
\end{equation}
We have therefore improved the bootstrap assumption
\zcref{a-boot-1} when \(A_0\) is chosen large enough (so that \(1\ll
A_{\mathbf{k}}\) for all \(\mathbf{k}\)). This completes the proof of global
existence.

We now prove \zcref{semilinear-energy-decay-prep}. The key step is to estimate the
inhomogeneous term on the right-hand side of the \(r^p\)-type estimate of
\zcref{rp-type-estimate} applied to \(\mathring{\mathfrak{D}}^{\mathbf{k}}\psi{}\). By
\zcref{asym-commutation-estimate}, we have
\begin{equation}\label{a-semilinear-prep}
\begin{split}
&\sum_{\mathbf{k}'\le \mathbf{k}}\int_{s_1}^{s_2} \int _{\mathcal{H}(s)}\sinh x(P\mathring{\mathfrak{D}}^{\mathbf{k}'}\psi{})^2\dd{}\mu{}\dd{}s \\
&\lesssim A\epsilon{}\sum_{\mathbf{k}'\le \mathbf{k}}\int_{s_1}^{s_2} \int _{\mathcal{H}(s)} \frac{\tanh x}{\cosh x}((\underline{L}\mathring{\mathfrak{D}}^{\mathbf{k}'}\psi{})^2 + (L\mathring{\mathfrak{D}}^{\mathbf{k}'}\psi{})^2 + (\mathring{\mathfrak{D}}^{\mathbf{k}'}\psi{})^2) + \frac{\sinh x}{\cosh ^2\tau{}}\abs{\Grad _{S^2}\mathring{\mathfrak{D}}^{\mathbf{k}'}\psi{}}^2\dd{}\mu{}\dd{}s \\
&\qquad + \sum_{\mathbf{k}'<\mathbf{k}}\int_{s_1}^{s_2} \int _{\mathcal{H}(s)} \frac{\tanh x}{\cosh x}((\underline{L}\mathring{\mathfrak{D}}^{\mathbf{k}'}\psi{})^2 + (L\mathring{\mathfrak{D}}^{\mathbf{k}'}\psi{})^2 + (\mathring{\mathfrak{D}}^{\mathbf{k}'}\psi{})^2) + \frac{\sinh x}{\cosh ^2\tau{}}\abs{\Grad _{S^2}\mathring{\mathfrak{D}}^{\mathbf{k}'}\psi{}}^2\dd{}\mu{}\dd{}s \\
&\lesssim A\epsilon{}\sum_{\mathbf{k}'\le \mathbf{k}}\mathcal{B}[\mathring{\mathfrak{D}}^{\mathbf{k}'}\psi{}](s_1,s_2) + \sum_{\mathbf{k}'<\mathbf{k}}\mathcal{B}[\mathring{\mathfrak{D}}^{\mathbf{k}'}\psi{}](s_1,s_2),
\end{split}
\end{equation}
Using the energy estimate \zcref{rp-equation} of \zcref{rp-type-estimate} to control the
bulk terms on the right-hand side of \zcref{a-semilinear-prep} by energies and
inhomogeneous terms, we obtain
\begin{equation}\label{a-semilinear-prep-2}
\begin{split}
&\sum_{\mathbf{k}'\le \mathbf{k}}\int_{s_1}^{s_2} \int _{\mathcal{H}(s)}\sinh x(P\mathring{\mathfrak{D}}^{\mathbf{k}'}\psi{})^2\dd{}\mu{}\dd{}s \\
&\lesssim A\epsilon{}\sum_{\mathbf{k}'\le \mathbf{k}}\mathcal{E}[\mathring{\mathfrak{D}}^{\mathbf{k}'}\psi{}](s_1) + A\epsilon{}\sum_{\mathbf{k}'\le \mathbf{k}} \int_{s_1}^{s_2}\int _{\mathcal{H}(s)}\sinh x(P\mathring{\mathfrak{D}}^{\mathbf{k}'}\psi{})^2\dd{}\mu{}\dd{}s  + \sum_{\mathbf{k}'<\mathbf{k}}\mathcal{E}[\mathring{\mathfrak{D}}^{\mathbf{k}'}\psi{}](s_1) \\
&\qquad + \sum_{\mathbf{k}'<\mathbf{k}}\int_{s_1}^{s_2} \int _{\mathcal{H}(s)}\sinh x\abs{P\mathring{\mathfrak{D}}^{\mathbf{k}'}\psi{}}^2\dd{}\mu{}\dd{}s.
\end{split}
\end{equation}
The second term on the right-hand side of \zcref{a-semilinear-prep-2} can be
absorbed to the left-hand side using the smallness of \(\epsilon{}\). The final
term on the right-hand side of \zcref{a-semilinear-prep-2} is present only when
\(\mathbf{k}\neq{}0\), and so an induction argument shows that
\begin{equation}\label{asym-sinh-estimate}
\begin{split}
\sum_{\mathbf{k}'\le \mathbf{k}}\int_{s_1}^{s_2} \int _{\mathcal{H}(s)}\sinh x(P\mathring{\mathfrak{D}}^{\mathbf{k}'}\psi{})^2\dd{}\mu{}\dd{}s \lesssim A\epsilon{}\sum_{\mathbf{k}'\le \mathbf{k}}\mathcal{E}[\mathring{\mathfrak{D}}^{\mathbf{k}'}\psi{}](s_1)  + \sum_{\mathbf{k}'<\mathbf{k}}\mathcal{E}[\mathring{\mathfrak{D}}^{\mathbf{k}'}\psi{}](s_1).
\end{split}
\end{equation}
Since \(A\epsilon{}\lesssim 1\), the energy estimate \zcref{rp-equation} of
\zcref{rp-type-estimate} combined with \zcref{asym-sinh-estimate} gives
\begin{equation}
\begin{split}
&\sum_{\mathbf{k}'\le \mathbf{k}}\mathcal{E}[\mathring{\mathfrak{D}}^{\mathbf{k}'}\psi{}](s_2) + \mathcal{F}_T[\mathring{\mathfrak{D}}^{\mathbf{k}'}\psi{}](s_1,s_2) + \mathcal{B}[\mathring{\mathfrak{D}}^{\mathbf{k}'}\psi{}](s_1,s_2) \\
&\lesssim \sum_{\mathbf{k}'\le \mathbf{k}}\mathcal{E}[\mathring{\mathfrak{D}}^{\mathbf{k}'}\psi{}](s_1) + \int_{s_1}^{s_2} \int _{\mathcal{H}(s)}\sinh x\abs{P\mathring{\mathfrak{D}}^{\mathbf{k}'}\psi{}}^2\dd{}\mu{}\dd{}s  \\
&\lesssim \sum_{\mathbf{k}'\le \mathbf{k}}\mathcal{E}[\mathring{\mathfrak{D}}^{\mathbf{k}'}\psi{}](s_1),
\end{split}
\end{equation}
which is \zcref{semilinear-energy-decay-prep}.

\step{Step 4: Proof of the exponential decay statement \zcref{axisymmetric-exp-decay}.}
Finally, we establish the exponential decay statement
\zcref{axisymmetric-exp-decay}. Let \(0\le s_1\le s_2\le s_{\textnormal{boot}}\). By
\zcref{semilinear-energy-decay-prep} (summed over all \(\abs{\mathbf{k}}\le
k_{\textnormal{max}}\)) and \zcref{rp-K-bound}, we have
\begin{equation}
\sum_{\abs{\mathbf{k}}\le k_{\textnormal{max}}}\mathcal{E}[\mathring{\mathfrak{D}}^{\mathbf{k}}\psi{}](s_2) + \int_{s_1}^{s_2}\sum_{\abs{\mathbf{k}}\le k_{\textnormal{max}}} \mathcal{E}[\mathring{\mathfrak{D}}^{\mathbf{k}}\psi{}](s)\dd{}s\lesssim \sum_{\abs{\mathbf{k}}\le k_{\textnormal{max}}}\mathcal{E}[\mathring{\mathfrak{D}}^{\mathbf{k}}\psi{}](s_1).
\end{equation}
By \zcref{exponential-decay}, this implies the exponential decay statement
\begin{equation}\label{asym-energy-decay}
\sum_{\abs{\mathbf{k}}\le k_{\textnormal{max}}}\mathcal{E}[\mathring{\mathfrak{D}}^{\mathbf{k}}\psi{}](s)\lesssim e^{-cs}\sum_{\abs{\mathbf{k}}\le k_{\textnormal{max}}}\mathcal{E}[\mathring{\mathfrak{D}}^{\mathbf{k}}\psi{}](0)\lesssim e^{-cs}\epsilon{}
\end{equation}
for some universal constant \(c > 0\). It now follows from
\zcref{asym-sinh-estimate} that
\begin{equation}\label{asym-sinh-decay}
\sum_{\abs{\mathbf{k}}\le k_{\textnormal{max}}}\int_{s_1}^{s_2} \int _{\mathcal{H}(s)}\sinh x(P\mathring{\mathfrak{D}}^{\mathbf{k}}\psi{})^2\dd{}\mu{}\dd{}s\lesssim e^{-cs_1}\epsilon{}.
\end{equation}
Substituting \zcref{asym-energy-decay,asym-sinh-decay} into the energy estimate
\zcref{rp-equation} of \zcref{rp-type-estimate}, we conclude also that
\begin{equation}\label{asym-F-decay}
\sum_{\abs{\mathbf{k}}\le k_{\textnormal{max}}}\sup_{\tau{}\ge 0}\mathcal{F}_T[\mathring{\mathfrak{D}}^{\mathbf{k}}\psi{}](s_1,s_2)\lesssim e^{-cs_1}\epsilon{}.
\end{equation}
Combining \zcref{asym-boot-2,asym-boot-3,asym-energy-decay,asym-F-decay}, we get
\begin{equation}\label{asym-P-decay}
\sum_{\abs{\mathbf{k}}\le k_{\textnormal{max}}}\int _{\mathcal{H}(s_1)} (P\mathring{\mathfrak{D}}^\mathbf{k}\psi{})^2\tanh x\dd{}\mu{} + \sup_{\tau{}\ge 0}\int _{\Sigma{}(\tau{})\cap \mathcal{R}(s_1,s_2)} (P\mathring{\mathfrak{D}}^\mathbf{k}\psi{})^2\tanh x\dd{}\mu{}\lesssim e^{-cs_1}\epsilon{}.
\end{equation}
Finally, the pointwise estimates \zcref{Loo-1-asym,Loo-3-asym} of
\zcref{global-pointwise-s} together with
\zcref{asym-energy-decay,asym-F-decay,asym-P-decay} now give
\begin{equation}\label{asym-pointwise-decay}
\sum_{\abs{\mathbf{k}}\le k_{\textnormal{max}}-3}\norm{\mathring{\mathfrak{D}}^{\mathbf{k}}\psi{}}_{L^\infty(\mathcal{H}(s))}^2 + \sum_{\abs{\mathbf{k}}\le k_{\textnormal{max}}-4}\norm{\partial{}_x\mathring{\mathfrak{D}}^{\mathbf{k}}\psi{}}_{L^\infty(\mathcal{H}(s))}^2 \lesssim e^{-cs}\epsilon{} + e^{-c\max (0,s-1)}\epsilon{} + \mathbf{1}_{s<1}\epsilon{}\lesssim e^{-cs}\epsilon{}.
\end{equation}
The desired statement \zcref{axisymmetric-exp-decay} now follows from
\zcref{asym-energy-decay,asym-pointwise-decay}.

\step{Step 5: Existence of the radiation field.} Since the \(\mathcal{E}\)-energy of
\(\psi{}\) is finite, the fact that \(\psi{}\) has a limit at null infinity can be
established as in \zcref{linear-radiation-field}.
\end{proof}
\appendix
\section{Bounce solutions and the discovery of the Witten bubble spacetime}
\label{appendix} How could one discover the
Witten bubble spacetime? As explained in \cite{PhysRevD.15.2929,Witten1981-gn},
the semiclassical decay of unstable ground states can be studied by searching
for a ``bounce'' solution. A bounce solution is a Riemannian \(5\)-manifold that
solves the Einstein equations in Riemannian signature (that is, it has vanishing
Ricci curvature) and approaches a Riemannian signature version of the ground
state (for us \(\R^4\times S^1\)) near infinity. The unstable ground state will
decay into a solution to the Einstein vacuum equations that agrees with the
bounce solution on a \(4\)-dimensional spacelike hypersurface which can ``be
regarded as \(t = 0\).'' Thus a bounce solution which remains real-valued after
``analytic continuation'' to Lorentzian signature (a Wick rotation of the form
\(t\mapsto it\)) represents an instability of the ground state.

We now reproduce how Witten \cite{WITTEN1982481} finds a bounce solution leading
to an instability of the Kaluza--Klein spacetime. The Euclidean Kaluza--Klein
space \(\R^4\times S^1\) with circle of radius \(R\) has metric
\begin{equation}
g_{\textnormal{KKE}} = \dd{}t^2 + \dd{}x^2 + \dd{}y^2 + \dd{}z^2 +R^2\dd{}\theta^2,
\end{equation}
where \(x\), \(y\), \(z\), and \(t\) range over \(\R\) and \(\theta{}\) is a periodic
coordinate with period \(2\pi{}\). Setting \(\rho{} \coloneqq{} \sqrt{t^2 + x^2 + y^2 +
z^2}\) and using polar coordinates, this metric takes the form
\begin{equation}\label{KKE-metric}
g_{\textnormal{KKE}} = \dd{}\rho{}^2 + R^2\dd{}\theta^2 + \rho{}^2g_{S^3},
\end{equation}
where \(g_{S^3}\) is the metric on the unit round \(3\)-sphere. To look for a
bounce solution, Witten \cite{WITTEN1982481} introduces the metric
\begin{equation}\label{gb-alpha}
g_{\textnormal{bounce},\alpha{}} = \Bigl(1-\frac{\alpha{}}{\rho^2}\Bigr)^{-1}\dd{}\rho^2 + R^2\Bigl(1 - \frac{\alpha{}}{\rho^2}\Bigr)\dd{}\theta^2 + \rho{}^2g_{S^3},
\end{equation}
which is an Einstein metric that approaches the metric \zcref{KKE-metric} near
infinity. One may recognize this as a Wick rotation of the five-dimensional
Schwarzschild solution.
\begin{remark}[Range of the coordinate \(\rho\) and value of the parameter \(\alpha\) in the metric \cref{gb-alpha}]
The metric \zcref{gb-alpha} is defined for \(\rho{}\in (\sqrt{\alpha{}},\infty)\). At
\(\rho{}=\sqrt{\alpha{}}\) there is a coordinate singularity akin to that of
polar coordinates in the plane. Indeed, the change of variables \(\rho{} =
\sqrt{\alpha{}} + \lambda^2\) transforms \((1-\alpha{}/\rho^2)^{-1}\dd{}\rho^2 +
R^2(1-\alpha{}/\rho^2)\dd{}\theta^2\) into \(2\sqrt{\alpha{}}(\dd{}\lambda^2 +
R^2(\lambda^2/\alpha{})\dd{}\theta^2)\), to first order in \(\lambda^2\). This
``polar coordinate'' expression extends regularly to \(\lambda{} = 0\) if and only
if \(\theta{}\) has period \(2\pi{}R^{-1}\sqrt{\alpha{}}\). We conclude
that the metric \zcref{gb-alpha} is regular at \(\rho{} = \sqrt{\alpha{}}\) exactly
when \(\alpha{} = R^2\) (since the periodic coordinate \(\theta{}\) has period \(2\pi{}\)).
\end{remark}
Now let \(\vartheta{}\) be a spherical coordinate on \(S^3\), so that
\begin{equation}\label{gS3}
g_{S^3} = \dd{}\vartheta^2 + \sin ^2\vartheta{}g_{S^2},
\end{equation}
where \(g_{S^2}\) is the metric on the unit round \(2\)-sphere. Since \(t =
r\cos \vartheta{}\) in flat Euclidean space, the hypersurface \(\set{\vartheta{}
= \pi{}/2}\) can be regarded as \(t=0\). The Wick rotation \(t\mapsto it\) then becomes \(\vartheta{}\mapsto \pi{}/2 + i\tau{}\), where \(\tau{}\in
\R\). In this way one obtains from \zcref{gS3} the expression \(-\dd{}\tau^2 + \cosh
^2\tau{} g_{S^2}\), and so one obtains the Witten bubble metric
\zcref{intro-WB-metric} from the metric \zcref{gb-alpha} with \(\alpha{} = R^2\).
\begin{remark}[The Witten bubble as a double Wick rotation of Schwarzschild]
The Witten bubble metric can be viewed as a double Wick rotation of the
Schwarzschild metric of \((4 + 1)\) dimensions, which can be written as
\begin{equation}
g_{\textnormal{Sch}} = -\Bigl(1-\frac{2M}{r^2}\Bigr)\dd{}t^2 + \Bigl(1-\frac{2M}{r^2}\Bigr)^{-1}\dd{}r^2 + r^2g_{S^3}.
\end{equation}
Under Wick rotation (namely the formal replacement of a coordinate \(x\) with
\(ix\)), the timelike \(t\)-direction in Schwarzschild becomes the compact
\(S^1\) dimension with coordinate \(\theta{}\), and the equatorial \(\vartheta{}\)-direction in
the \(S^3\) factor becomes the time coordinate \(\tau{}\) of the Witten bubble
spacetime.

This double Wick rotation turns the time-translation symmetry of Schwarzschild
into the \(\textnormal{U}(1)\)-symmetry of the Witten bubble, namely the
translation of the coordinate \(\theta{}\) in the compact \(S^1\) dimension.
Likewise, the \(\SO(4)\)-symmetry of Schwarzschild becomes the
\(\textnormal{SO}(3,1)\)-symmetry of the Witten bubble.
\label{wick-rotation}
\end{remark}
\begin{remark}[Relation to the positive mass theorem]
Since signals cannot travel faster than the speed of light, a spacetime into
which the Kaluza--Klein spacetime decays must have Kaluza--Klein asymptotics
towards spatial infinity. Since the Kaluza--Klein spacetime has zero energy
(more precisely, ADM mass), it can only decay into another spacetime with zero
energy.

By the positive mass theorem \cite{Schoen1979-gj,Schoen1981-yp,Witten1981-gn},
Minkowski space \(\R^{3 + 1}\) is the unique spacetime of zero energy in
classical \((3 + 1)\)-dimensional gravity. On the other hand, in \((4 +
1)\)-dimensional gravity, both the Kaluza--Klein spacetime and the Witten bubble
spacetime have zero energy. Indeed, in the computation of the integral defining
the energy (taken over a Cauchy hypersurface), only terms of order \(1/\rho{}\)
are relevant. On the hypersurface \(\set{\tau{}=0}\) of the Witten bubble
spacetime (which is a Cauchy hypersurface), the Witten bubble metric deviates
from the Kaluza--Klein metric only at order \(1/\rho^2\), and so the Witten
bubble spacetime has zero energy. The Witten bubble is therefore a
counterexample to a version of the positive mass theorem formulated for
asymptotically Kaluza--Klein initial data.

In fact, Witten \cite{WITTEN1982481} says that the techniques of Brill and Deser
\cite{Brill1968-iu} can produce solutions of the Einstein vacuum equations with
Kaluza--Klein asymptotics that have \emph{negative} energy. This dramatic failure of
the positive mass theorem is due to the change in topology (from Kaluza--Klein
initial data on \(\R^3\times S^1\) to initial data on \(\R^2\times S^2\) as in
the Witten bubble spacetime).
\end{remark}
\section{The causal geodesic flow on the Witten bubble spacetime}
\label{geodesic-motion} In this section, we analyze the \emph{(causal) geodesic flow} of
the Witten bubble spacetime \((\mathcal{M},g)\), which describes the motion of
free-falling particles. Our main result is \zcref{causal-geodesics}. The motion of
geodesics has also been studied in
\cite{PhysRevD.39.3151,Ofer_Aharony_2002,Bachelot_2016}.

We analyze the geodesic flow using canonical coordinates on the tangent bundle
\(T\mathcal{M}\). By canonical coordinates, we mean a local coordinate system
\((\mathbf{x}^\alpha{},\mathbf{p}^\alpha{})\) on \(T\mathcal{M}\), where
\((\mathbf{x}^\alpha{})\) are local coordinates on \(\mathcal{M}\) and
\((\mathbf{p}^\alpha{})\) are momentum coordinates on the fibers of
\(T\mathcal{M}\) such that \(\mathbf{p}^\alpha{} = \dd{}\mathbf{x}^\alpha{}(\mathbf{p})\)
for all \(\mathbf{p}\in T_x\mathcal{M}\). In canonical coordinates, the geodesic
flow is defined by the geodesic equations
\begin{equation}
\dot{\mathbf{x}}^\alpha{} = \mathbf{p}^\alpha{},\qquad \dot{\mathbf{p}}^\alpha{}= -\Gamma^\alpha_{\beta{}\gamma{}}\mathbf{p}^\beta{}\mathbf{p}^\gamma{}.
\end{equation}
Here \(\Gamma^\alpha_{\beta{}\gamma{}}\) are the Christoffel symbols of the metric \(g\) in the
coordinates \((\mathbf{x}^\alpha{})\), and the dot denotes differentiation with
respect to the affine parameter of the flow.

Since the Witten bubble spacetime is highly symmetric, its geodesic flow is
\emph{completely integrable}. Associated to the symmetries of the spacetime are various
conserved quantities: if \(X\) is a Killing vector, then \(g(\mathbf{p},X)\) is
conserved along the geodesic flow.
\begin{definition}[Conserved quantities along the geodesic flow]
Consider the \emph{rest mass} \(m\ge 0\) associated to a causal geodesic, defined by
\begin{equation}\label{mass-relation}
-m^2(\mathbf{x},\mathbf{p}) \coloneqq{} R^{-2}g(\mathbf{p},\mathbf{p}).
\end{equation}
By the geodesic equation \(\Grad _{\mathbf{p}}\mathbf{p} = 0\), the quantity \(m\) is conserved along
the flow.

Next, we introduce the \emph{\(S^1\)-angular momentum} \(\ell_\theta{} : T\mathcal{M}\to \R\):
\begin{equation}\label{S1-ang-momentum}
\ell{}_\theta{}(\mathbf{x},\mathbf{p}) \coloneqq{}R^{-2}g(\mathbf{p},\partial{}_\theta{})
\end{equation}
Since \(\partial_\theta{}\) is Killing, \(\ell_\theta{}\) is conserved along the flow.

Finally, we define the \emph{(squared) hyperbolic angular momentum}
\(\ell_{\textnormal{hyp}}^2 : T\mathcal{M}\to \R\) by
\begin{equation}\label{hyp-ang-momentum}
\ell{}_{\textnormal{hyp}}^2(\mathbf{x},\mathbf{p}) \coloneqq{}-\sum_{i=1}^3R^{-4}g(\mathbf{p},K_i)^2 + \sum_{i=1}^3 R^{-4}g(\mathbf{p},\Omega{}_i)^2
\end{equation}
Since the boosts \(K_i\) and the rotations \(\Omega_i\) are Killing vector fields
(see \zcref{killing}), \(\ell_{\textnormal{hyp}}^2\) is conserved along the flow.
\label{conserved-quantities}
\end{definition}
\begin{lemma}
We have \(\ell_{\textnormal{hyp}}^2\ge 0\).
\label{angular-momentum-well-defined}
\end{lemma}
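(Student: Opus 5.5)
We need to show \(\ell_{\textnormal{hyp}}^2 = R^{-4}\bigl(\sum_i g(\mathbf{p},\Omega_i)^2 - \sum_i g(\mathbf{p},K_i)^2\bigr)\ge 0\). The plan is to compute it pointwise in the coordinates \((\tau{},x,\vartheta{},\phi{},\theta{})\) and identify it as a multiple of the squared norm of the projection of \(\mathbf{p}\) onto the spheres of symmetry.

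\textbf{Step 1: write the Killing fields in terms of a time component and a sphere component.} From the formulas \zcref{rotations,boosts}, each boost has the form \(K_i = n_i\partial_\tau{} + \tanh\tau{}\,Y_i\). Here \(n_i\) are the restrictions to \(S^2\) of the Cartesian coordinate functions \((\cos\vartheta{},\ \sin\vartheta{}\sin\phi{},\ \sin\vartheta{}\cos\phi{})\). The vector fields \(Y_i = \Grad_{S^2}n_i\) are the corresponding conformal (gradient) fields on the unit sphere. The rotations \(\Omega_i\) are purely tangent to \(S^2\). The \(x\)- and \(\theta{}\)-components of \(\mathbf{p}\) play no role.

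\textbf{Step 2: evaluate the inner products.} Write \(\mathbf{p} = p^\tau\partial_\tau{} + p^x\partial_x + p^\theta\partial_\theta{} + \mathbf{q}\), with \(\mathbf{q}\) tangent to \(S^2\). Then by \zcref{intro-x-metric},
\[
g(\mathbf{p},K_i) = R^2\cosh^2x\bigl(-n_ip^\tau + \tanh\tau{}\cosh^2\tau{}\,g_{S^2}(\mathbf{q},Y_i)\bigr)
\]
and
\[
g(\mathbf{p},\Omega_i) = R^2\cosh^2x\cosh^2\tau{}\,g_{S^2}(\mathbf{q},\Omega_i).
\]

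\textbf{Step 3: apply the unit-sphere identities.} For any tangent vector \(w\) at a point \(n\in S^2\):
\begin{itemize}
\item \(\sum_i n_i^2 = 1\);
\item \(\sum_i n_i\, g_{S^2}(w,Y_i) = \tfrac12 w\bigl(\sum_i n_i^2\bigr) = 0\);
\item \(\sum_i g_{S^2}(w,Y_i)^2 = \abs{w}^2\), since \(Y_i\) is the tangential projection of \(e_i\);
\item \(\sum_i g_{S^2}(w,\Omega_i)^2 = \abs{w}^2\), since \(\Omega_i\) is \(e_i\times n\) and \(w\perp n\).
\end{itemize}
These identities can be checked directly in the \((\vartheta{},\phi{})\) coordinates, for instance at the point \(\phi{}=0\) and then by \(\SO(3)\)-invariance.

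\textbf{Step 4: combine.} Using the identities from Step 3, the cross term vanishes and
\[
R^{-4}(\cosh x)^{-4}\,\ell_{\textnormal{hyp}}^2 = \cosh^4\tau{}\,\abs{\mathbf{q}}^2 - (p^\tau)^2 - \tanh^2\tau{}\cosh^4\tau{}\,\abs{\mathbf{q}}^2 = \cosh^2\tau{}\,\abs{\mathbf{q}}^2 - (p^\tau)^2 .
\]

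This result has an indefinite sign. It is nonnegative only when \(\cosh^2\tau{}\abs{\mathbf{q}}^2\ge (p^\tau)^2\), so either I have made a sign slip or the ordering of the two sums in \zcref{hyp-ang-momentum} must be reversed. I will recheck whether the boost time component carries the factor \(\rho^2 = R^2\cosh^2x\) correctly and whether the signs in \zcref{boosts} hold, by re-deriving \(K_i\) from \(x^0\partial_{x^i} + x^i\partial_{x^0}\) via \zcref{mink-coords}.

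\textbf{Main obstacle: the sign.} If the combination genuinely turns out to be \(\sum_i g(\mathbf{p},K_i)^2 - \sum_i g(\mathbf{p},\Omega_i)^2\), it evaluates to \(R^4\cosh^4x\bigl((p^\tau)^2 - \cosh^2\tau{}\abs{\mathbf{q}}^2\bigr)\). Nonnegativity would then have to come from causality, not from a sum of squares. The relevant bound is that for causal \(\mathbf{p}\), the metric \zcref{intro-x-metric} gives
\[
\cosh^2x\,(p^\tau)^2 \ge \cosh^2x\cosh^2\tau{}\,\abs{\mathbf{q}}^2 + (\text{nonnegative terms}),
\]
which is exactly what is needed. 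Alternatively, one can invoke the classical fact that the \(\SO(3,1)\) Casimir on the tangent space to \(\textnormal{dS}_{2+1}\) equals minus the de Sitter norm of the projected momentum, up to the conformal factor. That projected vector is causal whenever \(\mathbf{p}\) is causal, because the remaining \((x,\theta)\)-part of the metric is positive. So the hardest point is pinning down the sign convention. The computation itself is routine, and I expect to close the argument through the causal-projection route.
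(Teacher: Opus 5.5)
Your computation is correct, and you should stop looking for a sign slip: there isn't one on your side. The boosts in \zcref{boosts} are correct (up to relabeling indices they are \(x^0\partial_{x^i}+x^i\partial_{x^0}\)). Already for \(\mathbf p=\partial_\tau\) at \(\tau=0\), \(\vartheta=\pi/2\), \(\phi=0\), the printed definition \zcref{hyp-ang-momentum} gives \(\ell_{\textnormal{hyp}}^2=-\cosh^4x<0\).

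The inconsistency is in the paper. Its own proof computes \(\ell_{\textnormal{hyp}}^2=\cosh^4x\bigl((\mathbf p^\tau)^2-\cosh^2\tau\,\abs{\mathbf q}^2\bigr)\) in \zcref{l-hyp-expression}, and \zcref{lhyp-interpretation} uses \(h(s',s')=-\ell_{\textnormal{hyp}}^2\). Both correspond to \(R^{-4}\bigl(\sum_i g(\mathbf p,K_i)^2-\sum_i g(\mathbf p,\Omega_i)^2\bigr)\), so the order of the two sums in the definition is a typo. The lemma also tacitly assumes \(\mathbf p\) is causal, through \(m\ge 0\). Even with the sign fixed it fails for a spacelike \(\mathbf p\) tangent to the spheres, so you should state that hypothesis.

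With these two corrections, your causal bound is exactly the paper's argument. The paper reads nonnegativity off the mass relation \zcref{mass-relation-cons}, namely \((\cosh x)^{-2}\ell_{\textnormal{hyp}}^2=m^2+\mathfrak{F}(x)^2\bigl((\mathbf p^{\hat y})^2+(\mathbf p^{\hat z})^2\bigr)\ge 0\). Your Steps 1--3 supply the coordinate computation the paper omits.

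Three small fixes remain. Commit to this route rather than hedging. Drop the stray \(R^{-4}\) on the left of your Step 4. At the bubble \(\set{x=0}\), where your \((x,\theta)\) coordinates degenerate, run the same bound in the Cartesian form \zcref{cart-metric}; it works there because \(\mathfrak{F}(0)=2/e>0\).
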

\begin{proof}
We use the (global) Cartesian coordinates \(\mathbf{x} =
(\tau{},\hat{y},\hat{z},\vartheta{},\phi{})\) (see \zcref{sec:coordinates}),
in terms of which the function \(x\) is determined implicitly by \zcref{cart-def}. A
straightforward computation (which we omit) reveals that
\begin{equation}\label{l-hyp-expression}
\ell{}_{\textnormal{hyp}}^2 = \cosh^2 x(\cosh^2x(\mathbf{p}^\tau{})^2 - \cosh ^2x\cosh ^2\tau{}((\mathbf{p}^\vartheta{})^2 + \sin ^2\vartheta{}(\mathbf{p}^\phi{})^2))
\end{equation}
The mass relation \zcref{mass-relation} then reads
\begin{equation}\label{mass-relation-cons}
\begin{split}
-m^2 &= -\cosh^2 x (\mathbf{p}^\tau{})^2 + \mathfrak{F}(x)^2((\mathbf{p}^{\hat{y}})^2 + (\mathbf{p}^{\hat{z}})^2) + \cosh ^2x \cosh ^2\tau{} ((\mathbf{p}^\vartheta{})^2 + \sin ^2\vartheta{} (\mathbf{p}^\phi{})^2) \\
&= \mathfrak{F}(x)^2((\mathbf{p}^{\hat{y}})^2 + (\mathbf{p}^{\hat{z}})^2) - (\cosh ^2x)^{-2}\ell{}_{\textnormal{hyp}}^2
\end{split}
\end{equation}
The result follows.
\end{proof}
We can interpret the quantity \(\ell_{\textnormal{hyp}}\) as the rest mass of a
reparametrization of the projection of a geodesic in \(\mathcal{M}\) to the
\(\textnormal{dS}_{2 + 1}\) factor (given by level sets of \(x\)), which is itself a geodesic in
\(\textnormal{dS}_{2 + 1}\).
\begin{lemma}[Interpretation of \(\ell_{\textnormal{hyp}}\)]
Let \(\gamma{}\) be a geodesic in \(\mathcal{M}\), and let \(g_{\textnormal{dS}_{2 +
1}} = -\dd{}\tau^2 + \cosh ^2\tau{} g_{S^2}\) be the de Sitter metric. On any
interval where \(x(\gamma{})\neq{}0\), write \(\gamma{} =
(x,\tilde{\gamma{}},\theta{})\), where \(\tilde{\gamma{}} =
(\tau{},\vartheta{},\phi{})\). Then \(\tilde{\gamma{}}\) can be reparametrized
to a geodesic in \(\textnormal{dS}_{2 + 1}\). Indeed, if \(\gamma{}\) is
parameterized by \(\lambda{}\), then \(\tilde{\lambda{}}\) is an affine
parameter for \(\tilde{\gamma{}}\), where \(\tilde{\lambda{}}\) is defined by
\begin{equation}\label{lambda-til-def}
\dv{\tilde{\lambda{}}}{\lambda{}} = (\cosh x(\lambda{}))^{-2}.
\end{equation}
Moreover, the rest mass of \(\tilde{\gamma{}}\) (with respect to the metric
\(g_{\textnormal{dS}_{2 + 1}}\)) is \(\ell_{\textnormal{hyp}}\ge 0\).
\label{lhyp-interpretation}
\end{lemma}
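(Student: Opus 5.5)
The strategy is to exploit the warped-product structure of the metric away from the bubble. In \((\tau{},x,\omega{},\theta{})\) coordinates we have \(g = R^2[\cosh^2x(\dd{}x^2 + g_{\textnormal{dS}_{2+1}}) + \tanh^2x\,\dd{}\theta^2]\). On the \((x,\tilde{\gamma{}})\)-part, this metric is conformal to the product \(\dd{}x^2 + g_{\textnormal{dS}_{2+1}}\) with conformal factor \(\cosh^2x\), and the factor depends only on \(x\). So I would use the Lagrangian formulation of geodesics: \(\gamma{}\) is a critical point of \(\mathcal{L} = \frac{1}{2}g(\dot\gamma{},\dot\gamma{})\). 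I would compute the Euler--Lagrange equations for the de Sitter components \(\tilde{\gamma{}}^a\) only.

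Since \(x\) and \(\theta{}\) enter the \(\textnormal{dS}\) block only through the overall factor \(\cosh^2x\), the Euler--Lagrange equations for \(\tilde{\gamma{}}^a\) read
\begin{equation*}
\dv{}{\lambda{}}\bigl(\cosh^2x\,(g_{\textnormal{dS}})_{ab}\dot{\tilde\gamma}^b\bigr) = \tfrac12\cosh^2x\,\partial_a(g_{\textnormal{dS}})_{bc}\dot{\tilde\gamma}^b\dot{\tilde\gamma}^c .
\end{equation*}
Write \(' = \dv{}{\tilde{\lambda{}}}\), with \(\tilde{\lambda{}}\) defined by \zcref{lambda-til-def}. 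Then \(\dot{\tilde\gamma}^b = (\cosh x)^{-2}\tilde\gamma'^b\), so \(\cosh^2x\,\dot{\tilde\gamma}^b = \tilde\gamma'^b\). Also \(\dv{}{\lambda{}} = (\cosh x)^{-2}\dv{}{\tilde\lambda}\). Substituting these and multiplying through by \(\cosh^2x\), the equation becomes
\begin{equation*}
\dv{}{\tilde\lambda}\bigl((g_{\textnormal{dS}})_{ab}\tilde\gamma'^b\bigr) = \tfrac12\partial_a(g_{\textnormal{dS}})_{bc}\tilde\gamma'^b\tilde\gamma'^c .
\end{equation*}
This is exactly the geodesic equation of \(g_{\textnormal{dS}_{2+1}}\) in affine parameter \(\tilde\lambda\). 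Because \(\cosh x\ge 1\), \(\tilde\lambda\) is a strictly increasing \(C^1\) function of \(\lambda{}\), so it is a legitimate reparametrization. The restriction \(x(\gamma{})\neq 0\) is needed only so that the coordinates \((x,\theta{},\tau{},\omega{})\) are valid, since \(\tanh x\) degenerates at \(x=0\).

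For the rest mass, I would compute \(g_{\textnormal{dS}}(\tilde\gamma',\tilde\gamma') = \cosh^4x\,g_{\textnormal{dS}}(\dot{\tilde\gamma},\dot{\tilde\gamma})\). Written in components, this is \(\cosh^4x\bigl(-(\mathbf{p}^\tau)^2 + \cosh^2\tau((\mathbf{p}^\vartheta)^2 + \sin^2\vartheta(\mathbf{p}^\phi)^2)\bigr)\). By \zcref{l-hyp-expression}, this equals \(-\ell_{\textnormal{hyp}}^2\). Since \(\ell_{\textnormal{hyp}}^2\ge 0\) by \zcref{angular-momentum-well-defined}, the curve \(\tilde\gamma{}\) is causal with rest mass \(\ell_{\textnormal{hyp}}\ge0\) (the dS metric has no \(R^{-2}\) normalization). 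The fact that this mass is constant is consistent with \(\tilde\gamma\) being affinely parametrized, which gives a check on the computation.

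The main obstacle is bookkeeping rather than conceptual. First, I have to verify that the \(\partial_a\) derivatives of the conformal factor genuinely vanish, which holds because \(\cosh^2x\) is independent of \(\tilde\gamma\). Second, I have to match the normalizations of \(\ell_{\textnormal{hyp}}\) in \zcref{l-hyp-expression}, namely the factors of \(R\) and \(\cosh x\), with \(g_{\textnormal{dS}}(\tilde\gamma',\tilde\gamma')\). If the power of \(\cosh x\) came out differently there, I would re-derive \zcref{l-hyp-expression} directly from \zcref{hyp-ang-momentum}. To do so I would use that the Casimir \(-\sum K_i^2 + \sum\Omega_i^2\) of \(\mathrm{SO}(3,1)\) acting on \(\textnormal{dS}_{2+1}\) equals the inverse de Sitter metric, together with \(g(\mathbf{p},X) = R^2\cosh^2x\,g_{\textnormal{dS}}(\dot{\tilde\gamma},X)\) for \(X\) tangent to the dS factor.
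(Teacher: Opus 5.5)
Your proposal is correct and is essentially the paper's proof. You use the Euler--Lagrange equations for the de Sitter components, the reparametrization \(\mathrm{d}\tilde{\lambda}/\mathrm{d}\lambda = (\cosh x)^{-2}\) that turns them into the geodesic equation of \(g_{\textnormal{dS}_{2+1}}\), and the computation \(g_{\textnormal{dS}_{2+1}}(\tilde{\gamma}',\tilde{\gamma}') = \cosh^4 x\, g_{\textnormal{dS}_{2+1}}(\dot{\tilde{\gamma}},\dot{\tilde{\gamma}}) = -\ell_{\textnormal{hyp}}^2\) via \zcref{l-hyp-expression}. The only difference is cosmetic: you substitute directly into the Euler--Lagrange form, whereas the paper first rewrites it as \(\nabla^{\textnormal{dS}}_{\dot{\tilde{\gamma}}}\dot{\tilde{\gamma}} + 2\tanh x\,\dot{x}\,\dot{\tilde{\gamma}} = 0\).

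One caution about your fallback. The identity \(-\sum_i K_i\otimes K_i + \sum_i \Omega_i\otimes\Omega_i = g_{\textnormal{dS}_{2+1}}^{-1}\) is right. However, combined with \zcref{hyp-ang-momentum} as literally written, it gives \(\ell_{\textnormal{hyp}}^2 = +\cosh^4 x\, g_{\textnormal{dS}_{2+1}}(\dot{\tilde{\gamma}},\dot{\tilde{\gamma}})\), which has the opposite sign from \zcref{l-hyp-expression}. You can check this with \(\mathbf{p} = \partial_\tau\) at \(\vartheta = \pi/2\), \(\phi = 0\). So that re-derivation only succeeds after flipping the signs in \zcref{hyp-ang-momentum}, which is also what \(\ell_{\textnormal{hyp}}^2\ge 0\) requires.
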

\begin{proof}
Let \(s^A\) be local coordinates on \(\textnormal{dS}_{2 + 1}\). Write \(h = g_{\textnormal{dS}_{2 + 1}}\). The Lagrangian
for a geodesic on \(\mathcal{M}\) is
\begin{equation}
\frac{1}{2}[\cosh ^2x \dot{x}^2  + \cosh ^2 x h_{AB}(s)\dot{s}^A\dot{s}^B + \tanh^2 x\dot{\theta{}}^2],
\end{equation}
where the dot means differentiation with respect to \(\lambda{}\). The corresponding
Euler--Lagrange equation for \(s^A\) is
\begin{equation}
\frac{\dd{}}{\dd{}\lambda{}}(\cosh ^2 x h_{AB}\dot{s}^B) = \frac{1}{2} \cosh^2x \partial{}_A(h_{BC}\dot{s}^B\dot{s}^C).
\end{equation}
After expanding, dividing by \(\cosh ^2x\), and raising the \(A\)-index, we find
that this is equivalent to
\begin{equation}
\Grad ^h_{\dot{s}}\dot{s} + 2\tanh x \dot{x} \dot{s} = 0.
\end{equation}
Writing \('\) for differentiation with respect to \(\tilde{\lambda{}}\) defined by
\zcref{lambda-til-def}, so that \(s' = \cosh ^2x \dot{s}\).
\begin{equation}
\Grad ^h_{s'}s' = 0,
\end{equation}
and so \(s(\tilde{\lambda{}})\) is a geodesic of de Sitter. Next, we compute in
\((\tau{},\omega{})\) coordinates on \(\textnormal{dS}_{2 + 1}\) that
\begin{equation}
h(s',s') = \cosh ^4 x h(\dot{s},\dot{s}) = \cosh ^4 x[-(\mathbf{p}^\tau{})^2 +  \cosh ^2\tau{}((\mathbf{p}^\vartheta)^2 + \sin ^2\vartheta{}(\mathbf{p}^\phi{})^2)] = -\ell{}_{\textnormal{hyp}}^2,
\end{equation}
where the last equality is \zcref{l-hyp-expression}.

If a geodesic \(\gamma{}\) on \(\mathcal{M}\) crosses the bubble, then \(\ell_\theta{} = 0\)
\end{proof}
\begin{definition}[Weak trapping]
We say a curve \(\gamma{}\) in \(\mathcal{M}\) is \emph{weakly trapped} if its \(x\)-value
remains bounded. That is, there exist \(x_{\textnormal{min}}\) and
\(x_{\textnormal{max}}\) satisfying \(0\le x_{\textnormal{min}}\le
x_{\textnormal{max}}<\infty\) such that \(x(\gamma{})\in
[x_{\textnormal{min}},x_{\textnormal{max}}]\).
\label{weakly-trapped}
\end{definition}
\begin{proposition}[Causal geodesics in the Witten bubble spacetime]
Let \(\gamma{}\) be an inextendible causal geodesic in \(\mathcal{M}\). Recall from
\zcref{conserved-quantities} the following conserved quantities associated to
\(\gamma{}\): the mass \(m\ge 0\), the \(S^1\)-angular momentum
\(\ell_\theta{}\in \R\), and the hyperbolic angular momentum
\(\ell_{\textnormal{hyp}}^2\ge 0\). Then:
\begin{enumerate}
\item \label{degenerate-geodesics} If \(\ell{}_{\textnormal{hyp}}^2 = 0\), then \(\ell_\theta{} = 0\),
and \(\gamma{}\) is null (\(m=0\)) and lies in a level set of \(x\). In
particular, \(\gamma{}\) is weakly trapped.
\item \label{hypyes-S1no-geodesic} If \(\ell_{\textnormal{hyp}}^2 > 0\) and \(\ell_\theta{} = 0\), then:
\begin{enumerate}
\item \label{hypyes-S1no-geodesic-timelike} If \(\gamma{}\) is timelike \((m > 0)\), then it reaches the bubble and satisfies
\(\ell_{\textnormal{hyp}}^2\ge m^2\). The \(x\)-value of \(\gamma{}\) is
periodic in the affine parameter and is constant exactly when
\(\ell_{\textnormal{hyp}}^2 = m^2\). In particular, \(\gamma{}\) is weakly
trapped.
\item \label{hypyes-S1no-geodesic-null} If \(\gamma{}\) is null \((m = 0)\), then it meets the bubble at exactly one
value of the affine parameter, and it is not weakly trapped.
\end{enumerate}
\item \label{S1yes-geodesic} If \(\ell_\theta{} \neq{} 0\), then \(\gamma{}\) does not reach the bubble, and
its \(x\)-value is periodic in the affine parameter. In particular, \(\gamma{}\) is
weakly trapped.
\end{enumerate}
In all cases, \(\gamma{}\) is complete, and so the Witten bubble spacetime is causally
geodesically complete.
\label{causal-geodesics}
\end{proposition}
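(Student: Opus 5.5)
Away from the bubble I will work in the $(\tau,x)$ coordinates and reduce the geodesic flow to a one-dimensional radial problem, using the conserved quantities of \zcref{conserved-quantities}. By \zcref{lhyp-interpretation}, the projection $\tilde\gamma$ to $\textnormal{dS}_{2+1}$ is, after the reparametrization \zcref{lambda-til-def}, a de Sitter geodesic of rest mass $\ell_{\textnormal{hyp}}$. Conservation of $\ell_\theta$ gives $\dot\theta = \ell_\theta/\tanh^2 x$. Substituting these into the mass relation \zcref{mass-relation} with the metric $R^2[\cosh^2x(-\dd\tau^2+\dd x^2)+\cosh^2x\cosh^2\tau g_{S^2}+\tanh^2x\dd\theta^2]$ gives the radial energy equation
\begin{equation*}
\cosh^2x\,\dot x^2 = \frac{\ell_{\textnormal{hyp}}^2}{\cosh^2x} - m^2 - \frac{\ell_\theta^2}{\tanh^2x}.
\end{equation*}
Multiplying by $\cosh^2x$ and writing $y=\sinh x$, I expect this to become
\begin{equation*}
\cosh^4x\,\dot x^2 = \ell_{\textnormal{hyp}}^2 - m^2\cosh^2x - \ell_\theta^2\frac{\cosh^4x}{\sinh^2x} =: W(x).
\end{equation*}
This is a Newtonian-type problem with an effective potential, and the case analysis will be read off from the sign structure of $W$.

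\textbf{Case (1), $\ell_{\textnormal{hyp}}^2=0$.} Then $W\le 0$, and $W$ vanishes only if $m=\ell_\theta=0$. So $\dot x=0$, the geodesic is null, and it stays in a level set of $x$. The level set $x=0$ is also allowed and is handled in Cartesian coordinates via \zcref{mass-relation-cons}.

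\textbf{Case (2), $\ell_\theta=0$.} Then $W=\ell_{\textnormal{hyp}}^2-m^2\cosh^2x$.
\begin{itemize}
\item If $m>0$, then $W\ge0$ somewhere forces $\ell_{\textnormal{hyp}}^2\ge m^2$, and $x$ oscillates in $[0,x_{\max}]$ with $\cosh x_{\max}=\ell_{\textnormal{hyp}}/m$. Periodicity follows from the standard turning-point argument for a one-degree-of-freedom conservative system. At $x=0$ the motion passes through the origin of $\R^2$; this is checked in Cartesian coordinates, where $x$ behaves like a polar radius and the trajectory crosses the origin rather than reflecting. The equality case $\ell_{\textnormal{hyp}}^2=m^2$ gives $x\equiv0$.
\item If $m=0$, then $\cosh^2x\,\dot x=\pm\ell_{\textnormal{hyp}}$. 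So $x$ is monotone (as the signed radius along a line through the origin), vanishes once, and $x\to\infty$. In particular the geodesic is not weakly trapped.
\end{itemize}

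\textbf{Case (3), $\ell_\theta\ne0$.} The term $-\ell_\theta^2\cosh^4x/\sinh^2x$ diverges to $-\infty$ both as $x\to0$ and as $x\to\infty$. Hence $\{W\ge0\}$ is a compact subset of $(0,\infty)$, so the bubble is never reached and $x$ is trapped. If $W$ has a single nondegenerate positive interval, I obtain a periodic oscillation between simple turning points. A degenerate maximum gives a constant orbit, which is also periodic. To show that $\{W\ge 0\}$ is a single interval, I will check that $-W$ is convex (or unimodal) in the variable $\cosh^2x$.

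\textbf{Completeness.} In every case $x$ stays bounded on bounded affine-parameter intervals. It remains to show that $\tau$, the $S^2$ angles and $\theta$ do not blow up in finite affine parameter. The $S^2$ and $\tau$ components follow a reparametrized de Sitter geodesic, and de Sitter is complete. The map $\lambda\mapsto\tilde\lambda$ is bi-Lipschitz on intervals where $x$ is bounded, since $\dd\tilde\lambda/\dd\lambda=\cosh^{-2}x$. In case (2) with $m=0$, $x\to\infty$ and $\tilde\lambda$ converges. I will handle this by the explicit solution $\tanh$-type relation, showing that $\lambda\to\infty$ exactly as $\tilde\lambda$ reaches a finite limit. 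There I must verify that the de Sitter null geodesic remains defined at that $\tilde\lambda$, which it does because it is complete.

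\textbf{Main obstacle.} I expect the main difficulty to be this completeness step in the escaping null case, together with the regularity analysis at the bubble.
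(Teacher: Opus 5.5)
Your approach is the paper's. You pass to $\tilde\lambda$ with $\mathrm{d}\tilde\lambda/\mathrm{d}\lambda=\cosh^{-2}x$, so that $(x')^2=W(x)=\ell_{\textnormal{hyp}}^2-V(x)$ with the paper's effective potential $V$. You treat the bubble via a signed radial coordinate on the line through the origin selected by $\ell_\theta=0$. In case (3) you use strict convexity of the potential to get a single interval with simple turning points; the paper uses $\sinh^2x$, and your $\cosh^2x=1+\sinh^2x$ differs by an affine change of variable. The case analysis is fine.

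The step you single out as the main obstacle, however, rests on a miscomputation: in case 2(b), $\tilde\lambda$ does \emph{not} converge.
\begin{itemize}
\item From your own relation $\cosh^2x\,\dot x=\pm\ell_{\textnormal{hyp}}$ you get $\mathrm{d}\tilde\lambda=\cosh^{-2}x\,\mathrm{d}\lambda=\pm\,\mathrm{d}x/\ell_{\textnormal{hyp}}$.
\item So the signed radius is affine in $\tilde\lambda$, namely $\tilde x=\pm\ell_{\textnormal{hyp}}(\tilde\lambda-\tilde\lambda_0)$; this is exactly the paper's Step 2b.
\item Hence $x\to\infty$ forces $\tilde\lambda\to\pm\infty$ as $\lambda\to\pm\infty$. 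The growth is only logarithmic in $\lambda$, since $\cosh^2x$ grows linearly in $\lambda$, but it still diverges.
\end{itemize}
The statement you plan to prove, that $\lambda\to\infty$ while $\tilde\lambda$ reaches a finite limit, is therefore false and cannot be established.

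Nothing is lost, because it is not needed. Your own general argument already yields completeness: $x$ is bounded on bounded $\lambda$-intervals, $\lambda\mapsto\tilde\lambda$ is bi-Lipschitz there, de Sitter is complete, and $|\dot\theta|=|\ell_\theta|/\tanh^2x$ is bounded because $x\ge x_{\min}>0$ when $\ell_\theta\neq0$. The paper's route is shorter still. Since $\mathrm{d}\lambda/\mathrm{d}\tilde\lambda=\cosh^2x\ge1$, it suffices to prove completeness in $\tilde\lambda$. There the de Sitter part is complete, and the radial motion is constant, periodic, or affine.

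Finally, a smaller slip: the projected curve has de Sitter rest mass $\ell_{\textnormal{hyp}}>0$. In case 2(b) it is therefore a timelike, not a null, de Sitter geodesic.
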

\begin{remark}[Geometric consequences]
Observe the following immediate consequences of \zcref{causal-geodesics}:
\begin{itemize}
\item Every geodesic with non-zero \(S^1\)-angular momentum is weakly trapped and
does not reach the bubble.
\item Timelike geodesics with zero \(S^1\)-angular momentum are weakly trapped and
either lie in the bubble or oscillate between the bubble and another level set
of \(x\).
\item Null geodesics with non-zero hyperbolic angular momentum and zero
\(S^1\)-angular momentum are not weakly trapped: they reach the bubble exactly
once and escape to \(\set{x=\infty}\).
\item Null geodesics with zero hyperbolic angular momentum lie a level set of \(x\).
\end{itemize}
\end{remark}
\begin{proof}
We will parametrize \(\gamma{}\) by the affine parameter \(\lambda{}\). Unless
stated otherwise, we will use the (global) Cartesian coordinates \(\mathbf{x} =
(\tau{},\hat{y},\hat{z},\vartheta{},\phi{})\).

\step{Step 1: Proof of part \zcref{degenerate-geodesics}.} We first consider the case
\(\ell_{\textnormal{hyp}}^2 = 0\). It follows from
\zcref{mass-relation-cons} that if \(\ell_{\textnormal{hyp}}^2 = 0\), then \(m = 0\)
and \(\mathbf{p}^{\hat{y}}=\mathbf{p}^{\hat{z}}=0\). Since \(x\) is
determined by \(\hat{y}\) and \(\hat{z}\), this means that \(\gamma{}\)
is constrained to a level set of \(x\). Moreover, the computation
\begin{equation}\label{ltheta-comp}
\ell{}_\theta{} = \mathfrak{F}(x)^2(\hat{y}\mathbf{p}^{\hat{z}} - \hat{z}\mathbf{p}^{\hat{y}}).
\end{equation}
shows that \(\ell_\theta{} = 0\).

\step{Step 2: Proof of part \zcref{hypyes-S1no-geodesic}.} Next, we consider the case
\(\ell_{\textnormal{hyp}}^2 > 0\) and \(\ell_\theta{} = 0\). It follows from the computation
\zcref{ltheta-comp} that geodesics with \(\ell_\theta{} = 0\) lie on a line through the origin
in the \((\hat{y},\hat{z})\)-plane. By \(\U(1)\)-symmetry, we may assume without
loss of generality that this line is \(\set{\hat{z}=0}\). Since the function
\(x\) is not smooth at the bubble, we introduce the signed radial coordinate
\(\tilde{x}\): we extend the function
\begin{equation}
\mathfrak{f}(x) = e^{\cosh x}\tanh (x/2)
\end{equation}
to an odd smooth function of \(\tilde{x}\in \R\), and define \(\tilde{x}\) as a
function along \(\gamma{}\) by
\begin{equation}
\hat{y}=\mathfrak{f}(\tilde{x}),\qquad \hat{z}=0.
\end{equation}
Since \(\mathfrak{f}'(0)\neq{}0\), the function \(\tilde{x}\) is smooth across the bubble, and
by construction \(x = \abs{\tilde{x}}\). The geodesic equation for \(\tilde{x}\) is
\begin{equation}\label{xtil-second-order}
\tilde{x}'' = -m^2\sinh \tilde{x}\cosh \tilde{x},
\end{equation}
where \('\) denotes differentiation with respect to \(\tilde{\lambda{}}\). We now derive
the corresponding first integral. Whenever \(x(\gamma{})\neq{}0\), the
expression \zcref{l-hyp-expression} for \(\ell_{\textnormal{hyp}}^2\) turns the mass
relation \zcref{mass-relation} into the radial equation
\begin{equation}\label{px-sq-expr}
(\mathbf{p}^x)^2 = \frac{1}{\cosh^2 x}\Bigl(\frac{\ell{}_{\textnormal{hyp}}^2}{\cosh ^2x} - \frac{\ell{}_\theta^2}{\tanh ^2x} - m^2\Bigr).
\end{equation}
Changing from \(\lambda{}\) to \(\tilde{\lambda{}}\), we obtain the effective radial equation
\begin{equation}\label{xtil-first-integral}
(\tilde{x}')^2 = \ell^2_{\textnormal{hyp}} - m^2\cosh^2 \tilde{x}.
\end{equation}
In particular, we have
\begin{equation}\label{xtil-first-integral-cons}
\ell_{\textnormal{hyp}}^2\ge m^2.
\end{equation}

\step{Step 2a: \(\gamma{}\) is timelike.} We first suppose that \(m > 0\). If
\(\ell_{\textnormal{hyp}}^2 = m^2\), then \zcref{xtil-first-integral} forces
\(\tilde{x}' = \tilde{x} = 0\). Otherwise, by \zcref{xtil-first-integral-cons} we
have \(\ell_{\textnormal{hyp}}^2 > m^2\). In this case, define
\(\tilde{x}_{\textnormal{max}} \coloneqq{}\cosh ^{-1}(\ell{}_{\textnormal{hyp}}/m)\), so that
\(\abs{\tilde{x}}\le \tilde{x}_{\textnormal{max}}\) (by \zcref{xtil-first-integral}). From
\zcref{xtil-second-order} we see that if \(\tilde{x} =
\tilde{x}_{\textnormal{max}}\), then \(\tilde{x}' = 0\) and \(\tilde{x}'' < 0\),
and if \(\tilde{x} = -\tilde{x}_{\textnormal{max}}\), then \(\tilde{x}' = 0\)
and \(\tilde{x}'' > 0\). Moreover, we have
\begin{equation}
\int_{-\tilde{x}_{\textnormal{max}}}^{\tilde{x}_{\textnormal{max}}} \frac{\dd{}\tilde{x}}{\sqrt{\ell{}_{\textnormal{hyp}}^2 - m^2\cosh ^2\tilde{x}}} < \infty.
\end{equation}
It follows from this and an argument invoking uniqueness of solutions to the ODE
\zcref{xtil-second-order} that \(\tilde{x}\) is periodic in \(\tilde{\lambda{}}\), and so
\(x = \abs{\tilde{x}}\) is periodic in \(\lambda{}\).

\step{Step 2b: \(\gamma{}\) is null.} If \(m = 0\), then the first integral
\zcref{xtil-first-integral} becomes
\begin{equation}
(\tilde{x}')^2 = \ell^2_{\textnormal{hyp}},
\end{equation}
and so
\begin{equation}
\tilde{x}(\tilde{\lambda{}}) = \pm \ell_{\textnormal{hyp}}\cdot (\tilde{\lambda{}}-\tilde{\lambda{}}_0)
\end{equation}
for some \(\tilde{\lambda{}}_0\in \R\). It follows that \(\gamma{}\) crosses the bubble exactly
once and \(x(\gamma{})\to \infty\) as \(\lambda{}\to \pm \infty\).

\step{Step 3: Proof of part \zcref{S1yes-geodesic}.} By \zcref{ltheta-comp}, if \(\ell_\theta{} \neq{} 0\),
then \(x(\gamma{}) > 0\). Then \zcref{px-sq-expr} applies globally, and we deduce that \(x\ge
x_{\textnormal{min}} > 0\). Reparametrizing by \(\tilde{\lambda{}}\) in place of \(\lambda{}\)
as in Step 2, we obtain from \zcref{px-sq-expr} the equation
\begin{equation}\label{x'-equation}
(x')^2 + V(x) = \ell{}_{\textnormal{hyp}}^2,\qquad V(x) = m^2\cosh ^2x + \ell{}_\theta^2 \frac{\cosh ^4x}{\sinh ^2x}.
\end{equation}
Moreover, the \(x\)-component of the geodesic equation is
\begin{equation}\label{x''-equation}
x'' = -\frac{1}{2}V'(x).
\end{equation}
As a function of \(s = \sinh ^2x\), we have (using \(\cosh ^2x = 1 + s\))
\begin{equation}
V(s) = m^2 + 2\ell{}_\theta^2 +  (m^2 + \ell{}_\theta^2)s + \frac{\ell{}_\theta^2}{s}.
\end{equation}
Since \(V\) is strictly convex as a function of \(s\) and tends to \(\infty\) as
\(s\to{}0\) or \(s\to{}\infty\), it has a unique minimum. Since \(x\mapsto  s(x) = \sinh^2 x\) is
strictly increasing for \(x > 0\), we conclude that \(V\) also has a unique
minimum as a function of \(x\). It follows from this and \zcref{x'-equation} that
the possible values of \(x(\gamma{})\), which satisfy \(V(x)\le \ell_{\textnormal{hyp}}^2\),
are exactly the interval \([x_{\textnormal{min}}, x_{\textnormal{max}}]\), where
\(V(x_{\textnormal{min}}) = V(x_{\textnormal{max}}) = \ell_{\textnormal{hyp}}^2\). If
\(x_{\textnormal{min}} = x_{\textnormal{max}}\), then \zcref{x''-equation} shows
that \(x(\gamma{})\) is constant. Otherwise, the two endpoints are distinct simple
roots of \(V(x) - \ell_{\textnormal{hyp}}^2\), and \(V'(x_{\textnormal{min}}) < 0\)
and \(V'(x_{\textnormal{max}}) > 0\). It follows from \zcref{x''-equation} that
\(x'' > 0\) if \(x = x_{\textnormal{min}}\) and \(x'' < 0\) if \(x =
x_{\textnormal{max}}\). Moreover,
\begin{equation}
\int_{x_{\textnormal{min}}}^{x_{\textnormal{max}}} \frac{\dd{}x}{\sqrt{\ell{}_{\textnormal{hyp}}^2 - V(x)}} < \infty,
\end{equation}
by the simplicity of \(x{_{\textnormal{min}}}\) and \(x_{\textnormal{max}}\) as
roots of \(V(x) - \ell_{\textnormal{hyp}}^2\). By an argument invoking
uniqueness of solutions to \zcref{x''-equation}, we conclude that \(x\) is periodic
in \(\tilde{\lambda{}}\). Since \(\frac{\dd{}\tilde{\lambda{}}}{\dd{}\lambda{}}
= (\cosh x(\lambda{}))^{-2}\) is positive and periodic, \(x\) is also periodic
in \(\lambda{}\).

\step{Step 4: Proof of geodesic completeness.} Since \(\frac{\dd{}\lambda{}}{\dd{}\tilde{\lambda{}}} \ge 1\),
it is enough to prove that \(\gamma{}\) is complete with respect to the
non-affine parameter \(\tilde{\lambda{}}\). First, we note that if \(\gamma{}\)
crosses the bubble, then the proof of \zcref{lhyp-interpretation} goes through with
\(x\) replaced by the smooth signed radial coordinate \(\tilde{x}\). By this
minor extension of \zcref{lhyp-interpretation} and the geodesic completeness of de
Sitter, it is enough to prove that the \((\hat{y},\hat{z})\)-motion is complete.
If \(\ell_\theta{} = 0\), as in Step 2, this is the same as proving that the
\(\tilde{x}\)-motion is complete. If \(\ell_\theta{} \neq{} 0\), as in Step 3, then
this is the same as proving that the \((x,\theta{})\)-motion is complete.
\begin{itemize}
\item If \(\ell_{\textnormal{hyp}}^2 = 0\), then \(\hat{y}\) and \(\hat{z}\) are
constant, as shown in Step 1.
\item If \(\ell_{\textnormal{hyp}} > 0\) and \(\ell_\theta{} = 0\), then:
\begin{itemize}
\item If \(m > 0\), then \(\tilde{x}\) is periodic, as shown in Step 2a.
\item If \(m = 0\), then \(\tilde{x}(\tilde{\lambda{}})= \pm \ell_{\textnormal{hyp}}\tilde{\lambda{}} +
    C\) for some constant \(C\), as shown in Step 2b, and so the
\(\tilde{x}\)-motion is complete.
\end{itemize}
\item If \(\ell_\theta{} \neq{} 0\), then, as shown in Step 3, \(x\) oscillates periodically
between \(x_{\textnormal{min}} > 0\) and \(x_{\textnormal{max}}\ge
  x_{\textnormal{min}}\), and so the \(x\)-motion is complete. Moreover,
\(\theta{}' = \ell_\theta{} (\cosh x)^2(\tanh x)^{-2}\), so that \(c\le
  \abs{\theta{}'}\le C\) for constants \(c>0\) and \(C > 0\), and so the
\(\theta{}\)-motion is complete.
\end{itemize}
\end{proof}
\hypersetup{urlcolor=Black}
\printbibliography[heading=bibintoc, title={References}]
\end{document}